\def\expandafter\normalsize\expandafter{%
  \normalsize  
  \setlength\abovedisplayskip{2ex}
  \setlength\belowdisplayskip{2ex}
  \setlength\abovedisplayshortskip{1ex}
  \setlength\belowdisplayshortskip{1ex}
}
\ifodd\value{page}
\authors
\shorttitle
\newtheorem{thm}{Theorem}[section]
\newtheorem{cor}[thm]{Corollary}
\newtheorem{lem}[thm]{Lemma}
\newtheorem{defn}[thm]{Definition}
\newtheorem{qtn}[thm]{Question}
\newtheorem{prop}[thm]{Proposition}
\newtheorem{conjecture}[thm]{Conjecture}
\newtheorem{problem}[thm]{Problem}
\newtheorem*{claim*}{Claim}
\newcommand\longvdots[1]{\raisebox{1em}{\rotatebox{-90}{\hbox to #1 {\dotfill}}}}
\newcommand{\E}{\mathbb{E}}
\newcommand{\R}{\mathbb{R}}
\newcommand{\defeq}{\stackrel{\mathrm{def}}{=} }
\newcommand{\convd}{\stackrel{\mathrm{d}}{\longrightarrow} }
\newcommand{\di}{\mathrm{d}}
\newcommand{\be}{\begin{equation}}
\newcommand{\ee}{\end{equation}}
\newcommand{\bea}{\begin{eqnarray}}
\newcommand{\eea}{\end{eqnarray}}
\newcommand{\bes}{\begin{equation*}}
\newcommand{\ees}{\end{equation*}}
\newcommand{\beas}{\begin{eqnarray*}}
\newcommand{\eeas}{\end{eqnarray*}}
\newcommand{\fxn}{\phi}
\newcommand{\Hfxn}{\mathbf{\Phi}}
\newcommand{\hfxn}{\boldsymbol{\phi}}
\def\i{{\textnormal{i}}}
\newcommand{\blambda}{\boldsymbol{\lambda}}
\newcommand\shorttitle{Joint moments orthogonal and symplectic}
\newcommand\authors{T. Assiotis, M. A. Gunes, J. P. Keating and F. Wei}
\begin{document}

\title{ \sc Joint moments of characteristic polynomials from the orthogonal and unitary symplectic groups}

\author{\small THEODOROS ASSIOTIS, MUSTAFA ALPER GUNES, JONATHAN P. KEATING and FEI WEI}

\date{}
\maketitle

\providecommand{\keywords}[1]
{ \textbf{Keywords:} #1 }


\begin{abstract}
We establish asymptotic formulae for general joint moments of characteristic polynomials and their higher-order derivatives associated with matrices drawn randomly from the groups $\mathrm{USp}(2N)$ and $\mathrm{SO}(2N)$ in the limit as $N\to\infty$. This relates the leading-order asymptotic contribution in each case to averages over the Laguerre ensemble of random matrices. We uncover an exact connection between these joint moments and a solution of the $\sigma$-Painlev\'{e} V equation, valid for finite matrix size, as well as a connection between the leading-order asymptotic term and a solution of the $\sigma$-Painlev\'{e} III$'$ equation in the limit as $N \rightarrow \infty$. These connections enable us to derive exact formulae for joint moments for finite matrix size and for the joint moments of certain random variables arising from the Bessel point process in a recursive way. As an application, we provide a positive answer to a question proposed by Altuğ et al. [Quarterly Journal of Mathematics, 65 (2014), 1111–1125].
\end{abstract}

\tableofcontents

\section{Introduction}

\subsection{Background and motivation}

In this paper we address the problems of the convergence of joint moments of the characteristic polynomials with positive real exponents, along with their higher-order derivatives, of random matrices drawn from the unitary symplectic $\mathrm{USp}(2N)$ and even orthogonal $\mathrm{SO}(2N)$ classical compact groups after suitable scalings, and of establishing connections with Painlev\'{e}  equations for both finite matrix size $N$ and in the large-$N$ limit. The general question can be phrased as follows.

\begin{problem}\label{problem:ProblemIntro}
Given a sequence of $N$-indexed compact matrix groups $(\mathrm{G}(N))_{N=1}^\infty$ of increasing dimension endowed with Haar measure $\mu_{\mathrm{Haar}}$, denote the characteristic polynomial of $\mathbf{A}\in \mathrm{G}(N)$ by $\phi_\mathbf{A}(z)=\det(\mathbf{I}-z\mathbf{A})$. Let $h_1,\dots,h_m \in \mathbb{R}_{\geq 0}$. Then, the task is to establish the asymptotics of the matrix integral
\begin{equation}\label{eq:IntegralIntro}
\int_{\mathrm{G}(N)} \prod_{k=0}^m \left|\phi_\mathbf{A}^{(k)}(1)\right|^{h_k} \mathrm{d}\mu_{\mathrm{Haar}}(\mathbf{A}) 
,
\end{equation}
as $N \to \infty$.
\end{problem}

We show that the leading order asymptotics of this integral takes the form
\begin{equation*}\label{eq:AsympIntro}
\mathfrak{c}^{(G)}_{h_0,\dots,h_m} N^{\delta^{(G)}_{h_0,\dots,h_m}}
\end{equation*}
and determine expressions for $\mathfrak{c}^{(G)}_{h_0,\dots,h_m}$ and $\delta^{(G)}_{h_0,\dots,h_m}$.  We also find exact expressions for the integrals in question in terms of solutions of certain Painlev\'{e} equations.

These considerations, particularly the choice of evaluating $\phi_{\mathbf{A}}(z)$ at $z=1$, are motivated by conjectural connections to moments of $\mathrm{L}$-functions with different symmetry types, as studied in the work of Katz-Sarnak \cite{KatzSarnak} and Keating-Snaith \cite{KeatingSnaithLfunctions}. We discuss this further at the end of Section \ref{main results}.

The analogous problem of joint moments over $\mathrm{U}(N)$, which relates to joint moments of the Riemann zeta and Hardy's function \cite{Conrey, ConreyGhosh, Ingham}, has a long history starting from the work and conjectures of Hughes \cite{Hughes} from 2001, see \cite{barhoumi2020new, keatingwei, assiotis2022joint, assiotis2024exchangeablearraysintegrablesystems, Bailey_2019, Basor_2019, conreyetal, Dehaye2008, Dehaye2010note, ForresterWittePainleveIII, forrester2022joint, simm2024moments, winn2012derivative} for part of the growing literature. The question for $\mathrm{USp}(2N)$ and $\mathrm{SO}(2N)$ has also received attention, see \cite{altugetal, andrade2024joint,  gunes2022characteristic} and also \cite{AlvarezSnaith,alvarez2023asymptotics} for the related problem of moments of the logarithmic derivative, but it has been overall much less well studied. In this paper we first settle the convergence aspect of the general question, see Theorem \ref{thm:jointmom}, and give an expression for the limit in terms of concrete finite-dimensional averages. Secondly, we establish an exact connection between these joint moments and a solution of the $\sigma$-Painlevé V equation for finite matrix size, as well as a connection between the leading-order coefficients and a solution of the $\sigma$-Painlevé III$'$ equation in the limit as $N \rightarrow \infty$. Finally, as an application, we give a positive answer to a question posed by Altuğ et al. in \cite{altugetal}. 

We remark that for the odd orthogonal group $O^{-}(2N)$, analogous results to those above can similarly be obtained by almost identical methods. To avoid excessive length, we omit a detailed discussion of this case.

Our starting point is that the matrix integral \eqref{eq:IntegralIntro} can, in the cases when $\mathrm{G}(N)=\mathrm{USp}(2N)$ or $\mathrm{G}(N)=\mathrm{SO}(2N)$, be written in terms of averages involving singular negative power sum statistics of the Jacobi unitary ensemble of random matrices with certain parameters. Various such singular statistics of classical random matrix ensembles have been studied in the literature. For previous work and connections to integrable systems, orthogonal polynomials and combinatorics, see \cite{MezzadriSimm, SingularWeightPainleve1,SingularWeightPainleve2, Hurwitz,MomentsHypergeometric,forrester2022joint,ChenDai,ChenChenFan}. For physical motivations to study such quantities in relation to quantum transport in chaotic cavities, see \cite{Wigner-Smith,BallisticChaotic}. The papers closest to our setup, which study the special case of the first negative power sum of the Jacobi ensemble, are the work of Forrester \cite{forrester2022joint}, whose impetus is connections to multivariate hypergeometric functions, and \cite{ChenDai,ChenChenFan}, which prove connections to Painlev\'{e} equations.

Our approach to proving convergence of moments in \cite{assiotis2024exchangeablearraysintegrablesystems} was based on ideas from the theory of exchangeable arrays. In some sense the role of the Jacobi ensemble in \cite{assiotis2024exchangeablearraysintegrablesystems} is played by the Cauchy ensemble \cite{ForresterWitteCauchy,ForresterBook}, also known as the Hua-Pickrell ensemble \cite{Borodin_2001}. However, these exchangeability techniques do not seem to apply to the Jacobi ensemble. Fortunately, mainly due to the positivity of eigenvalues in this setting (this is not present in the Cauchy case for $\mathrm{U}(N)$), which we exploit at several places, we can take a more direct approach which uses as input symmetric function theory computations of Forrester \cite{forrester2022joint}. Nevertheless, after we get a probabilistic expression for the limit it is possible to obtain some explicit evaluations by appealing again to exchangeability in a non-obvious way, see Proposition \ref{thm:laguerre} and Corollary \ref{someformulae}.

 Regarding establishing the connection to Painlev\'{e} equations, we adapt our previous approach from \cite{assiotis2024exchangeablearraysintegrablesystems} to the present setting. This approach is based on certain ``partition-shifted” Hankel determinants, with the main difference being that the Hankel entries considered herein are truncated variants of the multivariate generating functions appearing in \cite{assiotis2024exchangeablearraysintegrablesystems}.
A complication arising from this truncation is that the ``partition-shifted” Hankel determinants can no longer be expressed entirely in terms of the partial derivatives of the original Hankel determinants, as was the case in \cite{assiotis2024exchangeablearraysintegrablesystems}. Instead, additional terms emerge, which are related to the truncation length. To ensure that this issue does not obstruct the inductive procedure central to our analysis, we proceed in two steps. First, we derive explicit expressions for the additional terms, expressed in terms of the truncation lengths. Second, since our inductive process  proceeds with the evaluations of the partial derivatives of the ``shifted Hankel determinants" at the origin, we choose truncation lengths that depend appropriately on the orders of the partial derivatives. This ensures that, after evaluation, the contribution from the aforementioned extra terms is zero. A more precise sketch of these strategies is provided in Section~\ref{ideas}.

Finally, in a different direction, the limits of rescaled negative power sums and corresponding elementary symmetric polynomials are closely related to the Taylor coefficients of the secular functions associated to stochastic operators for $\beta$-ensembles \cite{StochasticAiry,RamirezRider,ValkoViragOperator,ValkoViragSecular,valkoli}. Here,  we are concerned with the stochastic ($\beta=2$) Bessel operator $\mathfrak{G}_{a}$ and its inverse $\mathfrak{G}_{a}^{-1}$ from \cite{RamirezRider} in which case the secular function is given by:
\begin{equation*}
\det\left(\mathbf{I}-z\mathfrak{G}_{a}^{-1}\right)=1+\sum_{k=1}^\infty \mathsf{t}_{a}^{(k)}z^k.
\end{equation*}
Expressions for the coefficients $(\mathsf{t}_{a}^{(k)})_{k=1}^\infty$ in terms of multiple integrals involving Brownian motions have been given in \cite{valkoli}. Below Proposition \ref{thm:laguerre}, we explain a connection between these coefficients and the joint moments considered in the following section.

\subsection{Asymptotics of the joint moments and the Bessel point process}\label{main results}

Let $\mathbf{A}\in \mathrm{USp}(2N)$ or $\mathbf{A}\in \mathrm{SO}(2N)$, and define its characteristic polynomial on the unit circle:
\[
\varphi_\mathbf{A}(\theta)\defeq\det\Bigl(\mathbf{I}-\mathrm{e}^{-\mathrm{i}\theta}\mathbf{A}\Bigr).
\]
Define the associated joint moments, for $h_{0}, h_1,\dots,h_m\in \mathbb{R}_{\geq 0}$ :
\begin{align*}
\mathfrak{J}_N^{\mathrm{USp}}(h_0,h_1,\ldots,h_m)&\defeq \int_{\mathrm{USp}(2N)} \prod_{k=0}^m \left|\varphi_\mathbf{A}^{(k)}(0)\right|^{h_k} \mathrm{d}\mu_{\mathrm{Haar}}(\mathbf{A}),\\
\mathfrak{J}_N^{\mathrm{SO}}(h_0,h_1,\ldots,h_m)&\defeq \int_{\mathrm{SO}(2N)} \prod_{k=0}^m \left|\varphi_\mathbf{A}^{(k)}(0)\right|^{h_k} \mathrm{d}\mu_{\mathrm{Haar}}(\mathbf{A}).
\end{align*}
In this paper, the following Bessel point process, the universal scaling limit of random matrix eigenvalues at a hard edge, will play a key role \cite{ForresterBessel,ForresterBook}.

\begin{defn}\label{def:Bessel}Let $a>-1$. We define the Bessel point process on $\mathbb{R}_+$ with parameter $a$  to be the determinantal point process  \cite{BorodinDet,JohanssonDet,ForresterBessel,ForresterBook} with correlation kernel $\mathsf{K}_a^{\mathrm{Bes}}$,
\begin{equation*}
    \mathsf{K}_a^{\mathrm{Bes}}(x,y) =\frac{\sqrt{x}J_{a+1}(\sqrt{x})J_{a}(\sqrt{y})-\sqrt{y}J_{a+1}(\sqrt{y})J_{a}(\sqrt{x})}{2(x-y)},
\end{equation*}
where $J_a$ denotes the Bessel function of the first kind with parameter $a$. It almost surely consists of a strictly increasing sequence\footnote{We note that the superscript $l$ in $\mathfrak{B}_a^l$ does not correspond to raising to the $l$-th power but rather denotes the $l$-th ordered point in the point process.} $0<\mathfrak{B}_a^1<\mathfrak{B}_a^2<\mathfrak{B}_a^3<\cdots$.
\end{defn}
\noindent Define the family of random variables $(\mathfrak{e}_m(a))_{m=0}^\infty$ by $\mathfrak{e}_0(a)\equiv 1$ and,
\begin{equation*}
    \mathfrak{e}_m(a)\defeq\sum_{1\leq i_1<\cdots<i_m}\frac{1}{\mathfrak{B}_{a}^{i_1}\cdots\mathfrak{B}_{a}^{i_m}},
\end{equation*}
which are simply elementary symmetric polynomials in the infinitely many random points $\{\mathfrak{B}_{a}^{l}\}_{l=1}^\infty$.
These are almost surely finite. 
Correspondingly, we define the power sums, for $m\geq 1$ (clearly $\mathfrak{e}_1(a)=\mathfrak{p}_1(a)$),
\beas
\mathfrak{p}_m(a)\defeq\sum_{j=1}^{\infty}\frac{1}{\left(\mathfrak{B}_{a}^{j}\right)^m}.
\eeas
In what follows, we define $\mathcal{P}_k^{(2)}$ to be the set of integer partitions of $k$ with each part equal to $1$ or $2$. For each such partition $\boldsymbol{\mu}$, we also define $l(\boldsymbol{\mu})$ to be its length, and $\theta(\boldsymbol{\mu})$ to be the number of parts that are equal to $ 2$. The following is our first main result.   

\begin{thm}
\label{thm:jointmom}
    Suppose $h_0,\ldots,h_m\in [0,\infty)$. Let $s=\sum_{i=0}^{m}h_{i}$. Then, we have convergence of joint moments:
    \begin{equation*}
        \lim_{N\to \infty}  \frac{\mathfrak{J}^{\mathrm{USp}}_N(h_0,h_1,\ldots,h_m)}{N^{\frac{s(s+1)}{2}+\sum_{k=1}^{m} kh_k}}=\frac{2^{\frac{s^2}{2}}G(1+s)\sqrt{\Gamma(1+s)}}{\sqrt{G(1+2s)\Gamma(1+2s)}} \mathbb{E}\left[\prod_{k=1}^m \mathfrak{R}_{k}\bigg(s+\frac{1}{2}\bigg)^{h_k}\right]<\infty
    \end{equation*}
    where $G$ is the Barnes $G$-function and
    \begin{equation*}
        \mathfrak{R}_k(a)\defeq \sum_{\boldsymbol{\mu} \in \mathcal{P}_k^{(2)}}\frac{2^{-2\theta(\boldsymbol{\mu})}k!}{(l(\boldsymbol{\mu})-\theta(\boldsymbol{\mu}))!} \mathfrak{e}_{\theta(\boldsymbol{\mu})}(a).
    \end{equation*}
    Moreover, whenever $h_0>\frac{1}{2}$, we have
      \begin{equation}
      \label{jointmomorth}
        \lim_{N\to \infty}  \frac{\mathfrak{J}_N^{\mathrm{SO}}(h_0,h_1,\ldots,h_m)}{N^{\frac{s(s-1)}{2}+\sum_{k=1}^{m} kh_k}}=\frac{2^{\frac{s^2}{2}}G(1+s)\sqrt{\Gamma(1+2s)}}{\sqrt{G(1+2s)\Gamma(1+s)}} \mathbb{E}\left[\prod_{k=1}^m \mathfrak{R}_{k}\bigg(s-\frac{1}{2}\bigg)^{h_k}\right]<\infty.
    \end{equation}
    \end{thm}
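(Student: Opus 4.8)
The plan is to reduce each of the two moments to a classical characteristic‑polynomial moment multiplied by a Jacobi‑ensemble average of an explicit symmetric function, and then to pass to the hard‑edge (Bessel) scaling limit. \textbf{Step 1: reduction to the Jacobi unitary ensemble.} Since any $\mathbf{A}$ in $\mathrm{USp}(2N)$ or $\mathrm{SO}(2N)$ has eigenvalues in conjugate pairs $\mathrm{e}^{\pm\mathrm{i}\theta_j}$ ($j=1,\dots,N$), putting $x_j=\sin^2(\theta_j/2)\in(0,1)$ and $w=\mathrm{e}^{-\mathrm{i}\theta}-1$ gives the exact factorisation $\varphi_{\mathbf{A}}(\theta)=\varphi_{\mathbf{A}}(0)\prod_{j=1}^N\bigl(1+w+w^2/(4x_j)\bigr)$ with $\varphi_{\mathbf{A}}(0)=4^N\prod_j x_j$. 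Differentiating $k$ times at $\theta=0$ via the complete Bell polynomial yields $\varphi_{\mathbf{A}}^{(k)}(0)=\varphi_{\mathbf{A}}(0)\,Q_k$, where $Q_k$ is an explicit polynomial in $N$ and in the negative power sums $\sum_j x_j^{-l}$ for $1\le l\le\lfloor k/2\rfloor$. Hence $\prod_{k=0}^m|\varphi_{\mathbf{A}}^{(k)}(0)|^{h_k}=\varphi_{\mathbf{A}}(0)^{s}\prod_{k=1}^m|Q_k|^{h_k}$; since $\varphi_{\mathbf{A}}(0)^s=4^{Ns}\prod_j x_j^{s}$, reweighting the Weyl eigenvalue law by $\varphi_{\mathbf{A}}(0)^s$ turns it into the Jacobi unitary ensemble with weight $x^{a}(1-x)^{b}$ on $[0,1]$, where $(a,b)=(s+\tfrac12,\tfrac12)$ for $\mathrm{USp}(2N)$ and $(a,b)=(s-\tfrac12,-\tfrac12)$ for $\mathrm{SO}(2N)$. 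Thus $\mathfrak{J}_N^{\mathrm{USp}}(h_0,\dots,h_m)=\mathbb{E}_{\mathrm{USp}(2N)}[\varphi_{\mathbf{A}}(0)^s]\cdot\mathbb{E}_{\mathrm{JUE}(s+1/2,\,1/2)}\!\bigl[\prod_{k\ge1}|Q_k|^{h_k}\bigr]$, and similarly in the orthogonal case.

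\textbf{Step 2: the scalar prefactor.} The factor $\mathbb{E}_{\mathrm{USp}(2N)}[\varphi_{\mathbf{A}}(0)^s]$ is a Selberg–Jacobi integral with a Barnes‑$G$ evaluation, with large‑$N$ behaviour $\tfrac{2^{s^2/2}G(1+s)\sqrt{\Gamma(1+s)}}{\sqrt{G(1+2s)\Gamma(1+2s)}}\,N^{s(s+1)/2}(1+o(1))$; the orthogonal analogue contributes the prefactor of \eqref{jointmomorth} times $N^{s(s-1)/2}(1+o(1))$. The Jacobi‑ensemble symmetric‑function evaluations of Forrester \cite{forrester2022joint} that underpin these formulas will also be used in Step 4.

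\textbf{Step 3: hard‑edge limit.} Under the rescaling $x_j\mapsto \mathfrak{B}_a^j/N^2$, the Jacobi ensemble converges near $x=0$ to the Bessel point process of Definition \ref{def:Bessel} with parameter $a$; correspondingly, the secular function $\prod_j\bigl(1+w+w^2/(4x_j)\bigr)$ with $\theta=\tau/N$ converges to $\mathrm{e}^{-\mathrm{i}\tau}\prod_j\bigl(1-\tau^2/(4\mathfrak{B}_a^j)\bigr)=\mathrm{e}^{-\mathrm{i}\tau}\sum_{p\ge0}(-\tfrac14)^p\tau^{2p}\mathfrak{e}_p(a)$. Extracting the $\tau^k$‑coefficient gives $N^{-k}Q_k\Rightarrow(-\mathrm{i})^k\sum_{p=0}^{\lfloor k/2\rfloor}\tfrac{k!}{4^p(k-2p)!}\mathfrak{e}_p(a)=(-\mathrm{i})^k\mathfrak{R}_k(a)$, jointly over $k$ — the partitions into parts $1$ and $2$ in $\mathcal{P}_k^{(2)}$ being exactly those that arise when the complete Bell polynomial is combined with $\mathrm{e}^{-\mathrm{i}\tau}\prod_j(1-\tau^2/(4\mathfrak{B}_a^j))$. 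Since each $\mathfrak{R}_k(a)\ge0$, this gives $N^{-\sum_k kh_k}\prod_k|Q_k|^{h_k}\Rightarrow\prod_k\mathfrak{R}_k(a)^{h_k}$, with $a=s+\tfrac12$ for $\mathrm{USp}$ and $a=s-\tfrac12$ for $\mathrm{SO}$.

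\textbf{Step 4: uniform integrability — the main obstacle.} It remains to upgrade the convergence of Step 3 to $\mathbb{E}_{\mathrm{JUE}}\bigl[\prod_k|N^{-k}Q_k|^{h_k}\bigr]\to\mathbb{E}\bigl[\prod_k\mathfrak{R}_k(a)^{h_k}\bigr]$; this uniform‑integrability step is where the positivity of the eigenvalues is essential. From the factorisation in Step 1 one checks that $Q_k$ has at worst a simple pole at each $x_j=0$ — the a priori stronger singularities coming from $\sum_j x_j^{-l}$ with $l\ge2$ cancel — while the hard‑edge repulsion $x^a$ gives $\mathbb{E}_{\mathrm{JUE}}[(\min_j x_j)^{-p}]\le C_p\,N^{2p}$ uniformly in $N$ for every $p<a+1$; together with Forrester's evaluations, invoked to bound higher moments of the relevant negative power sums uniformly in $N$, this dominates $\prod_k|N^{-k}Q_k|^{(1+\delta)h_k}$ for small $\delta>0$. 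In the symplectic case $a=s+\tfrac12>\tfrac12$ automatically; in the orthogonal case $a=s-\tfrac12$, and the hypothesis $h_0>\tfrac12$, which forces $s\ge h_0>\tfrac12$ and hence $a>0$, is precisely what makes these uniform bounds (and the finiteness of the limiting expectation) available — below it the orthogonal hard‑edge weight $x^{s-1/2}$ is no longer repulsive and the behaviour changes. Assembling Steps 1–4 yields both asserted limits.
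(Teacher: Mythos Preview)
Your overall strategy matches the paper's: reduce to a Jacobi ensemble average (Proposition~\ref{lem:jointmomSym}), peel off the Keating--Snaith prefactor, identify the hard-edge limit, and then justify convergence of expectations via uniform integrability. Steps~1--3 are essentially correct, and your secular-function derivation of $\mathfrak{R}_k(a)=\sum_{p\le \lfloor k/2\rfloor} \frac{k!}{4^{p}(k-2p)!}\mathfrak{e}_p(a)$ is exactly the formula the paper obtains via the partition decomposition over $\mathcal{P}_k^{(2)}$.

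The gap is in Step~4. Your two proposed ingredients---the smallest-eigenvalue bound $\mathbb{E}[(\min_j x_j)^{-p}]\le C_p N^{2p}$ and Forrester's bound on integer moments of $\mathfrak{p}_{1,N}$---do not suffice. The objects you must control are moments of $N^{-2\theta}\mathfrak{e}_{N,\theta}$ for $\theta$ up to $\lfloor m/2\rfloor$. Bounding $\mathfrak{e}_{N,\theta}\le \binom{N}{\theta}(\min_j x_j)^{-\theta}$ costs an extra factor $N^{\theta}$ that does not cancel; bounding via Maclaurin as $\mathfrak{e}_{N,\theta}\le \mathfrak{p}_{1,N}^{\theta}/\theta!$ and then invoking Forrester requires $\theta\cdot q\sum_{k\ge1}h_k<a+1$ rather than merely $q\sum_{k\ge1}h_k<a+1$, which fails already for, say, $h_0=0$, $h_6=1$ in the symplectic case. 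The paper instead proves a direct bound $\sup_N N^{-2l\alpha}\mathbb{E}[\mathfrak{e}_{N,l}^{\alpha}]<\infty$ for any integer $\alpha<a+1$ (Proposition~\ref{prop:UIbdforelemsym}), exploiting the determinantal structure of the Jacobi ensemble: one expands $\mathfrak{e}_{N,l}^{\alpha}$ and evaluates each piece against the $\kappa$-point correlation function, then uses Cauchy--Schwarz on the kernel together with the one-point estimate $\int_0^1 x^{-\theta}K_N(x,x)\,\mathrm{d}x\le C N^{2\theta}$ for $\theta<a+1$. This is genuinely sharper than anything one extracts from power-sum or minimum-eigenvalue bounds, and it is what drives the uniform integrability in the full parameter range.

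A minor point: your explanation that $h_0>\tfrac12$ is needed ``because $a>0$'' is not quite the mechanism. The paper's uniform-integrability step requires an \emph{integer} $q\sum_{k\ge1}h_k$ in the interval $(\sum_{k\ge1}h_k,\,a+1)$; for $\mathrm{SO}(2N)$ this interval has length $h_0+\tfrac12$, so $h_0>\tfrac12$ guarantees it contains an integer regardless of the other exponents.
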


Theorem \ref{thm:jointmom} completely solves Problem  \ref{problem:ProblemIntro} for $\mathrm{USp}(2N)$, and essentially solves it, modulo the technical restriction $h_0>1/2$ for $\mathrm{SO}(2N)$. In both cases, this is achieved by determining completely explicitly the leading order $\delta^{(G)}_{h_0,\dots,h_m}$
and giving a natural probabilistic expression for $\mathfrak{c}^{(G)}_{h_0,\dots,h_m}$. We expect that the result should hold in full generality beyond the technical restriction, and in fact, in the special case where $h_1,\ldots,h_m \in \mathbb{N}$ the restriction can already be removed using the arguments developed in this paper with minimal additional effort, see Corollary \ref{cor:integerhj}.\\
\indent We now turn to the task of computing the moments of $\mathfrak{R}_k$ that define $\mathfrak{c}_{h_0,\dots,h_m}^{(G)}$ in the two cases $G(N)=\mathrm{USp}(2N)$ and $G(N)=\mathrm{SO}(2N)$.
It is obvious that the integer joint moments of the random variables $\mathfrak{R}_k$ can be recovered by evaluating the integer joint moments of  the $\mathfrak{e}_k$'s. In turn, as we show in Proposition \ref{thm:laguerre}, this is achievable by representing such averages as a linear combination of averages over finite-dimensional matrix eigenvalue ensembles which are explicitly computable. To this end, we define the Weyl chamber
    \begin{equation*}
    \mathbb{W}_N\defeq \{\mathbf{x}=(x_1,\dots,x_N)\in \R^N \, \mid \, x_1\leq x_2\leq \ldots \leq x_N\}
\end{equation*}
    and introduce the Laguerre probability measures $\nu_N^{(a)}$ on $\mathbb{W}_N$, with $a>-1$:
    \begin{equation*}
        \nu_N^{(a)}(\mathrm{d}\mathbf{x})= \frac{1}{\mathcal{Z}_{N}^{(a)}} \, \prod_{j=1}^N x_j^a \mathrm{e}^{-x_j} \mathbf{1}_{x_j\in \mathbb{R}_+}\prod_{1\leq i<j\leq N} (x_i-x_j)^2  \mathrm{d}\mathbf{x},
    \end{equation*}
  where $\mathcal{Z}_N^{(a)}$ is an explicit normalization constant, see \cite{ForresterBook}.   We denote by $\E_N^{(a)}$ the averages taken with respect to $\nu_N^{(a)}$. 
  
  Using ideas developed in the authors' previous paper \cite{assiotis2024exchangeablearraysintegrablesystems}, we show that one can in fact represent integer joint moments of the $\mathfrak{e}_k$ in terms of finite-dimensional averages of the Laguerre ensemble. The fact that this is possible is non-obvious. Introduce the elementary symmetric polynomials $\mathrm{e}_k$ in the variables $(x_1,\dots,x_N)$ that are defined for each $1\leq k\leq N$ as:
  \begin{equation}
     \mathrm{e}_{k} \big(x_1,\ldots, x_N\big) \defeq \sum_{1\leq i_1<\cdots<i_k\leq N} x_{i_1}\cdots x_{i_k}.
  \end{equation}
\noindent Our result on evaluating the leading order coefficient reads as follows.
\begin{prop}
\label{thm:laguerre}
Suppose $m\in \mathbb{N}$, $a>0$, and $h_1,\ldots, h_m\in \mathbb{N}\cup \{0\}$ with $h_{m}\neq 0$ and $\sum_k h_k<a+1$, and define $Q=\sum_k kh_k$. Then, we have
\begin{equation*}
\mathbb{E}\left[\prod_{k=1}^m \left(\mathfrak{e}_{k}(a)\right)^{h_k}\right]=\frac{(1!)^{h_1}\cdots(m!)^{h_m}}{Q!}\sum_{j=m}^{Q} (-1)^{Q+j} {Q \choose j} \mathbb{E}_{j}^{(a)}\left[\prod_{k=1}^m \left(\mathrm{e}_{k} \left(\frac{1}{\mathsf{x}_1^{(j)}},\frac{1}{\mathsf{x}_2^{(j)}},\ldots, \frac{1}{\mathsf{x}_j^{(j)}}\right)\right)^{h_k}\right].
\end{equation*}
\end{prop}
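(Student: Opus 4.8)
The plan is to reduce the identity to two facts about the single Laguerre average
$g(j):=\mathbb{E}_j^{(a)}\bigl[\prod_{k=1}^{m}\mathrm{e}_k\bigl(1/\mathsf{x}_1^{(j)},\dots,1/\mathsf{x}_j^{(j)}\bigr)^{h_k}\bigr]$, regarded as a function of the integer $j\ge 0$: (A) $g$ agrees with a polynomial of degree at most $Q$ and vanishes for $0\le j<m$; and (B) the coefficient of $j^{Q}$ in that polynomial equals $\mathbb{E}\bigl[\prod_{k}\mathfrak{e}_k(a)^{h_k}\bigr]\big/\bigl((1!)^{h_1}\cdots(m!)^{h_m}\bigr)$. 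Granting (A) and (B), the proposition follows by the classical finite-difference identity: since $g(0)=\dots=g(m-1)=0$ and $(-1)^{Q+j}=(-1)^{Q-j}$, the right-hand side equals $\tfrac{(1!)^{h_1}\cdots(m!)^{h_m}}{Q!}\sum_{j=0}^{Q}(-1)^{Q-j}\binom{Q}{j}g(j)$, which is $\tfrac{(1!)^{h_1}\cdots(m!)^{h_m}}{Q!}$ times the $Q$-th forward difference $(\Delta^{Q}g)(0)$; for a polynomial of degree $\le Q$ this forward difference is $Q!$ times the coefficient of $j^{Q}$, and by (B) the whole expression collapses to $\mathbb{E}\bigl[\prod_{k}\mathfrak{e}_k(a)^{h_k}\bigr]$. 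One records at the outset that all the expectations are finite: near a vanishing eigenvalue the integrand $\prod_k\mathrm{e}_k(1/\mathsf{x}^{(j)})^{h_k}$ blows up like $\mathsf{x}^{-\sum_k h_k}$ against a density vanishing like $\mathsf{x}^{a}$, so $\sum_k h_k<a+1$ is exactly the integrability condition for $\nu_j^{(a)}$, and the same bound handles the Bessel side (where $\mathfrak{B}_a^1$ has density $\sim x^{a}$ at the origin).

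For (A), I would use the algebraic identity $\mathrm{e}_k(1/\mathsf{x}_1,\dots,1/\mathsf{x}_j)=\mathrm{e}_{j-k}(\mathsf{x}_1,\dots,\mathsf{x}_j)/(\mathsf{x}_1\cdots\mathsf{x}_j)$ to write $\prod_k\mathrm{e}_k(1/\mathsf{x}^{(j)})^{h_k}=\bigl(\prod_k\mathrm{e}_{j-k}(\mathsf{x}^{(j)})^{h_k}\bigr)\big/(\mathsf{x}_1^{(j)}\cdots\mathsf{x}_j^{(j)})^{s}$ with $s=\sum_k h_k$. The numerator is a genuine polynomial in $\mathsf{x}^{(j)}$, and absorbing $(\mathsf{x}_1^{(j)}\cdots\mathsf{x}_j^{(j)})^{-s}$ into the Laguerre weight shifts the parameter $a\mapsto a-s$, which is $>-1$ precisely because $\sum_k h_k<a+1$; hence $g(j)=\tfrac{\mathcal{Z}_j^{(a-s)}}{\mathcal{Z}_j^{(a)}}\,\mathbb{E}_j^{(a-s)}\bigl[\prod_k\mathrm{e}_{j-k}(\mathsf{x}^{(j)})^{h_k}\bigr]$. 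Now expand the product of single-column Schur functions $\prod_k\mathrm{e}_{j-k}^{h_k}$ (where $\mathrm{e}_{j-k}=s_{(1^{\,j-k})}$) in the Schur basis of $j$-variable symmetric functions: by the dual Pieri rule the shapes that appear are precisely the complements, inside the $j\times s$ rectangle, of the partitions $\mu\vdash Q$, with multiplicities that do not depend on $j$ (they coincide with the Schur-expansion coefficients of $\prod_k\mathrm{e}_k^{h_k}$). Applying the hook-content evaluation $\mathbb{E}_j^{(b)}[s_\nu(\mathsf{x})]=s_\nu(1^{j})\prod_{(i,l)\in\nu}(b+j+l-i)$ together with the explicit product formula for $\mathcal{Z}_j^{(b)}$, which gives $\mathcal{Z}_j^{(a-s)}/\mathcal{Z}_j^{(a)}=\prod_{i=1}^{j}\Gamma(a-s+i)/\Gamma(a+i)$, one sees that the $\Gamma$-factors carrying the $j$-dependence cancel between the normalisation ratio and the hook-content product over the (near-rectangular) complementary shape, so that each term is a polynomial in $j$ of degree $|\mu|\le Q$; summing over $\mu$ proves (A). The prototype is $\mathbb{E}_j^{(a)}\bigl[\mathrm{e}_k(1/\mathsf{x}^{(j)})\bigr]=\binom{j}{k}\,\Gamma(a)/\Gamma(a+k)$, a polynomial in $j$ of degree $k$ that visibly vanishes for $j<k$; more generally $g$ vanishes for $j<m$ because $\mathrm{e}_m$ of fewer than $m$ variables is identically zero.

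Step (B) is where I expect the real work, and I would prove it through the hard-edge scaling limit. Under the standard rescaling the eigenvalues of $\nu_j^{(a)}$, multiplied by the appropriate multiple of $j$, converge to the Bessel point process of Definition~\ref{def:Bessel}; consequently there is an explicit positive constant $\gamma$ for which $(\gamma j)^{-k}\mathrm{e}_k(1/\mathsf{x}^{(j)})$ converges in distribution, jointly in $k$, to $\mathfrak{e}_k(a)$, the only subtle point being that eigenvalues of macroscopic size enter each $\mathrm{e}_k(1/\mathsf{x}^{(j)})$ only through the tail of the (infinite) limiting configuration, so one must verify that the tail of $\sum_l(\mathfrak{B}_a^l)^{-1}$ controls the discrepancy uniformly in $j$. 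To turn this distributional statement into convergence of the $Q$-th joint moment --- which is exactly what pins down the coefficient of $j^{Q}$ in $g$ --- one needs a uniform bound $\sup_j (\gamma j)^{-(1+\varepsilon)Q}\,\mathbb{E}_j^{(a)}\bigl[\prod_k\mathrm{e}_k(1/\mathsf{x}^{(j)})^{(1+\varepsilon)h_k}\bigr]<\infty$ for some $\varepsilon>0$; this is precisely where the strict inequality $\sum_k h_k<a+1$ is essential, since it leaves room to choose $\varepsilon$ with $(1+\varepsilon)\sum_k h_k<a+1$, keeping both the limiting moment finite and the small-eigenvalue estimates for $\nu_j^{(a)}$ uniform in $j$. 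Obtaining this uniform control is the step most naturally carried out through the exchangeability and coupling machinery of \cite{assiotis2024exchangeablearraysintegrablesystems}; combining it with (A) shows that the coefficient of $j^{Q}$ in $g$ equals $\mathbb{E}\bigl[\prod_k\mathfrak{e}_k(a)^{h_k}\bigr]$ up to an explicit constant, which one then matches --- using the elementary asymptotics $\mathrm{e}_k(\underbrace{v,\dots,v}_{j})=\binom{j}{k}v^{k}\sim j^{k}v^{k}/k!$ built into $g(j)$, together with the scaling normalisation --- against the factor $(1!)^{h_1}\cdots(m!)^{h_m}$ demanded by (B).

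The main obstacle is step (B): establishing the uniform-in-$j$ higher-moment estimate that justifies interchanging the hard-edge limit with the expectation, and bookkeeping the resulting multiplicative constant carefully. Step (A), though essentially computational, also requires care --- in particular in checking that the Schur structure constants really are $j$-independent in the relevant range and that all $\Gamma$-ratios cancel exactly --- and it is that cancellation which makes the otherwise $j$-growing Schur averages collapse to a polynomial of degree exactly $Q$.
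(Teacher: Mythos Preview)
Your route is genuinely different from the paper's, and it runs into real difficulties that the paper's argument avoids entirely.

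The paper does not attempt to prove either polynomiality of $g(j)$ or to extract a leading coefficient. Instead it observes that the \emph{inverse} Laguerre measures $\mathfrak{IL}_N^{(a)}$ (the pushforward of the matrix Laguerre law under $\mathbf{H}\mapsto\mathbf{H}^{-1}$) are consistent under restriction to principal submatrices and are unitarily---hence permutation---invariant. They therefore assemble into a single measure $\mathfrak{IL}_\infty^{(a)}\in\mathcal{M}_{\mathrm{p}}^{\mathrm{Perm}}$ on infinite positive-definite arrays. At that point the entire alternating-sum identity is a \emph{black-box application} of Proposition~\ref{thm:agkw}, a general exchangeable-array formula imported from \cite{assiotis2024exchangeablearraysintegrablesystems}; the only remaining work is to identify the abstract limits $\mathsf{a}_k$ in that proposition with $\mathfrak{e}_k(a)$, which follows from the hard-edge convergence \eqref{eq:laguerreasconv}. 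Consistency of $\mathfrak{IL}_N^{(a)}$ is the structural reason the formula holds at all: it says that each $k\times k$ principal minor of $\mathbf{H}_j$ has the $j$-independent law $\mathfrak{IL}_k^{(a)}$, and this is exactly what forces $g(j)$ to be polynomial of degree $Q$ (expand $\prod_k\mathrm{e}_k^{h_k}$ as a sum over tuples of principal minors and count embeddings by exchangeability).

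Your plan to establish (A) via complementary Schur functions and hook-content evaluations may be workable for the Laguerre weight, but the crucial step---that all $j$-dependent $\Gamma$-factors cancel so that each Schur contribution is a polynomial in $j$ of degree $\le Q$---is asserted rather than shown, and carrying it out honestly would amount to rediscovering the consistency of $\mathfrak{IL}_N^{(a)}$ by explicit computation. More seriously, your step (B) is not consistent with the scaling actually in force: by \eqref{eq:laguerreasconv} one has $j^{-k}\mathrm{e}_k(1/\mathsf{x}^{(j)})\to\mathfrak{e}_k(a)$, so the coefficient of $j^{Q}$ in $g(j)$ is $\mathbb{E}\bigl[\prod_k\mathfrak{e}_k(a)^{h_k}\bigr]$ \emph{without} any division by $\prod_k(k!)^{h_k}$ (check the prototype: $g(j)=\binom{j}{2}\Gamma(a)/\Gamma(a+2)$ has leading coefficient $\tfrac{1}{2a(a+1)}=\mathbb{E}[\mathfrak{e}_2(a)]$, not $\mathbb{E}[\mathfrak{e}_2(a)]/2!$). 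The ``$\mathrm{e}_k(v,\dots,v)\sim j^kv^k/k!$'' heuristic you invoke is a constant-vector substitution, not the hard-edge limit, and does not produce the factorials you need. So even granting (A), your finite-difference argument does not recover the stated identity, and the bookkeeping you defer to the end is precisely where the argument breaks. The paper sidesteps all of this by appealing to the general formula~\eqref{eq:exchformula}, whose proof in \cite{assiotis2024exchangeablearraysintegrablesystems} handles the combinatorics once and for all at the level of exchangeable arrays.
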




As stated earlier, the $\mathfrak{e}_{k}(a)$ are simply elementary symmetric polynomials in the (infinitely many) variables $\{\mathfrak{B}_{a}^i\}_{i\geq 1}$.  These are basically the Taylor coefficients of the random entire function $\mathfrak{f}_{a}$ obtained in the work of Li and Valk\'{o} \cite{valkoli} as the limit of characteristic polynomials of the real orthogonal ($\beta=2$) ensemble or as the secular function \cite{ValkoViragSecular,valkoli} of the stochastic ($\beta=2$) Bessel operator $\mathfrak{G}_{a}^{-1}$, see \cite{RamirezRider}, also \cite{RandomAnalytic}: 
\begin{equation}
\mathfrak{f}_{a}(z)=\det\left(\mathbf{I}-z\mathfrak{G}_{a}^{-1}\right)=\prod_{i=1}^\infty\left(1-z/\mathfrak{B}_{a}^i\right)=1+\sum_{k=1}^{\infty}(-1)^k \mathfrak{e}_k(a) z^k.
\end{equation}
In their work Li and Valk\'{o} \cite{valkoli} give nice expressions for the $\mathfrak{e}_k(a)$ in terms of multiple integrals of exponential Brownian motions. Joint moments of these multiple integrals are then given by the formula in Proposition \ref{thm:laguerre}. It would be interesting if there is an alternative derivation of this formula using Brownian motions.\\
\indent Finally, we use Theorem \ref{thm:jointmom} to formulate conjectures for the mean values of high-order derivatives of $L$-functions at the center of the critical strip, in particular conjecturing that the leading term has a probabilistic representation.  
This direction was initiated by the results of Katz and Sarnak \cite{KatzSarnak}, who introduced the concept of symmetry types for families of $L$-functions and suggested that the zero distribution of such a family should be modeled by the eigenvalue distribution of the corresponding classical groups. These classical groups are determined by the symmetries inherent in each family of $L$-functions. Later, Keating and Snaith (see \cite{keatingsnaith,KeatingSnaithLfunctions}) investigated the moments of the characteristic polynomials of random matrices from the groups $\mathrm{U}(N)$, $\mathrm{USp}(2N)$, and $\mathrm{SO}(2N)$. They conjectured that the leading-order asymptotics of these moments coincide with the leading terms in the corresponding moments of families of $L$-functions at the central point $s=1/2$, corresponding respectively to the unitary, unitary symplectic, and orthogonal symmetry types. Furthermore, as shown in~\cite{keatingsnaith} and~\cite{KeatingSnaithLfunctions}, these conjectures are consistent with the known results for $L$-functions in number theory.

Following these heuristics and using the results of this paper, we can give conjectures regarding the asymptotic mean value distribution of derivatives of arbitrary order of families of $L$-functions at the central point $s = 1/2$. In the following, we present an example conjecture in the even orthogonal case. Theorem \ref{thm:jointmom}, together with the recipe for formulating conjectures about the moments of any primitive family of $L$-functions, enables similar conjectures to be made for other families of $L$-functions. We refer the readers to \cite[Sections 1.3]{CFKRS2} and \cite[Section 4.3]{CFKRS1} for some examples without derivatives. Consider the mean values of 
\beas
L_f(s) = \sum_{n=1}^{\infty} \lambda_f(n) n^{-s}, 
\eeas
at the critical point $s = \frac{1}{2}$, averaged over $f \in H_m^*(N)$. Here, $ H_m^*(N)$ denotes the set of primitive Hecke eigenforms of
weight $m$ relative to the subgroup $\Gamma_0(N)$,
and $\lambda_f(n)$ are the normalized Fourier coefficients of the eigenform with $\lambda_f(1)=1$. For simplicity, we restrict our attention to the case $m = 2$ and $ N = q $, where $ q$ is a prime number. The low-lying zeros of this family near the critical point exhibit orthogonal symmetry, and the functional equation for the $L$-function can be written as,
\[
\Lambda_{f}(s) \defeq \Bigg(\frac{\sqrt{q}}{2\pi}\Bigg)^s \Gamma\left(s + \frac{1}{2}\right) L_{f}(s) = \varepsilon_f \Lambda_{f}(1 - s),
\]
where $\varepsilon_f = -\sqrt{q} \lambda_f(q) = \pm 1 $. We say that $f$ is even (respectively, odd) if and only if $ \varepsilon_f = +1 $(respectively, $\varepsilon_f = -1 )$. We define the harmonic average as
\[
\sum_{f} \raisebox{0.7ex}{\hspace{-0.3em}$^{h}$} 
A_f \defeq \sum_{f \in H_2^*(q)} \frac{A_f}{(f, f)},
\]
where $(f, g)$ is the Petersson inner product on the space $\Gamma_0(q) \backslash \mathbb{H}$.

\begin{conjecture} 
Let $k,s\geq 0$ be non-negative integers. Then, as $q\to \infty$,
\begin{equation*}
\sum_{ f~even} \raisebox{0.7ex}{\hspace{-0.5em}$^{h}$}
\left(L^{(k)}_{f}\Big(\frac{1}{2}\Big)\right)^{s}\sim a_{s}\big(\log(q^{1/2})\Big)^{^{s(s-1)/2+sk} } g(k;s),
\end{equation*}
where
\beas
g(k;s)=\frac{2^{\frac{s^2}{2}-1}G(1+s)\sqrt{\Gamma(1+2s)}}{\sqrt{G(1+2s)\Gamma(1+s)}} \sum_{l_{1},\dots,l_{s}=0}^{k}\left(\prod_{j=1}^{s}\binom{k}{l_{j}}\right)(-2)^{sk-\sum_{j=1}^{s}l_{j}}\mathbb{E}\left[\prod_{j=1}^s \mathfrak{R}_{l_{j}}\bigg(s-\frac{1}{2}\bigg)\right],
\eeas
and 
\beas
a_s = \prod_{p \nmid q} \left( 1 - \frac{1}{p} \right)^{\frac{s(s-1)}{2}} \frac{2}{\pi} \int_0^\pi \sin^2 \theta
\Bigg(
\frac{
\mathrm{e}^{\i \theta} \Big(1 - \frac{\mathrm{e}^{\i \theta}}{\sqrt{p}}\Big)^{-1} - \mathrm{e}^{-\i \theta} \Big(1 - \frac{\mathrm{e}^{-\i \theta}}{\sqrt{p}}\Big)^{-1}
}{
\mathrm{e}^{\i \theta} - \mathrm{e}^{-\i \theta}
}
\Bigg)^s \mathrm{d}\theta.
\eeas
\end{conjecture}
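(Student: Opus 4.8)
The plan is to derive this conjecture from Theorem~\ref{thm:jointmom} by following the Katz--Sarnak philosophy and the Conrey--Farmer--Keating--Rubinstein--Snaith recipe, adapted to incorporate derivatives. First I would record the leading-order random-matrix prediction: by Theorem~\ref{thm:jointmom}, for a single power $s$ of the $k$-th derivative one has
\begin{equation*}
\lim_{N\to\infty}\frac{1}{N^{\frac{s(s-1)}{2}+sk}}\int_{\mathrm{SO}(2N)}\left|\varphi_{\mathbf{A}}^{(k)}(0)\right|^{s}\mathrm{d}\mu_{\mathrm{Haar}}(\mathbf{A})=\frac{2^{\frac{s^2}{2}}G(1+s)\sqrt{\Gamma(1+2s)}}{\sqrt{G(1+2s)\Gamma(1+s)}}\,\mathbb{E}\!\left[\mathfrak{R}_k\!\left(s-\tfrac12\right)^{s}\right],
\end{equation*}
valid for $s>1/2$ and in particular for positive integers $s$. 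However, the conjecture is phrased in terms of $\left(L_f^{(k)}(1/2)\right)^s$, i.e.\ the $k$-th derivative to the $s$-th power, not $\mathfrak{R}_k$ itself; so I first need to expand $\varphi_{\mathbf{A}}^{(k)}(0)$ as a combination of lower derivatives of $\varphi_{\mathbf{A}}$ at $0$ using the chain rule relating $\varphi_{\mathbf{A}}(\theta)=\det(\mathbf{I}-\mathrm{e}^{-\mathrm{i}\theta}\mathbf{A})$ with $\phi_{\mathbf{A}}(z)=\det(\mathbf{I}-z\mathbf{A})$ evaluated near $z=1$. Writing $z=\mathrm{e}^{-\mathrm{i}\theta}$, Faà di Bruno gives $\varphi_{\mathbf{A}}^{(k)}(0)=\sum_{l=0}^{k}(-2)^{?}\cdots$; tracking the combinatorics carefully produces exactly the sum $\sum_{l_1,\dots,l_s=0}^{k}\prod_j\binom{k}{l_j}(-2)^{sk-\sum l_j}$ over the $s$ independent factors, matching the structure in $g(k;s)$, once the derivative asymptotics of $\phi_{\mathbf{A}}$ at $1$ are identified with moments of the $\mathfrak{R}_l$ (this identification is implicit in the proof of Theorem~\ref{thm:jointmom}, where the $\mathfrak{R}_k$ arise precisely from such a derivative expansion).

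The second step is the number-theoretic side. Following \cite[Section~4.3]{CFKRS1} and \cite[Section~1.3]{CFKRS2}, the recipe predicts that for the family $H_2^*(q)$ with even functional equation (orthogonal symmetry), the harmonic average of $\left(L_f(1/2)\right)^s$ is asymptotic to a product of an arithmetic factor $a_s$ (an Euler product over $p\nmid q$ capturing the local densities of the family) and a ``random-matrix factor'' which is the $\mathrm{SO}$-moment with $N\leftrightarrow \log(q^{1/2})$, the conductor playing the role of the matrix size. The arithmetic factor displayed in the conjecture is exactly the standard one: the Euler product $\prod_{p\nmid q}(1-1/p)^{s(s-1)/2}\cdot(\text{local average over }\theta)$ where the $\theta$-integral computes $\int_{\mathrm{USp}(2)}$-type local factors against $\sin^2\theta$, which is the Sato--Tate measure for this family. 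To incorporate the $k$-th derivative, I would differentiate the approximate functional equation / recipe expression $k$ times in $s$ before setting $s=1/2$; since differentiation in the $L$-function variable corresponds on the random-matrix side to differentiation of $\varphi_{\mathbf{A}}$, and each such differentiation brings down a factor of $\log(q^{1/2})$ (the analogue of $N$) from the leading behaviour, the power $N^{s(s-1)/2+sk}$ becomes $(\log q^{1/2})^{s(s-1)/2+sk}$, and the constant picks up exactly the $\mathfrak{R}_l$-combination. The factor of $2^{-1}$ in $g(k;s)$ compared to the constant in \eqref{jointmomorth} accounts for the restriction to \emph{even} forms, which is half of $H_2^*(q)$ and which is modeled by $\mathrm{SO}(2N)$ rather than the full orthogonal family.

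Concretely, the steps in order would be: (i) state the derivative expansion identity expressing $\varphi_{\mathbf{A}}^{(k)}(0)$ in terms of $\{\phi_{\mathbf{A}}^{(j)}(1)\}_{j\le k}$ and hence, after taking the $N\to\infty$ limit of the $s$-th moment via Theorem~\ref{thm:jointmom}, obtain the random-matrix prediction in the precise form of $g(k;s)$; (ii) invoke the Katz--Sarnak heuristic that the low-lying zeros of $\{L_f\}_{f\ \mathrm{even}}$ are modeled by $\mathrm{SO}(2N)$ eigenangles with $2N\sim 2\log(q^{1/2})$; (iii) apply the CFKRS recipe to write down the conjectural main term for $\sum_f^h \left(L_f^{(k)}(1/2)\right)^s$ as (arithmetic factor)$\times$(random-matrix factor), differentiating the recipe's contour-integral expression $k$ times to handle the derivative; (iv) identify the arithmetic factor with the stated Euler product by computing the local integrals against the Sato--Tate measure $\frac{2}{\pi}\sin^2\theta\,\mathrm{d}\theta$; (v) combine. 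The main obstacle I expect is \emph{not} the number theory, which is a mechanical application of the recipe, but rather step (i): proving (or at least carefully justifying) that the combinatorial coefficients from the $\mathrm{e}^{-\mathrm{i}\theta}\leftrightarrow z$ change of variables, when fed through the $N\to\infty$ asymptotics of joint moments of $\{\phi_{\mathbf{A}}^{(j)}(1)\}$, assemble precisely into $\sum_{l_1,\dots,l_s=0}^k\prod_j\binom{k}{l_j}(-2)^{sk-\sum l_j}\mathbb{E}[\prod_j\mathfrak{R}_{l_j}(s-\tfrac12)]$ --- this requires matching the multinomial structure coming from the $s$-th power of a single $k$-th derivative with the way the $\mathfrak{R}_k$ package the scaled power sums $\mathfrak{p}_j(a)$, and the bookkeeping here is delicate because $\mathfrak{R}_k(a)$ itself is a nontrivial sum over $\mathcal{P}_k^{(2)}$. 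Since the conjecture is only asserted heuristically, I would present (i) as a computation validating that the random-matrix side of the recipe takes the claimed closed form, rather than as a theorem.
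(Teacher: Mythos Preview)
Your proposal is essentially correct and aligns with what the paper does. Note that this statement is a \emph{conjecture}, not a theorem: the paper does not prove it but simply formulates it by combining the random-matrix asymptotics of Theorem~\ref{thm:jointmom} with the Katz--Sarnak and CFKRS heuristics, exactly as you outline in steps (i)--(v). The paper offers no detailed derivation beyond the sentence ``Following these heuristics and using the results of this paper, we can give conjectures\ldots'', so your sketch is in fact more explicit than what appears there. Your identification of the $2^{-1}$ as coming from the restriction to even forms, and of the sum over $l_1,\dots,l_s$ with $(-2)^{sk-\sum l_j}$ factors as arising from the change of variables between derivatives in $\theta$ and derivatives in the $L$-function variable, are both on the right track; the paper takes these for granted.
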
 

It is possible to state analogous conjectures for general joint moments, but for brevity we only state the simplest case for moments of a single derivative of arbitrary order.

 \subsection{Characteristic polynomials and Painlevé equations}
 
 In various works in literature, it was shown that solutions of certain Painlevé equations are closely related to the limiting objects appearing in the problem of joint moments over various classical groups, see for example \cite{altugetal, ABGS, Basor_2019, ForresterWittePainleve1, ForresterWittePainleve2, ForresterWittePainleveIII, gunes2022characteristic}. 
The work of most relevance to the setting herein is \cite{gunes2022characteristic} which studies the special case of joint moments with the second derivative for $\mathrm{USp}(2N)$. It is proven there that a random variable appearing in those asymptotics admits a representation in terms of a solution to a certain Painlev\'{e} equation. This random variable was not explicitly identified in \cite{gunes2022characteristic} and denoted therein by $M(a,b)$. The connection to Painlev\'{e} equations then reads as follows. The function $\tau_{a,b}$ given by:
\begin{equation*}
 \tau_{a,b}(t)=t \frac{\di}{\di t} \log \mathbb{E}\left[\mathrm{e}^{-tM(a,b)}\right]
\end{equation*}
solves the $\sigma$-Painlev\'{e} III$'$ equation
    \begin{equation}\label{eq:Painleve}
        \left(t \frac{\di^2 }{\di t^2}\tau_{a,b}(t)\right)^2+4\left(\frac{\di}{\di t}\tau_{a,b}(t)\right)^2\left(t \frac{\di}{\di t}\tau_{a,b}(t)-\tau_{a,b}(t)\right)-\left(a\frac{\di}{\di t}\tau_{a,b}(t)+1\right)^2=0,
    \end{equation}
for any $t\in \mathbb{R}_{\geq 0}\setminus \{0\}$, along with the initial conditions:
\begin{equation*}
\begin{cases}
      \tau_{a,b}(0)=0, & \text{for}  \ a>0,\label{eq:boundary1}\\
      & \\
   \frac{\mathrm{d}}{\mathrm{d}t}\tau_{a,b}(t)\Bigr\rvert_{t = 0}=-\frac{1}{a}, & \text{for} \ a>1. 
\end{cases}
\end{equation*}
Here, we note that for the derivatives to exist at $t=0$, one requires the finiteness of certain moments, which is then guaranteed by imposing further conditions on the parameter $a$. Observe that, the equation \eqref{eq:Painleve} does not depend on $b$ and had in fact appeared previously in work on $\mathfrak{e}_1(a)$, see \cite{ABGS}. Based on this, it was predicted in \cite{gunes2022characteristic}, see Remark 1.2 therein, that $M(a,b)$ is in fact equal to $\mathfrak{e}_1(a)$. Our working below identifies, in distribution, $$M(a,b)=\mathfrak{e}_1(a)$$ and confirms this prediction. Moreover, by virtue of this identification and the results of \cite{ABGS} we obtain an alternative proof of \eqref{eq:Painleve}.

\subsubsection{Connections between higher joint moments and Painlev\'{e} equations}
Given the above, it is natural to ask if there is a connection between higher joint moments of the $(\mathfrak{e}_1(a),\dots,\mathfrak{e}_k(a))$, or equivalently, ($\mathfrak{p}_1(a),\mathfrak{p}_2(a)\dots,\mathfrak{p}_k(a))$, and integrable systems. The equivalence here refers to the fact that one vector of random variables can be expressed as a linear combination of power sums of the other, which follows from Newton's identities and Lemma \ref{lem:skorokhod} below. The following result answers this natural question positively.


\begin{thm}\label{mainpainlevethm}
Let $k\geq 2$, $n_{2},\ldots,n_{k}\in \mathbb{N}\cup\{0\}$ and $\sum_{j=2}^{k} n_j >0$. Suppose that $a\in \mathbb{R}$ with $a>\sum_{\ell=2}^{k}\ell n_{\ell}-1$. Then, for $t_{1}\geq 0$,
\begin{align}\label{formula1}
 \mathbb{E}\left[\mathrm{e}^{- t_{1}\mathfrak{e}_1(a)}\prod_{j=2}^{k}\mathfrak{p}_j(a)^{n_{j}}\right]=\frac{1}{t_{1}^{\sum_{\ell=2}^{k}\ell n_{\ell}-1}}
\sum_{m=0}^{\sum_{\ell=2}^{k}(\ell-1) n_{\ell}} t_1^{m-1} \mathcal{P}_{m}^{(a)}(n_{2},\ldots,n_{k};t_{1})\frac{\mathrm{d}^{m}}{\mathrm{d}t_{1}^{m}}\mathbb{E}\Big[\mathrm{e}^{-t_{1}\mathfrak{e}_{1}(a)}\Big],   
\end{align}
where $\mathcal{P}_{m}^{(a)}(n_{2},\ldots,n_{k};t_{1})$ are polynomials in $t_{1}$ whose coefficients are polynomials in $a$ of degree at most $\sum_{\ell=2}^{k}(\ell-1)n_{\ell}$. Moreover, the degree of $t_1^{m-1}\mathcal{P}_{m}^{(a)}(n_{2},\ldots,n_{k};t_{1})$ in $t_1$ is at most $\sum_{\ell=2}^{k}(\ell-1)n_{\ell}-1$.

\end{thm}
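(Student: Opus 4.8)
The plan is to reduce the statement to the structure of ``partition-shifted'' Hankel determinants and their partial derivatives, exactly as in the program described in the Introduction, and then to read off the polynomial identity \eqref{formula1} by tracking how the higher power sums $\mathfrak{p}_j(a)$ act on the generating function. The starting point is the observation, following from Newton's identities and Lemma \ref{lem:skorokhod}, that a joint moment of the form $\mathbb{E}[\mathrm{e}^{-t_1\mathfrak{e}_1(a)}\prod_{j=2}^k \mathfrak{p}_j(a)^{n_j}]$ can be obtained by applying a differential operator in auxiliary variables $t_2,\dots,t_k$ to the multivariate Laplace-type transform $\mathbb{E}[\mathrm{e}^{-\sum_{j\ge 1} t_j \mathfrak{p}_j(a)}]$, and then setting $t_2=\dots=t_k=0$. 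So first I would set up this multivariate transform and identify it, via the truncated multivariate generating functions from \cite{assiotis2024exchangeablearraysintegrablesystems} adapted to the Bessel setting, with a (partition-shifted) Hankel determinant whose entries are the truncated generating series; here the positivity of the $\mathfrak{B}_a^i$ and the moment conditions $a > \sum_\ell \ell n_\ell - 1$ guarantee convergence and finiteness of all the quantities involved.

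The second step is the key structural input: expressing the effect of the operators $\partial_{t_j}$, for $j\ge 2$, on this Hankel determinant in terms of the partial derivatives $\partial_{t_1}^m$ of the original one-variable determinant $\mathbb{E}[\mathrm{e}^{-t_1\mathfrak{e}_1(a)}]$, up to the ``extra terms'' flagged in the Introduction that come from the truncation length. I would prove by induction on $\sum_{j\ge 2} n_j$ that each application of a $\partial_{t_j}$ raises the order of $t_1$-derivative by at most $j-1$ and multiplies by a polynomial factor in $t_1$ whose $t_1$-degree and whose $a$-degree both increase by at most $j-1$; this is precisely the bookkeeping that yields the claimed bounds $\deg_{t_1}\mathcal{P}_m^{(a)} \le \sum_\ell(\ell-1)n_\ell$ on the coefficients' $a$-degree, the range $0\le m\le \sum_\ell(\ell-1)n_\ell$ of the sum, and the overall degree bound $\sum_\ell(\ell-1)n_\ell - 1$ on $t_1^{m-1}\mathcal{P}_m^{(a)}$. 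The overall prefactor $t_1^{-(\sum_\ell \ell n_\ell - 1)}$ arises from collecting the powers of $t_1$ produced when each $\mathfrak{p}_j$ brings down a factor scaling like $t_1^{-j}$ together with the single $t_1^{+1}$ absorbed into the $m-1$ shift.

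The main obstacle, as the authors themselves signal, is controlling the ``extra terms'' generated because the truncated Hankel entries do not differentiate as cleanly as the untruncated ones: one does not get a pure statement in terms of $\partial_{t_1}^m$ of the base determinant, but additional contributions indexed by the truncation lengths. The strategy to handle this is the two-step device described in the Introduction: (i) derive closed expressions for these extra terms as explicit functions of the truncation lengths, and (ii) choose the truncation lengths as functions of the orders of the partial derivatives being evaluated at the origin so that, after evaluating at $t_2=\dots=t_k=0$ (equivalently, extracting the relevant Taylor coefficient), every extra term vanishes. Getting the dependence of the truncation length on the multi-index $(n_2,\dots,n_k)$ right — so that it is simultaneously large enough to kill all extra terms and small enough not to destroy the degree bounds — is the delicate point; once this is in place, the inductive step closes and \eqref{formula1} follows, with the polynomials $\mathcal{P}_m^{(a)}$ being exactly the coefficients produced by the recursion. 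Finally I would verify the degree bounds are sharp enough by checking the base case $k=2$, $n_2=1$ against the known $\sigma$-Painlev\'e connection for $\mathfrak{e}_1(a)$ recalled above, which also serves as a consistency check on the normalisation.
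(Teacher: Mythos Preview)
Your proposal captures the right high-level ideas---truncated Hankel determinants, the two-step device for eliminating extra terms, and induction on the multi-index---but it has a genuine structural gap: you propose to work \emph{directly} at the Bessel level, setting up a Hankel determinant for the multivariate transform $\mathbb{E}[\mathrm{e}^{-\sum_{j\ge 1} t_j \mathfrak{p}_j(a)}]$ and running the shifted-Hankel machinery there. This is not how the paper proceeds, and it is unclear how it could be made to work. The Hankel representation (Proposition~\ref{proprepresented as a Hankel}) relies essentially on the \emph{finite} Jacobi ensemble: one rewrites the $N$-fold integral via Vandermonde-of-differential-operators tricks and identifies the entries $g_m(t_1)$ explicitly as confluent hypergeometric functions $U$ with parameters depending on $N,a,b$. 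The recursive identities (Lemma~\ref{generalexpansiondependingonrecursive}, Proposition~\ref{prop:recursivevector}, etc.) that drive the induction all come from the contiguous relations satisfied by these $U$-functions. There is no analogous finite Hankel determinant or family of special-function entries available directly for the infinite Bessel point process.

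The paper's route is instead to first prove the finite-$N$ analogue, Theorem~\ref{structureforgeneralfinitesizeforfiniteN}, for the rescaled Jacobi ensemble (this is where the truncated Hankel machinery and the two-step device you describe are actually deployed), and only \emph{then} pass to the limit $N\to\infty$. The convergence step uses Proposition~\ref{thm:main} (joint convergence of the $\mathfrak{p}_{m,N}^{(a,b)}$ to the $\mathfrak{p}_m(a)$) together with a uniform-integrability argument to justify interchanging limit and expectation, and the polynomials $\mathcal{P}_m^{(a)}$ are defined as the (suitably rescaled) $N\to\infty$ limits of the finite-$N$ polynomials $P_m^{(a,b)}(N;t_1)$. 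The $b$-independence and the sharper degree bound $\sum_\ell(\ell-1)n_\ell-1$ (rather than the a~priori $\sum_\ell \ell n_\ell-1$ from Theorem~\ref{structureforgeneralfinitesizeforfiniteN}) are obtained by tracking which terms survive at leading order in $N$ through the induction. Your plan would need either to supply a Hankel structure at the Bessel level---which the paper does not do and which does not appear straightforward---or to be rerouted through the finite-$N$ result.
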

In the proof of the above theorem, we provide a recursive method to obtain explicit formulae for the polynomials $\mathcal{P}_{m}^{(a)}(n_{2},\ldots,n_{k};t_1)$. Furthermore, as mentioned earlier, the function $\mathbb{E}\left[\mathrm{e}^{-t_{1}\mathfrak{e}_1(a)}\right]$ can be written in terms of a solution to the $\sigma$-Painlev\'{e} III$'$ equation (\ref{eq:Painleve}). This observation then enables an asymptotic analysis of the Taylor expansion of the solution at $t_{1}=0$, which, when combined with Theorem \ref{mainpainlevethm}, allows for the recursive computation of all integer joint moments of $\mathfrak{e}_{j},j=1,\ldots,k$, also allowing one to express them efficiently as rational functions of $a$.
We give a few examples of these formulae in the corollary below. These expressions, in spite of their brevity, appear to be new, and at least to our knowledge, are not mere consequences of the probabilistic representations the random variables $\{\mathfrak{p}_m(a)\}_{m=1}^\infty$ admit.
\begin{cor}\label{some examples}
Let $t\geq 0$. For $a>1$,
\beas
\E[\mathrm{e}^{-t\mathfrak{e}_1(a)} \mathfrak{p}_2(a)]
= \frac{a}{t} \frac{\mathrm{d}}{\mathrm{d}t} \E[\mathrm{e}^{-t\mathfrak{e}_1(a)}]
+ \frac{1}{t} \E[\mathrm{e}^{-t\mathfrak{e}_1(a)}]
\eeas
and for $a>3$,
\beas
\E[\mathrm{e}^{-t\mathfrak{e}_1(a)} \mathfrak{p}_2(a)^2]
= \frac{a^2+2}{t^2} \frac{\mathrm{d}^2}{\mathrm{d}t^2} \E[\mathrm{e}^{-t\mathfrak{e}_1(a)}]
+ \frac{-3a^2+2at}{t^3} \frac{\mathrm{d}}{\mathrm{d}t} \E[\mathrm{e}^{-t\mathfrak{e}_1(a)}] + \frac{t-3a}{t^3}  \E[\mathrm{e}^{-t\mathfrak{e}_1(a)} ].
\eeas
\end{cor}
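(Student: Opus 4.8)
These two identities are precisely the instances $k=2$, $(n_2)=(1)$ and $k=2$, $(n_2)=(2)$ of Theorem~\ref{mainpainlevethm}, so the plan is to read them off from \eqref{formula1}. In the first case $\sum_{\ell\ge 2}\ell n_\ell=2$ and $\sum_{\ell\ge 2}(\ell-1)n_\ell=1$, so the hypothesis $a>\sum_\ell \ell n_\ell-1$ becomes $a>1$ and \eqref{formula1} collapses to an expression in $F(t)\defeq\E[\mathrm{e}^{-t\mathfrak{e}_1(a)}]$ and $F'(t)$ alone; in the second case $\sum_{\ell\ge 2}\ell n_\ell=4$ and $\sum_{\ell\ge 2}(\ell-1)n_\ell=2$, so the hypothesis becomes $a>3$ and \eqref{formula1} involves $F,F',F''$. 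These ranges are exactly the ones in the corollary, and they also guarantee the integrability ($\E[\mathfrak{p}_2]<\infty$ when $a>1$, $\E[\mathfrak{p}_2^2]<\infty$ when $a>3$) needed to differentiate $F$ under the expectation. Hence the whole content is the identification of the polynomials $\mathcal{P}_m^{(a)}(n_2;t)$ in these small cases, and I would present the corollary as a worked example of the recursive scheme supplied by the proof of Theorem~\ref{mainpainlevethm}.

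Concretely I would proceed as follows. First, use Newton's identities to write $\mathfrak{p}_2=\mathfrak{e}_1^2-2\mathfrak{e}_2$ and $\mathfrak{p}_2^2=\mathfrak{e}_1^4-4\mathfrak{e}_1^2\mathfrak{e}_2+4\mathfrak{e}_2^2$, so that $\E[\mathrm{e}^{-t\mathfrak{e}_1}\mathfrak{p}_2]$ and $\E[\mathrm{e}^{-t\mathfrak{e}_1}\mathfrak{p}_2^2]$ become integer-coefficient combinations of quantities $\E[\mathrm{e}^{-t\mathfrak{e}_1}\mathfrak{e}_1^{i}\mathfrak{e}_2^{j}]$; since each extra factor $\mathfrak{e}_1$ is produced by $-\partial_t$ hitting $\mathrm{e}^{-t\mathfrak{e}_1}$, after this reduction only the ``pure'' quantities $G_2(t)\defeq\E[\mathrm{e}^{-t\mathfrak{e}_1}\mathfrak{e}_2]$, $G_{22}(t)\defeq\E[\mathrm{e}^{-t\mathfrak{e}_1}\mathfrak{e}_2^2]$ and their $t$-derivatives remain to be tied to $F$. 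Second, apply the identity underlying Theorem~\ref{mainpainlevethm}: up to the boundary terms coming from the truncation of the partition-shifted Hankel determinants, insertion of a factor $\mathfrak{e}_2$ inside $\E[\mathrm{e}^{-t\mathfrak{e}_1}\,\cdot\,]$ acts as a first-order differential operator in $t$ with coefficients polynomial in $t$ and $a$, and one chooses the truncation length — in terms of the orders of the $t$-derivatives at which one then evaluates — so that those boundary terms are annihilated. Carried out once this yields the relation $t\,\E[\mathrm{e}^{-t\mathfrak{e}_1}\mathfrak{p}_2]=\E[(1-a\mathfrak{e}_1)\mathrm{e}^{-t\mathfrak{e}_1}]=F(t)+aF'(t)$, which is the first displayed formula; iterating the same scheme once more, now additionally handling the cross term $\E[\mathrm{e}^{-t\mathfrak{e}_1}\mathfrak{e}_1^2\mathfrak{e}_2]$ and the square $G_{22}$, produces $\mathcal{P}_2^{(a)}=a^2+2$, $\mathcal{P}_1^{(a)}=2at-3a^2$, $\mathcal{P}_0^{(a)}=t^2-3at$, which is the second formula. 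Finally I would read off the degree bounds asserted in Theorem~\ref{mainpainlevethm} directly from these explicit polynomials as a consistency check (coefficients of degree $\le 1$, resp. $\le 2$, in $a$; the $t$-polynomials $t^{m-1}\mathcal{P}_m^{(a)}$ of $t$-degree $\le 0$, resp. $\le 1$).

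The step I expect to be the main obstacle is the second one in the $n_2=2$ case: there the partition-shifted Hankel determinant genuinely acquires the ``additional terms'' flagged in the introduction (which are absent in the $\mathrm{U}(N)$ analysis of \cite{assiotis2024exchangeablearraysintegrablesystems}), and one must both control their $t$-degree and check that a single truncation length can be set so that every such term dies on evaluation at the origin; the $n_2=1$ case, by contrast, is essentially one application of the Hankel identity and should be quick. As an independent sanity check on the final formulas — and to pin down the additive constants — I would match the Taylor expansions in $t$ at $t=0$ of both sides against the low-order moments $\E[\mathfrak{e}_1(a)^j]$ and $\E[\mathfrak{e}_1(a)^j\mathfrak{e}_2(a)]$, equivalently against the small-$t$ expansion of the $\sigma$-Painlev\'e III$'$ transcendent \eqref{eq:Painleve} with its stated initial conditions. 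It would also be worth noting that the clean reformulation $t\,\E[\mathrm{e}^{-t\mathfrak{e}_1}\mathfrak{p}_2]=\E[(1-a\mathfrak{e}_1)\mathrm{e}^{-t\mathfrak{e}_1}]$ looks like an integration-by-parts identity for the Bessel point process, and might admit a direct proof via the multiple-integral representations of $\mathfrak{e}_1(a),\mathfrak{e}_2(a)$ of Li and Valk\'o discussed below Proposition~\ref{thm:laguerre}.
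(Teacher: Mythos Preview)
Your high-level identification is correct — these are exactly the cases $k=2$, $n_2=1$ and $n_2=2$ of Theorem~\ref{mainpainlevethm}, and the content is just running the recursion to extract the polynomials $\mathcal P_m^{(a)}$. But your execution diverges from the paper in two ways worth noting.

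First, the paper does not work at the limiting (Bessel) level. It carries out the computation at finite $N$ for the Jacobi ensemble, using the initial conditions \eqref{initialconditionforthegeneralcase} (for $n_2=0,1$) together with the $k=2$ expansion \eqref{expressionsforfi2special} (for $n_2=1,2$) and the explicit form \eqref{F0repinL}--\eqref{formulaforFj} of $\mathsf F_j^{(n_2)}$. This produces closed formulas for $\E_N^{(a,b)}[\mathrm{e}^{-t\mathfrak p_{1,N}}\mathfrak p_{2,N}^{n_2}]$ in terms of $\E_N^{(a,b)}[\mathrm{e}^{-t\mathfrak p_{1,N}}]$ and its $t$-derivatives, with coefficients that are polynomials in $N,a,b$. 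One then rescales $t\mapsto t/N^2$, divides by the leading power of $N$, and takes $N\to\infty$ (the convergence justified exactly as in the proof of Theorem~\ref{mainpainlevethm}); the $b$-dependent pieces drop out at leading order and one is left with the two displayed identities. This is shorter and more mechanical than what you outline.

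Second, your detour through Newton's identities — converting $\mathfrak p_2$ into $\mathfrak e_1^2-2\mathfrak e_2$ and then trying to handle $G_2=\E[\mathrm{e}^{-t\mathfrak e_1}\mathfrak e_2]$ — runs against the grain of the machinery. The whole recursion in Section~\ref{recursiveformula} is set up directly for power-sum moments $\mathcal G_N^{(n_2,\dots,n_k)}$; the Hankel-shifted determinants already encode $\prod_q \mathfrak p_{q,N}^{n_q}$, not $\mathfrak e_{k,N}$. So there is nothing to convert, and your assertion that ``insertion of a factor $\mathfrak e_2$ acts as a first-order differential operator'' is not what the paper proves (nor is it obvious). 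Relatedly, the ``main obstacle'' you flag — the truncation error terms in the $n_2=2$ case — does not actually arise here: the corollary only needs the $k=2$ specialisation \eqref{expressionsforfi2special}, in which the error terms $\mathcal E^{(1)},\mathcal E^{(2)}$ already carry factors $t_2^{R_2}$ or $t_2^{R_2+1}$ and are killed immediately upon setting $t_2=0$ once one takes $R_2\ge n_2$. The whole computation is a few lines of algebra at finite $N$ followed by the limit.
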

The formulae above may appear singular at $t= 0$, however 
the moments of $\mathfrak{e}_1(a)=\mathfrak{p}_1(a)$
and $\mathfrak{p}_2(a)$ 
satisfy relations that ensure cancellations of the singular terms in the denominator.

As an application of Corollary \ref{some examples}, we present explicit formulae for the leading terms of certain joint moments of the second and fourth derivatives over the groups $\mathrm{USp}(2N)$ and $\mathrm{SO}(2N)$.
\begin{cor}\label{someformulae}
Whenever $s>1/2$, we have
\begin{align*}
 \lim_{N\rightarrow \infty}\frac{\int_{\mathrm{USp}(2N)} \left|\varphi_{\mathbf{A}}^{(2)}(0)\right|^2 \left|\varphi_{\mathbf{A}}(0)\right|^{s-2} \mathrm{d}\mu_{\mathrm{Haar}}(\mathbf{A})}{N^{\frac{s(s+1)}{2}+4}\int_{\mathrm{USp}(2N)} \left(\varphi_{\mathbf{A}}(0)\right)^{s} \mathrm{d}\mu_{\mathrm{Haar}}(\mathbf{A})}=&4{\frac {2\,{s}^{3}+5\,{s}^{2}+2\,s-2}{ \left( 2\,s+1 \right) 
 \left( 2\,s-1 \right)  \left( 2\,s+3 \right) }}
\end{align*}
and 
\beas
&&\lim_{N\rightarrow \infty}\frac{\int_{\mathrm{USp}(2N)} \left|\varphi_{\mathbf{A}}^{(4)}(0)\right|^2 \left|\varphi_{\mathbf{A}}(0)\right|^{s-2} \mathrm{d}\mu_{\mathrm{Haar}}(\mathbf{A})}{N^{\frac{s(s+1)}{2}+8}\int_{\mathrm{USp}(2N)} \left(\varphi_{\mathbf{A}}(0)\right)^{s} \mathrm{d}\mu_{\mathrm{Haar}}({\mathbf{A}})}\\
&&=16{\frac {4\,{s}^{6}+56\,{s}^{5}+307\,{s}^{4}+806\,{s}^{3}+967\,{s}^
{2}+380\,s+36}{ \left( 2\,s+1 \right) ^{2} \left( 2\,s+7 \right) 
 \left( 2\,s+5 \right)  \left( 2\,s+3 \right)  \left( 2\,s-1 \right) }
}.
\eeas
Whenever $s>3/2$,
\beas
\lim_{N\rightarrow \infty}\frac{\int_{\mathrm{SO}(2N)} \left|\varphi_{\mathbf{A}}^{(2)}(0)\right|^2 \left|\varphi_{\mathbf{A}}(0)\right|^{s-2} \mathrm{d}\mu_{\mathrm{Haar}}(\mathbf{A})}{N^{\frac{s(s-1)}{2}+4}\int_{\mathrm{SO}(2N)} \left(\varphi_\mathbf{A}(0)\right)^s\mathrm{d}\mu_{\mathrm{Haar}}(\mathbf{A})}=4\,{\frac {2\,{s}^{3}-{s}^{2}-2\,s-1}{ \left( 2\,s-1 \right)  \left( 2
\,s-3 \right)  \left( 2\,s+1 \right) }}
\eeas
and
\beas
&&\lim_{N\rightarrow \infty}\frac{\int_{\mathrm{SO}(2N)} \left|\varphi_{\mathbf{A}}^{(4)}(0)\right|^2 \left|\varphi_\mathbf{A}(0)\right|^{s-2} \mathrm{d}\mu_{\mathrm{Haar}}(\mathbf{A})}{N^{\frac{s(s-1)}{2}+8}\int_{\mathrm{SO}(2N)} \left(\varphi_\mathbf{A}(0)\right)^s \mathrm{d}\mu_{\mathrm{Haar}}(\mathbf{A})}\\
&&=16{\frac {4\,{s}^{6}+32\,{s}^{5}+87\,{s}^{4}+58\,{s}^{3}-109\,{s}^{2
}-108\,s+72}{ \left( 2\,s-1 \right) ^{2} \left( 2\,s+5 \right) 
 \left( 2\,s+3 \right)  \left( 2\,s+1 \right)  \left( 2\,s-3 \right) }
}.
\eeas
\end{cor}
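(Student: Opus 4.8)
The plan is to deduce each of the four limits from Theorem~\ref{thm:jointmom} and then to evaluate the resulting expectation of $\mathfrak{R}_k(a)^2$ as a rational function of $a$ via Corollary~\ref{some examples}. For the $\mathrm{USp}(2N)$ second-derivative ratio I would apply Theorem~\ref{thm:jointmom} to the numerator with $m=2$, $h_0=s-2$, $h_1=0$, $h_2=2$, and to the $s$-th moment in the denominator with $m=2$, $h_0=s$, $h_1=h_2=0$ (here $(\varphi_{\mathbf{A}}(0))^s=|\varphi_{\mathbf{A}}(0)|^s$ since the eigenvalues of $\mathbf{A}$ come in pairs $\mathrm{e}^{\pm\i\theta}$, so $\varphi_{\mathbf{A}}(0)=\det(\mathbf{I}-\mathbf{A})\geq 0$). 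Both applications involve the same value $s=\sum_i h_i$, hence the same Barnes-$G$/$\Gamma$ prefactor, which cancels upon taking the ratio of the two asymptotic expansions; after accounting for the powers of $N$ the limit equals $\mathbb{E}[\mathfrak{R}_2(s+\tfrac12)^2]$. The other three ratios are handled identically, producing $\mathbb{E}[\mathfrak{R}_4(s+\tfrac12)^2]$, $\mathbb{E}[\mathfrak{R}_2(s-\tfrac12)^2]$ and $\mathbb{E}[\mathfrak{R}_4(s-\tfrac12)^2]$ (with $m=4$, $h_4=2$ in the fourth-derivative cases); in the $\mathrm{SO}(2N)$ cases the hypothesis $h_0>\tfrac12$ of Theorem~\ref{thm:jointmom} is superfluous since $h_2$ (resp.\ $h_4$) is a positive integer, so Corollary~\ref{cor:integerhj} removes it, and the only genuine constraint is finiteness of the limiting expectation, which holds for $s-\tfrac12>1$.

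It then remains to compute $\mathbb{E}[\mathfrak{R}_2(a)^2]$ and $\mathbb{E}[\mathfrak{R}_4(a)^2]$. Unwinding the definition ($\mathcal{P}_2^{(2)}=\{(1,1),(2)\}$ and $\mathcal{P}_4^{(2)}=\{(1,1,1,1),(2,1,1),(2,2)\}$) gives $\mathfrak{R}_2(a)=1+\tfrac12\mathfrak{e}_1(a)$ and $\mathfrak{R}_4(a)=1+3\mathfrak{e}_1(a)+\tfrac32\mathfrak{e}_2(a)$; using $\mathfrak{e}_1=\mathfrak{p}_1$ and $\mathfrak{e}_2=\tfrac12(\mathfrak{p}_1^2-\mathfrak{p}_2)$, squaring turns $\mathbb{E}[\mathfrak{R}_k(a)^2]$ into a fixed linear combination of joint moments $\mathbb{E}[\mathfrak{p}_1(a)^j\mathfrak{p}_2(a)^\ell]$ with $\ell\leq 2$ and $j+2\ell\leq 4$. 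For the moments of $\mathfrak{p}_1(a)=\mathfrak{e}_1(a)$ I would use that $\mathbb{E}[\mathrm{e}^{-t\mathfrak{e}_1(a)}]$ has logarithmic derivative solving the $\sigma$-Painlev\'{e} III$'$ equation~\eqref{eq:Painleve} with $\tau_a(0)=0$, $\tau_a'(0)=-1/a$: substituting a power series into \eqref{eq:Painleve} and matching coefficients gives a recursion with first outputs $\mathbb{E}[\mathfrak{p}_1(a)]=\tfrac1a$, $\mathbb{E}[\mathfrak{p}_1(a)^2]=\tfrac1{a^2-1}$, $\mathbb{E}[\mathfrak{p}_1(a)^3]=\tfrac{a^2-2}{a(a^2-1)(a^2-4)}$, and the analogous expression for $\mathbb{E}[\mathfrak{p}_1(a)^4]$ (alternatively these follow from Proposition~\ref{thm:laguerre} with $m=1$). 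The mixed moments then come from Corollary~\ref{some examples} by expanding its two identities in $t$ at $t=0$ and using $\frac{\di^n}{\di t^n}\mathbb{E}[\mathrm{e}^{-t\mathfrak{e}_1(a)}]\big|_{t=0}=(-1)^n\mathbb{E}[\mathfrak{p}_1(a)^n]$; e.g.\ the coefficient of $t^j$ gives $\mathbb{E}[\mathfrak{p}_1^j\mathfrak{p}_2]=\big(a\,\mathbb{E}[\mathfrak{p}_1^{j+2}]-\mathbb{E}[\mathfrak{p}_1^{j+1}]\big)/(j+1)$, and likewise for $\mathbb{E}[\mathfrak{p}_1^j\mathfrak{p}_2^2]$. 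Assembling and setting $a=s+\tfrac12$ (resp.\ $a=s-\tfrac12$) produces the four stated rational functions; for instance $\mathbb{E}[\mathfrak{R}_2(a)^2]=1+\tfrac1a+\tfrac1{4(a^2-1)}$, which at $a=s+\tfrac12$ equals $4(2s^3+5s^2+2s-2)/\big((2s+1)(2s-1)(2s+3)\big)$.

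The bulk of this is bookkeeping, and two points need care. First, the identities in Corollary~\ref{some examples} carry apparent poles at $t=0$; extracting the finite value requires checking that the lowest Taylor coefficients of the relevant numerators vanish, and this vanishing is exactly where the relations $\mathbb{E}[\mathfrak{p}_1(a)]=1/a$ and $\mathbb{E}[\mathfrak{p}_1(a)^2]=1/(a^2-1)$ enter, the terms in $1/t,\,1/t^2,\,1/t^3$ cancelling. Second, the moment computations are literally valid only where the relevant expectations converge — for $\mathbb{E}[\mathfrak{p}_1(a)^4]$, needed in the fourth-derivative cases, this forces $a>3$, i.e.\ $s>\tfrac52$ (resp.\ $s>\tfrac72$) — whereas the corollary is asserted down to $s>\tfrac12$ (resp.\ $s>\tfrac32$). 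This gap is closed by analyticity: Theorem~\ref{thm:jointmom} shows the limit equals $\mathbb{E}[\mathfrak{R}_k(a)^2]$ throughout $a>1$, and Proposition~\ref{thm:laguerre}, applied term by term to $\mathbb{E}[\mathfrak{e}_1^{j_1}\mathfrak{e}_2^{j_2}]$, exhibits this as a rational function of $a$ on that range, which therefore coincides with the one computed for large $a$. I expect this continuation step, together with verifying the pole cancellations, to be the only part of the argument beyond routine computation.
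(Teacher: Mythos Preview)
Your approach is correct and matches the paper's implicit proof: reduce to $\mathbb{E}[\mathfrak{R}_k(a)^2]$ via Theorem~\ref{thm:jointmom}, expand $\mathfrak{R}_k$ in $\mathfrak{p}_1,\mathfrak{p}_2$, evaluate the required joint moments using Corollary~\ref{some examples} together with the Painlev\'e~III$'$ recursion for the moments of $\mathfrak{p}_1(a)$, and extend by rationality in $a$.

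One small imprecision: Theorem~\ref{thm:jointmom} and Corollary~\ref{cor:integerhj} are both stated for $h_0\in[0,\infty)$, so your application with $h_0=s-2$ does not literally cover $s<2$, and your closing sentence ``Theorem~\ref{thm:jointmom} shows the limit equals $\mathbb{E}[\mathfrak{R}_k(a)^2]$ throughout $a>1$'' overstates what that theorem provides. The fix is immediate: the identity $\varphi_{\mathbf A}^{(k)}(0)/\varphi_{\mathbf A}(0)=(-\i)^k\mathfrak{R}_{N,k}$ in the proof of Proposition~\ref{lem:jointmomSym} shows that the finite-$N$ ratio equals $N^{-2k}\mathbb{E}_N^{(s\pm1/2,\,\pm1/2)}\big[(\mathfrak{R}_{N,k})^2\big]$ for every $s$ in the stated range (integrability of the numerator is exactly the condition $s>1/2$, resp.\ $s>3/2$), after which either the convergence argument in the proof of Theorem~\ref{thm:jointmom} or the rationality argument you sketch (essentially that of Corollary~\ref{cor:integerhj}) gives the limit as $\mathbb{E}[\mathfrak{R}_k(a)^2]$ on the full range.
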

The explicit formulae for moments not involving any derivatives, which appear in the denominator of the left-hand side of the identities above are given in \cite{KeatingSnaithLfunctions}. In particular, modulo this factor, we see from Corollary \ref{someformulae} that the leading coefficients of the joint moments of the derivatives are rational functions of the exponent $s$.

Theorem \ref{mainpainlevethm} is proved by taking the large $N$ limit of the corresponding expression that holds at the finite-$N$ level. This, on the other hand, requires establishing a similar relation for the higher order joint moments for finite matrix size $N$ and $\sigma$-Painlev\'e V transcendents, which is what we state next. In order to do so in a more compact form, let us define the random variables (we sometimes drop $a,b$ from the notation if it is clear from context)
\begin{equation*}
    \mathfrak{p}_{m,N}^{(a,b)}\defeq \sum_{j=1}^N \frac{1}{\big(\mathsf{x}_j^{(N)}\big)^m}
\end{equation*}
where $(\mathsf{x}_1^{(N)},\ldots, \mathsf{x}_N^{(N)})$ are distributed according to the Jacobi ensemble $\mu_{N}^{(a,b)}$, see Definition \ref{def:jacobi}. The following result for singular statistics of the much-studied Jacobi ensemble $\mu_{N}^{(a,b)}$ is new and may be of independent interest.
\begin{thm}\label{structureforgeneralfinitesizeforfiniteN}
Let $k\geq 2$, $n_{2},\ldots,n_{k}$ be non-negative integers and $\sum_{j=2}^{k} n_j >0$. Let $b>-1$. Let $a>\sum_{q=2}^{k}qn_{q}-1$. Then, for all $t_{1}\geq 0$,
\begin{align}\label{general structure1}
    \frac{1}{N^{2\sum_{q=2}^{k}n_{q}}}\mathbb{E}_{N}^{(a,b)}\Bigg[\mathrm{e}^{-\frac{t_1}{N^2}\mathfrak{p}_{1,N}}\prod_{q=2}^{k}(\mathfrak{p}_{q,N})^{n_{q}}\Bigg]=\frac{1}{t_{1}^{\sum_{q=2}^{k}qn_{q}-1}}
\sum_{m=0}^{\sum_{q=2}^{k}(q-1)n_{q}}t_{1}^{m-1}P_{m}^{(a,b)}(N;t_{1})\frac{\mathrm{d}^{m}}{\mathrm{d}t_{1}^{m}}\mathbb{E}_{N}^{(a,b)}\Bigg[\mathrm{e}^{-\frac{t_1}{N^2}\mathfrak{p}_{1,N}}\Bigg],
\end{align}
where $P_{m}^{(a,b)}(N;t_{1})$ are polynomials of $t_{1}$. 
Moreover, $t_1^{m-1} P_{m}^{(a,b)}(N;t_{1})$ are polynomials of $t_{1}$ of degree at most $\sum_{q=2}^{k}qn_{q}-1$, whose coefficients are polynomials in $N, a, b$, with degrees in $N$ and in $a,b$ no more than $\sum_{q=2}^{k}2(q-1)n_{q}$ and $\sum_{q=2}^{k}(q-1)n_{q}$, respectively.

\end{thm}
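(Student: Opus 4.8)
\emph{Reduction to a Hankel--determinant identity.} The plan is to realise every expectation in \eqref{general structure1} through Hankel determinants of a singularly perturbed Jacobi weight and to prove the stated identity at that level. Let $w^{(a,b)}(x)=x^{a}(1-x)^{b}$ on $(0,1)$ be the Jacobi weight (as in Definition~\ref{def:jacobi}), introduce variables $t_{2},\dots,t_{k}$ together with as many further ``dummy'' variables $t_{k+1},\dots,t_{L}$ as will be needed --- $L$ is the \emph{truncation length} --- and set $W_{\mathbf{t}}(x)=w^{(a,b)}(x)\exp\!\big(-\tfrac{t_{1}}{N^{2}x}-\sum_{q=2}^{L}t_{q}x^{-q}\big)$. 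By the Andr\'eief--Heine identity the moment determinant
\[
D_{N}(\mathbf{t})=\det\Big(\textstyle\int_{0}^{1}x^{i+j}\,W_{\mathbf{t}}(x)\,\di x\Big)_{i,j=0}^{N-1}
\]
satisfies $D_{N}(\mathbf{t})/D_{N}(\mathbf{0})=\mathbb{E}_{N}^{(a,b)}\big[\exp(-\tfrac{t_{1}}{N^{2}}\mathfrak{p}_{1,N}-\sum_{q\ge 2}t_{q}\mathfrak{p}_{q,N})\big]$, and since $\partial_{t_{q}}\log W_{\mathbf{t}}(x)=-x^{-q}$,
\[
\mathbb{E}_{N}^{(a,b)}\Big[\mathrm{e}^{-\frac{t_{1}}{N^{2}}\mathfrak{p}_{1,N}}\textstyle\prod_{q=2}^{k}\mathfrak{p}_{q,N}^{n_{q}}\Big]=(-1)^{\sum_{q}n_{q}}\,\frac{\big[\,\prod_{q=2}^{k}\partial_{t_{q}}^{n_{q}}D_{N}(\mathbf{t})\,\big]_{t_{2}=\dots=t_{L}=0}}{D_{N}(\mathbf{0})},\qquad \frac{\di^{m}}{\di t_{1}^{m}}\mathbb{E}_{N}^{(a,b)}\big[\mathrm{e}^{-\frac{t_{1}}{N^{2}}\mathfrak{p}_{1,N}}\big]=\frac{\partial_{t_{1}}^{m}D_{N}(t_{1},\mathbf{0})}{D_{N}(\mathbf{0})}.
\]
The hypothesis $a>Q-1$, $Q=\sum_{q}q n_{q}$, is exactly what makes every moment $\int_{0}^{1}x^{r}W_{\mathbf{t}}$ with $r\ge -Q$ --- the most singular indices produced by the differentiations above --- converge at $x=0$, and with $b>-1$ all objects below are well defined.

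\emph{Two moment identities.} Differentiating under the integral gives the pure index shift $\partial_{t_{q}}\!\int_{0}^{1}x^{r}W_{\mathbf{t}}=-\int_{0}^{1}x^{r-q}W_{\mathbf{t}}$; on the slice $t_{2}=\dots=t_{L}=0$, with $\mathsf{m}_{r}(t_{1})\defeq\int_{0}^{1}x^{r}W_{(t_{1},\mathbf{0})}$, lowering an index is one $t_{1}$-derivative, $\mathsf{m}_{r-1}(t_{1})=-N^{2}\tfrac{\di}{\di t_{1}}\mathsf{m}_{r}(t_{1})$. Integrating $\tfrac{\di}{\di x}\big(x^{r+a+1}(1-x)^{b+1}\mathrm{e}^{-t_{1}/(N^{2}x)}\big)$ over $(0,1)$ --- the boundary terms vanishing since $b+1>0$, $t_{1}\ge 0$, and $a+r+1>0$ --- gives the three-term recurrence
\[
(r+a+1)\,\mathsf{m}_{r+1}(t_{1})=\Big(r+a-b-\tfrac{t_{1}}{N^{2}}\Big)\mathsf{m}_{r}(t_{1})+\tfrac{t_{1}}{N^{2}}\,\mathsf{m}_{r-1}(t_{1}),
\]
so raising an index is also a (first-order) $t_{1}$-derivative once the previous relation is inserted. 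Hence every $\mathsf{m}_{r}(t_{1})$ is a differential polynomial in $\tfrac{\di}{\di t_{1}}$ applied to $\mathsf{m}_{0}(t_{1})$ with coefficients polynomial in $t_{1},a,b$ (and in $N$ after clearing the $N^{-2}$'s by powers of $t_{1}$); propagating these two relations through the determinant will turn the $\partial_{t_{q}}$'s into $\partial_{t_{1}}$'s.

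\emph{The inductive core.} Expanding $\prod_{q}\partial_{t_{q}}^{n_{q}}D_{N}(\mathbf{t})$ by multilinearity and the Leibniz rule and restricting to $t_{2}=\dots=t_{L}=0$ writes it as a signed sum of partition-shifted Hankel determinants $D_{N}^{[\nu]}(t_{1})\defeq\det\big(\mathsf{m}_{i+j-\nu_{i+1}}(t_{1})\big)_{i,j=0}^{N-1}$ with $\nu\vdash Q$; likewise $\partial_{t_{1}}^{m}D_{N}(t_{1},\mathbf{0})=(-N^{-2})^{m}\sum_{\nu\vdash m}f^{\nu}D_{N}^{[\nu]}(t_{1})$ with $f^{\nu}$ the number of standard Young tableaux of shape $\nu$ (this is $p_{1}^{m}=\sum_{\nu}f^{\nu}s_{\nu}$ evaluated at $(1/\mathsf{x}_{1}^{(N)},\dots,1/\mathsf{x}_{N}^{(N)})$ and averaged). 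Clearing the $t_{1}^{-(Q-1)}$ and the explicit prefactors, \eqref{general structure1} reduces to the polynomial identity
\[
t_{1}^{Q-1}\Big[\textstyle\prod_{q=2}^{k}\partial_{t_{q}}^{n_{q}}D_{N}(\mathbf{t})\Big]_{t_{2}=\dots=t_{L}=0}=\sum_{m=0}^{B}\widehat P_{m}(N,a,b;t_{1})\,\partial_{t_{1}}^{m}D_{N}(t_{1},\mathbf{0}),\qquad B=\sum_{q}(q-1)n_{q},
\]
with $\widehat P_{m}$ polynomial in $t_{1}$ of degree $\le Q-1$. I would prove this using the three-term recurrence as a sequence of column operations that strip the row-shifts of each $D_{N}^{[\nu]}(t_{1})$ one unit at a time: because the recurrence coefficient depends on the row, each step returns the expected lower-shift determinant \emph{plus} an explicit ``extra'' determinant recording the mismatch across rows --- precisely the additional terms flagged in Section~\ref{ideas}. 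Arguing by induction on $B$ and choosing the truncation length $L$ larger than every order of $t_{1}$-derivative occurring and larger than $Q$, one shows these extra terms telescope or are annihilated once one sets $t_{2}=\dots=t_{L}=0$ and extracts the Taylor coefficients at $t_{1}=0$; this yields the identity as one of formal power series in $t_{1}$ and then, by real-analyticity of both sides for $t_{1}>0$, identically. The degree bounds in the statement follow by tracking the induction --- each recurrence step multiplies a coefficient by a factor linear in the index $r$ (so linear in $N$ and in $a,b$, as $r=O(N)$) and raises the $t_{1}$-degree by at most one, with at most $Q$ units of shift to remove of which $B$ come from the $\partial_{t_{q}}$ with $q\ge 2$ --- and dividing out the explicit prefactors.

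\emph{Main obstacle.} The one genuinely hard step is this inductive core. The shifted determinants $D_{N}^{[\nu]}(t_{1})$ span strictly more than the $t_{1}$-derivatives $\partial_{t_{1}}^{m}D_{N}(t_{1},\mathbf{0})$ do --- the latter see only the tableaux-weighted combinations $\sum_{\nu}f^{\nu}D_{N}^{[\nu]}$ --- so \eqref{general structure1} can hold only because of the extra linear relations the three-term recurrence imposes on the $D_{N}^{[\nu]}(t_{1})$ for the singularly perturbed Jacobi weight. Deriving closed expressions for the additional terms generated at each column-operation step, and checking that the truncation length can always be taken large enough to kill them after evaluation without affecting the quantities of interest, is where essentially all of the work sits; the remainder is bookkeeping.
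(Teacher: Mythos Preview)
Your outline follows the same architecture as the paper: represent the expectation as a Hankel determinant in a singularly deformed Jacobi weight, express the $\partial_{t_q}$-derivatives as partition-shifted Hankel determinants, and then use the three-term contiguity relation coming from the weight to reduce these to $t_1$-derivatives of the unshifted determinant by an induction. One pleasant feature of your packaging is that by taking the sign $-t_q x^{-q}$ in the deformation you work with a \emph{convergent} moment integral, so in principle you avoid the paper's truncation device entirely; your mention of ``dummy variables $t_{k+1},\dots,t_L$'' and a ``truncation length'' is therefore unnecessary in your own setup and seems to be imported from the paper's discussion rather than from any need in your argument.

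That said, the proposal is an outline, not a proof. The step you label ``inductive core'' is where everything happens, and you do not carry it out: you assert that the extra determinants produced at each column operation ``telescope or are annihilated'' and that the degree bounds follow by tracking, but you give no mechanism for either. Concretely, the $j$-dependence of the recurrence coefficient means each reduction step produces a determinant with a row weighted by $(i+j+\lambda_{N-j})$; the paper packages these as the objects $\Hfxn_{N,\boldsymbol{\lambda},h}^{(w)}$, restricts attention to hook partitions $\boldsymbol{\lambda}_{n,q}$, and proves two coupled recursions (Propositions~\ref{recursionfirstprop} and~\ref{generalexpressionformathcalFn2nk}) with explicit matrix coefficients $\mathbf{A}^{(l)},\mathbf{P}_m^{(l)}$ that make the induction close. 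Without an analogue of this structure --- in particular, without identifying a finite family of shifted determinants closed under the recurrence and establishing the precise degree bookkeeping in $t_1$, $N$, and $(a,b)$ --- your sketch does not establish \eqref{general structure1} or its degree bounds. Your remark that the identity is first obtained ``as formal power series in $t_1$'' is also off: the right-hand side carries a genuine $t_1^{-(Q-1)}$ singularity, and the paper instead proves the identity for $t_1>0$ and extends to $t_1=0$ by continuity of the expectation.
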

\noindent 
For reasons that will become clear in Section \ref{section3}, joint moments over $\mathrm{SO}(2N)$ and $\mathrm{USp}(2N)$ admit representations in terms of joint moments of the random variables $\{\mathfrak{p}_{m,N}^{(a,b)}\}_{m=1}^\infty$. Moreover, it is known that 
(see e.g., \cite[Section 3]{ForresterWittePainleve2}) the function
\beas
\sigma_N(t) = t \frac{\mathrm{d}}{\mathrm{d}t} \log \E_{N}^{(a,b)}\left[ \mathrm{e}^{-t \sum_{j=1}^N \frac{1}{\mathsf{x}_j^{(N)}  } } \right] 
\eeas
satisfies the following $\sigma$-Painlev\'e-V equation:
\bea
\left( t\frac{\mathrm{d}^2 \sigma_N}{\mathrm{d}t^2} \right)^2
&=& -4 \left( \frac{\mathrm{d}\sigma_N}{\mathrm{d}t} \right)^3 t + \big(a^2+2at+4bt+t^2+4\sigma_N\big) \left( \frac{\mathrm{d} \sigma_N}{\mathrm{d}t} \right)^2\nonumber \\
&& +\big(-2(a+2b+t) \sigma_N +2N(a+t)(a+b+N) \big) \frac{\mathrm{d} \sigma_N}{\mathrm{d}t} \nonumber\\
&& + \big(\sigma_N - N (a+b+N)\big)^2.\label{equationforfiniteN}
\eea
Combining these, we see that joint moments at the finite-$N$ level also admit expressions in terms of Painlevé transcendents.
Moreover, as one would expect at this point, taking the $N\to \infty$ limit of the above equation upon rescaling $t \rightarrow \frac{t}{N^2}$, one precisely obtains (\ref{eq:Painleve}). 

Another consequence of Theorem \ref{structureforgeneralfinitesizeforfiniteN} is that, when combined with the probabilistic techniques used in the proof Theorem \ref{thm:jointmom}, it allows us to remove the restriction $h_0>\frac{1}{2}$ in the orthogonal case, whenever $h_1,\ldots,h_m\in \mathbb{N}$.
\begin{cor}
\label{cor:integerhj}
    The convergence asserted in Theorem \ref{thm:jointmom} for $\mathrm{SO}(2N)$ holds for any $h_0\in [0,\infty)$, provided $h_1,\ldots,h_m\in \mathbb{N}\cup\{0\}.$
\end{cor}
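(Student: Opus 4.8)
The strategy is to revisit the proof of Theorem~\ref{thm:jointmom} for $\mathrm{SO}(2N)$ and identify exactly where the hypothesis $h_0>\tfrac12$ is used, then replace that step by an argument that is valid for all $h_0\ge 0$ when $h_1,\dots,h_m\in\mathbb{N}\cup\{0\}$. One expects that the restriction $h_0>\tfrac12$ enters only through an integrability/convergence estimate: in the orthogonal case the relevant Jacobi parameter is $a=2s-1$ (so that the shifted parameter is $s-\tfrac12$), and when $s=\sum_i h_i$ is small the negative power sums $\mathfrak p_{m,N}^{(a,b)}$ have heavy enough tails that dominated convergence, or finiteness of the limiting moments, is not immediate. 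The point of restricting to integer $h_1,\dots,h_m$ is that then $\prod_{k=1}^m|\varphi_{\mathbf A}^{(k)}(0)|^{h_k}$ expands, after the change of variables to Jacobi eigenvalues carried out in Section~\ref{section3}, into a \emph{polynomial} (with $N$-dependent coefficients) in the power sums $\mathfrak p_{1,N},\dots,\mathfrak p_{k,N}$, so one is reduced to controlling $\mathbb E_N^{(a,b)}[e^{-t_1\mathfrak p_{1,N}/N^2}\prod_{q\ge2}\mathfrak p_{q,N}^{n_q}]$ rather than an arbitrary real power.

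First I would record the exact reduction: using the Jacobi-ensemble representation of $\mathfrak J_N^{\mathrm{SO}}$ from Section~\ref{section3}, write the numerator (after dividing by $N^{s(s-1)/2+\sum k h_k}$) as a finite linear combination, with coefficients that are explicit rational functions of $N$ converging as $N\to\infty$, of the quantities $N^{-2\sum_{q\ge2}n_q}\,\mathbb E_N^{(a,b)}[\,\prod_{q\ge2}(\mathfrak p_{q,N})^{n_q}\cdot(\text{Jacobi partition function ratio})\,]$ with the appropriate $a=2h_0+2\sum_{i\ge1}h_i\cdots$ — i.e. exactly the objects appearing on the left-hand side of \eqref{general structure1} at $t_1=0$. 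Then I would invoke Theorem~\ref{structureforgeneralfinitesizeforfiniteN}: it expresses each such average (at $t_1=0$, after the harmless substitution $t_1\to 0$ inside the polynomial identity, using that $t_1^{m-1}P_m^{(a,b)}(N;t_1)$ has the stated bounded degree in $t_1$ so the $t_1\to 0$ limit picks out a single well-defined coefficient) in terms of derivatives at $t_1=0$ of the single function $\mathbb E_N^{(a,b)}[e^{-t_1\mathfrak p_{1,N}/N^2}]$, which is governed by the $\sigma$-Painlev\'e~V equation \eqref{equationforfiniteN}. The hypothesis $a>\sum_{q\ge2}qn_q-1$ needed for Theorem~\ref{structureforgeneralfinitesizeforfiniteN} is implied by $s\ge\sum_{q\ge2}qn_q$ together with a small check on the relation between $a$ and the $h_i$'s in the orthogonal normalization; this is precisely the place where integrality of the $h_k$ (hence the appearance of only finitely many $n_q$ with $\sum_{q\ge2}qn_q$ controlled by $\sum k h_k$) is what makes things work without $h_0>\tfrac12$.

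Next, having reduced everything to the one-variable Laplace transform $L_N(t_1):=\mathbb E_N^{(a,b)}[e^{-t_1\mathfrak p_{1,N}/N^2}]$ and its first $\sum_{q\ge2}(q-1)n_q$ derivatives at $t_1=0$, I would show that these converge as $N\to\infty$ to the corresponding derivatives at $0$ of $\mathbb E[e^{-t_1\mathfrak e_1(a')}]$ with $a'=s-\tfrac12$ — this is already essentially contained in the proof of Theorem~\ref{thm:jointmom} (the case $m=0$, i.e. moments with no derivatives, or the $\mathfrak e_1$ analysis underlying the Painlev\'e~III$'$ connection) and does not require $h_0>\tfrac12$ because it concerns only $\mathfrak p_{1,N}$, whose rescaled limit $\mathfrak p_1(a')=\mathfrak e_1(a')$ has moments of all orders under $a'>-1$. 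Passing the $N\to\infty$ limit through the finite linear combination (legitimate since it is a finite sum and each coefficient converges) and matching with \eqref{formula1}/Theorem~\ref{mainpainlevethm} then yields the desired finite limit, with value $\mathbb E[\prod_k\mathfrak R_k(s-\tfrac12)^{h_k}]$ times the stated Barnes-$G$ prefactor, for every $h_0\ge0$.

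The main obstacle I anticipate is the interchange of limit and the polynomial identity at $t_1=0$: one must verify that the finite-$N$ identity \eqref{general structure1}, whose coefficients $P_m^{(a,b)}(N;t_1)$ grow polynomially in $N$ of degree $\le \sum 2(q-1)n_q$, combines with the $N^{-2\sum n_q}$ prefactor and the decay of the derivatives $\tfrac{d^m}{dt_1^m}L_N$ (after the $t_1/N^2$ rescaling these behave like $O(N^{-2m})$ times moments of $\mathfrak p_{1,N}$, which are $O(N^{2m})$ by the hard-edge scaling) so that every term has a finite nonzero limit and no spurious cancellations or blow-ups occur — exactly the ``cancellation of singular terms in the denominator'' phenomenon flagged after Corollary~\ref{some examples}. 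Making this bookkeeping rigorous, i.e. tracking the precise power of $N$ carried by each $t_1^{m-1}P_m^{(a,b)}(N;t_1)\tfrac{d^m}{dt_1^m}L_N(0)$, is the technical heart of the argument; everything else is assembly of results already proved in the paper.
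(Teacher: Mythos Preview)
Your proposal has a genuine gap. The core issue is the claim that ``$\mathfrak p_1(a')=\mathfrak e_1(a')$ has moments of all orders under $a'>-1$'': this is false. The $m$-th moment of $\mathfrak e_1(a')$ is finite only for $a'>m-1$ (see the parameter restrictions in Corollary~\ref{some examples} and the boundary conditions below \eqref{eq:Painleve}). Consequently, for small $h_0$ (hence small $a'=s-\tfrac12$) the derivatives $\tfrac{d^m}{dt_1^m}L_N(0)$ you need do not converge to finite limits, and your direct passage to the limit at $t_1=0$ breaks down. Relatedly, your assertion that the hypothesis $a>\sum_{q\ge2}qn_q-1$ of Theorem~\ref{structureforgeneralfinitesizeforfiniteN} ``is implied by $s\ge\sum_{q\ge2}qn_q$'' is not correct in the orthogonal normalization: with $a=s-\tfrac12$ and, say, $h_0=0$, $h_2=1$, one has $a=\tfrac12$ while $\sum qn_q-1=1$ for the $\mathfrak p_{2,N}$ term, so the hypothesis fails.

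The paper's proof avoids this obstruction by a different mechanism: analytic continuation in the parameter $h_0$ (equivalently $a$). It first shows, using Theorem~\ref{structureforgeneralfinitesizeforfiniteN} and the $\sigma$-Painlev\'e~V Taylor expansion, that for each fixed $N$ the map $a\mapsto \mathbb E_N^{(a,-1/2)}[\prod_k\mathfrak e_{N,k}^{h_k}]$ is a rational function of $a$ of the form $Q_2(N;a)/Q_1(a)$ on the region $\Re(a)>\sum_k kh_k-1$; since it is also analytic in $a$ on the larger region $\Re(a)>\sum_k h_k-1$ (which contains all $h_0\ge0$), the identity theorem extends the rational representation there. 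Rationality then transfers convergence as $N\to\infty$ from the range $h_0>\tfrac12$ (where Theorem~\ref{thm:jointmom} already gives it) to all $h_0\ge0$. The same trick is applied on the Bessel side via Proposition~\ref{thm:laguerre}. This continuation-in-$h_0$ step is the missing idea in your plan; without it, the direct term-by-term limit you propose cannot be justified for small $h_0$.
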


It is also worth noting that Theorem \ref{structureforgeneralfinitesizeforfiniteN} provides a recursive method to explicitly compute all integer joint moments of derivatives of any order for finite matrix size $N$, which also has number theoretic consequences due to the work of Katz and Sarnak \cite{KatzSarnak}. In particular, they prove rigorously that  joint moments of $L$-function families corresponding to curves over $\mathbb{F}_q$, in the large $q$ regime, can be modeled by the corresponding averages over the classical groups. Consequently, Theorem \ref{structureforgeneralfinitesizeforfiniteN} provides yet another example of objects in integrable systems appearing in number theoretic considerations.

\subsubsection{A question of  Altuğ, Bettin, Petrow,  Rishikesh, and Whitehead}

Finally, as an application of the above results concerning  the connections between the joint moments of the random variables $\mathfrak{e}_k(a)$ and integrable systems, we provide an affirmative answer to a question of Altuğ, Bettin, Petrow,  Rishikesh, and Whitehead, as addressed in the paragraph below \cite[formula (5)]{altugetal}.

In the following, we give the necessary background and explicitly formulate the question, which one can also ask for various other compact matrix groups $(G(N))_{N=1}^\infty$. In \cite{altugetal}, the authors investigated the asymptotic formulae for the moments of the derivatives of characteristic polynomials over the groups $\mathrm{USp}(2N)$, $\mathrm{SO}(2N)$ and $\mathrm{O}^{-}(2N)$. To be more precise,  they showed that for any $s \in \mathbb{N}$, the leading-order coefficients of
\bea\label{previousone}
M_{s}(G(N),k)\defeq \int_{G(N)}\left(\phi_\mathbf{A}^{(k)}(1)\right)^{s} \mathrm{d}\mu_{\mathrm{Haar}}(\mathbf{A}),
\eea
admit expressions in terms of the derivatives of 
\bea\label{the previous determinant}
\det_{1\leq i,j\leq s} \Big(f_{2i-j+l}(t)\Big)
\eea
with respect to $t$ at zero for the following cases: for $l=0$, when $G(N)=\mathrm{USp}(2N), k=2$ and $G(N)=\mathrm{O^{-}}(2N), k=3$; and for $l=-1$ when $G(N)=\mathrm{SO}(2N), k=2$. 
Here, for $n\in \mathbb{Z}$,
\beas
f_{n}(t) &=& \frac{1}{2\pi \i} \oint_{|w|=1} \frac{\mathrm{e}^{w+\frac{t}{w^2}}}{w^{n+1}} \mathrm{d}w = \frac{1}{\Gamma(n+1)} {}_0F_{2}\left(; \frac{n}{2}+1, \frac{n+1}{2}; \frac{t}{4}\right)
\eeas
where ${}_0F_{2}$ is the hypergeometric function. These results were generalized by Andrade and Best in \cite{andrade2024joint} to arbitrary $k$ (or, more generally, for joint moments of $k_{1}$-th and $k_{2}$-th derivatives), in terms of the derivatives (with respect to $t$ at zero) of 
\bea\label{the previous determinant2}
\det_{1\leq i,j\leq s} \Big(f_{2i-j-1+l_{i}}(t)\Big)
\eea
for certain non-negative integers $l_{i}$. The approaches used in \cite{altugetal} and \cite{andrade2024joint} are based on the shifted moment formulae of Conrey, Farmer, Keating, Rubinstein, and Snaith in \cite{CFKRS1}, as well as techniques similar to those in \cite{conreyetal} by Conrey, Rubinstein, and Snaith. As explained in \cite{altugetal}, the authors were motivated by the result in \cite{conreyetal}, which expressed the leading coefficients of $M_{s}(\mathrm{U}(N),1)$ in terms of the derivatives (with respect to $t$ at zero) of 
\bea\label{inthepreviousCUE}
\det_{1\leq i,j\leq s} \Big(I_{i+j-1}(2\sqrt{t})\Big),
\eea
where $I_{n}(x)$ is the modified Bessel function of the first kind. In  \cite{keating-fei}, the third and fourth-named authors of this paper proved that, for any $k \geq 2$, $M_{s}(\mathrm{U}(N),k)$ can also be expressed in terms of the derivatives of (\ref{inthepreviousCUE}) evaluated at $t = 0$. The Hankel determinant (\ref{inthepreviousCUE}) was shown by Forrester and Witte \cite{ForresterWittePainleveIII}  to be an Okamoto $\tau$-function associated with a $\sigma$-Painlev\'{e} III$'$. This connection was later established alternatively by Basor et al. \cite{Basor_2019} using the Riemann–Hilbert method. In light of all these developments, it becomes natural to consider the following question that was initially posed in \cite{altugetal}.

\begin{qtn}[Altuğ et al. \cite{altugetal}]
\label{questionrelatedtoPainleveequation}
Does there exist a differential equation that plays the analogous role for symplectic and orthogonal types as the $\sigma$-Painlev\'{e} III$'$  equation is known to do for unitary symmetry?
\end{qtn}
As a direct consequence of our results we provide a positive answer to this question by confirming that a suitable choice of differential equation is the $\sigma$-Painlevé III$'$ equation itself, as given in (\ref{eq:Painleve}).
More specifically, our results confirm that, in the case considered here, 
$
\mathbb{E}\!\left[\mathrm{e}^{-t\mathfrak{e}_1(a)}\right]$ plays a role analogous to that of (\ref{inthepreviousCUE}) in the unitary case, for the following two reasons. On the one hand, 
\(\mathbb{E}\!\left[\mathrm{e}^{-t\mathfrak{e}_1(a)}\right]\) is the \(\tau\)-function associated with the \(\sigma\)-Painlev\'{e} III$'$ equation(\ref{eq:Painleve}). On the other hand, in what follows we give a precise statement: for any integer \(k \geq 1\) and for \(G(N) = \mathrm{USp}(2N)\) or \(\mathrm{SO}(2N)\), the leading coefficients of the moments \(M_{s}(G(N),k)\) can be expressed in terms of the derivatives at \(t=0\) of 
$
\mathbb{E}\!\left[\mathrm{e}^{-t\mathfrak{e}_1(a)}\right].
$
By a similar argument, the same conclusion also holds for \(G(N) = \mathrm{O}^{-}(2N)\). For the sake of brevity, we omit the details.
\begin{cor}\label{answeraquestion}
Let $s,k\geq 1$ be integers. Let $\mathcal{P}_k^{(2)}$, $l(\boldsymbol{\mu})$ and $\theta(\boldsymbol{\mu})$ be as defined in Theorem \ref{thm:jointmom}. 
Then 
\beas
\lim_{N\to \infty}  \frac{M_{s}(\mathrm{USp}(2N),k)}{N^{\frac{s^2+s}{2}+ks}}=\frac{2^{\frac{s^2}{2}}G(1+s)\sqrt{\Gamma(1+s)}}{\sqrt{G(1+2s)\Gamma(1+2s)}}\sum_{\boldsymbol{\mu_{1}},\ldots, \boldsymbol{\mu_{s}}\in \mathcal{P}_k^{(2)}}\frac{(k!)^{s}2^{-2\sum_{j=1}^{s}\theta(\boldsymbol{\mu}_{j})}}{\prod_{j=1}^{s}(l(\boldsymbol{\mu}_{j})-\theta(\boldsymbol{\mu}_{j}))!}\prod_{j=1}^{s}b_{\theta(\boldsymbol{\mu}_{j})}\left(s+\frac{1}{2}\right)
\eeas
and 
\beas
\lim_{N\to \infty}  \frac{M_{s}(\mathrm{SO}(2N),k)}{N^{\frac{s^2-s}{2}+ks}}=\frac{2^{\frac{s^2}{2}}G(1+s)\sqrt{\Gamma(1+2s)}}{\sqrt{G(1+2s)\Gamma(1+s)}} \sum_{\boldsymbol{\mu_{1}},\ldots, \boldsymbol{\mu_{s}}\in \mathcal{P}_k^{(2)}}\frac{(k!)^{s}2^{-2\sum_{j=1}^{s}\theta(\boldsymbol{\mu}_{j})}}{\prod_{j=1}^{s}(l(\boldsymbol{\mu}_{j})-\theta(\boldsymbol{\mu}_{j}))!}\prod_{j=1}^{s}b_{\theta(\boldsymbol{\mu}_{j})}\left(s-\frac{1}{2}\right),
\eeas
where $b_{0}(a)=1$ and for $m\geq1$, $b_{m}(a)$
is given by the following formula:
\beas
&&\Bigg((-1)^{m}\frac{1}{m!}\frac{\mathrm{d}^{m}}{\mathrm{d}t^{m}}\mathbb{E}\Big[\mathrm{e}^{-t\mathfrak{e}_{1}(a)}\Big]+(-1)^{m}\sum_{\substack{j_{-1}+j_{0}+j_{1}+2j_{2}+\cdots+mj_{m}=m\\j_{2}+j_{3}+\cdots+j_{m}\neq 0}}\frac{(-1)^{j_{-1}}}{\prod_{i=2}^{m}(-i)^{j_{i}}\prod_{i=-1}^{m}j_{i}!}\\
&&\sum_{n=0}^{m-\sum_{i=-1}^{m}j_{i}}\frac{(m-n-j_{0}-j_{1}-1)!}{(m-n-j_{0}-j_{-1}-j_{1}-1)!t^{m-n-j_{0}-j_{1}}}\left(\frac{\mathrm{d}^{j_{0}}}{\mathrm{d}t^{j_{0}}}\mathcal{P}_{n}^{(a)}(j_{2},\ldots,j_{m};t)\right)\frac{\mathrm{d}^{n+j_{1}}}{\mathrm{d}t^{n+j_{1}}}\mathbb{E}\Big[\mathrm{e}^{-t\mathfrak{e}_{1}(a)}\Big]\Bigg)\Bigg|_{t=0},
\eeas
with $\mathcal{P}_{n}^{(a)}(j_{2},\ldots,j_{m};t)$ given as in Theorem \ref{mainpainlevethm} and
\beas
t\mapsto t \frac{\di}{\di t} \log \mathbb{E}\left[\mathrm{e}^{-t\mathfrak{e}_1(a)}\right]
\eeas
satisfies the $\sigma$-Painlev\'{e} III$'$  equation (\ref{eq:Painleve}).
\end{cor}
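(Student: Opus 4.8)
The plan is to read off the leading coefficients of $M_s(G(N),k)$ from Theorem~\ref{thm:jointmom} and then convert the resulting Bessel averages into derivatives of the $\sigma$-Painlev\'e~III$'$ $\tau$-function via Theorem~\ref{mainpainlevethm}. First I would reduce $M_s(G(N),k)=\int(\phi_{\mathbf A}^{(k)}(1))^s\,\mathrm d\mu_{\mathrm{Haar}}$ to an average of a power of $\mathfrak R_k$. Writing $\mathbf A\in\mathrm{USp}(2N)$ (resp.\ $\mathrm{SO}(2N)$) with eigenangles $\pm\theta_1,\dots,\pm\theta_N$ and $\beta_l=(4\sin^2(\theta_l/2))^{-1}$, one has $\phi_{\mathbf A}(1+u)=\phi_{\mathbf A}(1)\prod_l(1+u+\beta_l u^2)$, whence $\phi_{\mathbf A}^{(k)}(1)/k!=\phi_{\mathbf A}(1)\sum_{p}\binom{N-p}{k-2p}\mathrm e_p(\beta_1,\dots,\beta_N)\ge 0$; combined with the change of variable $z=\mathrm e^{-\mathrm i\theta}$ (which expresses $\phi^{(k)}_{\mathbf A}(1)$ through $\varphi^{(j)}_{\mathbf A}(0)$, $j\le k$, by Stirling numbers) this shows that $\phi_{\mathbf A}^{(k)}(1)$ and $|\varphi_{\mathbf A}^{(k)}(0)|$ agree to leading order after dividing by $\phi_{\mathbf A}(1)N^k$, both converging — under the tilt by $\phi_{\mathbf A}(1)^s$, which is a Jacobi ensemble with hard-edge parameter $a=s+\tfrac12$ in the symplectic and $a=s-\tfrac12$ in the orthogonal case — to $\mathfrak R_k(a)$ (this is exactly how $\mathfrak R_k$ arises in Theorem~\ref{thm:jointmom}). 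Hence $M_s(G(N),k)$ has the same leading asymptotics as $\mathfrak J^{G}_N(0,\dots,0,s)$, and Theorem~\ref{thm:jointmom} — invoking Corollary~\ref{cor:integerhj} to allow $h_0=0$ in the orthogonal case, and with the requisite uniform integrability being part of its proof — produces the stated prefactors together with $\mathbb E[\mathfrak R_k(s+\tfrac12)^s]$ (resp.\ $\mathbb E[\mathfrak R_k(s-\tfrac12)^s]$).

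Next I would expand the Bessel averages. By the definition of $\mathfrak R_k$ and multilinearity of expectation, $\mathbb E[\mathfrak R_k(a)^s]=\sum_{\boldsymbol\mu_1,\dots,\boldsymbol\mu_s\in\mathcal P_k^{(2)}}\frac{(k!)^s2^{-2\sum_j\theta(\boldsymbol\mu_j)}}{\prod_j(l(\boldsymbol\mu_j)-\theta(\boldsymbol\mu_j))!}\,\mathbb E\big[\prod_{j=1}^s\mathfrak e_{\theta(\boldsymbol\mu_j)}(a)\big]$, which already reproduces the combinatorial prefactor in the statement; what remains is to evaluate each joint moment $\mathbb E[\prod_j\mathfrak e_{\theta(\boldsymbol\mu_j)}(a)]$ — the object abbreviated by the $b$'s — in terms of derivatives at $t=0$ of $\mathbb E[\mathrm e^{-t\mathfrak e_1(a)}]$.

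For that I would use Newton's identities — legitimate here for the almost surely convergent series $\mathfrak e_p(a),\mathfrak p_\ell(a)$ by Lemma~\ref{lem:skorokhod} — to rewrite each $\mathfrak e_m(a)$ as a polynomial in $\mathfrak e_1(a)=\mathfrak p_1(a),\mathfrak p_2(a),\dots,\mathfrak p_m(a)$, multiply these out, and express $\prod_j\mathfrak e_{\theta(\boldsymbol\mu_j)}(a)$ as a linear combination of monomials $\mathfrak e_1(a)^{n_1}\prod_{\ell\ge2}\mathfrak p_\ell(a)^{n_\ell}$. For each such monomial, $\mathbb E[\mathfrak e_1^{n_1}\prod_{\ell\ge2}\mathfrak p_\ell^{n_\ell}]=(-1)^{n_1}\tfrac{\mathrm d^{n_1}}{\mathrm dt_1^{n_1}}\mathbb E[\mathrm e^{-t_1\mathfrak e_1}\prod_{\ell\ge2}\mathfrak p_\ell^{n_\ell}]\big|_{t_1=0}$, and substituting the identity of Theorem~\ref{mainpainlevethm} and applying the Leibniz rule — separating the derivatives that fall on the power of $t_1$, on $\mathcal P_n^{(a)}$, and on $\mathbb E[\mathrm e^{-t_1\mathfrak e_1}]$ — yields an explicit finite combination of $\tfrac{\mathrm d^j}{\mathrm dt^j}\mathbb E[\mathrm e^{-t\mathfrak e_1(a)}]\big|_{t=0}$ and $\tfrac{\mathrm d^i}{\mathrm dt^i}\mathcal P_n^{(a)}(\,\cdot\,;t)\big|_{t=0}$ with precisely the Pochhammer-type coefficients displayed; the powers of $t_1$ built into Theorem~\ref{mainpainlevethm} are exactly what makes this $t_1=0$ evaluation finite. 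Assembling over all $\boldsymbol\mu_1,\dots,\boldsymbol\mu_s$ gives the two displayed formulas, and the case $\mathrm{O}^-(2N)$ is handled identically. Finally, since (as noted above, via the identification $M(a,b)=\mathfrak e_1(a)$ and \cite{ABGS}) $t\mapsto t\tfrac{\mathrm d}{\mathrm dt}\log\mathbb E[\mathrm e^{-t\mathfrak e_1(a)}]$ solves the $\sigma$-Painlev\'e~III$'$ equation \eqref{eq:Painleve}, everything above is expressed through derivatives at $t=0$ of this $\tau$-function, which is the affirmative answer to Question~\ref{questionrelatedtoPainleveequation}: $\mathbb E[\mathrm e^{-t\mathfrak e_1(a)}]$ plays for symplectic and orthogonal symmetry the role $\det_{1\le i,j\le s}(I_{i+j-1}(2\sqrt t))$ plays for unitary symmetry.

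I expect the main obstacle to be the bookkeeping in the third step: organising the Newton/multinomial expansion of $\prod_j\mathfrak e_{\theta(\boldsymbol\mu_j)}$ into power-sum monomials and then carrying out the Leibniz differentiation of the Theorem~\ref{mainpainlevethm} identity, verifying that the $t_1\to 0$ limit is finite (cancellation of the apparently singular $t_1^{-1}$ terms) and that the surviving coefficients are exactly the stated Pochhammer expressions. A secondary technical point: for $a=s\mp\tfrac12$ small relative to $k$ some of the individual power-sum moments diverge although the assembled combination is finite (as guaranteed by Theorem~\ref{thm:jointmom}); as in Corollary~\ref{some examples} this is dealt with by analytic continuation in $a$, the end expressions being rational functions of $a$.
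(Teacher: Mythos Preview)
Your plan coincides with the paper's: the authors state that the corollary is ``an immediate consequence of combining Theorems~\ref{thm:jointmom} and~\ref{mainpainlevethm}'', together with the observation (recorded just below the corollary) that $\phi_{\mathbf A}(\mathrm e^{-\mathrm i\theta})=\varphi_{\mathbf A}(\theta)$ makes $M_s(G(N),k)$ share its leading coefficient with $\int|\varphi_{\mathbf A}^{(k)}(0)|^s\,\mathrm d\mu_{\mathrm{Haar}}$. You fill in precisely these steps --- Corollary~\ref{cor:integerhj} for the orthogonal $h_0=0$ case, expanding $\mathbb E[\mathfrak R_k(a)^s]$ over $(\mathcal P_k^{(2)})^s$, and Newton's identities plus Leibniz differentiation of~\eqref{formula1} to reach derivatives of the $\sigma$-Painlev\'e~III$'$ $\tau$-function.
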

The above result is an immediate consequence of combining Theorems \ref{thm:jointmom} and \ref{mainpainlevethm}. In fact,
note that $\phi_{\mathbf{A}}(\mathrm{e}^{-\i\theta})=\varphi_{\mathbf{A}}(\theta)$, so it is not hard to check that the leading coefficient of the moment $M_{s}(G(N),k)$ coincides with that of $\int_{G(N)}\left|\varphi_\mathbf{A}^{(k)}(0)\right|^{s} \mathrm{d}\mu_{\mathrm{Haar}}(\mathbf{A})$ when $G(N)=\mathrm{USp}(2N)$ or $\mathrm{SO}(2N)$. Moreover, these results yield a more general version of Corollary \ref{answeraquestion}, describing the joint moments of the $k_i$-th derivatives for $i = 1, \ldots, m$. To help the reader better understand the result, we present it here for the case $m = 1$ only. The results for general $m$ are analogous but have more complicated forms. As a consequence of these results, we can recursively compute all \( M_{s}(G(N), k_{1}, \ldots, k_{m}) \) by using an asymptotic analysis of the Taylor expansion of \( \mathbb{E}\left[\mathrm{e}^{-t \mathfrak{e}_1(a)}\right] \) at \( t = 0 \), due to the properties of  solutions to the corresponding Painlev\'{e} equation. 

We now briefly explain how the approach we use herein that leads to the proof of Corollary \ref{answeraquestion} is different from those used in \cite{altugetal} and \cite{andrade2024joint}. Initially, in exploring connections between the asymptotic behavior of $M_{s}(G(N),k)$ for unitary  symplectic and orthogonal groups and the Painlev\'{e}
equations, we began with the determinant representation $\det_{1\leq i,j\leq s} \Big(f_{2i-j-1+l_{i}}(t)\Big)$
as in (\ref{the previous determinant}). See \cite{gharakhloo2023modulated} for some investigations in this direction. However, the absence of a Hankel structure or a Toeplitz property in this determinant poses a challenge for further analysis through Okamoto's $\tau$-function theory or the Riemann-Hilbert method.

Therefore, instead of working directly with (\ref{the previous determinant}) or (\ref{the previous determinant2}), we turned to a hidden Hankel structure in $M_{s}(G(N),2)$ for finite $N$, whose entries involve the Laplace transform of certain Jacobi-like weights. For general $k$, we expressed the moment in terms of a Hankel determinant shifted by partitions.  We then used combinatorial methods to analyze the connections between these shifted Hankel determinants and the original one, which satisfies a $\sigma$-Painlev\'{e} V equation. Finally, after appropriate rescaling and normalization (by dividing by a suitable power of $N$), we first establish the existence of the limit using our results from Section \ref{section2},
and then by taking the limit as 
$N\to \infty$ in the expression that connects the moments for finite matrix size $N$ with the $\sigma$-Painlev\'{e} V equation, we obtain the desired relation between the leading coefficients of $M_{s}(G(N),k)$
and a solution of the $\sigma$-Painlev\'{e} III$'$ equation.
A more detailed sketch of the proof strategy is provided in Section \ref{ideas}. 

\paragraph{Acknowledgements.} 
F.W. acknowledges support from the Royal Society, grant URF$\backslash$R$\backslash$

231028.
\paragraph{Open Access.} For the purpose of Open Access, the authors have applied a CC BY public copyright license to any Author Accepted Manuscript (AAM) version arising from this submission.




    

\section{Convergence of joint moments of power sums}\label{section2}

We present a quick argument that takes as input convergence of points along with some uniform integrability and gives joint convergence in distribution and of appropriate joint moments for an arbitrary number of power sums. We  only apply this to the hard-edge rescaled Jacobi ensemble in Section \ref{sec:Jacobi} but it could be of independent interest. 

The general setting is as follows. Suppose we are given sequences of random variables $(\lambda_{j,N})_{1\leq j\leq N}$, for $N \in \mathbb{N}$, and $(\gamma_j)_{j=1}^\infty$  such that:
\begin{itemize}
    \item For every $N\in \mathbb{N}$, we have, almost surely,
    \begin{equation*}
        0<\lambda_{1,N}<\lambda_{2,N}<\cdots<\lambda_{N,N},
    \end{equation*}
    and almost surely $0<\gamma_1<\gamma_2<\gamma_3<\cdots$.
    \item For every fixed $l\in \mathbb{N}$, as $N\to \infty$, we have,
\begin{equation}
\label{eq:convergenceassump}
    \left(\lambda_{1,N},\ldots, \lambda_{l,N}\right)\convd (\gamma_1,\ldots, \gamma_l).
\end{equation}
\end{itemize}
Abusing notation we will always write $\mathbb{P}$ for the underlying laws of these random variables, even if not necessarily defined on the same probability space, and $\mathbb{E}$ for the corresponding expectations. Finally define, for $k \in \mathbb{N}$, the negative power sums:
\begin{align*}
    M_N^{(k)} = \sum_{j=1}^N \frac{1}{\lambda_{j,N}^k}, \ \ M^{(k)}= \sum_{j=1}^\infty \frac{1}{\gamma_j^k}.
\end{align*}
Observe that, the random variables $M^{(k)}$ are well-defined but without further assumptions, in principle, could be infinite.

\begin{prop}
\label{thm:generalmain}
   In the setting described above, suppose 
    \begin{equation}
    \label{eq:convergeofmom1assump}
        \lim_{N\to \infty} \mathbb{E}\left[M_N^{(1)}\right]= \mathbb{E}\left[M^{(1)}\right]<\infty.
    \end{equation}
    Then, for a fixed $m\in \mathbb{N}$, as $N\to \infty$, we have:
     \begin{equation*}
        \left(M_N^{(1)},\ldots, M_N^{(m)}\right)\convd \left(M^{(1)},\ldots, M^{(m)}\right).
    \end{equation*}
    Moreover, if we further have that
    \begin{equation}
    \label{eq:mombdassump}
       \sup_N \mathbb{E}\left[\big(M_N^{(1)}\big)^{r}\right]<\infty,
    \end{equation}
    for some $r\geq1$, and $h_1,\ldots,h_m \in [0,\infty)$ are such that $\sum_{k=1}^{m}k h_k<r$, then we also have convergence of joint moments as $N\rightarrow \infty$:
    \begin{equation}
\label{eq:generaljointmomconv}
\mathbb{E}\left[\prod_{k=1}^m \left(M_N^{(k)}\right)^{h_k}\right] \longrightarrow \mathbb{E}\left[\prod_{k=1}^m \left(M^{(k)}\right)^{h_k}\right]<\infty.
    \end{equation}  
\end{prop}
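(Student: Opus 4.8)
The plan is to deduce the joint convergence in distribution by a standard ``convergence together'' argument based on truncating the negative power sums after the first $l$ points, and then to bootstrap to convergence of joint moments via a uniform integrability estimate that compares every power sum to the first one. Throughout, write $s\defeq\sum_{k=1}^m kh_k$ and note that \eqref{eq:convergeofmom1assump} gives both $C\defeq\sup_N\E[M_N^{(1)}]<\infty$ and, since $\E[M^{(1)}]<\infty$ forces $\gamma_j\to\infty$ a.s., the a.s.\ finiteness of every $M^{(k)}$. For $N\ge l$ set $S_N^{(l),k}\defeq\sum_{j=1}^l\lambda_{j,N}^{-k}$ and $S^{(l),k}\defeq\sum_{j=1}^l\gamma_j^{-k}$. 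Since $(x_1,\dots,x_l)\mapsto\bigl(\sum_{j\le l}x_j^{-1},\dots,\sum_{j\le l}x_j^{-m}\bigr)$ is continuous on $(0,\infty)^l$ and $(\gamma_1,\dots,\gamma_l)$ is a.s.\ supported there, the continuous mapping theorem applied to \eqref{eq:convergenceassump} gives $(S_N^{(l),1},\dots,S_N^{(l),m})\convd(S^{(l),1},\dots,S^{(l),m})$ as $N\to\infty$ for each fixed $l$; and as $l\to\infty$, $(S^{(l),1},\dots,S^{(l),m})\to(M^{(1)},\dots,M^{(m)})$ a.s.\ (monotonically in each coordinate), hence in distribution.

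The core of the argument is the tail estimate: for each $k\in\{1,\dots,m\}$ and $\epsilon>0$, $\lim_{l\to\infty}\limsup_{N\to\infty}\mathbb{P}\bigl(M_N^{(k)}-S_N^{(l),k}\ge\epsilon\bigr)=0$. For $k=1$ the nonnegative increment $M_N^{(1)}-S_N^{(l),1}$ has expectation $\E[M_N^{(1)}]-\E[S_N^{(l),1}]$; using $\E[M_N^{(1)}]\to\E[M^{(1)}]$ and the Fatou-type inequality $\liminf_N\E[S_N^{(l),1}]\ge\E[S^{(l),1}]$ (obtained by testing $S_N^{(l),1}\convd S^{(l),1}$ against $x\mapsto\min(x,K)$ and letting $K\uparrow\infty$ by monotone convergence), Markov's inequality gives $\limsup_N\mathbb{P}(M_N^{(1)}-S_N^{(l),1}\ge\epsilon)\le\epsilon^{-1}\bigl(\E[M^{(1)}]-\E[S^{(l),1}]\bigr)$, which tends to $0$ as $l\to\infty$ since $\E[S^{(l),1}]\uparrow\E[M^{(1)}]$ by monotone convergence. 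For $k\ge2$ the ordering of the points is essential: for $j\ge l+1$ one has $\lambda_{j,N}^{-k}\le\lambda_{l+1,N}^{-(k-1)}\lambda_{j,N}^{-1}$, so $M_N^{(k)}-S_N^{(l),k}\le\lambda_{l+1,N}^{-(k-1)}M_N^{(1)}$; splitting on $\{\lambda_{l+1,N}\ge R\}$, using $\mathbb{P}\bigl(M_N^{(1)}>R^{k-1}\epsilon\bigr)\le C/(R^{k-1}\epsilon)$ together with $\lambda_{l+1,N}\convd\gamma_{l+1}$ and $\gamma_{l+1}\to\infty$ a.s., one first chooses $R$ large depending on $\epsilon$ and then $l$ large to make $\limsup_N\mathbb{P}(M_N^{(k)}-S_N^{(l),k}\ge\epsilon)$ as small as desired.

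With Step 1 and the tail estimate in hand, the ``convergence together'' lemma (e.g.\ Billingsley, \emph{Convergence of Probability Measures}, Thm.\ 3.2), applied in $\R^m$ with approximants $(S_N^{(l),1},\dots,S_N^{(l),m})$ and noting that the tail estimate controls $\mathbb{P}\bigl(\|(S_N^{(l),k})_k-(M_N^{(k)})_k\|\ge\epsilon\bigr)$ coordinatewise, yields $(M_N^{(1)},\dots,M_N^{(m)})\convd(M^{(1)},\dots,M^{(m)})$, proving the first assertion. For the second, the deterministic inequality $M_N^{(k)}\le\bigl(M_N^{(1)}\bigr)^{k}$ holds for every $k\ge1$ (indeed $\lambda_{j,N}^{-1}\le M_N^{(1)}$ for each $j$, so $\lambda_{j,N}^{-k}\le(M_N^{(1)})^{k-1}\lambda_{j,N}^{-1}$; summing over $j$ gives the claim), whence $\prod_{k=1}^m\bigl(M_N^{(k)}\bigr)^{h_k}\le\bigl(M_N^{(1)}\bigr)^{s}$ since each $h_k\ge0$. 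Choosing $\eta>0$ with $s(1+\eta)\le r$, Lyapunov's inequality together with \eqref{eq:mombdassump} gives $\sup_N\E\bigl[\bigl(\prod_k(M_N^{(k)})^{h_k}\bigr)^{1+\eta}\bigr]\le\bigl(\sup_N\E[(M_N^{(1)})^{r}]\bigr)^{s(1+\eta)/r}<\infty$, so this family is uniformly integrable. Since the continuous mapping theorem and the first assertion give $\prod_k(M_N^{(k)})^{h_k}\convd\prod_k(M^{(k)})^{h_k}$, convergence in distribution together with uniform integrability yields \eqref{eq:generaljointmomconv}, and the limit is finite (by Fatou it is $\le\liminf_N\E[(M_N^{(1)})^s]\le\sup_N\E[(M_N^{(1)})^s]<\infty$).

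The main obstacle is the tail estimate of the second paragraph: showing that the ``deep'' points contribute negligibly to the negative power sums \emph{uniformly in $N$}, with the $l\to\infty$ and $N\to\infty$ limits correctly interchanged. For $k\ge2$ this rests squarely on the ordering hypothesis, which lets one trade powers for the vanishing prefactor $\lambda_{l+1,N}^{-(k-1)}$, while for $k=1$ it is exactly the point at which the convergence of first moments in \eqref{eq:convergeofmom1assump} is converted into tightness of the tails through the Fatou argument. Everything downstream --- the continuous mapping theorem, the Lyapunov bound, and the passage from convergence in distribution plus uniform integrability to convergence of means --- is routine.
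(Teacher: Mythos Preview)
Your proof is correct and takes a genuinely different route from the paper's. The paper argues in the opposite order: it first establishes \emph{moment} convergence via a sandwich argument---a lower bound from positivity and an upper bound obtained by splitting each $M_N^{(k)}=\mathsf{Z}_{N,l}^{(k)}+\mathsf{Y}_{N,l}^{(k)}$ and controlling all $2^m$ cross-terms through an $L^S$-triangle inequality plus H\"older, showing the mixed terms vanish as $l,N\to\infty$---and only afterward deduces distributional convergence by a tightness/subsequence argument that \emph{uses} the moment convergence to pin down the limit law. Your approach is more direct on both fronts: you get distributional convergence first via Billingsley's ``convergence together'' lemma, with the crucial tail control for $k\ge2$ coming from the clean ordering bound $M_N^{(k)}-S_N^{(l),k}\le\lambda_{l+1,N}^{-(k-1)}M_N^{(1)}$, and then the single inequality $M_N^{(k)}\le(M_N^{(1)})^k$ gives uniform integrability in one stroke, bypassing the paper's combinatorial splitting entirely. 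Your argument is shorter and more modular (the two halves are logically independent), while the paper's has the minor advantage of being self-contained, not invoking an external convergence-together result. Both rest on the same underlying idea---that the contribution of points beyond index $l$ is controlled by the first power sum and the growth of $\lambda_{l+1,N}$---but your packaging is cleaner.
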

\noindent The strategy to show convergence of joint moments will be to split each $M_N^{(k)}$ into separate contributions, with the first, and also dominating term having the desired convergence properties, so that the final result is obtained by a limiting procedure. With this goal in mind, for fixed $l\in \mathbb{N}$, we define:
\begin{equation*}
    \mathsf{Z}_{N,l}^{(k)} = \sum_{j=1}^l \frac{1}{\lambda_{j,N}^k}, \quad 
\mathsf{Y}_{N,l}^{(k)} = \sum_{j=l+1}^N \frac{1}{\lambda_{j,N}^k},
\end{equation*}
and their infinite counterparts:
\begin{equation*}
    \mathsf{Z}_{l}^{(k)} = \sum_{j=1}^l \frac{1}{\gamma_{j}^k}, \quad 
\mathsf{Y}_{l}^{(k)} = \sum_{j=l+1}^\infty \frac{1}{\gamma_{j}^k},
\end{equation*}
so that $M_N^{(k)}=\mathsf{Z}_{N,l}^{(k)}+\mathsf{Y}_{N,l}^{(k)}$ and $M^{(k)}= \mathsf{Z}_{l}^{(k)}+\mathsf{Y}_{l}^{(k)}$ for each $l,N\in \mathbb{N}$. We begin with the following lemma.
 \begin{lem}
 \label{lem:Yconv}
     Under the assumptions of Proposition \ref{thm:generalmain}, we have that for each fixed $l\in \mathbb{N}$,
     \begin{equation*}
         \mathsf{Y}_{N,l}^{(1)}\longrightarrow \mathsf{Y}_l^{(1)}
     \end{equation*}
     in distribution as $N\to \infty$.
     
 \end{lem}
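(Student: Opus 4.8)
The plan is to deduce the convergence in distribution of $\mathsf{Y}_{N,l}^{(1)}$ from the convergence of the full sum $M_N^{(1)}$ together with the convergence of the finitely many leading reciprocals contained in $\mathsf{Z}_{N,l}^{(1)}$, exploiting crucially the positivity of all the $\lambda_{j,N}$. First I would observe that since $M_N^{(1)} = \mathsf{Z}_{N,l}^{(1)} + \mathsf{Y}_{N,l}^{(1)}$ with both summands nonnegative, and since by \eqref{eq:convergeofmom1assump} we have $\mathbb{E}[M_N^{(1)}] \to \mathbb{E}[M^{(1)}] < \infty$, the sequence $(M_N^{(1)})_N$ is tight; hence so is $(\mathsf{Y}_{N,l}^{(1)})_N$, being dominated by $M_N^{(1)}$. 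Thus it suffices to identify all subsequential limits.

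The key step is to upgrade the marginal convergence \eqref{eq:convergenceassump} of $(\lambda_{1,N},\ldots,\lambda_{l,N})$ to joint convergence of the pair $\left(\mathsf{Z}_{N,l}^{(1)}, M_N^{(1)}\right)$. For $\mathsf{Z}_{N,l}^{(1)} = \sum_{j=1}^l \lambda_{j,N}^{-1}$, the continuous mapping theorem applied to \eqref{eq:convergenceassump} (using that $x \mapsto 1/x$ is continuous on the support, since $\lambda_{1,N} > 0$ a.s.) gives $\mathsf{Z}_{N,l}^{(1)} \convd \mathsf{Z}_l^{(1)}$. To get the joint statement I would argue as follows: $(M_N^{(1)})_N$ is tight, so along any subsequence we may pass to a further subsequence along which $\left(\mathsf{Z}_{N,l}^{(1)}, M_N^{(1)}\right) \convd (\mathsf{Z}_l^{(1)}, W)$ for some random variable $W$ with $W \geq \mathsf{Z}_l^{(1)}$ a.s. (the inequality is preserved in the limit since $\mathsf{Z}_{N,l}^{(1)} \leq M_N^{(1)}$). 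Then $\mathsf{Y}_{N,l}^{(1)} = M_N^{(1)} - \mathsf{Z}_{N,l}^{(1)} \convd W - \mathsf{Z}_l^{(1)}$ along this subsequence. It remains to show $W - \mathsf{Z}_l^{(1)} \stackrel{d}{=} \mathsf{Y}_l^{(1)}$, equivalently $W \stackrel{d}{=} M^{(1)}$, which I would extract from the expectation hypothesis: by Fatou (or the Skorokhod representation and Fatou), $\mathbb{E}[W] \leq \liminf \mathbb{E}[M_N^{(1)}] = \mathbb{E}[M^{(1)}]$, while $W \geq \mathsf{Z}_l^{(1)}$ for every $l$ forces, letting $l \to \infty$ and using monotone convergence $\mathsf{Z}_l^{(1)} \uparrow M^{(1)}$, that $W \geq M^{(1)}$ in a suitable coupling sense; combined with the expectation bound this pins down $W \stackrel{d}{=} M^{(1)}$ and hence $W - \mathsf{Z}_l^{(1)} \stackrel{d}{=} \mathsf{Y}_l^{(1)}$.

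A cleaner route, which I would actually prefer to write up, is to work directly with Laplace transforms. Since all quantities are nonnegative, convergence in distribution is equivalent to pointwise convergence of Laplace transforms $\mathbb{E}[e^{-t(\cdot)}]$ for $t \geq 0$. One has the deterministic identity $e^{-t M_N^{(1)}} = e^{-t \mathsf{Z}_{N,l}^{(1)}} e^{-t \mathsf{Y}_{N,l}^{(1)}}$. The hypothesis $\mathbb{E}[M_N^{(1)}] \to \mathbb{E}[M^{(1)}] < \infty$ gives uniform integrability of the family $(M_N^{(1)})_N$, hence — since convergence of expectations of the nonnegative variables together with a diagonal/tightness argument yields $M_N^{(1)} \convd M^{(1)}$, which is precisely the first conclusion of Proposition \ref{thm:generalmain} that we are permitted to use — we have $\mathbb{E}[e^{-tM_N^{(1)}}] \to \mathbb{E}[e^{-tM^{(1)}}]$. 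Combining with $\mathbb{E}[e^{-t\mathsf{Z}_{N,l}^{(1)}}] \to \mathbb{E}[e^{-t\mathsf{Z}_l^{(1)}}]$ (continuous mapping, as above) and the independence-free factorization handled via the joint-convergence argument of the previous paragraph, one identifies $\lim_N \mathbb{E}[e^{-t\mathsf{Y}_{N,l}^{(1)}}] = \mathbb{E}[e^{-t\mathsf{Y}_l^{(1)}}]$ for every $t \geq 0$, which is exactly the claimed convergence in distribution.

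The main obstacle is the one genuinely nontrivial point: passing from marginal convergence of $\mathsf{Z}_{N,l}^{(1)}$ and of $M_N^{(1)}$ to a conclusion about the difference $\mathsf{Y}_{N,l}^{(1)} = M_N^{(1)} - \mathsf{Z}_{N,l}^{(1)}$, since $\mathsf{Z}_{N,l}^{(1)}$ and $\mathsf{Y}_{N,l}^{(1)}$ are emphatically not independent. The resolution hinges on the fact that $\mathsf{Z}_{N,l}^{(1)}$ is a continuous function of the vector $(\lambda_{1,N},\ldots,\lambda_{l,N})$ whose full joint law converges, combined with tightness of $M_N^{(1)}$ to extract joint subsequential limits, and then on the expectation hypothesis \eqref{eq:convergeofmom1assump} to rule out mass escaping to infinity in the remainder term and thereby identify the limit uniquely as $\mathsf{Y}_l^{(1)}$. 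I expect the write-up to proceed via the Laplace-transform formulation since it makes the factorization transparent and sidesteps any Skorokhod-coupling bookkeeping.
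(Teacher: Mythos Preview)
Your first approach contains the right idea and is close to the paper's proof, but has a gap at the step ``$W \geq \mathsf{Z}_l^{(1)}$ for every $l$ forces \ldots\ $W \geq M^{(1)}$''. You extracted the subsequential limit $(\mathsf{Z}_l^{(1)}, W)$ for a \emph{fixed} $l$; both the random variable $W$ and the subsequence may depend on $l$, so you cannot send $l \to \infty$ with the same $W$. The fix is either to extract a subsequence along which the whole vector $((\lambda_{j,N})_{j\geq 1}, M_N^{(1)})$ converges jointly (a diagonal argument), after which $W \geq \sum_{j=1}^m 1/\gamma_j$ for every $m$ holds in one coupling and you may let $m \to \infty$; or, as the paper does, work directly with a subsequential limit $\mathsf{T}_l$ of $\mathsf{Y}_{N,l}^{(1)}$, bound $\mathbb{P}(\mathsf{T}_l \geq t)$ from below by $\mathbb{P}\big(\sum_{i=l+1}^m 1/\gamma_i > t\big)$ via Portmanteau, and then let $m \to \infty$. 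Either route closes by the same squeeze you already identified: $\mathbb{E}[\mathsf{T}_l] \leq \liminf \mathbb{E}[\mathsf{Y}_{N_j,l}^{(1)}] \leq \mathbb{E}[\mathsf{Y}_l^{(1)}]$ by Fatou together with the expectation hypothesis, forcing equality almost everywhere in the distributional inequality.

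Your second (Laplace-transform) route has a more serious issue: you invoke $M_N^{(1)} \convd M^{(1)}$ as ``the first conclusion of Proposition~\ref{thm:generalmain} that we are permitted to use'', but Lemma~\ref{lem:Yconv} is a lemma \emph{used in the proof} of Proposition~\ref{thm:generalmain} --- it may use only the hypotheses, not the conclusions, and convergence of expectations $\mathbb{E}[M_N^{(1)}] \to \mathbb{E}[M^{(1)}]$ alone certainly does not imply convergence in distribution. Even if one granted $M_N^{(1)} \convd M^{(1)}$, the deterministic factorization $e^{-tM_N^{(1)}} = e^{-t\mathsf{Z}_{N,l}^{(1)}} e^{-t\mathsf{Y}_{N,l}^{(1)}}$ tells you nothing about the marginal Laplace transform of $\mathsf{Y}_{N,l}^{(1)}$ since $\mathsf{Z}_{N,l}^{(1)}$ and $\mathsf{Y}_{N,l}^{(1)}$ are not independent; you acknowledge this and defer to the joint-convergence argument, so the Laplace transforms are not doing any real work here and should be dropped.
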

 \begin{proof}
First observe that for each $l\in \mathbb{N}$, the sequence $\{\mathsf{Y}_{N,l}^{(1)}\}_{N\geq 1}$ is tight due to the assumption in \eqref{eq:convergeofmom1assump}. In particular, we have that every subsequence has a further subsequence $\{\mathsf{Y}_{N_j,l}^{(1)}\}_{j\geq 1}$ that converges in distribution to a random variable $\mathsf{T}_l$.
Hence, by virtue of the Portmanteau lemma, we have that for each $l,m\in \mathbb{N}$ with $l \leq m$ and $t\in (0,\infty)$, 
\begin{align*}
\mathbb{P}\left(\mathsf{
T}_l\geq t\right)\geq\limsup_{j\to \infty} \mathbb{P}\left(\mathsf{Y}_{N_j,l}^{(1)}\geq t\right)\geq\liminf_{j\to \infty}\mathbb{P}\left(\sum_{i=l+1}^m \frac{1}{\lambda_{i,N_j}}> t\right)= \mathbb{P}\left(\sum_{i=l+1}^m \frac{1}{\gamma_{i}}> t\right).
\end{align*}
Next, taking $m\to \infty$ by means of monotone convergence theorem, we see that
\begin{equation}
\label{eq:ineq1}
    \mathbb{P}(\mathsf{T}_l\geq t)\geq \mathbb{P}(\mathsf{Y}_l^{(1)}> t).
\end{equation}
Now, observe that immediately from the assumptions, we have $\mathsf{Z}_{N,l}^{(1)}\longrightarrow \mathsf{Z}_l^{(1)}$ in distribution as $N\to \infty$. Hence, using \eqref{eq:convergeofmom1assump} and Fatou's lemma, we obtain
\begin{equation*}
    \limsup_{N\to \infty} \mathbb{E}\left[\mathsf{Y}_{N,l}^{(1)}\right]=\limsup_{N\to \infty} \bigg\{\mathbb{E}\left[M_N^{(1)}\right]-\mathbb{E}\left[\mathsf{Z}_{N,l}^{(1)}\right]\bigg\}\leq\mathbb{E}\left[M^{(1)}\right]-\mathbb{E}\left[\mathsf{Z}_{l}^{(1)}\right]=\mathbb{E}\left[\mathsf{Y}_{l}^{(1)}\right].
\end{equation*}
In particular, using the above gives, upon integrating both sides of \eqref{eq:ineq1} over $t\in (0,\infty)$, 
\begin{equation*}
    \int_{0}^\infty \mathbb{P}(\mathsf{T}_l\geq t) \mathrm{d}t = \mathbb{E}[\mathsf{T}_l]\leq\liminf_{j\to \infty} \mathbb{E}\left[\mathsf{Y}_{N_j,l}^{(1)}\right]\leq \mathbb{E}[\mathsf{Y}_l^{(1)}]= \int_0^\infty \mathbb{P}(\mathsf{Y}_l^{(1)}> t) \mathrm{d}t,
 \end{equation*}
 where for the first inequality we again use Fatou's lemma. Therefore, we therefore must have that
\begin{equation}
\label{eq:TlYldist}
   \mathbb{P}(\mathsf{T}_l\geq t)=\mathbb{P}(\mathsf{Y}_l^{(1)}> t) 
\end{equation}
for almost every $t\in (0,\infty)$. But then the function $t \mapsto \mathbb{P}(\mathsf{T}_l\geq t)$ can have countably many discontinuities, so that $ \mathbb{P}(\mathsf{T}_l>t)=\mathbb{P}(\mathsf{Y}_l^{(1)}> t) $ for almost every $t\in (0,\infty)$, hence showing that $\mathsf{T}_l$ must have the same law as $\mathsf{Y}_l^{(1)}$. In particular, we conclude that every subsequence of $\{\mathsf{Y}_{N,l}^{(1)}\}_{N\geq 1}$ has a further subsequence that converges in law to $\mathsf{Y}_l^{(1)}$, which then implies the desired convergence for the full sequence. This completes the proof of the lemma.
\end{proof}
\noindent Observe that immediately from the lemma above with $l=1$, we have the desired distributional convergence for $M_N^{(1)}$. To deal with $M_N^{(k)}$ with $k\geq 2$, we proceed with the following lemma.

\begin{lem}
\label{ineq209}
    With $M_N^{(k)}$, $\mathsf{Z}_{N,l}^{(k)}$, and $\mathsf{Y}_{N,l}^{(k)}$ defined as above, we have
    \begin{equation}
    \label{eq:jmbd1}
        \left( \mathbb{E}\left[\prod_{k=1}^m \left(M_N^{(k)}\right)^{h_k}\right]\right)^{1/S} \le \sum_{J\subseteq\{1,\dots,m\}} \left( \mathbb{E}\left[ \prod_{k\in J} \left(\mathsf{Z}_{N,l}^{(k)}\right)^{h_k} \prod_{k\notin J} \left(\mathsf{Y}_{N,l}^{(k)}\right)^{h_k} \right] \right)^{1/S},
    \end{equation}
    where $S\defeq \max(\sum_{k=1}^m h_k,1)$ with $h_{1},\ldots,h_{m}\in [0,\infty)$, and the sum is over all subsets of $\{1,\ldots,m\}$ including $\emptyset$.
\end{lem}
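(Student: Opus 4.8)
The plan is to deduce \eqref{eq:jmbd1} from a pointwise expansion of the product, carried out after renormalising the exponents by $S$ so that each of them lands in $[0,1]$; the common power $1/S$ on both sides of \eqref{eq:jmbd1} is then exactly what makes Minkowski's inequality applicable. This is a soft lemma, and the only genuine idea is the choice of this renormalisation.

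Concretely, first I would record that, since $S=\max\bigl(\sum_{k=1}^m h_k,1\bigr)\ge 1$ and all $h_k\ge 0$, each renormalised exponent satisfies $h_k/S\le\sum_{j=1}^m h_j/S\le 1$, so $h_k/S\in[0,1]$ for every $k$. Set
\[
Y\defeq\prod_{k=1}^m\bigl(M_N^{(k)}\bigr)^{h_k/S},
\]
a nonnegative random variable with $Y^S=\prod_{k=1}^m(M_N^{(k)})^{h_k}$, so that the left-hand side of \eqref{eq:jmbd1} equals $\|Y\|_{L^S}=\bigl(\mathbb{E}[Y^S]\bigr)^{1/S}$. Next, I would insert the decomposition $M_N^{(k)}=\mathsf{Z}_{N,l}^{(k)}+\mathsf{Y}_{N,l}^{(k)}$ into each factor of $Y$ and use subadditivity of the concave map $t\mapsto t^{h_k/S}$ on $[0,\infty)$, i.e. $(a+b)^{h_k/S}\le a^{h_k/S}+b^{h_k/S}$ for $a,b\ge 0$. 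Multiplying these $m$ bounds and expanding the product over the $2^m$ ways of choosing, for each index $k$, either the $\mathsf{Z}$- or the $\mathsf{Y}$-term gives the pointwise inequality
\[
Y\le\sum_{J\subseteq\{1,\dots,m\}}Y_J,\qquad Y_J\defeq\prod_{k\in J}\bigl(\mathsf{Z}_{N,l}^{(k)}\bigr)^{h_k/S}\prod_{k\notin J}\bigl(\mathsf{Y}_{N,l}^{(k)}\bigr)^{h_k/S}.
\]
Since all the random variables involved are nonnegative, monotonicity of the $L^S$-norm followed by Minkowski's inequality (valid because $S\ge 1$) yields $\|Y\|_{L^S}\le\bigl\|\sum_J Y_J\bigr\|_{L^S}\le\sum_J\|Y_J\|_{L^S}$. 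Finally $Y_J^S=\prod_{k\in J}(\mathsf{Z}_{N,l}^{(k)})^{h_k}\prod_{k\notin J}(\mathsf{Y}_{N,l}^{(k)})^{h_k}$, so $\|Y_J\|_{L^S}$ is exactly the $J$-summand on the right-hand side of \eqref{eq:jmbd1}, which completes the argument.

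I do not expect a real obstacle here: the one point that must be gotten right is the normalisation by $S$, since it is precisely what forces $h_k/S\in[0,1]$ and hence makes the subadditivity estimate $(a+b)^{h_k/S}\le a^{h_k/S}+b^{h_k/S}$ available — the naive expansion without this renormalisation already fails for $m=1$ and $h_1>1$. Two minor points remain to be dispatched: the inequality is trivial whenever its left-hand side is $+\infty$, so finiteness may be assumed throughout; and in the degenerate cases $h_k=0$, or $\mathsf{Y}_{N,l}^{(k)}=0$ when $l\ge N$, the conventions $0^0=1$ and $0^\alpha=0$ for $\alpha>0$ keep the subadditivity bound valid, and any resulting over-counting among the $Y_J$ is harmless since only an upper bound is needed.
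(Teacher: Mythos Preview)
Your proof is correct and follows essentially the same route as the paper's: normalise the exponents by $S$ so that each $h_k/S\in[0,1]$, apply subadditivity of $t\mapsto t^{h_k/S}$ pointwise, expand over subsets, and finish with the $L^S$ triangle inequality. The only cosmetic difference is that the paper splits into the cases $\sum_k h_k\ge 1$ and $\sum_k h_k<1$, whereas you treat them uniformly; one small slip is your remark that the inequality is ``trivial whenever its left-hand side is $+\infty$'' (it is trivial when the \emph{right}-hand side is infinite), but this is moot since the pointwise bound and Minkowski hold for $[0,\infty]$-valued variables anyway.
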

\begin{proof}
  First, consider the case $\sum_{k=1}^m h_k\geq1$. Define, for each $k\in \{1,\ldots,m\}$, $p_k=\frac{h_k}{S}$. It is easy to see that 
\[
\left(M_N^{(k)}\right)^{h_k} = \Bigl[\bigl(\mathsf{Z}_{N,l}^{(k)}+\mathsf{Y}_{N,l}^{(k)}\bigr)^{p_k}\Bigr]^{S} \le \Bigl[\bigl(\mathsf{Z}_{N,l}^{(k)}\bigr)^{p_k} + \bigl(\mathsf{Y}_{N,l}^{(k)}\bigr)^{p_k}\Bigr]^{S}.
\]
The inequality now follows by taking products over \(k=1,\dots,m\), expanding the product, and using the triangle inequality for $\mathrm{L}^S$. If, on the other hand, $\sum_{k=1}^m h_k<1$, then $h_j\in[0,1)$ for $j=1,\ldots,m$, so that the inequality follows immediately by using the sub-additivity of the map $t\mapsto t^{h_k}$.
\end{proof}
\noindent In what follows, for ease of notation, we define, for each $J\subset \{1,\ldots,m\}$ and $l\in \mathbb{N}$, the averages
\begin{equation*}
\mathsf{T}_{N,J}^{(l)}=\mathbb{E}\left[ \prod_{k\in J} \left(\mathsf{Z}_{N,l}^{(k)}\right)^{h_k} \prod_{k\notin J} \left(\mathsf{Y}_{N,l}^{(k)}\right)^{h_k} \right] .
\end{equation*}
\begin{proof}[Proof of Proposition \ref{thm:generalmain}] We start by proving the convergence of joint moments. That is, we fix $r\geq 1$ as in the statement of the result (note that the first moments are always uniformly bounded under the assumption \eqref{eq:convergeofmom1assump}), and show that whenever $h_1,\ldots,h_m\in [0,\infty)$ are such that $Q \defeq \sum_{k=1}^m k h_k<r$, we have the convergence given in \eqref{eq:generaljointmomconv}. We will then show that this implies distributional convergence, again heavily exploiting positivity. To begin the proof, we first claim that
    \begin{equation}
    \label{eq:claim1}
        \mathsf{T}_{N,\{1,\ldots,m\}}^{(l)}\longrightarrow \mathbb{E}\left[\prod_{k=1}^m \left(\sum_{i=1}^l \frac{1}{\gamma_i^{k}}\right)^{h_k}\right]
    \end{equation}
    as $N\to \infty$. Indeed, by the assumption in \eqref{eq:convergenceassump}, we have
    \begin{equation*}
      \prod_{k=1}^m \left(\sum_{i=1}^l \frac{1}{\lambda_{i,N}^{k}}\right)^{h_k} \convd \prod_{k=1}^m \left(\sum_{i=1}^l \frac{1}{\gamma_i^{k}}\right)^{h_k}.
    \end{equation*}
    Combining this with the uniform integrability given by applying the subadditivity of $t\mapsto t^{1/k}$ on $[0,\infty)$ to each multiplicand, and making use of the assumption \eqref{eq:mombdassump}, we obtain the convergence of averages. Moving on, note once again utilizing positivity, we have
    \begin{equation*}
   \liminf_{N\to \infty}\mathbb{E}\left[\prod_{j=1}^m (M_N^{(k)})^{h_k}\right] \geq \liminf_{N\to \infty }\E\left[\prod_{k=1}^m \left(\sum_{i=1}^l \frac{1}{\lambda_{i,N}^{k}}\right)^{h_k}\right]=\E\left[\prod_{k=1}^m \left(\sum_{i=1}^l \frac{1}{\gamma_i^{k}}\right)^{h_k}\right].
\end{equation*}
    Now, taking $l\to \infty$ on the right hand side gives that
    \begin{equation}\label{inequality109}
        \liminf_{N\to \infty}  \mathbb{E}\left[\prod_{k=1}^m (M_N^{(k)})^{h_k}\right]\geq\mathbb{E}\left[\prod_{k=1}^m (M^{(k)})^{h_k}\right].
    \end{equation}
    For the upper bound, we first claim that whenever $J\neq \{1,\ldots,m\}$,
    \begin{equation}\label{ineq20910}
        \limsup_{l\to \infty} \limsup_{N\to \infty} \mathsf{T}_{N,J}^{(l)}=0.
    \end{equation}
   Indeed, applying Holder's inequality with exponents $\frac{Q}{kh_k}$ on the term corresponding to $k\in J$, we see that
   \begin{equation*}
    \mathsf{T}_{N,J}^{(l)}\leq \prod_{k\in J} \mathbb{E}\bigg[(\mathsf{Z}_{N,l}^{(k)})^{Q/k}\bigg]^{\frac{kh_k}{Q}}  \prod_{k\notin J} \mathbb{E}\bigg[(\mathsf{Y}_{N,l}^{(k)})^{Q/k}\bigg]^{\frac{kh_k}{Q}}.
   \end{equation*}
   The claim (\ref{ineq209}) now follows by observing that for each $k$,  we have $(\mathsf{Z}_{N,l}^{(k)})^{Q/k}\leq (\mathsf{Z}_{N,l}^{(1)})^Q$ and $(\mathsf{Y}_{N,l}^{(k)})^{Q/k}\leq (\mathsf{Y}_{N,l}^{(1)})^Q$, and also that by means of Lemma \ref{lem:Yconv} along with assumption \eqref{eq:mombdassump},
   \begin{equation*}
     \limsup_{l\to \infty}  \limsup_{N\to \infty}  \mathbb{E}\bigg[(\mathsf{Y}_{N,l}^{(k)})^{Q/k}\bigg]\leq \limsup_{l\to \infty}  \limsup_{N\to \infty} \mathbb{E}[(\mathsf{Y}_{N,l}^{(1)})^{Q}]=\limsup_{l\to \infty}\mathbb{E}\left[(\mathsf{Y}_l^{(1)})^Q\right]=0.
   \end{equation*}
Then, combining with \eqref{eq:claim1}, (\ref{ineq20910}) and Lemma \ref{ineq209},
\begin{equation}\label{eq30910}
 \limsup_{N\to \infty}\mathbb{E}\left[\prod_{j=1}^m (M_N^{(k)})^{h_k}\right] \leq \E\left[\prod_{k=1}^m \left(M^{(k)}\right)^{h_k}\right].
\end{equation}
By (\ref{inequality109}) and (\ref{eq30910}), we have the convergence of joint moments. 

For the joint convergence in law, pick, under the given assumptions, $h_1,\ldots,h_m\in \mathbb{R}_+$ such that $Q=\sum_k kh_k <r$.  
    First, note that by a combination of Markov's inequality and Prokhorov's theorem, the random vectors
    \begin{equation*}
        \mathsf{X}_N\defeq \left((M_N^{(1)})^{h_1},\ldots, (M_N^{(m)})^{h_m}\right)
    \end{equation*}
    form a tight family as $N$ ranges over $\mathbb{N}$. In particular, every subsequence of $\{\mathsf{X}_N\}_{N\geq 1}$ has a further subsequence which converges in distribution. Next, we claim that all of these subsequential limits are identical. Indeed, consider an arbitrary subsequence $\{\mathsf{X}_{N_j}\}_{j\geq 1}$, and let $(Y_1,\ldots,Y_m)$ be the limiting random vector. Then, for each $l\in \mathbb{N}$, by the Portmanteau lemma, one has: 
    \begin{align*}
      \mathbb{P}\left(Y_1\geq t_1,\ldots, Y_m\geq t_m\right)&\geq\limsup_{j\to \infty} \mathbb{P}\left((M_{N_j}^{(1)})^{h_1}\geq t_1,\ldots,(M_{N_j}^{(m)})^{h_m}\geq t_m\right)\\&\geq \liminf_{j\to \infty} \mathbb{P}\left((\mathsf{Z}_{{N_j},l}^{(1)})^{h_1}> t_1,\ldots,(\mathsf{Z}_{{N_j},l}^{(m)})^{h_m}
      >t_m\right)\\&\geq\mathbb{P}\left((\mathsf{Z}_{l}^{(1)})^{h_1}> t_1,\ldots,(\mathsf{Z}_{l}^{(m)})^{h_m}> t_m\right).
    \end{align*}
    Next, noting that each $\mathsf{Z}_l^{(k)}$ is monotonically increasing in $l$, we see that by the monotone convergence theorem,
    \begin{equation*}
        \lim_{l\to \infty} \mathbb{P}\left((\mathsf{Z}_l^{(1)})^{h_1}>t_1,\ldots, (\mathsf{Z}_l^{(m)})^{h_m}>t_m\right)=\mathbb{P}\left((M^{(1)})^{h_1}> t_1,\ldots, (M^{(m)})^{h_m}>t_m\right).
    \end{equation*}
    Hence, we have shown that
    \begin{equation}
    \label{eq:distineqvector}
        \mathbb{P}(Y_1\geq t_1,\ldots, Y_m\geq t_m)\geq\mathbb{P}\left((M^{(1)})^{h_1}> t_1,\ldots, (M^{(m)})^{h_m}> t_m\right).
    \end{equation}
    
   Next, we show that integrating both sides of this inequality leads to the same value. Indeed, using the first part of the proof, we have
   \begin{align*}
       \int_{\R_+^m} \mathbb{P}(Y_1\geq t_1,\ldots, &Y_m\geq t_m) \mathrm{d}\mathbf{t}=\mathbb{E}\left[Y_1\cdots Y_m\right]=\lim_{j\to \infty} \mathbb{E}\left[(M_{N_{j}}^{(1)})^{h_1}\cdots (M_{N_{j}}^{(m)})^{h_m}\right]\\ &=\mathbb{E}\left[(M^{(1)})^{h_1}\cdots (M^{(m)})^{h_m}\right]= \int_{\R_+^m} \mathbb{P}\left((M^{(1)})^{h_1}>t_1,\ldots, (M^{(m)})^{h_m}> t_m\right) \mathrm{d}\mathbf{t}.
   \end{align*}
In particular, it must be the case that equality is achieved in \eqref{eq:distineqvector} for almost every
$\mathbf{t}=(t_1,\ldots, t_m)\in \R_+^m$. Hence, arguing as in the final paragraph of the proof of Lemma \ref{lem:Yconv}, we arrive at the distributional equality
\begin{equation*}
    (Y_1,\ldots, Y_m)\stackrel{\mathrm{d}}{=}((M^{(1)})^{h_1},\ldots, (M^{(m)})^{h_m}).
\end{equation*}
Thus, we have shown that every subsequence of $\{\mathsf{X}_N\}_{N\geq 1}$ has a further subsequence that converges in law to $((M^{(1)})^{h_1},\ldots, (M^{(m)})^{h_m})$, and hence the whole sequence must converge in distribution to the same vector, which completes the proof.
\end{proof}

\subsection{Application to Jacobi ensemble} \label{sec:Jacobi}

We apply Proposition \ref{thm:generalmain} to the hard-edge rescaled Jacobi ensemble. The Jacobi ensemble is defined as follows \cite{ForresterBook}.

\begin{defn}
\label{def:jacobi}
The Jacobi ensemble $\mu_{N}^{(a,b)}$ with parameters $a,b>-1$ is the probability measure on $\mathbb{W}_N$ given by
\begin{equation*}
    \mu_{N}^{(a,b)}(\di \mathbf{x})=\frac{1}{\mathcal{Z}_{N}^{a,b}}\prod_{j=1}^Nx_j^{a}(1-x_j)^{b}\mathbf{1}_{x_j\in [0,1]} \prod_{i<j}|x_i-x_j|^2 \mathrm{d}\mathbf{x},
\end{equation*}
where the normalisation constant is given by the following formula, 
\beas
\mathcal{Z}_{N}^{a,b}=\frac{1}{N!}\prod_{j=1}^{N}\frac{\Gamma(a+j)\Gamma(b+j)\Gamma(1+j)}{\Gamma(a+b+N+j)}.
\eeas
\end{defn}

\noindent For $k \in \mathbb{N}$,recall that we defined earlier the negative power sums
\begin{equation*}
    \mathfrak{p}_{m,N}^{(a,b)}\defeq \sum_{j=1}^N \frac{1}{\big(\mathsf{x}_j^{(N)}\big)^m}
\end{equation*}
where $\lambda_{j,N}=N^2\mathsf{x}_{j}^{(N)}$, and $\big(\mathsf{x}_{1}^{(N)},\ldots,\mathsf{x}_{N}^{(N)}\big)$ are distributed according to $\mu_{N}^{(a,b)}$. The $N^2$-scaling here is the so-called hard-edge scaling for the Jacobi ensemble. In what follows, we suppress the parameters $a,b$ from the notation whenever their values are clear. Also recall that we defined their Bessel analogues:
\begin{equation*}
   \mathfrak{p}_m(a)\defeq \sum_{j=1}^\infty \frac{1}{\big(\mathfrak{B}_{a}^{j}\big)^{m}}.
\end{equation*}
We then have the following result, where here, and throughout the rest of the paper, for any positive real number $x$, we define 
 {\[
\lfloor x \rfloor' \defeq
\begin{cases}
\lfloor x \rfloor, & x \notin \mathbb{Z}, \\
x - 1, & x \in \mathbb{Z}.
\end{cases}
\]
}

\begin{prop}
\label{thm:main}
    Let $a>0$, $b>-1$, and $m\in \mathbb{N}$. Then, we have
    \begin{equation*}
       \big( \mathfrak{p}_{1,N}^{(a,b)},\ldots,  \mathfrak{p}_{m,N}^{(a,b)}\big) \convd \big(\mathfrak{p}_1(a),\ldots,\mathfrak{p}_m(a)\big),
    \end{equation*}
    as $N\to \infty$, where $\convd$ denotes convergence in distribution. Moreover,  for $h_1,\ldots,h_m \in \mathbb{R}_{\geq 0}$, with $\sum_{k=1}^m k h_k <\lfloor a+1\rfloor'$ we have convergence of moments:
    \begin{equation*}
        \mathbb{E}_{N}^{(a,b)}\left[\prod_{k=1}^m (\mathfrak{p}_{k,N})^{h_k}\right]\longrightarrow \mathbb{E}\left[\prod_{k=1}^m (\mathfrak{p}_k(a))^{h_k}\right]< \infty,  \ \ \mathrm{as} \  N \to \infty.
    \end{equation*}
\end{prop}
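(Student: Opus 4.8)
The plan is to verify the hypotheses of Proposition~\ref{thm:generalmain} for the hard-edge rescaled Jacobi ensemble, taking $\lambda_{j,N} = N^2 \mathsf{x}_j^{(N)}$ and $\gamma_j = \mathfrak{B}_a^j$. The almost-sure strict ordering and positivity of the $\lambda_{j,N}$ are immediate from the definition of $\mu_N^{(a,b)}$ on $\mathbb{W}_N$ (with the convention that the Vandermonde vanishes on coincidences), and the Bessel points $\mathfrak{B}_a^j$ form an a.s.\ strictly increasing positive sequence by Definition~\ref{def:Bessel}. The remaining three inputs are: (i) convergence of the first $l$ rescaled smallest eigenvalues to the first $l$ Bessel points, for each fixed $l$; (ii) convergence of the first negative moment, $\mathbb{E}_N^{(a,b)}[N^{-2}\mathfrak{p}_{1,N}] \to \mathbb{E}[\mathfrak{p}_1(a)] < \infty$; and (iii) a uniform bound $\sup_N \mathbb{E}_N^{(a,b)}[(N^{-2}\mathfrak{p}_{1,N})^r] < \infty$ for any $r < \lfloor a+1\rfloor'$.

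For (i), this is the standard hard-edge universality statement for the Jacobi ensemble: under the $N^2$ scaling at the origin, the rescaled smallest eigenvalues converge in distribution to the Bessel point process with parameter $a$, a fact that follows from the convergence of the Jacobi correlation kernel to $\mathsf{K}_a^{\mathrm{Bes}}$ (see \cite{ForresterBook}) together with the fact that a determinantal point process on $\mathbb{R}_+$ with a kernel having no accumulation at $0$ is determined by its correlation functions, so that convergence of kernels locally uniformly gives joint convergence of any finite collection of smallest points. Note that only $a$ enters the limit — the parameter $b$ affects only the bulk/other edge and washes out at the hard edge. For (ii) and (iii), the key is an exact finite-$N$ formula for the moments of $\mathfrak{p}_{1,N}^{(a,b)}$ over the Jacobi ensemble. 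Using the known duality/symmetric-function identity expressing $\mathbb{E}_N^{(a,b)}[\prod_k (\mathfrak{p}_{k,N})^{h_k}]$ — in particular $\mathbb{E}_N^{(a,b)}[\sum_j \mathsf{x}_j^{-1}]$ — in closed form via ratios of Gamma functions (this is precisely the type of computation of Forrester \cite{forrester2022joint} invoked in the introduction), one gets that $N^{-2}\mathbb{E}_N^{(a,b)}[\mathfrak{p}_{1,N}]$ is a ratio of Gamma functions whose large-$N$ asymptotics can be read off by Stirling, yielding a finite limit precisely when $a>0$; and moreover, for integer $r < \lfloor a+1\rfloor'$, $\mathbb{E}_N^{(a,b)}[(N^{-2}\mathfrak{p}_{1,N})^r]$ is similarly an explicit expression bounded uniformly in $N$. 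One then matches the limit with $\mathbb{E}[\mathfrak{p}_1(a)]$ either by recognizing the Gamma-ratio limit as the known first moment of $\mathfrak{p}_1(a)$ under the Bessel process, or — more robustly — by noting that (i) plus a separate argument (e.g.\ monotone/Fatou control as in Lemma~\ref{lem:Yconv}) forces the limit of $N^{-2}\mathbb{E}_N^{(a,b)}[\mathfrak{p}_{1,N}]$ to equal $\mathbb{E}[\mathfrak{p}_1(a)]$ once we know it converges. Finally, the condition $\sum_k k h_k < \lfloor a+1\rfloor'$ in the statement matches exactly the requirement $Q = \sum_k k h_k < r$ in Proposition~\ref{thm:generalmain}: we take $r$ to be (slightly below, or for integer $Q$ exactly at) $\lfloor a+1\rfloor'$, which is legitimate since $\mathbb{E}_N^{(a,b)}[(N^{-2}\mathfrak{p}_{1,N})^r]$ stays bounded for all real $r < \lfloor a+1\rfloor'$ (the floor-prime convention handling the borderline integer case where the moment of $\mathfrak{p}_1(a)$ itself just barely diverges). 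With all hypotheses verified, Proposition~\ref{thm:generalmain} delivers both the distributional convergence and the convergence of joint moments, completing the proof.

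The main obstacle I anticipate is item (iii): obtaining a \emph{uniform-in-$N$} bound on the $r$-th moment of the rescaled first negative power sum, with the sharp threshold $r < \lfloor a+1\rfloor'$. The distributional convergence (i) is essentially citable, and the first-moment convergence (ii) is a single Stirling estimate, but controlling higher negative moments uniformly requires either an explicit Selberg-type/Gamma-ratio evaluation of $\mathbb{E}_N^{(a,b)}[(\sum_j \mathsf{x}_j^{-1})^r]$ valid for all $N$, or a careful tail estimate on the smallest Jacobi eigenvalue $\mathsf{x}_1^{(N)}$ that is uniform in $N$ down to the hard-edge scale — the latter amounting to a quantitative lower bound $\mathbb{P}_N^{(a,b)}(\mathsf{x}_1^{(N)} < \varepsilon/N^2) \lesssim \varepsilon^{a+1}$ with constant independent of $N$. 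Getting the exponent in this tail bound exactly right (so that the $r$-th moment is finite precisely up to $\lfloor a+1\rfloor'$, and the integer endpoint is excluded) is the delicate point; this is presumably where the symmetric-function input from \cite{forrester2022joint} does the real work, converting the problem into manipulation of explicit Gamma factors rather than direct eigenvalue estimates.
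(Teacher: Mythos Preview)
Your proposal is correct and follows essentially the same approach as the paper: apply Proposition~\ref{thm:generalmain} after verifying (i) hard-edge convergence of the smallest Jacobi eigenvalues to Bessel points, (ii) convergence of the first negative moment to $\mathbb{E}[\mathfrak{p}_1(a)]$, and (iii) a uniform-in-$N$ bound on integer moments of $N^{-2}\mathfrak{p}_{1,N}$ up to order $\lfloor a+1\rfloor'$. The paper handles (ii) via Aomoto's integral formula (giving the limit $1/a$ directly, no Stirling needed) together with an explicit Bessel-kernel integral for the limiting side, and (iii) by citing \cite[Proposition~4.1]{forrester2022joint} for \emph{integer} $\alpha<a+1$, then taking $r=\lfloor a+1\rfloor'$; your diagnosis that this is where the symmetric-function input does the work is exactly right.
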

\begin{proof}
    The result will follow as a special case of Proposition \ref{thm:generalmain}, once the conditions therein are verified. Firstly, the fact that convergence in \eqref{eq:convergenceassump} is guaranteed for $(\lambda_{1,N},\ldots, \lambda_{l,N})=(\mathsf{x}_1^{(N)},\ldots,\mathsf{x}_l^{(N)})$ is well-known, even for the Jacobi $\beta$-ensemble, see \cite{dianejacobi, valkoli}. As for the moment assumptions, we first note that under the given conditions on the parameters, we have
   \begin{equation*}
        \lim_{N\to \infty}\mathbb{E}_N^{(a,b)}  \Big[\mathfrak{p}_{1,N}\Big]=\frac{1}{a}=\mathbb{E} \big[\mathfrak{p}_1(a)\big],
    \end{equation*}
where the first equality follows from Aomoto's integral formula (see, for instance, \cite{forrester_importanceofselberg}), and the second equality is seen by first writing the average as the one-dimensional integral $\int_0^\infty x^{-1}\mathsf{K}_a^{\mathrm{Bes}}(x,x)\mathrm{d}x$, thanks to the determinantal property of the process $\{\mathfrak{B}_a^k\}_{k=1}^\infty$, and computing this integral explicitly using formulae for Bessel functions (a roundabout way to obtain the desired formula would be to combine results of \cite{CircJacobiBeta,assiotiswishart}). Finally, as a consequence of \cite[Proposition 4.1]{forrester2022joint}, we obtain that whenever $\alpha \in \mathbb{N}$ with $\alpha<a+1$, we have
    \begin{equation*}
        \sup_N \mathbb{E}_N^{(a,b)} \Big[\big(\mathfrak{p}_{1,N}\big)^{\alpha}\Big] <\infty,
    \end{equation*}
 so that the result now follows from an application of Proposition \ref{thm:generalmain} with $r$ being the largest such $\alpha$.
\end{proof}

\section{Proofs of Theorem \ref{thm:jointmom} and Proposition \ref{thm:laguerre}}\label{section3}

The main goal of this section will be to prove Theorem \ref{thm:jointmom}. Towards this end recall that, for a Haar-distributed random matrix $\mathbf{A}\in \mathrm{USp}(2N)$ with eigenvalues $\mathrm{e}^{\pm \i\theta_1},\ldots, \mathrm{e}^{\pm \i\theta_N}$, the density of the eigenangles $(\theta_1,\ldots, \theta_N)\in [0,\pi)$ is given by
\begin{equation*}
    \frac{1}{\mathcal{Z}_N^{\mathrm{USp}}}\prod_{1\leq j<k\leq N} \left(\cos(\theta_k)-\cos(\theta_j)\right)^2 \prod_{j=1}^N \sin^2(\theta_j) \mathrm{d}\theta_j,
\end{equation*}
for an explicit normalisation constant $\mathcal{Z}_N^{\mathrm{USp}}$, see \cite{ForresterBook}. Similarly, the density of the eigenangles $(\theta_1,\ldots, \theta_N)\in [0,\pi)$ for $\mathbf{A}\in \mathrm{SO}(2N)$ is given by:
\begin{equation*}
   \frac{1}{\mathcal{Z}_N^{\mathrm{SO}}} \prod\limits_{1 \leq k <j \leq N} \left(\cos\left(\theta_k\right)-\cos\left(\theta_j\right)\right)^2 \prod_{j=1}^N\mathrm{d} {\theta}_j,
\end{equation*}
for an explicit normalisation constant $\mathcal{Z}_N^{\mathrm{SO}}$, see \cite{ForresterBook}. Our starting point will be the following formula.

\begin{prop}\label{lem:jointmomSym}
Suppose $h_0,\ldots,h_m\in [0,\infty)$. Let $s=\sum_{i=0}^{m}h_{i}$.  We have 
\begin{align*}
    \mathfrak{J}^{\mathrm{USp}}_N(h_{0},h_1,\ldots,h_m)&= \mathfrak{J}^{\mathrm{USp}}_N(s,0,\ldots,0) \mathbb{E}_{N}^{(s+\frac{1}{2},\frac{1}{2})}\left[\prod_{k=1}^m (\mathfrak{R}_{N,k}(\mathsf{x}_1^{(N)},\ldots,\mathsf{x}_N^{(N)}))^{h_k}\right],\\
        \mathfrak{J}^{\mathrm{SO}}_N(h_{0},h_1,\ldots,h_m)&= \mathfrak{J}^{\mathrm{SO}}_N(s,0,\ldots,0) \mathbb{E}_{N}^{(s-\frac{1}{2},-\frac{1}{2})}\left[\prod_{k=1}^m (\mathfrak{R}_{N,k}(\mathsf{x}_1^{(N)},\ldots,\mathsf{x}_N^{(N)}))^{h_k}\right],
    \end{align*}
where for any $k\in \mathbb{N}$, $\mathfrak{R}_{N,k}$ is defined by
\begin{equation*}
    \mathfrak{R}_{N,k}(x_1,\ldots,x_N)=\sum_{\substack{n_1,\dots,n_N\ge0 \\ n_1+\cdots+n_N=k}}
\binom{k}{n_1,\dots,n_N}\,\prod_{j:\, n_j\ge2} \left(1+\frac{2^{n_j}-2}{4x_j}\right).
\quad
\end{equation*}
 
\end{prop}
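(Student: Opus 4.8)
The plan is to start from the eigenvalue representation of the characteristic polynomial and reduce the matrix integral to a Jacobi-ensemble average via the substitution $x_j = \sin^2(\theta_j/2)$ (equivalently $x_j = (1-\cos\theta_j)/2$), which is the standard bridge between the $\mathrm{USp}(2N)$/$\mathrm{SO}(2N)$ eigenvalue densities and the Jacobi ensemble $\mu_N^{(a,b)}$ with $(a,b) = (\pm\frac12, \pm\frac12)$. For $\mathbf{A}\in\mathrm{USp}(2N)$ with eigenvalues $\mathrm{e}^{\pm\mathrm{i}\theta_j}$, one has $\varphi_\mathbf{A}(\theta) = \prod_{j=1}^N (1 - \mathrm{e}^{\mathrm{i}(\theta_j-\theta)})(1 - \mathrm{e}^{-\mathrm{i}(\theta_j+\theta)})$, so at $\theta=0$ and after differentiation the derivatives $\varphi_\mathbf{A}^{(k)}(0)$ are symmetric functions of the $\theta_j$. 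The first key computation is to expand $\log\varphi_\mathbf{A}(\theta)$ near $\theta=0$ and read off, via the exponential formula / Bell polynomials, that $\varphi_\mathbf{A}^{(k)}(0)/\varphi_\mathbf{A}(0)$ is a polynomial in the quantities $\sum_j \cot^{r}(\theta_j)$-type sums; the essential point is that modulo the overall factor $\varphi_\mathbf{A}(0) = \prod 2(1-\cos\theta_j) = 4^N\prod \sin^2(\theta_j/2)$, each derivative ratio becomes a multinomial sum over ways to distribute $k$ derivatives among the $N$ factors, with the factor from node $j$ depending only on $x_j$; matching this bookkeeping to the claimed $\mathfrak{R}_{N,k}$ is the crux. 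This explains the shape $\binom{k}{n_1,\dots,n_N}\prod_{j:n_j\ge 2}(1 + \frac{2^{n_j}-2}{4x_j})$: a node receiving $n_j \le 1$ derivatives contributes trivially, while a node receiving $n_j\ge 2$ contributes a term built from the local Taylor data, and the $2^{n_j}-2$ arises from the two linear factors $(1-\mathrm{e}^{\pm\mathrm{i}\theta_j})$ combining.

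Next I would isolate the prefactor: pulling $\prod_k|\varphi_\mathbf{A}^{(k)}(0)|^{h_k} = |\varphi_\mathbf{A}(0)|^{s}\prod_{k=1}^m \big|\varphi_\mathbf{A}^{(k)}(0)/\varphi_\mathbf{A}(0)\big|^{h_k}$ with $s = \sum_i h_i$, the $|\varphi_\mathbf{A}(0)|^s = 4^{Ns}\prod_j \sin^{2s}(\theta_j/2)$ factor, when inserted into the $\mathrm{USp}(2N)$ density $\propto \prod_{j<k}(\cos\theta_k-\cos\theta_j)^2\prod_j\sin^2\theta_j\,d\theta_j$, reweights it into exactly the $\mathrm{USp}(2N)$-density with the extra $\sin^{2s}$ factor — and this reweighted measure, after the change of variables to $x_j$, is precisely $\mu_N^{(s+\frac12,\frac12)}$ (using $\sin^2\theta_j = 4x_j(1-x_j)$ and $\sin^2(\theta_j/2)=x_j$, so $\sin^{2s}(\theta_j/2)\sin^2\theta_j \,d\theta_j \leftrightarrow x_j^{s+1/2}(1-x_j)^{1/2}dx_j$ up to constants). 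The ratio of normalisations is by definition $\mathfrak{J}_N^{\mathrm{USp}}(s,0,\dots,0)/\mathcal{Z}$, which is how the $\mathfrak{J}_N^{\mathrm{USp}}(s,0,\dots,0)$ prefactor appears. The orthogonal case is identical with the density lacking the $\prod\sin^2\theta_j$ factor, producing the Jacobi exponent $(s-\frac12,-\frac12)$ instead.

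The main obstacle I anticipate is the purely combinatorial identification in the first paragraph: carefully computing the higher derivatives $\varphi_\mathbf{A}^{(k)}(0)$ — each factor $(1-\mathrm{e}^{-\mathrm{i}\theta}\alpha)$ contributes $\mathrm{i}^n\alpha\,\mathrm{e}^{\mathrm{i}(\cdots)}$ upon $n$-fold differentiation — organizing the Leibniz expansion over the $2N$ linear factors (or $N$ quadratic factors), and checking that after dividing by $\varphi_\mathbf{A}(0)$ and simplifying with $\mathrm{e}^{\mathrm{i}\theta_j}+\mathrm{e}^{-\mathrm{i}\theta_j}=2\cos\theta_j = 2(1-2x_j)$, the node-$j$ weight collapses to $1+\frac{2^{n_j}-2}{4x_j}$ for $n_j\ge 2$ and to $1$ for $n_j\le 1$. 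I would verify this first for small $k$ ($k=1,2$) to fix the normalisation conventions for $\varphi_\mathbf{A}^{(k)}$, then do the general node-wise computation; the pairing of the two conjugate linear factors at each $j$ is what makes only $x_j$ (and not $\theta_j$ itself) survive. Everything else — the change of variables, tracking Jacobians and the $4^{N}$-type constants into the normalisation ratio, and recognizing the resulting Jacobi parameters — is routine bookkeeping.
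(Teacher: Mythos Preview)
Your proposal is correct and follows essentially the same approach as the paper: write $\varphi_\mathbf{A}(\theta)=\prod_{j=1}^N\varphi_j(\theta)$ with $\varphi_j(\theta)=1-2\mathrm{e}^{-\mathrm{i}\theta}\cos\theta_j+\mathrm{e}^{-2\mathrm{i}\theta}$, compute $\varphi_j^{(n)}(0)=(-\mathrm{i})^n(2^n-2\cos\theta_j)$ for $n\ge 1$ directly (no Bell polynomials needed), apply the multinomial Leibniz rule to get $\varphi_\mathbf{A}^{(k)}(0)/\varphi_\mathbf{A}(0)=(-\mathrm{i})^k\mathfrak{R}_{N,k}$ under $x_j=\tfrac{1}{2}(1-\cos\theta_j)$, and then the change of variables $\theta_j\mapsto x_j$ in the Haar integral yields the Jacobi ensemble with the stated parameters. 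Your identification of the node-wise weight and of the Jacobi exponents $(s\pm\tfrac12,\pm\tfrac12)$ is exactly what the paper does; the $(-\mathrm{i})^k$ phase is real-positive under the absolute value (since $\mathfrak{R}_{N,k}>0$), so no sign issues arise.
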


\begin{proof}
    We first claim that for any $k\geq 1$, we have
    \[
\frac{\varphi_\mathbf{A}^{(k)}(0)}{\varphi_{\mathbf{A}}(0)}
=(-\mathrm{i})^{k}\sum_{\substack{n_1,\dots,n_N\ge0 \\ n_1+\cdots+n_N=k}}
\binom{k}{n_1,\dots,n_N}\,\prod_{j:\, n_j\ge2} \left(1+\frac{2^{n_j}-2}{4x_j}\right),
\quad\textrm{with } x_j=\frac{1}{2}(1-\cos\theta_j).
\]
    Indeed, writing
    \[
\varphi_\mathbf{A}(\theta)=\prod_{j=1}^N \varphi_j(\theta),
\]
where we define
\[
\varphi_j(\theta)\defeq \big(1-\mathrm{e}^{-\mathrm{i}(\theta-\theta_j)}\big)\big(1-\mathrm{e}^{-\mathrm{i}(\theta+\theta_j)}\big)=1-2\mathrm{e}^{-\mathrm{i}\theta}\cos\theta_j+\mathrm{e}^{-2\mathrm{i}\theta},
\]
we easily see that
\[
\varphi_j^{(n)}(0)=
\begin{cases}
2\,(1-\cos\theta_j), & n=0,\\[1mm]
(-\mathrm{i})^n\Bigl[2^n-2\cos\theta_j\Bigr], & n\ge1.
\end{cases}
\]
Hence, using the change of variables $x_j=(1-\cos(\theta_j))/2$, we see that for any $n\geq 1$,
\[
\frac{\varphi_j^{(n)}(0)}{\varphi_j(0)}
=(-\mathrm{i})^n\,\frac{2^n-2+4x_j}{4x_j}.
\]
Next, one may differentiate repeatedly to compute:
\[
\frac{\varphi^{(k)}_{\mathbf{A}}(0)}{\varphi_{\mathbf{A}}(0)}
=\sum_{\substack{n_1,\dots,n_N\ge0 \\ n_1+\cdots+n_N=k}}
\binom{k}{n_1,\dots,n_N}\prod_{j=1}^N\frac{\varphi_j^{(n_j)}(0)}{\varphi_j(0)}.
\]
From this the result follows by making the change of variables $x_j=\frac{1}{2}(1-\cos \theta_j)$ in the corresponding matrix integral.
\end{proof}

The main point in the formula above is that the highest contribution will come from sums of terms involving elementary symmetric polynomials in inverse points of $x_j$ sampled according to $\mu_{N}^{(a,b)}$. To establish this, we first need the following proposition. Here we make use of the determinantal property of $\mu_{N}^{(a,b)}$. One could obtain some uniform boundedness for elementary symmetric polynomials by writing them in terms of power sums and using our previous arguments but this restricts the parameters for which such bounds exist and it is quite suboptimal. 

Throughout the rest of this section we write $[N]=\{1,\ldots, N\}$ for each $N\in \mathbb{N}$, and let $[N]^{(l)}$ denote, for each $l,N\in \mathbb{N}$, the set of subsets of $[N]$ of size $l$. 

\begin{prop}
\label{prop:UIbdforelemsym}
    For each $a>0$, $b\geq -\frac{1}{2}$ we have
    \begin{equation*}
        \sup_N \frac{1}{N^{2l\alpha}}\E_{N}^{(a,b)}\left[\mathrm{e}_l\left(\frac{1}{\mathsf{x}_1^{(N)}},\ldots, \frac{1}{\mathsf{x}_N^{(N)}}\right)^{\alpha}\right]<\infty,
    \end{equation*}
    where $\alpha=\lfloor a+1\rfloor'$.

\end{prop}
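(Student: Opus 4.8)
The plan is to expand the power of the elementary symmetric polynomial and exploit the determinantal structure of $\mu_N^{(a,b)}$ to reduce everything to a bound on correlation functions. Writing $\alpha=\lfloor a+1\rfloor'$, which is a nonnegative integer, we expand
\[
\mathrm{e}_l\!\left(\tfrac{1}{\mathsf{x}_1^{(N)}},\ldots,\tfrac{1}{\mathsf{x}_N^{(N)}}\right)^{\alpha}
=\sum_{S_1,\ldots,S_\alpha\in [N]^{(l)}} \prod_{r=1}^\alpha \prod_{i\in S_r}\frac{1}{\mathsf{x}_i^{(N)}},
\]
and group the terms according to the multiset $U=S_1\cup\cdots\cup S_\alpha$ (as a set) and the multiplicities with which each index appears. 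Taking expectations, each term becomes an integral against a correlation function of order $|U|\le l\alpha$, namely $\mathbb{E}_N^{(a,b)}\big[\prod_{i\in U}(\mathsf{x}_i^{(N)})^{-c_i}\big]$ where $c_i\in\{1,\ldots,\alpha\}$ and $\sum_i c_i=l\alpha$. By the determinantal property this equals $\int \prod_{i} y_i^{-c_i}\,\rho^{(j)}_N(y_1,\ldots,y_j)\,\mathrm{d}\mathbf{y}$ with $j=|U|$, where $\rho^{(j)}_N=\det[\mathsf{K}_N(y_p,y_q)]_{p,q=1}^j$ is the $j$-point correlation function built from the Christoffel--Darboux kernel $\mathsf{K}_N$ of the Jacobi weight $x^a(1-x)^b$ on $[0,1]$.

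The key estimate I would isolate is: after the hard-edge rescaling $x=y/N^2$, the rescaled one-point function $N^{-2}\mathsf{K}_N(y/N^2,y/N^2)$ converges to the Bessel kernel diagonal $\mathsf{K}_a^{\mathrm{Bes}}(y,y)$, and more importantly is \emph{dominated uniformly in $N$} near the origin by (a constant times) $y^{a}$ for small $y$ and decays for large $y$; likewise Hadamard's inequality gives $|\rho^{(j)}_N(\mathbf{y})|\le \prod_p \mathsf{K}_N(y_p,y_p)$. Plugging this in and rescaling each variable by $N^2$, a single factor contributes
\[
\frac{1}{N^{2c_i}}\int_0^{N^2} y^{-c_i}\,\big(N^{-2}\mathsf{K}_N(y/N^2,y/N^2)\big)\,\mathrm{d}y,
\]
and since the integrand behaves like $y^{a-c_i}$ near $0$ this integral is finite uniformly in $N$ precisely when $c_i<a+1$, i.e. $c_i\le \alpha$ — which is exactly the range produced by the expansion. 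The large-$y$ region is harmless because $\mathsf{K}_N(y/N^2,y/N^2)$ is uniformly bounded on compacts of $(0,\infty)$ and the hard-edge eigenvalue density has enough decay; here I would invoke the uniform bounds on the rescaled Jacobi kernel established in the literature (e.g. via the known uniform asymptotics of Jacobi polynomials, or alternatively cite the uniform integrability input from \cite{forrester2022joint, valkoli} that is already used in Proposition \ref{thm:main}). The total number of terms in the expansion is bounded by $\binom{N}{l}^\alpha$, but after extracting the factor $N^{2l\alpha}$ and accounting for the fact that each distinct index $i\in U$ only appears with bounded multiplicity, the combinatorial count of terms with $|U|=j$ is $O(N^j)$ with $j\le l\alpha$ — and crucially the terms where some index is ``reused'' (so $j<l\alpha$) come with \emph{fewer} powers of $N$ available but also a lower-dimensional integral, so a careful bookkeeping shows every grouping contributes $O(1)$ after division by $N^{2l\alpha}$. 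I would organize this bookkeeping by summing over set partitions of the $\alpha$ copies of $[l]$.

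The main obstacle I anticipate is the uniform-in-$N$ domination of the rescaled kernel near the hard edge together with the clean combinatorial accounting: one must verify that the pole $y^{-c_i}$ is always integrable against $y^a$ for the \emph{specific} exponents $c_i$ that arise (this is why $\lfloor a+1\rfloor'$ rather than $\lfloor a+1\rfloor$ is the right threshold — when $a+1\in\mathbb{Z}$ the borderline exponent $c_i=a+1$ would give a logarithmic divergence, so $\alpha$ must be strictly below it), and that the constants in Hadamard's inequality and in the kernel bound do not grow with $N$. A secondary technical point is controlling the region where two rescaled variables collide, where $\rho^{(j)}_N$ vanishes but the product of diagonal kernels does not; this only helps (the true correlation function is smaller than Hadamard's bound), so it can be ignored for an upper bound. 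Modulo citing the appropriate uniform kernel asymptotics, the argument is then a finite sum of uniformly bounded terms, giving the claimed supremum bound.
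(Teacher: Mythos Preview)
Your approach is essentially the paper's: expand the power, bound the $j$-point correlation function by the product of diagonal kernel values (the paper uses Cauchy--Schwarz on each off-diagonal entry and expands the determinant over permutations, which is equivalent to your Hadamard bound), and then do the combinatorial accounting over the number $\kappa$ of distinct indices that appear. The only simplification the paper makes is that, rather than rescaling $x=y/N^2$ and invoking uniform-in-$N$ domination of the kernel by its Bessel limit, it directly cites the bound $\int_0^1 x^{-\theta} K_N^{(a,b)}(x,x)\,\mathrm{d}x \le C(a,b,\theta)\, N^{2\theta}$ for $1\le \theta<a+1$ from \cite{gunes2022characteristic}, which sidesteps your large-$y$ and uniform-convergence discussion entirely.
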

\begin{proof} 
We first recall the determinantal property we need. Consider the so-called $l$-point density function:
\begin{equation*}
   \boldsymbol{\rho}_{N;l}^{(a,b)}(x_1,\ldots,x_l) \defeq  \frac{1}{\mathcal{Z}_{N}^{a,b}}\int_{(x_{l+1},\ldots, x_N)\in \R^{N-l}} \prod_{j=1}^N (1-x_j)^a x_j^b \prod_{1\leq j<k\leq N} (x_j-x_k)^2 \di x_{l+1}\ldots \di x_{N}.
\end{equation*}
Using standard arguments, see, for instance, \cite[Proposition 5.1.2]{ForresterBook}, this function can be written as:
\begin{equation}
    \boldsymbol{\rho}_{N;l}^{(a,b)}\left(x_1,\ldots, x_l\right)= \frac{(N-l)!}{N!} \det\left[K_N^{(a,b)}(x_i,x_j)\right]_{i,j=1}^l.
    \label{eq:dens}
\end{equation}
The so-called correlation kernel, $K_N^{(a,b)}(x,y)$, appearing in the determinant is given by:
\begin{equation}
\label{eq:corker}
    K_N^{(a,b)}(x,y) = \sqrt{w^{(a,b)}(x) w^{(a,b)}(y)} \sum_{j=1}^N \mathcal{P}_j^{(a,b)}(x) \mathcal{P}_j^{(a,b)}(y),
\end{equation}
where the weight $w^{(a,b)}$ is given by
\begin{equation*}
    w^{(a,b)}(x) = x^a(1-x)^b, \ \ x\in [0,1],
\end{equation*}
and the $\big\{\mathcal{P}_j^{(a,b)}\big\}_{j=1}^\infty$ are polynomials indexed by their degrees defined by the orthonormality condition:
\begin{equation*}
    \int_0^1 \mathcal{P}_j^{(a,b)}(x) \mathcal{P}_k^{(a,b)} (x) w^{(a,b)} (x) \di x = \mathbf{1}_{j=k}.
\end{equation*}
Now, since $\alpha$ is an integer, expanding, we see that 
    \begin{align*}
    \E_{N}^{(a,b)}\left[\mathrm{e}_l\left(\frac{1}{\mathsf{x}_1^{(N)}},\ldots, \frac{1}{\mathsf{x}_N^{(N)}}\right)^{\alpha}\right]\leq C(\alpha)\sum_{\sum_{J\in [N]^{(l)}} m_J=\alpha} \mathbb{E}_{N}^{(a,b)}&\left[\prod_{J\in [N]^{(l)}} \prod_{j\in J} \frac{1}{{\big(\mathsf{x}_j^{(N)}\big)}^{m_J}}\right]\\&=C(\alpha)\sum_{\sum_{J\in [N]^{(l)}} m_J=\alpha} \mathbb{E}_{N}^{(a,b)}\left[\prod_{j=1}^N \frac{1}{\big(\mathsf{x}_j^{(N)}\big)^{\sum_{J : j\in J}m_J}}\right],
    \end{align*}
where $C(\alpha)$ is a constant only depending on $\alpha$.
    Next, define, for a fixed $(m_J)_{J\in [N]^{(l)}}$, the sets
    \begin{align*}
        A=&\,\{j\in [N] \mid j\in J \textrm{ for some } J \textrm{ with } m_J\neq 0\},\\
        A_j = &\, \{J\in [N]^{(l)}\, \mid j\in J, \, m_J\neq 0\},
    \end{align*}
    and let $\kappa=|A|$. Then, observe that by the determinantal property \eqref{eq:dens}, we have that
    \begin{equation}
    \label{eq:ekUIeq1}
       \mathbb{E}_{N}^{(a,b)}\left[\prod_{j=1}^N \frac{1}{\big(\mathsf{x}_j^{(N)}\big)^{\sum_{J : j\in J}m_J}}\right]=\frac{(N-\kappa)!}{N!}\int_{(y_j)_{j\in A}\in \mathbb{R}^\kappa} \prod_{j\in A} \frac{1}{y_j^{\sum_{J\in A_j}m_J}} \det_{l,k\in A}\left[K_N^{(a,b)}(y_l,y_k)\right] \mathrm{d}\mathbf{y}.
    \end{equation}
    Next, note that applying Cauchy-Schwarz to \eqref{eq:corker} immediately gives 
    \begin{equation*}
        K_N^{(a,b)}(z,w)\leq \sqrt{K_N^{(a,b)}(z,z)}\sqrt{K_N^{(a,b)}(w,w)}.
    \end{equation*}
    Hence, expanding the determinant as a sum over the symmetric group and applying this inequality
    we obtain that the average in \eqref{eq:ekUIeq1} is bounded above by
    \begin{equation*}
        \frac{C(\alpha)}{\binom{N}{\kappa}} \prod_{j\in A} \int \frac{1}{y^{\sum_{J\in A_j}m_J}} K_N^{(a,b)}(y,y) \mathrm{d}y\leq \frac{\tilde{C}(a,b,\alpha)}{\binom{N}{\kappa}} \prod_{j\in A} N^{2\sum_{j\in J} m_J}= \frac{\tilde{C}(a,b,\alpha)}{\binom{N}{\kappa}}N^{2\alpha l}
    \end{equation*}
    where for the inequality we used the bound
    \begin{equation*}
        \int_0^1 x^{-\theta} K_N^{(a,b)}(x,x) \mathrm{d}x\leq C(a,b,\theta) N^{2\theta}
    \end{equation*}
    that holds for all $1\leq \theta<a+1$ (see the proof of Proposition 2.2 in \cite{gunes2022characteristic}), and $\tilde{C}(a,b,\alpha)=C(\alpha)\sup_{1\leq\theta<a+1}C(a,b,\theta)$. Combining these observations, one then has 
    \begin{equation*}
         \E_{N}^{(a,b)}\left[\mathrm{e}_l\bigg(\frac{1}{\mathsf{x}_1^{(N)}},\ldots, \frac{1}{\mathsf{x}_N^{(N)}}\bigg)^{\alpha}\right]\leq \tilde{C}(a,b,\alpha)\sum_{\sum_{J\in [N]^{(l)}} m_J=\alpha} N^{2\alpha l}\binom{N}{\kappa((m_J)_{J_\in [N]^{(l)}})}^{-1}.
    \end{equation*}
    Note, then that for each fixed $\kappa=l,\ldots,\alpha l$, there are at most 
    \begin{equation*}
        \tilde{C}(\alpha,l)\, \binom{N}{\kappa}
    \end{equation*}
    choices of $(m_J)_{J\in [N]^{(l)}}$ that gives $\kappa=\kappa\big((m_J)_{J_\in [N]^{(l)}}\big)$. This then gives the desired bound.
\end{proof}
Along with the bound given by the lemma above, we will need to understand the distribution in the $N\to \infty$ limit of the elementary symmetric polynomials. As expected, the objects in the limit will be the elementary symmetric functions in the inverse points of the Bessel point process. Hence, in order to ease the notation, we define:
\begin{equation}\label{definitionofeNk}
\mathfrak{e}_{N,k}^{(a,b)}\defeq\sum_{1\leq i_1<\cdots<i_k\leq N} \frac{1}{\mathsf{x}_{i_1}^{(N)}\cdots \mathsf{x}_{i_k}^{(N)}}
\end{equation}
where $(\mathsf{x}_i^{(N)})_{1\leq i \leq N}$ is distributed according to $\mu_{N}^{(a,b)}$. We then have the following lemma, a consequence of Proposition \ref{thm:main}.

\begin{lem}
\label{lem:skorokhod}
    Let $a>0$, $b>-1$, and $m\in \mathbb{N}$. Then there exist random vectors $(\mathfrak{J}_{N,1}^{(a,b)},\ldots,\mathfrak{J}_{N,m}^{(a,b)})_{N\geq 1}$ and $(\mathfrak{J}_{1}(a),\ldots,\mathfrak{J}_{m}(a))$ defined on a common probability space such that
    \begin{equation*}
        (\mathfrak{e}_{N,1}^{(a,b)},\ldots,\mathfrak{e}_{N,m}^{(a,b)})\stackrel{\mathrm{d}}{=}(\mathfrak{J}_{N,1}^{(a,b)},\ldots,\mathfrak{J}_{N,m}^{(a,b)})
    \end{equation*}
    for all $N\in \mathbb{N}$, and
    \begin{equation*}
(\mathfrak{e}_{1}(a),\ldots,\mathfrak{e}_{m}(a))\stackrel{\mathrm{d}}{=}(\mathfrak{J}_{1}(a),\ldots,\mathfrak{J}_{m}(a)).
    \end{equation*}
    Furthermore, on the probability space they are defined, we have the almost sure convergence
    \begin{equation*}
   (N^{-2}\mathfrak{J}_{N,1},\ldots,N^{-2m}\mathfrak{J}_{N,m})\longrightarrow (\mathfrak{J}_{1},\ldots,\mathfrak{J}_{m}), \ \ \textnormal{as } N \to \infty.
    \end{equation*}
\end{lem}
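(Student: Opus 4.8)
The plan is to deduce the lemma from Proposition~\ref{thm:main} by combining Newton's identities with the Skorokhod representation theorem. The essential observation is that the elementary symmetric polynomials in the inverse points are polynomial (indeed weighted-homogeneous) functions of the corresponding negative power sums, so that the joint convergence in distribution of the hard-edge rescaled power sums already delivered by Proposition~\ref{thm:main} can be transported to the $\mathfrak{e}_{N,k}^{(a,b)}$, after which one only has to realize an almost surely convergent copy on a single probability space.

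Concretely, I would first recall that for each $k$ there is a universal polynomial $\Phi_k$ with $\mathrm{e}_k=\Phi_k(p_1,\dots,p_k)$ relating the $k$-th elementary symmetric polynomial and the power sums $p_1,\dots,p_k$ of any finite collection of variables, and that $\Phi_k$ is weighted-homogeneous of degree $k$ when $p_j$ is assigned weight $j$; thus every monomial $\prod_j p_j^{m_j}$ of $\Phi_k$ satisfies $\sum_j j m_j=k$. Taking the variables to be $1/\mathsf{x}_i^{(N)}$ gives $\mathfrak{e}_{N,k}^{(a,b)}=\Phi_k(\mathfrak{p}_{1,N}^{(a,b)},\dots,\mathfrak{p}_{k,N}^{(a,b)})$, and the same relation persists in the limit, $\mathfrak{e}_k(a)=\Phi_k(\mathfrak{p}_1(a),\dots,\mathfrak{p}_k(a))$, because $\mathfrak{f}_a(z)=\prod_i(1-z/\mathfrak{B}_a^i)$ is a bona fide entire function with $\sum_i 1/\mathfrak{B}_a^i<\infty$ almost surely, so that $-\log\mathfrak{f}_a(z)=\sum_{k\ge1}\mathfrak{p}_k(a)z^k/k$ near $z=0$ and exponentiating matches coefficients. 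By the weighted homogeneity, substituting $\mathfrak{p}_{j,N}^{(a,b)}\mapsto N^{-2j}\mathfrak{p}_{j,N}^{(a,b)}$ scales $\Phi_k$ by precisely $N^{-2k}$, hence $N^{-2k}\mathfrak{e}_{N,k}^{(a,b)}=\Phi_k\big(N^{-2}\mathfrak{p}_{1,N}^{(a,b)},\dots,N^{-2k}\mathfrak{p}_{k,N}^{(a,b)}\big)$. Proposition~\ref{thm:main} (via Proposition~\ref{thm:generalmain}, in which $M_N^{(j)}=N^{-2j}\mathfrak{p}_{j,N}^{(a,b)}$) gives $\big(N^{-2}\mathfrak{p}_{1,N}^{(a,b)},\dots,N^{-2m}\mathfrak{p}_{m,N}^{(a,b)}\big)\convd\big(\mathfrak{p}_1(a),\dots,\mathfrak{p}_m(a)\big)$, so applying the continuous mapping theorem to the polynomial map $(p_1,\dots,p_m)\mapsto(\Phi_1(p_1),\dots,\Phi_m(p_1,\dots,p_m))$ yields $\big(N^{-2}\mathfrak{e}_{N,1}^{(a,b)},\dots,N^{-2m}\mathfrak{e}_{N,m}^{(a,b)}\big)\convd\big(\mathfrak{e}_1(a),\dots,\mathfrak{e}_m(a)\big)$.

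Finally I would invoke the Skorokhod representation theorem on the Polish space $\mathbb{R}^m$: it produces, on a common probability space, vectors $(V_{N,1},\dots,V_{N,m})_{N\ge1}$ and $(V_1,\dots,V_m)$ with $(V_{N,k})_{k=1}^m\stackrel{\mathrm{d}}{=}\big(N^{-2k}\mathfrak{e}_{N,k}^{(a,b)}\big)_{k=1}^m$ for every $N$, $(V_k)_{k=1}^m\stackrel{\mathrm{d}}{=}(\mathfrak{e}_k(a))_{k=1}^m$, and $(V_{N,k})_k\to(V_k)_k$ almost surely. Setting $\mathfrak{J}_{N,k}^{(a,b)}\defeq N^{2k}V_{N,k}$ and $\mathfrak{J}_k(a)\defeq V_k$ gives the desired objects: the distributional equalities follow from those of the $V$'s after the deterministic rescalings, and $N^{-2k}\mathfrak{J}_{N,k}^{(a,b)}=V_{N,k}\to V_k=\mathfrak{J}_k(a)$ almost surely. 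There is no genuine obstacle here; the only points that need care are the weighted-homogeneity bookkeeping that yields exactly the factor $N^{-2k}$ (ensuring the rescaled symmetric polynomials are a continuous, $N$-independent image of the rescaled power sums) and the validity of Newton's identities in the infinite-dimensional limit, which rests on the almost sure finiteness of the $\mathfrak{p}_k(a)$ together with the product formula for $\mathfrak{f}_a$.
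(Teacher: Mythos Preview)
Your proposal is correct and follows essentially the same approach as the paper: both arguments combine the distributional convergence of the rescaled power sums from Proposition~\ref{thm:main}, Newton's identities relating elementary symmetric polynomials to power sums, and the Skorokhod representation theorem. The only cosmetic difference is the order of operations---the paper applies Skorokhod to the power sums first and then invokes Newton's identities pointwise, whereas you first pass to the elementary symmetric polynomials via the continuous mapping theorem and then apply Skorokhod---but the content is the same, and your explicit treatment of the weighted homogeneity and of the infinite-dimensional Newton relations is a welcome elaboration of what the paper leaves implicit.
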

\begin{proof}
    First, note that applying Skorkohod's representation theorem with the distributional convergence given in Proposition \ref{thm:main}, the statement holds when elementary symmetric polynomials are replaced by power sums. The analogous result for elementary symmetric polynomials immediately follows by virtue of Newton's identities.
\end{proof}
\noindent We are now in position to prove Theorem \ref{thm:jointmom}.

\begin{proof}[Proof of Theorem \ref{thm:jointmom}]

We will first rewrite the sum that defines each $\mathfrak{R}_{N,k}$ as a sum of $O(1)$ terms each of which has controllable growth as $N\to \infty$. To this end, let $\mathcal{P}_k$ denote the set of integer partitions $\boldsymbol{\mu}=(\mu_1\geq \mu_2\geq \ldots \geq \mu_l)$ of $k$, and for each $\boldsymbol{\mu}\in \mathcal{P}_k$, let $\mathcal{Q}_{\boldsymbol{\mu}}$ be the set of integer tuples $(n_1,\ldots, n_N)$ such that upon re-ordering if necessary, $(n_1,\ldots, n_N)$ becomes $(\mu_1,\mu_2,\ldots, \mu_l,0,\ldots,0)$. It will turn out that the main contribution within each $\mathfrak{R}_{N,k}$ will come from partitions $\boldsymbol{\mu}$ such that $\mu_j\in \{1,2\}$ for each $j=1,\ldots,l$. Hence, denote the set of such integer partitions of $k$ by $\mathcal{P}_k^{(2)}$. Next, we write each $\mathfrak{R}_{N,k}$ as: 

    \begin{equation*}
        \mathfrak{R}_{N,k}\big(\mathsf{x}_1^{(N)},\ldots, \mathsf{x}_N^{(N)}\big)=\sum_{\boldsymbol{\mu}\in \mathcal{P}_k} \underbrace{\sum_{(n_1,\ldots, n_N)\in \mathcal{Q}_{\boldsymbol{\mu}}}
\binom{k}{\mu_1,\dots,\mu_l}\,\prod_{j:\, n_j\ge2} \left(1+\frac{2^{n_j}-2}{4\mathsf{x}_j^{(N)}}\right).}_{\defeq \mathsf{T}^{(N)}_{\mu}}
    \end{equation*}
   Next, let $\mathfrak{S}_{l,N}$ denote the set of injections $\sigma : [l]\to [N]$, and define, for each $r\in [k]$,
   \begin{align}
    m_r(\boldsymbol{\mu})=\# \left\{j : \mu_j=r\right\}.
   \end{align}
   Also letting $\theta=\theta(\boldsymbol{\mu})$ be $l(\boldsymbol{\mu})-m_1(\boldsymbol{\mu})$, one sees that 
   \begin{equation*}
      \mathsf{T}_{\boldsymbol{\mu}}^{(N)}= \sum_{\sigma\in \mathfrak{S}_{l,N}}
\binom{k}{\mu_1,\ldots, \mu_l}\,\prod_{r=1}^k \frac{1}{m_r(\boldsymbol{\mu})!}\prod_{j=1}^{\theta(\boldsymbol{\mu})} \left(1+\frac{2^{\mu_j}-2}{4\mathsf{x}_{\sigma(j)}^{(N)}}\right).
   \end{equation*}
   Now, observe that for $\boldsymbol{\mu}\notin \mathcal{P}_k^{(2)}$, using that the Jacobi ensembles are supported in $[0,1]^N$, we may bound:
\begin{equation*}
    \mathsf{T}_{\boldsymbol{\mu}}^{(N)}\leq C(\boldsymbol{\mu})\sum_{\sigma\in \mathfrak{S}_{l,N}} \prod_{j=1}^{\theta(\boldsymbol{\mu})}\frac{1}{x^{(N)}_{\sigma(j)}}\leq C(\boldsymbol{\mu}) N^{l-\theta} \mathfrak{e}_{N,\theta}.
\end{equation*}
In particular, one sees using Lemma \ref{lem:skorokhod} that since $l(\boldsymbol{\mu})+\theta(\boldsymbol{\mu})<k$ whenever $\boldsymbol{\mu} \notin \mathcal{P}_k^{(2)}$, we have
\begin{equation*}
    \frac{1}{N^k}\mathsf{T}^{(N)}_{\boldsymbol{\mu}} \convd 0
\end{equation*}
as $N\to \infty$. Furthermore, for the remaining terms, one may compute:
\begin{align*}
   \mathsf{T}_{\boldsymbol{\mu}}^{(N)}&= \sum_{\sigma\in \mathfrak{S}_{l,N}}
\binom{k}{\mu_1,\ldots, \mu_k}\,\prod_{r=1}^k \frac{1}{m_r(\boldsymbol{\mu})!}\prod_{j=1}^\theta \left(1+\frac{1}{2\mathsf{x}^{(N)}_{\sigma(j)}}\right)\\&=\sum_{\sigma\in \mathfrak{S}_{l,N}}
\binom{k}{\mu_1,\ldots, \mu_k}\,\prod_{r=1}^k \frac{1}{m_r(\boldsymbol{\mu})!}
\sum_{I\subset [\theta]}\prod_{j\in I} \frac{1}{2\mathsf{x}^{(N)}_{\sigma(j)}},
\end{align*}
rewriting this sum as $h=|I|$ ranges between $0$ and $\theta$, we see that
\begin{align*}
    \mathsf{T}^{(N)}_{\boldsymbol{\mu}}&=\binom{k}{\mu_1,\ldots, \mu_k}\,\prod_{r=1}^k \frac{1}{m_r(\boldsymbol{\mu})!}\sum_{h=0}^\theta  \binom{\theta}{h}\,\frac{(N-h)!}{(N-l)!} \,h!\,2^{-h}\sum_{I\in [N]^{(h)}} \prod_{j\in I} \frac{1}{\mathsf{x}_j^{(N)}}\\&=\binom{k}{\mu_1,\ldots, \mu_k}\,\prod_{r=1}^k \frac{1}{m_r(\boldsymbol{\mu})!}\sum_{h=0}^\theta  \binom{\theta}{h}\,\frac{(N-h)!}{(N-l)!} \,h!2^{-h}\mathfrak{e}_{N,h}.
\end{align*}
Now, again arguing analogously to before, we see that since $l(\boldsymbol{\mu})+h<k$ for all $h<\theta(\boldsymbol{\mu})$, after dividing by $N^k$, all the terms in the sum except the $h=\theta$ term converge to $0$ in distribution as $N\to \infty$. Hence, using these observations in combination with Lemma \ref{lem:skorokhod}, one obtains the convergence in distribution
\begin{equation*}\label{convergencequantity}
    \frac{1}{N^{\sum_{k=1}^{m}kh_{k}}}\prod_{k=1}^m (\mathfrak{R}_{N,k})^{h_k}\longrightarrow \prod_{k=1}^m (\mathfrak{R}_{k}(a))^{h_k}.
\end{equation*}
Furthermore, using Skorokhod's representation theorem once more, we may assume without loss of generality that this convergence takes place almost surely as well.
Thus, to prove the convergence of moments, it suffices to show that the former random variables form a uniformly integrable sequence. But then this follows by first picking $q>1$ such that $q\sum_{k=1}^{m} h_k$ is an integer with $q\sum_{k=1}^{m} h_k<a+1$, and then applying Holder's inequality to the average of the $q$-th power of these random variables with exponents $\frac{\sum_{k=1}^{m}h_k}{h_k}$, so that taking the supremum over $N$ gives a bounded quantity thanks to Proposition \ref{prop:UIbdforelemsym}. The result now follows by a combination of Proposition \ref{lem:jointmomSym}, Proposition \ref{prop:UIbdforelemsym} (applied with \((a,b) = \left(s + \frac{1}{2}, \frac{1}{2}\right)\) for $\mathrm{USp}(2N)$ and \((a,b) = \left(s - \frac{1}{2}, -\frac{1}{2}\right)\) for $\mathrm{SO}(2N)$), and the explicit asymptotic expressions
given in \cite{KeatingSnaithLfunctions}:
\begin{align*}
    \mathfrak{J}^{\mathrm{Sp}}_{N}(s,0,\ldots,0)&\sim N^{\frac{s(s+1)}{2}}\frac{2^{\frac{s^2}{2}}G(1+s)\sqrt{\Gamma(1+s)}}{\sqrt{G(1+2s)\Gamma(1+2s)}},\\
    \mathfrak{J}^{\mathrm{SO}}_{N}(s,0,\ldots,0)&\sim N^{\frac{s(s-1)}{2}} \frac{2^{\frac{s^2}{2}}G(1+s)\sqrt{\Gamma(1+2s)}}{\sqrt{G(1+2s)\Gamma(1+s)}}.
\end{align*}

\end{proof}
To end this section, we prove Proposition \ref{thm:laguerre} by exploiting a connection to the inverse Laguerre ensemble and the exchangeability theory results of \cite{assiotis2024exchangeablearraysintegrablesystems}. We need some definitions. Let $\mathbb{H}_+(N)$ be the space of $N\times N$ non-negative definite matrices. Consider the maps (observe that these are correctly defined):
\begin{align*}
    \mathfrak{P}_N^{N+1} : 
    \mathbb{H}_+(N+1)&\to \mathbb{H}_+(N),\\
    [\mathbf{H}_{ij}]_{i,j=1}^{N+1}&\mapsto [\mathbf{H}_{ij}]_{i,j=1}^N.
\end{align*}
 Define $\mathbb{H}_+(\infty)= \varprojlim
\mathbb{H}_+(N)$, the projective limit of the $\mathbb{H}_+(N)$ under these maps, and
 \begin{align*}
    \mathfrak{P}_N^{\infty} : 
    \mathbb{H}_+(\infty)&\to \mathbb{H}_+(N),\\
    [\mathbf{H}_{ij}]_{i,j=1}^{\infty}&\mapsto [\mathbf{H}_{ij}]_{i,j=1}^N.
\end{align*}
Consider the space $\mathcal{M}_{\mathrm{p}}^{\mathrm{Perm}}$ of permutation invariant probability measures on $\mathbb{H}_+(\infty)$ defined as follows: $\mu \in \mathcal{M}_{\mathrm{p}}^{\mathrm{Perm}}$ if and only if with $\mathbf{H}$ distributed according to $\mu$, for all $N \in \mathbb{N}$, $\mathbf{H}_N=\mathfrak{P}_N^\infty(\mathbf{H})$ is invariant in law under conjugation by permutation matrices. We then have the following result, see \cite{assiotis2024exchangeablearraysintegrablesystems} for more details.
\begin{prop}[\cite{assiotis2024exchangeablearraysintegrablesystems}]
\label{thm:agkw} Let $\mu \in \mathcal{M}_\mathrm{p}^{\mathrm{Perm}}$ and $\mathbf{H}$
    having law $\mu$, such that for some $m\in \mathbb{N}$, $p\in [1,\infty)$ such that  $\det(\mathbf{H}_k)\in \mathrm{L}^p(\mu)$ for every $k\in [m]$, where $\mathbf{H}_k=\mathfrak{P}_k^\infty(\mathbf{H})$. Then, if we denote by $(\mathsf{x}_1^{(N)},\ldots, \mathsf{x}_N^{(N)})$ the eigenvalues of $\mathbf{H}_N$, we have
    \begin{equation*}
        \frac{1}{N^k} \mathrm{e}_k (\mathsf{x}_1^{(N)},\ldots, \mathsf{x}_N^{(N)})\longrightarrow \mathsf{a}_k
    \end{equation*}
    as $N\to \infty$, $\mu$-almost surely and in $\mathrm{L}^p(\mu)$, for some random variables $\{\mathsf{a}_k\}_{k=1}^m$. Furthermore, for $h_1,\ldots,h_m\in \mathbb{N}\cup\{0\}$ with $h_{m}\neq 0$, we have the formula, with $Q= \sum_k kh_k$,
   \begin{equation}\label{eq:exchformula}
\mathbb{E}\left[\prod_{k=1}^m \mathsf{a}_{k}^{h_k}\right]=\frac{(1!)^{h_1}\cdots(m!)^{h_m}}{Q!}\sum_{j=m}^{Q} (-1)^{j+Q} {Q \choose j} \mathbb{E}\left[\prod_{k=1}^m \left(\mathrm{e}_{k} \left({\mathsf{x}}_1^{(j)},{\mathsf{x}}_2^{(j)},\ldots, {\mathsf{x}}_j^{(j)}\right)\right)^{h_k}\right].
\end{equation}
\end{prop}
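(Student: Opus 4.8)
The plan is to deduce both assertions — the convergence $\tfrac1{N^k}\mathrm e_k(\mathsf x^{(N)})\to\mathsf a_k$ and the moment identity \eqref{eq:exchformula} — from the elementary fact that $\mathrm e_k$ of the eigenvalues of a matrix is the sum of its $k\times k$ principal minors: writing $[N]=\{1,\dots,N\}$ and $\mathbf H[S]$ for the principal submatrix of $\mathbf H$ on $S$,
\[
\mathrm e_k\big(\mathsf x_1^{(N)},\dots,\mathsf x_N^{(N)}\big)=\sum_{S\subseteq[N],\ |S|=k}\det\mathbf H[S].
\]
Thus $\tfrac1{N^k}\mathrm e_k(\mathsf x^{(N)})$ and the symmetrised $U$-statistic $U_N^{(k)}:=\binom Nk^{-1}\sum_{|S|=k}\det\mathbf H[S]$ differ by the factor $\binom Nk/N^k\to 1/k!$; I will work with $U_N^{(k)}$, so that $\mathsf a_k$ below is the limit $\lim_N U_N^{(k)}$ (the normalisation — equivalently $\tfrac{k!}{N^k}\mathrm e_k\to\mathsf a_k$ — under which the constant $(1!)^{h_1}\cdots(m!)^{h_m}/Q!$ of \eqref{eq:exchformula} appears). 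Let $\mathcal S_N$ be the $\sigma$-algebra of $\mathbf H$-measurable events invariant under conjugation of $\mathbf H$ by permutation matrices $P_\sigma$, $\sigma\in\mathrm{Sym}(N)$; since $\mathrm{Sym}(N)\subseteq\mathrm{Sym}(N+1)$ these decrease, with intersection the exchangeable $\sigma$-algebra $\mathcal E$. Because $\mu$ is $P_\sigma$-invariant and any two $k$-subsets of $[N]$ are exchanged by some $\sigma\in\mathrm{Sym}(N)$, the conditional expectation $\mathbb E[\det\mathbf H[S]\mid\mathcal S_N]$ does not depend on the $k$-subset $S\subseteq[N]$; averaging gives $U_N^{(k)}=\mathbb E[\det\mathbf H_k\mid\mathcal S_N]$. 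As $\det\mathbf H_k\in L^p(\mu)$, the downward (reverse) martingale convergence theorem yields $U_N^{(k)}\to\mathsf a_k:=\mathbb E[\det\mathbf H_k\mid\mathcal E]$ $\mu$-a.s.\ and in $L^p(\mu)$, with $\sup_N\|U_N^{(k)}\|_p\le\|\det\mathbf H_k\|_p$ (which supplies uniform integrability later).

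For the moment identity the crucial point is the following claim: \emph{the function $\Phi(N):=\mathbb E\big[\prod_{k=1}^m\mathrm e_k(\mathsf x_1^{(N)},\dots,\mathsf x_N^{(N)})^{h_k}\big]$ agrees, on all integers $N\ge0$, with a polynomial in $N$ of degree at most $Q=\sum_k kh_k$.} To see this, expand $\prod_k\mathrm e_k(\mathsf x^{(N)})^{h_k}=\sum_{\vec S}\prod_{k,\ell}\det\mathbf H[S^{(k)}_\ell]$, the outer sum over all families $\vec S=(S^{(k)}_\ell)_{k\in[m],\,\ell\in[h_k]}$ of subsets of $[N]$ with $|S^{(k)}_\ell|=k$. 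By permutation invariance, $\mathbb E\big[\prod_{k,\ell}\det\mathbf H[S^{(k)}_\ell]\big]$ depends only on the isomorphism type $\tau$ of $\vec S$ — the pattern of coincidences among the $S^{(k)}_\ell$ on a ground set of size $t(\tau):=|\bigcup_{k,\ell}S^{(k)}_\ell|$ — and the number of families $\vec S\subseteq[N]$ of a given type $\tau$ equals $(N)_{t(\tau)}/|\mathrm{Aut}(\tau)|$, where $(N)_t:=N(N-1)\cdots(N-t+1)$. This count is a polynomial in $N$ of degree $t(\tau)\le\sum_{k,\ell}k=Q$ which vanishes at $N\in\{0,\dots,t(\tau)-1\}$, exactly where no realisation exists; summing over the finitely many types proves the claim. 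Moreover, since $h_m\ge1$ and $\mathrm e_m$ of fewer than $m$ variables is identically zero, $\Phi(j)=0$ for $0\le j<m$.

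To conclude, $\prod_k(U_N^{(k)})^{h_k}\to\prod_k\mathsf a_k^{h_k}$ $\mu$-a.s.\ and, by uniform integrability (Hölder from the bound $\sup_N\|U_N^{(k)}\|_p\le\|\det\mathbf H_k\|_p$, valid once $p$ is large enough relative to $\sum_k h_k$, as it is in the applications), also in $L^1(\mu)$; hence, using $\binom Nk=(N)_k/k!$ and $\prod_k(N)_k^{h_k}=N^Q(1+o(1))$,
\[
\mathbb E\Big[\textstyle\prod_k\mathsf a_k^{h_k}\Big]=\lim_{N\to\infty}\mathbb E\Big[\textstyle\prod_k(U_N^{(k)})^{h_k}\Big]=\lim_{N\to\infty}\frac{\Phi(N)}{\prod_k\binom Nk^{h_k}}=(1!)^{h_1}\cdots(m!)^{h_m}\cdot\big(\text{coefficient of }N^Q\text{ in }\Phi\big).
\]
For a polynomial of degree $\le Q$ the coefficient of $N^Q$ equals $\tfrac1{Q!}$ times its $Q$-th forward difference at $0$, i.e.\ $\tfrac1{Q!}\sum_{j=0}^Q(-1)^{Q-j}\binom Qj\Phi(j)$; discarding the vanishing terms $j<m$ and noting $(-1)^{Q-j}=(-1)^{Q+j}$ gives exactly \eqref{eq:exchformula}, with $\mathbb E\big[\prod_k\mathrm e_k(\mathsf x^{(j)})^{h_k}\big]=\Phi(j)$ finite for $j\le Q$ by the same Hölder bound.

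The step requiring the most care — everything else being standard (the reverse-martingale theorem, the Hölder/uniform-integrability estimates) — is the polynomiality claim, and in particular the assertion that $\Phi$ coincides with a single polynomial of degree $\le Q$ on all of $\mathbb Z_{\ge0}$, including small $N$: this is precisely what legitimises recovering its top coefficient (which is the sought limit $\mathbb E[\prod_k\mathsf a_k^{h_k}]$) from the finite-dimensional data $\Phi(m),\dots,\Phi(Q)$ by a finite difference. One must also keep honest track of the $\binom Nk$-versus-$N^k$ normalisations to land on the combinatorial constant $(1!)^{h_1}\cdots(m!)^{h_m}/Q!$, the factors $(k!)^{h_k}$ recording the internal symmetries of a size-$k$ block.
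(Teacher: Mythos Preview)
The paper does not prove this proposition here; it is quoted from \cite{assiotis2024exchangeablearraysintegrablesystems} and simply invoked in the proof of Proposition~\ref{thm:laguerre}. Your argument is the standard exchangeable-arrays route one would expect that reference to take: rewrite $\mathrm e_k$ of the eigenvalues as the sum of principal $k\times k$ minors, recognise the normalised sum as a $U$-statistic equal to $\mathbb E[\det\mathbf H_k\mid\mathcal S_N]$, apply reverse-martingale convergence, and then exploit that $N\mapsto\Phi(N)=\mathbb E[\prod_k\mathrm e_k(\mathsf x^{(N)})^{h_k}]$ is a polynomial of degree at most $Q$ (via the orbit count $(N)_{t(\tau)}/|\mathrm{Aut}(\tau)|$) to read off its leading coefficient by a $Q$-th forward difference. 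All of these steps are correct as written.

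Two comments. First, you are right to flag the normalisation: with the convention $N^{-k}\mathrm e_k\to\mathsf a_k$ stated in the proposition, the limit $\mathbb E[\prod_k\mathsf a_k^{h_k}]$ equals the \emph{plain} top coefficient $[N^Q]\Phi=\tfrac1{Q!}\sum_{j}(-1)^{Q+j}\binom Qj\Phi(j)$, whereas the factor $(1!)^{h_1}\cdots(m!)^{h_m}/Q!$ in \eqref{eq:exchformula} matches the $U$-statistic normalisation $\mathsf a_k=\lim_N\binom Nk^{-1}\mathrm e_k$ that you adopt. You handle this honestly; it is worth stating the discrepancy as such rather than silently switching conventions. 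Second, the passage from $\prod_k(U_N^{(k)})^{h_k}\to\prod_k\mathsf a_k^{h_k}$ in $L^1$ requires $p$ large enough (essentially $p>\sum_k h_k$, via H\"older) for uniform integrability; the proposition as stated does not isolate this hypothesis, and your parenthetical ``as it is in the applications'' is the right caveat, but it would be cleaner to state explicitly the condition under which the $L^1$ convergence (and hence the identity) is justified.
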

We now prove Proposition \ref{thm:laguerre}.

\begin{proof}[Proof of Proposition \ref{thm:laguerre}] 
First consider the matrix Laguerre probability measure on $\mathbb{H}_+(N)$, with $a>-1$, defined by
\begin{equation*}
\mathfrak{L}_N^{(a)}(\mathrm{d}\mathbf{H})=\frac{1}{\tilde{\mathcal{Z}}_N^{(a)}}\det\left(\mathbf{H}\right)^a\exp(-\mathrm{Tr}(\mathbf{H}))\mathbf{1}_{\mathbf{H}\in \mathbb{H}_+(N)}\mathrm{d}\mathbf{H},
\end{equation*}
for an explicit constant $\tilde{\mathcal{Z}}_N^{(a)}$, see \cite{ForresterBook}. By Weyl's integration formula the eigenvalues in $\mathbb{W}_N$ of an $\mathfrak{L}_N^{(a)}$-distributed random matrix have law $\nu_N^{(a)}$. Consider the inverse Laguerre probability measures $\mathfrak{IL}_N^{(a)}$ on $\mathbb{H}_+(N)$ to be defined as the pushforward of $\mathfrak{L}_N^{(a)}$ under the map $\mathbf{H}\mapsto \mathbf{H}^{-1}$. Observe that, the eigenvalues in $\mathbb{W}_N$ of an $\mathfrak{IL}_N^{(a)}$-distributed matrix have the same law $((\mathsf{y}_{N-i+1}^{(N)})^{-1})_{i=1}^N$ with $(\mathsf{y}_i^{(N)})_{i=1}^N$ being $\nu_N^{(a)}$-distributed. It is known, see \cite{assiotiswishart}, that $(\mathfrak{P}_N^{N+1})_*\mathfrak{IL}_{N+1}^{(a)}=\mathfrak{IL}_N^{(a)}$. Moreover, observe that for all $N\in \mathbb{N}$, $\mathfrak{IL}_N^{(a)}$ is permutation invariant as it is in fact unitarily invariant.
Hence, there exists a unique $\mathfrak{IL}_\infty^{(a)}\in \mathcal{M}_\mathrm{p}^{\mathrm{Perm}}$ such that $(\mathfrak{P}_N^\infty)_*\mathfrak{IL}_\infty^{(a)}=\mathfrak{IL}_N^{(a)}$, for all $N \in \mathbb{N}$. In particular we can apply Proposition \ref{thm:agkw} to $\mathfrak{IL}_\infty^{(a)}$.

On the other hand, by combining the results of \cite{assiotiswishart,assiotis2020boundary} we obtain, as $N \to \infty$,
\begin{equation*}
\left(N^{-1}\sum_{i=1}^N\frac{1}{\mathsf{y}_i^{(N)}}, N^{-2}\sum_{i=1}^N\frac{1}{(\mathsf{y}_i^{(N)})^{2}},\dots,N^{-m}\sum_{i=1}^N\frac{1}{(\mathsf{y}_i^{(N)})^{m}}\right) \convd \left(\sum_{i=1}^\infty\frac{1}{\mathfrak{B}_a^i}, \sum_{i=1}^\infty\frac{1}{(\mathfrak{B}_a^{i})^2},\dots,  \sum_{i=1}^\infty\frac{1}{(\mathfrak{B}_a^{i})^m}\right)
\end{equation*}
and in particular, from Newton's identities, as $N \to \infty$,
\begin{equation}
\label{eq:laguerreasconv}
  \left(N^{-1}\mathrm{e}_1 \left(\frac{1}{\mathsf{y}_1^{(N)}},\ldots, \frac{1}{\mathsf{y}_N^{(N)}}\right), \dots,N^{-m}\mathrm{e}_m\left(\frac{1}{\mathsf{y}_1^{(N)}},\ldots, \frac{1}{\mathsf{y}_N^{(N)}}\right)\right)   \convd (\mathfrak{e}_1(a),\dots,\mathfrak{e}_m(a)),
\end{equation}
with the $\mathfrak{e}_k(a)$ defined as before. We then obtain the desired result, namely the formula \eqref{eq:exchformula}, since by virtue of \eqref{eq:laguerreasconv}, one can identify the random variables $\{\mathsf{a}_k\}_{k=1}^m$ appearing in Proposition \ref{thm:agkw} when applied to $\mathfrak{IL}_\infty^{(a)}$, as $\{\mathfrak{e}_k(a)\}_{k=1}^m$.
\end{proof}
\section{Connections to integrable systems}

In this section, we establish a connection between the joint moments of $\mathfrak{e}_{N,1}^{(a,b)},\ldots, \mathfrak{e}_{N,k}^{(a,b)}$ and the $\sigma$-Painlev\'{e} V equation, along with a connection between joint moments of $\mathfrak{e}_{1}^{(a)},\ldots, \mathfrak{e}_{k}^{(a)}$ and the  $\sigma$-Painlev\'{e} III$'$ equation upon taking suitable large $N$ limits. These are the main contents of Theorem \ref{structureforgeneralfinitesizeforfiniteN} and Theorem \ref{mainpainlevethm}, respectively. In fact, these theorems address more general situations, involving the joint moments of $\mathrm{e}^{-t\mathfrak{e}_{N,1}^{(a,b)}},\mathfrak{e}_{N,2}^{(a,b)},\ldots, \mathfrak{e}_{N,k}^{(a,b)}$ and $\mathrm{e}^{-t\mathfrak{e}_{1}^{(a)}}, \mathfrak{e}_{2}^{(a)},\ldots, \mathfrak{e}_{k}^{(a)}$, respectively.

We begin by representing the left-hand side of formula (\ref{general structure1}) as a determinant, where the column indices of the corresponding Hankel matrix entries are shifted by integers.
\begin{prop}\label{proprepresented as a Hankel}
Let $N\geq 1$, $k\geq 2$ and $n_{2},\ldots,n_{k}\geq 0$ be integers. Suppose $a>\sum_{q=2}^{k}qn_{q}-1, b>-1$, and denote by $U(\cdot,\cdot;z)$ the confluent hypergeometric function of the second kind. Then, if we define
\begin{align}\label{definition of am}
g_{m}(t_{1})\defeq\exp(-t_{1})U(b+1,-a-2N+2+m;t_{1}),
\end{align} 
we have that for any $t_{1}\in [0,\infty)$,
\begin{align}
\label{combinatotrialexpression}
\mathbb{E}_{N}^{(a,b)}\Bigg[\mathrm{e}^{-\sum_{j=1}^{N}\frac{t_{1}}{\mathsf{x}_{j}^{(N)}}}\prod_{q=2}^{k}&\Big(\sum_{j=1}^{N}(\mathsf{x}_{j}^{(N)})^{-q}\Big)^{n_{q}}\Bigg]\nonumber
\\&=
C_{N}^{(a,b)}\sum_{\substack{\sum_{j=1}^{N}l_{m,j}=n_{m}\\m=2,\ldots,k}}\prod_{m=2}^{k}\binom{n_{m}}{l_{m,1},\ldots,l_{m,N}}
\det_{0\leq i,j\leq N-1}\left(g_{i+j+\sum_{m=2}^{k}ml_{m,j+1}}(t_{1})\right),
\end{align}
where
\begin{align*}
\label{definitioncnsnew}
C_{N}^{(a,b)}=\frac{(\Gamma(b+1))^{N}}{\mathcal{Z}_{N}^{a,b}}.
\end{align*}
\end{prop}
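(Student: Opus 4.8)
The plan is to expand the symmetric integrand by the multinomial theorem, turn each resulting Jacobi eigenvalue average into a determinant via Andr\'eief's identity, recognise its entries as the functions $g_m$ through an integral representation of $U$, and finally reconcile the outcome with the stated closed form by a symmetrisation argument. Concretely, I would first expand each factor $\big(\sum_{j=1}^N (\mathsf{x}_j^{(N)})^{-q}\big)^{n_q}$ by the multinomial theorem and take the product over $q$, obtaining
\[
\prod_{q=2}^{k}\Big(\sum_{j=1}^N (\mathsf{x}_j^{(N)})^{-q}\Big)^{n_q}=\sum_{(l_{q,j})}\Bigg(\prod_{q=2}^{k}\binom{n_q}{l_{q,1},\dots,l_{q,N}}\Bigg)\prod_{j=1}^N (\mathsf{x}_j^{(N)})^{-c_j},\qquad c_j:=\sum_{q=2}^{k} q\,l_{q,j},
\]
the sum running over all nonnegative integer arrays with $\sum_{j} l_{q,j}=n_q$ for each $q$. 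Since the whole integrand $\mathrm{e}^{-t_1\sum_j 1/\mathsf{x}_j}\prod_q(\cdots)^{n_q}$ is symmetric, $\mathbb{E}_N^{(a,b)}$ of it equals the unordered integral over $[0,1]^N$ against $\tfrac{1}{N!\,\mathcal{Z}_N^{a,b}}\prod_j x_j^a(1-x_j)^b\prod_{i<j}(x_i-x_j)^2$, and substituting the expansion reduces everything to evaluating, for each array, the integral $J\big((l)\big):=\int_{[0,1]^N}\prod_{j=1}^N\big(x_j^{a-c_j}(1-x_j)^b\,\mathrm{e}^{-t_1/x_j}\big)\prod_{i<j}(x_i-x_j)^2\,\mathrm{d}\mathbf{x}$.

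For this I would write $\prod_{i<j}(x_i-x_j)^2=\det[x_i^{k-1}]_{i,k=1}^N\det[x_i^{k-1}]_{i,k=1}^N$ and apply the classical Andr\'eief (Heine) identity, which turns $J((l))$ into $\sum_{\pi\in S_N}\det\!\big[M_{\pi(r)}(r+s-2)\big]_{r,s=1}^N$ with $M_i(p)=\int_0^1 x^{a-c_i+p}(1-x)^b\mathrm{e}^{-t_1/x}\,\mathrm{d}x$. The substitution $x=1/(1+u)$ together with $U(\alpha,\gamma;z)=\tfrac{1}{\Gamma(\alpha)}\int_0^\infty \mathrm{e}^{-zt}t^{\alpha-1}(1+t)^{\gamma-\alpha-1}\,\mathrm{d}t$ yields $\int_0^1 x^s(1-x)^b\mathrm{e}^{-t_1/x}\,\mathrm{d}x=\mathrm{e}^{-t_1}\Gamma(b+1)\,U(b+1,-s;t_1)$ for $t_1\ge 0$; convergence at $x=0$ is automatic when $t_1>0$ and, for $t_1=0$, reduces to $a-c_j>-1$, which is exactly the hypothesis $a>\sum_{q}q n_q-1$ since $c_j\le\sum_q q n_q$. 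Hence $M_i(p)=\Gamma(b+1)\,g_{c_i-p+2N-2}(t_1)$; factoring out $\Gamma(b+1)^N$ and then reversing the orders of rows and of columns together and transposing (all of which leave the determinant unchanged), each $\det[M_{\pi(r)}(r+s-2)]$ becomes a column-shifted Hankel determinant $\det_{0\le i,j\le N-1}\big(g_{i+j+c_{\alpha(j)}}\big)$ for a bijection $\alpha\colon\{0,\dots,N-1\}\to\{1,\dots,N\}$, and as $\pi$ ranges over $S_N$ each such $\alpha$ occurs exactly once. Collecting constants through $C_N^{(a,b)}=\Gamma(b+1)^N/\mathcal{Z}_N^{a,b}$, this gives
\[
\mathbb{E}_N^{(a,b)}\Big[\mathrm{e}^{-t_1\sum_j 1/\mathsf{x}_j}\!\!\prod_{q=2}^{k}(\cdots)^{n_q}\Big]=\frac{C_N^{(a,b)}}{N!}\sum_{(l)}\Bigg(\prod_{q=2}^{k}\binom{n_q}{l_{q,1},\dots,l_{q,N}}\Bigg)\sum_{\alpha}\det_{0\le i,j\le N-1}\big(g_{i+j+c_{\alpha(j)}}\big).
\]

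It then remains to identify this with the asserted formula, i.e.\ to prove that the above double sum equals $N!\sum_{(l)}\big(\prod_q\binom{n_q}{\cdots}\big)\det_{0\le i,j\le N-1}\big(g_{i+j+c_{j+1}}\big)$; this is the delicate point. The key observation is that for any fixed $\beta\in S_N$ the relabelling $l_{q,j}\mapsto l_{q,\beta(j)}$ permutes the index set of the $(l)$-sum, leaves every multinomial coefficient $\prod_q\binom{n_q}{l_{q,1},\dots,l_{q,N}}$ invariant, and sends $c_{j+1}$ to $c_{\beta(j+1)}$; averaging this substitution over $\beta\in S_N$ converts $\sum_{(l)}(\cdots)\det(g_{i+j+c_{j+1}})$ into $\tfrac{1}{N!}\sum_{(l)}(\cdots)\sum_{\beta}\det(g_{i+j+c_{\beta(j+1)}})$, and since $\beta\mapsto(j\mapsto\beta(j+1))$ is a bijection from $S_N$ onto the set of bijections $\{0,\dots,N-1\}\to\{1,\dots,N\}$, the inner sum over $\beta$ is precisely the inner sum over $\alpha$ above. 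This closes the argument. The main obstacle is exactly this bookkeeping: one must match, with consistent conventions, the $N!$ bijections coming out of Andr\'eief, the column-dependent shifts of the Hankel entries, and the permutation-invariance of the multinomial weights, bearing in mind that $\det_{0\le i,j\le N-1}(g_{i+j+c_{j+1}})$ is neither symmetric nor antisymmetric in $(c_1,\dots,c_N)$, so the reconciliation genuinely requires the weighted symmetrisation rather than a termwise comparison.
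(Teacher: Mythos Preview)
Your proof is correct, but it takes a different route from the paper. Both arguments begin by expanding the product of power sums by the multinomial theorem and end by computing the one-variable integral $\int_0^1 x^{s}(1-x)^b\mathrm{e}^{-t_1/x}\,\mathrm{d}x$ via the substitution $x=1/(1+u)$ and the integral representation of $U$, landing on the functions $g_m$. The divergence is in how the Vandermonde structure is handled. The paper replaces $\Delta^2(\mathbf{x})$ by the differential operator $\Delta^2\!\big(\tfrac{\partial}{\partial w_1},\ldots,\tfrac{\partial}{\partial w_N}\big)$ acting on $\prod_j\mathrm{e}^{-w_j/x_j}$, which converts $\Delta^2(\mathbf{x})$ into $\Delta^2(-1/\mathbf{x})\prod_j x_j^{2N-2}$ and reduces the problem to derivatives of a single function $f(w)=\int_0^1\mathrm{e}^{-w/x}x^{a+2N-2}(1-x)^b\,\mathrm{d}x$; the symmetry of the resulting $h(w_1,\ldots,w_N)$ then yields the $N!$ and the column-shifted Hankel determinant in one stroke via the derivative formula $\tfrac{\mathrm{d}^m}{\mathrm{d}w^m}\big(\mathrm{e}^{-w}U(b+1,-a-2N+2;w)\big)=(-1)^m g_m(w)$. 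You instead keep the $j$-dependent weights $x_j^{a-c_j}(1-x_j)^b\mathrm{e}^{-t_1/x_j}$, expand both Vandermonde determinants, and obtain a sum over permutations of column-shifted determinants; your final symmetrisation over the multinomial index, exploiting that the coefficients $\prod_q\binom{n_q}{l_{q,1},\ldots,l_{q,N}}$ are invariant under relabelling the $l_{q,\cdot}$, then produces the $N!$. Your approach is more elementary in that it avoids the differential-operator trick; the paper's approach is a bit slicker, since the symmetrisation is absorbed into the single observation that $h$ is symmetric, and it more transparently explains why $g_m$ arises as an $m$-th derivative, a fact the paper uses repeatedly afterwards.
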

\begin{proof}
First, observe that by the definition of $\mathbb{E}_{N,2}^{(a,b)}$,
\begin{align*}
\mathbb{E}_{N}^{(a,b)}\Bigg[\mathrm{e}^{-\sum_{j=1}^{N}\frac{t_{1}}{\mathsf{x}_{j}^{(N)}}}\prod_{q=2}^{k}\Big(\sum_{j=1}^{N}&(\mathsf{x}_{j}^{(N)})^{-q}\Big)^{n_{q}}\Bigg]\\
=&
\frac{1}{N!\mathcal{Z}_{N}^{a,b}} \int_{[0,1]^N} \prod_{j=1}^N \mathrm{e}^{- \frac{t_{1}}{x_{j}}}\prod_{q=2}^k \Big(\sum_{j=1}^N (x_j)^{-q}\Big)^{n_{q}} \prod_{j=1}^{N}x_{j}^{a}(1-x_{j})^{b}\Delta^2(\mathbf{x})\mathrm{d}\mathbf{x}.
\end{align*}
We now replace the Vandermonde determinant above by the Vandemonde determinant of  differential operators. To be more precise, defining
\begin{align*}
\Delta\bigg(\frac{\mathrm{d}}{\mathrm{d}w_{1}},\ldots, \frac{\mathrm{d}}{\mathrm{d}w_{N}}\bigg)\defeq \prod_{1\leq i<j\leq N}\bigg(\frac{\mathrm{d}}{\mathrm{d}w_{i}}-\frac{\mathrm{d}}{\mathrm{d}w_{j}}\bigg)=\det_{1\leq i,j\leq N}\bigg(\frac{\mathrm{d}^{j-1}}{\mathrm{d}w_{i}^{j-1}}\bigg),
\end{align*}
we have
\bea \label{formulamay7th1}
&& \hspace{-.4cm} \mathbb{E}_{N}^{(a,b)}\Bigg[\mathrm{e}^{-t_{1}\sum_{j=1}^{N}\frac{1}{\mathsf{x}_{j}^{(N)}}}\prod_{q=2}^{k}\Big(\sum_{j=1}^{N}(\mathsf{x}_{j}^{(N)})^{-q}\Big)^{n_{q}}\Bigg] \\
&\hspace{-.6cm}=& \hspace{-.5cm} \frac{1}{N!\mathcal{Z}_{N}^{(a,b)}} \Delta^2\big(\frac{\di}{\di w_{1}},\ldots, \frac{\di}{\di w_{N}}\big)\underbrace{\int_{[0,1]^N} \prod_{j=1}^N \mathrm{e}^{- \frac{w_{j}}{x_j}}\prod_{q=2}^k \Big(\sum_{j=1}^N x_j^{-q}\Big)^{n_{q}}\prod_{j=1}^{N}x_{j}^{a+2N-2}(1-x_{j})^{b} \di \mathbf{x}}_{\defeq h(w_1,\ldots,w_N)}\Big|_{w_{1}=\cdots=w_{N}=t_{1}}  \nonumber
\eea
\noindent In particular, expanding the power in the product over $q$, we get
\begin{equation*}
\label{May7thformula2}
h(w_{1},\ldots,w_{N})=\sum_{\substack{\sum_{j=1}^{N}l_{m,j}=n_{m}\\m=2,\ldots,k}}
\prod_{m=2}^{k}\binom{n_{m}}{l_{m,1},\ldots,l_{m,N}}\int_{[0,1]^N} \prod_{j=1}^N \mathrm{e}^{-\frac{w_{j}}{x_j}}x_{j}^{-\sum_{m=2}^{k}ml_{m,j}}x_{j}^{a+2N-2}(1-x_{j})^{b}\mathrm{d} \mathbf{x}.
\end{equation*}
Hence, by the symmetry of $h(w_{1},\ldots,w_{N})$ and the definition of the Vandermonde determinant of differential operators, we have
\bea
&& \Delta^2\left(\frac{\mathrm{d}}{\mathrm{d}w_{1}},\ldots, \frac{\mathrm{d}}{\mathrm{d}w_{N}}\right)h(w_{1},\ldots,w_{N})\Big|_{w_{1}=\ldots=w_{N}=t_{1}}\label{May7thformula3} \\
&=&N!(-1)^{\sum_{m=2}^{k}mn_{m}}\sum_{\substack{\sum_{j=1}^{N}l_{m,j}=n_{m}\\m=2,\ldots,k}}
\prod_{m=2}^{k}\binom{n_{m}}{l_{m,1},\ldots,l_{m,N}}
\det_{0\leq i,j\leq N-1}\left(\frac{\mathrm{d}^{i+j+\sum_{m=2}^{k}ml_{m,j+1}}}{\mathrm{d}w^{i+j+\sum_{m=2}^{k}ml_{m,j+1}}}f(w)\Big|_{w=t_{1}}\right)
\nonumber,
\eea
where
\begin{align*}
f(w)=\int_{0}^{1}\mathrm{e}^{-\frac{w}{x}}x^{a+2N-2}(1-x)^{b}\mathrm{d}x.
\end{align*}
Note that for $b>-1$, making a change of variables one can compute
\begin{align*}
f(w)=\mathrm{e}^{-w}\int_{0}^{\infty}\mathrm{e}^{-wz}z^{b}(z+1)^{-a-2N-b}\mathrm{d}z=\Gamma(b+1)\mathrm{e}^{-w}U(b+1,-a-2N+2;w).
\end{align*}
Moreover, by \cite[p.258 formula (14)]{hammer1953higher},
\bea \label{differentialrelations}
\frac{\mathrm{d}^m}{\mathrm{d}w^m} \Big( \mathrm{e}^{-w}U(b+1, -a-2N+2; w) \Big)
=(-1)^m \mathrm{e}^{-w}U(b+1, -a-2N+2+m; w).
\eea
Combining these, the desired expression follows.
\end{proof}
The above proposition provides a Hankel determinant representation for $\mathbb{E}_{N}^{(a,b)}\Big[\mathrm{e}^{-\sum_{j=1}^{N}\frac{t_{1}}{\mathsf{x}_{j}^{(N)}}}\Big]$. As an application,  we deduce the Toda lattice equation satified by $\mathbb{E}\left[\mathrm{e}^{-t\mathfrak{e}_1(a)}\right]$, which is given as follows.
\begin{prop}\label{a recurrence relation}
Let $a\in \mathbb{R}$ with $a>1$. Then for $t\geq 0$,
\beas
\mathbb{E}\left[\mathrm{e}^{-t\mathfrak{e}_1(a-2)}\right]\mathbb{E}\left[\mathrm{e}^{-t\mathfrak{e}_1(a+2)}\right]
=a^2(a^2-1)\Bigg(\frac{\di^2\mathbb{E}\left[\mathrm{e}^{-t\mathfrak{e}_1(a)}\right]}{\di t^2} \mathbb{E}\left[\mathrm{e}^{-t\mathfrak{e}_1(a)}\right]
-\Bigg(\frac{\di \mathbb{E}\left[\mathrm{e}^{-t\mathfrak{e}_1(a)}\right]}{\di t} \Bigg)^2\Bigg).
\eeas
\end{prop}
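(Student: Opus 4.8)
The plan is to combine the Hankel--determinant representation of Proposition~\ref{proprepresented as a Hankel} (specialised to all $n_q=0$) with the classical Toda bilinear identity for Hankel--Wronskian determinants, evaluate the resulting prefactor from the explicit normalisation constant of Definition~\ref{def:jacobi}, and then pass to the hard-edge limit $N\to\infty$ via Proposition~\ref{thm:main}.

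First, setting $n_2=\dots=n_k=0$ in Proposition~\ref{proprepresented as a Hankel} gives
\[
\mathbb{E}_N^{(a,b)}\!\left[\mathrm{e}^{-t_1\mathfrak{p}_{1,N}}\right]=C_N^{(a,b)}\det_{0\le i,j\le N-1}\big(g_{i+j}(t_1)\big),\qquad g_m(t_1)=\mathrm{e}^{-t_1}U(b+1,-a-2N+2+m;t_1).
\]
By the relation \eqref{differentialrelations}, $g_m=(-1)^m\psi^{(m)}$ with $\psi(w)=\mathrm{e}^{-w}U(b+1,-a-2N+2;w)$, so removing the row and column signs (net sign $(-1)^{N(N-1)}=1$) identifies the above with the Hankel--Wronskian $\tau_N(t_1):=\det_{0\le i,j\le N-1}\big(\psi^{(i+j)}(t_1)\big)$. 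The crucial observation is that the lower confluent parameter $-a-2N+2$ is unchanged under $(a,N)\mapsto(a-2,N+1)$ and under $(a,N)\mapsto(a+2,N-1)$; hence the $(N\pm1)$-dimensional determinants $\tau_{N\pm1}$ built from the \emph{same} $\psi$ represent, after the same sign removal, $\mathbb{E}_{N+1}^{(a-2,b)}[\mathrm{e}^{-t_1\mathfrak{p}_{1,N+1}}]/C_{N+1}^{(a-2,b)}$ and $\mathbb{E}_{N-1}^{(a+2,b)}[\mathrm{e}^{-t_1\mathfrak{p}_{1,N-1}}]/C_{N-1}^{(a+2,b)}$ respectively. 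The hypothesis $a>1$ is exactly what makes $a-2>-1$, so that Proposition~\ref{proprepresented as a Hankel} is applicable with that shifted parameter.

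Since $\psi^{(k)}=\partial_{t_1}^{\,k}\psi$, the entries of $\tau_N$ form a Wronskian chain, so $\tau_N$ is a Toda $\tau$-function and the Desnanot--Jacobi (Dodgson) identity yields $\tau_N''\tau_N-(\tau_N')^2=\tau_{N+1}\tau_{N-1}$, primes denoting $\partial_{t_1}$; this is a short determinantal computation, using that differentiating any but the last row of the Hankel matrix produces two equal rows. Translating back,
\[
\Big(\tfrac{\mathrm{d}^2}{\mathrm{d}t_1^2}\mathbb{E}_N^{(a,b)}[\,\cdot\,]\Big)\,\mathbb{E}_N^{(a,b)}[\,\cdot\,]-\Big(\tfrac{\mathrm{d}}{\mathrm{d}t_1}\mathbb{E}_N^{(a,b)}[\,\cdot\,]\Big)^2=\frac{(C_N^{(a,b)})^2}{C_{N+1}^{(a-2,b)}C_{N-1}^{(a+2,b)}}\,\mathbb{E}_{N+1}^{(a-2,b)}[\,\cdot\,]\,\mathbb{E}_{N-1}^{(a+2,b)}[\,\cdot\,],
\]
where $[\,\cdot\,]$ abbreviates $[\mathrm{e}^{-t_1\mathfrak{p}_{1,M}}]$ for the relevant $M$. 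Writing $C_M^{(c,b)}=(\Gamma(b+1))^M/\mathcal{Z}_M^{c,b}$ and using the product formula for $\mathcal{Z}_M^{c,b}$, the powers of $\Gamma(b+1)$ cancel (exponent balance $2N=(N+1)+(N-1)$) and a telescoping of the Gamma-function products yields the clean value $(C_N^{(a,b)})^2/(C_{N+1}^{(a-2,b)}C_{N-1}^{(a+2,b)})=N(a+N)(b+N)(a+b+N)/(a^2(a^2-1))$.

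Finally, rescale $t_1=t/N^2$. With $\mathcal{F}_M^{(c)}(t):=\mathbb{E}_M^{(c,b)}[\mathrm{e}^{-\frac{t}{M^2}\mathfrak{p}_{1,M}}]$, the chain rule turns each $\partial_{t_1}$ into $N^2\partial_t$, so the left-hand side gains a factor $N^4$; dividing by $N^4$ leaves the prefactor $\frac{(a+N)(b+N)(a+b+N)}{a^2(a^2-1)N^3}\to\frac{1}{a^2(a^2-1)}$. By Proposition~\ref{thm:main}, $\mathcal{F}_N^{(a)}(t)\to\mathbb{E}[\mathrm{e}^{-t\mathfrak{e}_1(a)}]$, and since $(N\pm1)^2/N^2\to1$ the quantities $\mathbb{E}_{N\pm1}^{(a\mp2,b)}[\mathrm{e}^{-\frac{t}{N^2}\mathfrak{p}_{1,N\pm1}}]$ converge to $\mathbb{E}[\mathrm{e}^{-t\mathfrak{e}_1(a\mp2)}]$; convergence of the first two $t$-derivatives of $\mathcal{F}_N^{(a)}$ follows from the same distributional convergence together with uniform integrability of $\mathfrak{p}_{1,N}/N^2$ and of $(\mathfrak{p}_{1,N}/N^2)^2\mathrm{e}^{-t\mathfrak{p}_{1,N}/N^2}$ (for $t>0$, using $x^2\mathrm{e}^{-tx}\le x/(\mathrm{e}t)$). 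Passing to the limit gives the asserted formula for $t>0$, and the case $t=0$ follows by continuity, $\mathfrak{e}_1(a)$ having finite second moment when $a>1$. I expect the main obstacle to be precisely this last interchange of limits: when $1<a\le 2$ one has $a-2\le 0$, outside the range of Proposition~\ref{thm:main}, so the convergence of $\mathcal{F}_{N+1}^{(a-2)}$ (and of the second $t$-derivative of $\mathcal F_N^{(a)}$) must be obtained either via the tightness/Portmanteau arguments of Section~\ref{section2} applied to the hard-edge convergence of the points $\lambda_{j,N}$, valid for all parameters $>-1$, or by first establishing the identity for $a>2$ and extending it to $a>1$ by analytic continuation in $a$. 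The remaining ingredients --- the sign extraction, the Dodgson identity, and the Gamma-function telescoping --- are routine.
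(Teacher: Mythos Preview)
Your proposal is correct and follows essentially the same approach as the paper: represent the finite-$N$ Laplace transforms as Hankel determinants via Proposition~\ref{proprepresented as a Hankel}, observe that the parameter shifts $(a,N)\mapsto(a\mp2,N\pm1)$ leave the entry function unchanged so that the $(N\pm1)$-dimensional determinants are built from the same sequence, apply the Desnanot--Jacobi identity to obtain the bilinear Toda relation, compute the ratio of normalising constants, and pass to the hard-edge limit after rescaling. Your identification of the potential obstacle when $1<a\le2$ (so that $a-2\le0$ falls outside the stated range of Proposition~\ref{thm:main}) is in fact more careful than the paper's own proof, which simply asserts the limit; your suggested remedies (direct use of the point-process convergence valid for parameter $>-1$, or analytic continuation from $a>2$) are both viable.
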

\begin{proof}
Let $g_{m}(t_{1})$ be given as in (\ref{definition of am}). Let 
\beas
\mathcal{B}_{N}=\left(g_{i+j}(t_{1})\right)_{0\leq i,j\leq N-1}.
\eeas
By Proposition \ref{proprepresented as a Hankel}, we have for $a>-1$, $b>-1$,
\bea\label{equation772}
\mathbb{E}_{N}^{(a,b)}\Bigg[\mathrm{e}^{-\sum_{j=1}^{N}\frac{t_{1}}{\mathsf{x}_{j}^{(N)}}}\Bigg]=C_{N}^{(a,b)}(\det\mathcal{B}_{N})(t_{1}).
\eea
Note that, $g_{m}(t_{1})$ depends on the parameters $N,a,b$. Hence,
\bea\label{equation771}
\mathbb{E}_{N+1}^{(a-2,b)}\Bigg[\mathrm{e}^{-\sum_{j=1}^{N+1}\frac{t_{1}}{\mathsf{x}_{j}^{(N)}}}\Bigg]=C_{N+1}^{(a-2,b)}(\det\mathcal{B}_{N+1})(t_{1})
,\quad
\mathbb{E}_{N-1}^{(a+2,b)}\Bigg[\mathrm{e}^{-\sum_{j=1}^{N-1}\frac{t_{1}}{\mathsf{x}_{j}^{(N)}}}\Bigg]=C_{N-1}^{(a+2,b)}(\det\mathcal{B}_{N-1})(t_{1}),
\eea
hold for $a>1,b>-1$.
For $m\geq 1$, denote $\mathcal{B}_{N}^{(\{i_{1},\ldots,i_{m}\},\{j_{1},j_{2},\ldots,j_{m}\})}$ by the matrix $\mathcal{B}_{N}$ with rows $i_{1},\ldots,i_{m}$ and the columns $j_{1},\ldots,j_{m}$ deleted.
Using (\ref{differentialrelations}),
\beas
\frac{\mathrm{d}g_{m}(t_{1})}{\mathrm{d}t_{1}}=-g_{m+1}(t_1).
\eeas
Then, it is not hard to check that
\bea\label{77formula4}
\det \mathcal{B}_{N+1}^{(\{N\},\{N+1\})}=\det \mathcal{B}_{N+1}^{(\{N+1\},\{N\})}=-\frac{\mathrm{d}(\det \mathcal{B}_{N})(t_{1})}{\mathrm{d}t_{1}}
\eea
and
\bea\label{77formula5}
\det \mathcal{B}_{N+1}^{(\{N\},\{N\})}=\frac{\mathrm{d^2}(\det \mathcal{B}_{N})(t_{1})}{\mathrm{d}t_{1}^2}.
\eea
By the Desnanot-Jacobi identity we thus obtain:
\beas
\det \mathcal{B}_{N+1}\det \mathcal{B}_{N-1}=\det \mathcal{B}_{N+1}^{(\{N\},\{N\})}\det \mathcal{B}_{N}-\det \mathcal{B}_{N+1}^{(\{N\},\{N+1\})}\det \mathcal{B}_{N+1}^{(\{N+1\},\{N\})}.
\eeas
Together with (\ref{equation772})-(\ref{77formula5}), 
we have for $a>1$, $b>-1$,
\begin{align}\label{77formula7}
&\mathbb{E}_{N+1}^{(a-2,b)}\Bigg[\mathrm{e}^{-\sum_{j=1}^{N+1}\frac{t_{1}}{\mathsf{x}_{j}^{(N)}}}\Bigg]\mathbb{E}_{N-1}^{(a+2,b)}\Bigg[\mathrm{e}^{-\sum_{j=1}^{N-1}\frac{t_{1}}{\mathsf{x}_{j}^{(N)}}}\Bigg]\frac{(C_{N}^{(a,b)})^2}{C_{N+1}^{(a-2,b)}C_{N-1}^{(a+2,b)}}\nonumber\\
=&\frac{\mathrm{d^2}}{\mathrm{d}t_{1}^2}\mathbb{E}_{N}^{(a,b)}\Bigg[\mathrm{e}^{-\sum_{j=1}^{N}\frac{t_{1}}{\mathsf{x}_{j}^{(N)}}}\Bigg]\mathbb{E}_{N}^{(a,b)}\Bigg[\mathrm{e}^{-\sum_{j=1}^{N}\frac{t_{1}}{\mathsf{x}_{j}^{(N)}}}\Bigg]-\Bigg(\frac{\mathrm{d}}{\mathrm{d}t_{1}}\mathbb{E}_{N}^{(a,b)}\Bigg[\mathrm{e}^{-\sum_{j=1}^{N}\frac{t_{1}}{\mathsf{x}_{j}^{(N)}}}\Bigg]\Bigg)^2.
\end{align}
Now, after doing scaling $t_{1}\mapsto \frac{t_{1}}{N^2}$ in equation \eqref{77formula7}, and by 
\beas
\lim_{N\rightarrow \infty}\frac{\mathrm{d}^{m}}{\mathrm{d}t_{1}^{m}}\mathbb{E}_{N}^{(a,b)}\Bigg[\mathrm{e}^{-\sum_{j=1}^{N}\frac{t_{1}}{N^2\mathsf{x}_{j}^{(N)}}}\Bigg]=\frac{\mathrm{d}^{m}}{\mathrm{d}t_{1}^{m}}\mathbb{E}\Bigg[\mathrm{e}^{-t_{1}\mathfrak{e}_{1}(a)}\Bigg],
\eeas
for $a>1$, $b>-1$ and $m=0,1,2$,
we obtain the equation claimed in this proposition.
\end{proof}

\subsection{Ideas behind the proofs of Theorem \ref{structureforgeneralfinitesizeforfiniteN} and Theorem \ref{mainpainlevethm}}\label{ideas}

By Proposition \ref{proprepresented as a Hankel}, to prove Theorem \ref{structureforgeneralfinitesizeforfiniteN}, it suffices to show that (\ref{combinatotrialexpression}), the linear combinations of determinants of the form $\det_{0\leq i, j\leq N-1}\Big(g_{i+j+m_{j}}(t_{1})\Big)$ with the associated combinatorial coefficients, possess the required structure,
namely, that they can be expressed as linear combinations (with polynomial coefficients in $t_1$ 
of suitable-order derivatives) of the Hankel determinant $\det_{0\leq i, j\leq N-1}\Big(g_{i+j}(t_{1})\Big)$.  

Inspired by the idea used in the authors' previous work \cite[Section 4]{assiotis2024exchangeablearraysintegrablesystems}, we tried lifting formula (\ref{combinatotrialexpression}) to a setting involving partial derivatives of a ``large''
Hankel determinant, whose entries are constructed as generating functions of $g_{n}(t_{1})$ with respect to $t_{2},\ldots,t_{k}$, evaluated at $(0,\ldots,0)$. That is, 
\bea
(\ref{combinatotrialexpression})
=
C_{N}^{(a,b)} 
 \prod_{q=2}^k \frac{\partial^{n_q}}{\partial t_q^{n_{q}}}\Psi_N(t_1,\ldots,t_k)\Bigg|_{t_2=\cdots=t_k=0}\label{liftformaly},
\eea
where
\bea\label{lifttoalargeHankel}
\Psi_N(t_1,\ldots,t_k)=\det_{0\leq i,j\leq N-1}\left( \sum_{m_2=0}^{\infty}\cdots\sum_{m_{k}=0}^{\infty} \frac{t_2^{m_2}\ldots t_k^{m_k}}{m_2 !\ldots m_k!} g_{i+j+\sum_{q=2}^k q m_q}(t_1)
\right).
\eea
Unfortunately, the above construction fails due to the non-convergence of the infinite series defining the matrix entries. This divergence arises because, for fixed $t_1$, $g_m(t_1)$ grows as fast as $m!$ as $m\rightarrow \infty$, according to (\ref{definition of am}). So we need to change the strategy. 

To overcome this difficulty, we begin by studying a truncated version of the Hankel determinant appearing on the right-hand side of equation (\ref{lifttoalargeHankel}). Specifically, we consider
\bea\label{thedetermiantwanttostudy}
\Phi_{N}(R_{2},\ldots,R_{k};t_{1},\ldots,t_{k})\defeq\det_{0\leq i,j\leq N-1}\left( \sum_{m_2=0}^{R_{2}}\cdots\sum_{m_{k}=0}^{R_{k}} \frac{t_2^{m_2}\ldots t_k^{m_k}}{m_2 !\ldots m_k!} g_{i+j+\sum_{q=2}^k q m_q}(t_1),
\right)
\eea
where the key difference is that the summation in each matrix entry is finite rather than infinite, and therefore, there is no issue of convergence. 

The motivation behind this truncation approach is from the observation that, with appropriately chosen parameters $R_{2},\ldots,R_{k}$, the basic relation (\ref{liftformaly}) is preserved. Specifically, for any $R_{q}\geq n_{q}$  $q=2,\ldots,k$,
\bea\label{anidentitytr}
&&(\ref{combinatotrialexpression})=
C_{N}^{(a,b)} 
 \prod_{q=2}^k \frac{\partial^{n_q}}{\partial t_q^{n_{q}}}\Phi_N(R_{2},\ldots,R_{k};t_{1},\ldots,t_{k})\Big|_{t_2=\cdots=t_k=0}\label{lift}.
\eea

The price we pay for performing the truncation is that we lose the neat structure observed in the representation of the partial derivatives of the Hankel determinant~\eqref{thedetermiantwanttostudy} as shifted Hankel determinants, as was the case in our previous work~\cite[Section~4.3]{assiotis2024exchangeablearraysintegrablesystems}. Here, "neat" refers to the fact that in the aforementioned reference, the partial derivatives of~\eqref{lifttoalargeHankel} with respect to $t_2, \ldots, t_k $ could be fully expressed in terms of shifted Hankel determinants. In contrast, in our setting, the representation of the partial derivatives of~\eqref{thedetermiantwanttostudy} involves not only shifted Hankel determinants but also additional terms, which we refer to as \emph{error terms}. 

One of the main differences between this paper and~\cite[Section~4.3]{assiotis2024exchangeablearraysintegrablesystems} is that we explicitly compute these error terms. This is carried out in Subsection~\ref{truncation results}, where we express them as multivariate polynomials in $t_2, \ldots, t_k$, with exponents determined by the parameters $ R_2, \ldots, R_k $, and coefficients given in terms of traces of certain matrices.

Another key difference between this section and~\cite[Section~4.3]{assiotis2024exchangeablearraysintegrablesystems} lies in the evaluation of the representations obtained in Subsection \ref{truncation results} for the partial derivatives of the shifted version of \eqref{thedetermiantwanttostudy} at $ t_2 = \cdots = t_k = 0 $. In our case, we must carefully choose appropriate values for \( R_2, \ldots, R_k \), depending on the order of the partial derivatives. On one hand, to ensure that the error terms vanish upon evaluation at this point; on the other hand, to ensure that the remaining terms in the representation have the analogue form as (\ref{combinatotrialexpression}) after evaluation. These resulting quantities are the ones we shall process the induction. We provide the details of the above arguments in Subsection \ref{recursiveformula}.

Since the recursive formulas obtained in Subsection \ref{recursiveformula} share the same structure as those in \cite[Propositions 4.13 and 4.15]{assiotis2024exchangeablearraysintegrablesystems}, we can apply the inductive argument developed therein to prove Theorem \ref{structureforgeneralfinitesizeforfiniteN} in Subsection \ref{proofofmainTheorem}, as well as the limiting case (as $N\to \infty$) stated in Theorem \ref{mainpainlevethm}. It is worth noting that, although the expression for the coefficients in Theorem \ref{structureforgeneralfinitesizeforfiniteN} (the finite $N$ case) depends on both $a$ and $b$, in the limiting case, as one would expect, the coefficients depend only on $a$. This is because the terms involving $b$ do not contribute to the leading order as $N \to \infty$, details for which are provided in the proof of Theorem \ref{mainpainlevethm}.

\subsection{Hankel determinants shifted by partitions}\label{truncation results}
Let $g_{m}(t_{1})$ be given as (\ref{definition of am}). We now focus on the Hankel determinant (\ref{thedetermiantwanttostudy}). Before proceeding with the proof, for the sake of brevity, let us introduce some notation. First, define for any $n\in \mathbb{N}$, the functions
\begin{equation*}
    \fxn_{n}(R_{2},\ldots,R_{k}; t_1,\ldots,t_k)=\sum_{m_2=0}^{R_{2}}\cdots\sum_{m_{k}=0}^{R_{k}} \frac{t_2^{m_2}\ldots t_k^{m_k}}{m_2 !\ldots m_k!} g_{n+\sum_{q=2}^k q m_q}(t_1)
\end{equation*}
Given our specific focus on the large $N$ behavior of the Hankel determinants of these functions, let us also define the scaled versions of these functions as,
\bea\label{definition of scaling h}
\hfxn_{n}(R_{2},\ldots,R_{k};t_1,\ldots,t_k)\defeq\fxn_{n}\left(R_{2},\ldots,R_{k};\frac{t_1}{N^2},\ldots, \frac{t_k}{N^2}\right),
\eea
leading to the main Hankel determinant of consideration:
\bea\label{Hndef}
\Hfxn_{N}(R_{2},\ldots,R_{k};t_{1},\ldots,t_{k})=\det\left(\hfxn_{i+j}(R_{2},\ldots,R_{k};t_{1},\ldots,t_{k})\right)_{i,j=0,\ldots,N-1}.
\eea
Moving on, for a fixed integer partition $\boldsymbol{\lambda}=(\lambda_1,\ldots, \lambda_m)$ with $\lambda_1\geq \cdots \geq \lambda_m\geq 1$, we define
\begin{equation}\label{definitionofANlambda}
\mathbf{A}_{N,\boldsymbol{\lambda}}(R_{2},\ldots,R_{k};t_1,\ldots, t_k)\defeq  \left(\hfxn_{i+j+\lambda_{N-j}}(R_{2},\ldots,R_{k};t_1,\ldots, t_k)\right)_{i,j=0,\ldots,N-1}
\end{equation}
where we use the convention that $\lambda_{m+1}=\cdots=\lambda_N=0$ whenever $m\leq N-1$.
We then define the determinants of these functions as 
\bea\label{generaldefinition of mathcalG}
\Hfxn_{N, \boldsymbol{\lambda}}=\det\mathbf{A}_{N,\boldsymbol{\lambda}}.
\eea
In particular, when $\boldsymbol{\lambda}=\emptyset$, that is $\lambda_{1}=\dots=\lambda_{m}=0$, we have $\Hfxn_{N,\emptyset}=\Hfxn_{N}$. Throughout the rest of this section, the functions $\Hfxn_{N,\boldsymbol{\lambda}}$ will be referred to as a {\em{Hankel determinant shifted by the partition $\boldsymbol{\lambda}$}}. Also related to these functions, we define
\begin{equation*}
\mathbf{A}_{N,\boldsymbol{\lambda}}^{(w)}\defeq\left((i+j+\lambda_{N-j})\boldsymbol{\hfxn}_{i+j+\lambda_{N-j}}(R_{2},\ldots,R_{k};t_1,\ldots, t_k)\right)_{i,j=0,\ldots,N-1}.
\end{equation*}
For a more clear exposition of how Hankel determinants shifted by partitions are interrelated, we introduce a translation map that acts on the set of partitions: For each $\boldsymbol{\lambda}=(\lambda_1,\ldots, \lambda_m)$ with $m\leq N$, define:
\begin{equation*}
   \mathcal{S}_h\boldsymbol{\lambda}=  (\lambda_1+h,\ldots, \lambda_m+h,\underbrace{h,\ldots,h}_{N-m}).
\end{equation*}
As a natural extension of this translation map, for reasons which will become more clear throughout this section, we also define two transformations on Hankel determinants shifted by partitions:
\bea\label{defn of translation operator2}
\Hfxn_{N,\boldsymbol{\lambda},h}(R_{2},\ldots,R_{k};t_1,\ldots,t_k)\defeq \mathrm{Tr}(\mathrm{adj}(\mathbf{A}_{N,\lambda})\mathbf{A}_{N,\mathcal{S}_{h}\lambda}) 
\eea
and 
\bea\label{defn of translation operator3}
\Hfxn_{N,\boldsymbol{\lambda},h}^{(w)}(R_{2},\ldots,R_{k};t_1,\ldots,t_k)\defeq \mathrm{Tr}(\mathrm{adj}(\mathbf{A}_{N,\lambda}){\mathbf{A}}_{N,\mathcal{S}_h \lambda}^{(w)}) 
\eea
where for a matrix $\mathbf{M}$, $\mathrm{adj}(\mathbf{M})$ denotes its adjugate.


It will become clear later, that we will primarily deal with expressions featuring a specific class of partitions, which we define as follows.

\begin{defn}\label{defofhook}
For $1\le q \le n$, we define $\boldsymbol{\lambda}_{n,q}$ to be the integer partition $(n-q+1,1,\ldots,1)$, consisting of $q-1$ parts with value $1$.
\end{defn}

\subsubsection{Identities relating $\Hfxn_{N,\boldsymbol{\lambda}}$, $\Hfxn_{N,\boldsymbol{\lambda},h}$ and $\Hfxn_{N,\boldsymbol{\lambda},h}^{(w)}$}
With the earlier definitions in mind, in this section we aim to gather useful differential and recursive identities for the determinants $\Hfxn_{N}$, and utilize these to arrive at further identities relating the derivatives of $\Hfxn_{N}$ to the shifted determinants $\Hfxn_{N, \boldsymbol{\lambda}}$.

\begin{prop}\label{relationforhlem}
Let $n\in \mathbb{N}$ and let $\hfxn_{n}$ be given as (\ref{definition of scaling h}). Then, we have that
\bea\label{generalrecursiverelationfortextcrg2}
\frac{\partial \hfxn_{n}}{\partial t_1}(R_{2},\ldots,R_{k};t_{1},\ldots,t_{k})
&=&-\frac{1}{N^2}\hfxn_{n+1}(R_{2},\ldots,R_{k};t_{1},\ldots,t_{k}).
\eea
Moreover, whenever $R_{i} \in \mathbb{N}$ for $i=2,\ldots,k$, we have
\bea\label{genralrecursiverelationfortextcrg1}
\hfxn_{n+2}&&= \frac{N^2(a+2N-n+b-1)}{t_1+2t_2}\hfxn_{n}
+\left( \frac{(-a-2N+n+2)N^2}{t_1+2t_2} + \frac{t_1}{t_1+2t_2} \right)\hfxn_{n+1}\nonumber \\
&& +\frac{1}{t_1+2t_2} \sum_{q=3}^k ((q-1)t_{q-1} - qt_q)\hfxn_{n+q}+ \frac{k t_k}{t_1+2t_2} \hfxn_{n+1+k}\nonumber\\
&&-\frac{1}{t_{1}+2t_{2}}\sum_{q=2}^{k}\frac{qt_{q}^{R_{q}+1}}{N^{2R_{q}}R_{q}!}(\hfxn_{n+qR_{q}+q+1}-\hfxn_{n+qR_{q}+q})\Big|_{t_{q}=0}.
\eea
\end{prop}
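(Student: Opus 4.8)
The plan is to prove the two identities by first establishing the easy differentiation relation~\eqref{generalrecursiverelationfortextcrg2} at the level of the individual scaled functions $\hfxn_n$, and then deriving the three-term-like relation~\eqref{genralrecursiverelationfortextcrg1} from a contiguous relation for the confluent hypergeometric function $U$ that underlies $g_m$. First I would record that, by the definition~\eqref{definition of am} together with the differentiation formula~\eqref{differentialrelations}, one has $\frac{\di}{\di t_1} g_m(t_1) = -g_{m+1}(t_1)$; since in $\fxn_n$ the variable $t_1$ appears only through the $g_{n+\sum q m_q}(t_1)$ factors, differentiating $\fxn_n$ in $t_1$ term by term yields $\frac{\partial}{\partial t_1}\fxn_n = -\fxn_{n+1}$, and after the rescaling~\eqref{definition of scaling h} (which replaces $t_1$ by $t_1/N^2$ everywhere, producing a chain-rule factor $1/N^2$) this becomes exactly~\eqref{generalrecursiverelationfortextcrg2}. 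No convergence issue arises because the sums over $m_2,\dots,m_k$ are finite by construction.

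For~\eqref{genralrecursiverelationfortextcrg1} the key input is a three-term contiguous relation for $U(b+1,c;w)$ in the second parameter $c$, applied with $c = -a-2N+2+m$. Recall $g_m(w) = \mathrm{e}^{-w} U(b+1, -a-2N+2+m; w)$, so a relation among $U(\cdot, c-1; \cdot)$, $U(\cdot,c;\cdot)$, $U(\cdot, c+1;\cdot)$ translates, after multiplying through by $\mathrm{e}^{-w}$, into a relation among $g_{m-1}, g_m, g_{m+1}$ whose coefficients are affine in $w$ and in the parameter $m$ (equivalently in $a, N, b$). The standard contiguous relation (e.g. from \cite{hammer1953higher} or DLMF) gives, schematically, $w\, U(b+1,c+1;w) = (\,\text{affine in } w, c\,) U(b+1,c;w) + (\,\text{const}\,)\, U(b+1,c-1;w)$; rearranged, this expresses $g_{n+2}$ as $\frac{1}{w}\big[\alpha(n,w) g_{n+1} + \beta(n) g_n\big]$ for explicit $\alpha,\beta$. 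The subtlety is that the right-hand side of~\eqref{genralrecursiverelationfortextcrg1} has $w$ replaced by the combination $t_1 + 2t_2$ in the denominator and carries the extra sums $\sum_{q=3}^k((q-1)t_{q-1}-qt_q)\hfxn_{n+q}$, the $\frac{k t_k}{\cdots}\hfxn_{n+1+k}$ term, and the boundary/truncation corrections $\sum_{q}\frac{q t_q^{R_q+1}}{N^{2R_q}R_q!}(\hfxn_{n+qR_q+q+1}-\hfxn_{n+qR_q+q})\big|_{t_q=0}$. These arise because one must apply the $g$-recursion \emph{inside} the finite generating sum defining $\fxn_{n+2}$ and then re-index the summation variables $m_3,\dots,m_k$ so as to re-assemble the result as a combination of $\fxn_{n+j}$'s: shifting $m_q \mapsto m_q - 1$ converts a factor $t_q^{m_q}/m_q!$ into $t_q \cdot t_q^{m_q-1}/(m_q-1)!$ but leaves a top-index discrepancy (the term with $m_q = R_q$ would need $m_q = R_q+1$, which is outside the truncation), and it is precisely this missing boundary term that produces the $t_q^{R_q+1}/R_q!$ corrections. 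Concretely, I would (i) write $\fxn_{n+2} = \sum_{m_2=0}^{R_2}\cdots\sum_{m_k=0}^{R_k}\frac{t_2^{m_2}\cdots t_k^{m_k}}{m_2!\cdots m_k!} g_{n+2+\sum q m_q}(t_1)$, (ii) substitute the $U$-contiguous relation for $g_{n+2+\sum q m_q}$ to express it via $g_{n+1+\sum q m_q}$ and $g_{n+\sum q m_q}$ with coefficients affine in $w=t_1$ and in $\sum q m_q$, (iii) absorb the factor $\sum q m_q$ by the identity $(\sum_q q m_q) \frac{t_q^{m_q}}{m_q!} = t_q \frac{\partial}{\partial t_q}\big(\frac{t_q^{m_q}}{m_q!}\big)\cdot(\text{appropriately})$, more usefully by re-indexing $m_q\mapsto m_q\pm1$, and (iv) collect the resulting finite sums back into $\fxn_{n+j}$ for $j=0,1,3,\dots,k+1$ plus the explicit boundary leftovers, then rescale $t_i\mapsto t_i/N^2$ to pass to $\hfxn$. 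Care with the placement of the $N^2$ factors (each $t_i$ carries one, each differentiation-type reindexing another) is what produces the $N^2$ and $N^{2R_q}$ powers displayed.

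The main obstacle I anticipate is purely bookkeeping: correctly tracking the re-indexing of the multi-sum $\sum_{m_3,\dots,m_k}$ when the $g$-recursion raises the shift index by $1$ or $2$, so that one recovers \emph{exactly} the telescoping combination $((q-1)t_{q-1}-qt_q)\hfxn_{n+q}$ in the middle sum and the single leftover $k t_k\,\hfxn_{n+1+k}$ at the top, together with the sign and the factor $q t_q^{R_q+1}/(N^{2R_q}R_q!)$ in the truncation-error term. I expect no conceptual difficulty once the right $U$-contiguous relation is identified; the verification is a finite computation that can be checked by specializing to small $k$ (e.g. $k=2$, where the error sum has a single term and~\eqref{genralrecursiverelationfortextcrg1} should reduce to a clean three-term relation for the Jacobi–Laguerre Hankel entries) and then confirming the general re-indexing pattern. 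I would present the $k=2$ case in full and indicate that the general case follows by the same reindexing applied to each summation variable in turn.
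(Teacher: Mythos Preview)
Your proposal is correct and follows essentially the same approach as the paper: the paper's proof simply cites the derivative identity $\frac{\mathrm{d}}{\mathrm{d}t}U(a,b;t)=U(a,b;t)-U(a,b+1;t)$ (equivalent to~\eqref{differentialrelations}) and the contiguous relation $(b-a-1)U(a,b-1;t)-(b-1+t)U(a,b;t)+tU(a,b+1;t)=0$, and leaves the computation you describe implicit. Your step-by-step plan---translating the contiguous relation into $t_1\, g_{n+2}=(a+2N-n-1+b)g_n+(-a-2N+n+2+t_1)g_{n+1}$, summing over $m_2,\dots,m_k$, re-indexing each $m_q\mapsto m_q+1$ to absorb the $\sum_q q m_q$ factor (thereby generating the telescoping $\sum_{q=3}^k((q-1)t_{q-1}-qt_q)\hfxn_{n+q}$, the top term $kt_k\hfxn_{n+k+1}$, and the truncation corrections at $m_q=R_q$), and finally rescaling $t_i\mapsto t_i/N^2$---is exactly the computation the paper omits.
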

\begin{proof}
    The result is a mere consequence of the following identities for the confluent hypergeometric functions (see, e.g., \cite[p.257 formula (5) and p.258 formula (10)]{hammer1953higher})

    \beas
(b-a-1)U(a,b-1;t)-(b-1+t)U(a,b;t)+tU(a,b+1;t)=0,
\eeas
and
\beas
\frac{\mathrm{d}U(a,b;t)}{\mathrm{d}t}=U(a,b;t)-U(a,b+1;t).
\eeas
\end{proof}
\begin{lem}\label{generalexpansiondependingonrecursive}
Let $k\geq 2$, $m\geq 3$, $R_2,\ldots,R_k\in \mathbb{N}$, and fix a  partition $\boldsymbol{\lambda}$. Then, we have that
\bea\label{generalizationiterationformula}
\Hfxn_{N,\boldsymbol{\lambda},m}
&=& \frac{(a+2N+b-1)N^2}{t_1+2t_2} \Hfxn_{N,\boldsymbol{\lambda},m-2}
- \frac{N^2}{t_1+2t_2}\Hfxn_{N,\boldsymbol{\lambda},m-2}^{(w)}\nonumber \\
&& +\, \frac{N^2(-a-2N+1)}{t_1+2t_2}\Hfxn_{N,\boldsymbol{\lambda},m-1}
+ \frac{N^2}{t_1+2t_2}\Hfxn_{N,\boldsymbol{\lambda},m-1}^{(w)}\nonumber\\
&& +\, \frac{t_1}{t_1+2t_2} \Hfxn_{N,\boldsymbol{\lambda},m-1}
+ \frac{kt_k}{t_1+2t_2} \Hfxn_{N,\boldsymbol{\lambda}, m+k-1}\nonumber\\
&&+\frac{1}{t_{1}+2t_{2}}\sum_{q=3}^{k}\big((q-1)t_{q-1}-qt_{q}\big)\Hfxn_{N,\boldsymbol{\lambda},m+q-2}\nonumber\\
&+&\frac{1}{t_{1}+2t_{2}}\sum_{q=2}^{k}\frac{t_{q}^{R_{q}+1}}{N^{2R_{q}}}\mathcal{E}_{q,\boldsymbol{\lambda},m}^{(1)}(t_{1},\ldots,t_{k};R_{q}),
\eea
where
\bea\label{definitionoferror1}
\mathcal{E}_{q,\boldsymbol{\lambda},m}^{(1)}(t_{1},\ldots,t_{k};R_{q})=-\frac{q}{R_{q}!}\mathrm{Tr}\left(\mathrm{adj}(\mathbf{A}_{N,\boldsymbol{\lambda}})\left(\mathbf{A}_{N,\mathcal{S}_{qR_{q}+q-1+m}\boldsymbol{\lambda}}\Big|_{t_{q}=0}-\mathbf{A}_{N,\mathcal{S}_{qR_{q}+q-2+m}\boldsymbol{\lambda}}\Big|_{t_{q}=0}\right)\right),
\eea
with $\mathbf{A}_{N,\boldsymbol{\lambda}}$ given in (\ref{definitionofANlambda}).
\end{lem}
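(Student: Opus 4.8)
The plan is to lift the scalar recursion (\ref{genralrecursiverelationfortextcrg1}) of Proposition~\ref{relationforhlem} to a matrix identity for $\mathbf{A}_{N,\mathcal{S}_m\boldsymbol{\lambda}}$ and then contract with $\mathrm{adj}(\mathbf{A}_{N,\boldsymbol{\lambda}})$ and take the trace. Fix $0\le i,j\le N-1$ and recall that, with the convention $\lambda_{m+1}=\cdots=\lambda_N=0$ so that $(\mathcal{S}_h\boldsymbol{\lambda})_{N-j}=\lambda_{N-j}+h$, the $(i,j)$ entry of $\mathbf{A}_{N,\mathcal{S}_h\boldsymbol{\lambda}}$ is $\hfxn_{i+j+\lambda_{N-j}+h}$ and the $(i,j)$ entry of $\mathbf{A}_{N,\mathcal{S}_h\boldsymbol{\lambda}}^{(w)}$ is $(i+j+\lambda_{N-j}+h)\hfxn_{i+j+\lambda_{N-j}+h}$. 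Setting $n=i+j+\lambda_{N-j}+m-2$, which is $\ge 0$ precisely because $m\ge 3$, and applying (\ref{genralrecursiverelationfortextcrg1}) at this value of $n$ expresses $\hfxn_{n+2}$ — i.e.\ the $(i,j)$ entry of $\mathbf{A}_{N,\mathcal{S}_m\boldsymbol{\lambda}}$ — in terms of $\hfxn_n$, $\hfxn_{n+1}$, the $\hfxn_{n+q}$ for $3\le q\le k$, $\hfxn_{n+1+k}$, and the truncation remainders $\hfxn_{n+qR_q+q+1}\big|_{t_q=0}$ and $\hfxn_{n+qR_q+q}\big|_{t_q=0}$.

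The one point needing care is the $(i,j)$-dependence of the coefficients in (\ref{genralrecursiverelationfortextcrg1}). The coefficient of $\hfxn_n$, namely $\tfrac{N^2(a+2N-n+b-1)}{t_1+2t_2}$, becomes after substitution $\tfrac{N^2(a+2N+b-1)}{t_1+2t_2}-\tfrac{N^2}{t_1+2t_2}(i+j+\lambda_{N-j}+m-2)$, whose second summand is exactly $-\tfrac{N^2}{t_1+2t_2}$ times the $(i,j)$ entry of $\mathbf{A}_{N,\mathcal{S}_{m-2}\boldsymbol{\lambda}}^{(w)}$. Likewise, using $-a-2N+n+2=(-a-2N+1)+(i+j+\lambda_{N-j}+m-1)$, the coefficient of $\hfxn_{n+1}$ splits as $\tfrac{N^2(-a-2N+1)}{t_1+2t_2}+\tfrac{t_1}{t_1+2t_2}$ plus $\tfrac{N^2}{t_1+2t_2}(i+j+\lambda_{N-j}+m-1)$, the last piece being $\tfrac{N^2}{t_1+2t_2}$ times the $(i,j)$ entry of $\mathbf{A}_{N,\mathcal{S}_{m-1}\boldsymbol{\lambda}}^{(w)}$. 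All the remaining coefficients ($kt_k/(t_1+2t_2)$, the factors $(q-1)t_{q-1}-qt_q$, and the prefactors $qt_q^{R_q+1}/(N^{2R_q}R_q!)$) are independent of $i,j$, and by the index identities $n+q=i+j+\lambda_{N-j}+(m+q-2)$, $n+1+k=i+j+\lambda_{N-j}+(m+k-1)$, $n+qR_q+q+1=i+j+\lambda_{N-j}+(qR_q+q-1+m)$, $n+qR_q+q=i+j+\lambda_{N-j}+(qR_q+q-2+m)$, the corresponding $\hfxn$'s are the $(i,j)$ entries of $\mathbf{A}_{N,\mathcal{S}_{m+q-2}\boldsymbol{\lambda}}$, $\mathbf{A}_{N,\mathcal{S}_{m+k-1}\boldsymbol{\lambda}}$, $\mathbf{A}_{N,\mathcal{S}_{qR_q+q-1+m}\boldsymbol{\lambda}}\big|_{t_q=0}$ and $\mathbf{A}_{N,\mathcal{S}_{qR_q+q-2+m}\boldsymbol{\lambda}}\big|_{t_q=0}$ respectively.

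Since every scalar coefficient is independent of $i$ and $j$, the entrywise identity assembles into the matrix identity
\[
\mathbf{A}_{N,\mathcal{S}_m\boldsymbol{\lambda}}
=\frac{N^2(a+2N+b-1)}{t_1+2t_2}\,\mathbf{A}_{N,\mathcal{S}_{m-2}\boldsymbol{\lambda}}
-\frac{N^2}{t_1+2t_2}\,\mathbf{A}_{N,\mathcal{S}_{m-2}\boldsymbol{\lambda}}^{(w)}
+\cdots,
\]
the remaining summands being read off exactly from the list above. Left-multiplying by $\mathrm{adj}(\mathbf{A}_{N,\boldsymbol{\lambda}})$ and taking traces, and using linearity of the trace together with the definitions (\ref{defn of translation operator2}) of $\Hfxn_{N,\boldsymbol{\lambda},h}$ and (\ref{defn of translation operator3}) of $\Hfxn_{N,\boldsymbol{\lambda},h}^{(w)}$, turns this into precisely (\ref{generalizationiterationformula}), with the final double sum over $q$ grouped into the error term $\mathcal{E}^{(1)}_{q,\boldsymbol{\lambda},m}$ of (\ref{definitionoferror1}). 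The hypotheses $R_2,\dots,R_k\in\mathbb{N}$ and $m\ge 3$ enter only to guarantee that (\ref{genralrecursiverelationfortextcrg1}) is applicable (so that $n\ge 0$) and that every shifted partition $\mathcal{S}_h\boldsymbol{\lambda}$ that appears, with $h\ge m-2\ge 1$, is a genuine partition. The argument is essentially bookkeeping; the only genuine subtlety is the splitting of the two affine-in-$n$ coefficients so that their non-constant parts are absorbed exactly by the weighted matrices $\mathbf{A}_{N,\mathcal{S}_{m-2}\boldsymbol{\lambda}}^{(w)}$ and $\mathbf{A}_{N,\mathcal{S}_{m-1}\boldsymbol{\lambda}}^{(w)}$, and carrying the $t_q=0$ freezing inside the truncation remainder through to the matrix level without disturbing the $t_q^{R_q+1}$ prefactors.
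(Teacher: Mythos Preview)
Your proof is correct and follows exactly the approach indicated by the paper, which states simply that the lemma is ``a direct consequence of the recursive identity (\ref{genralrecursiverelationfortextcrg1}) for $\hfxn_{i+j+\lambda_{N-j}}$.'' You have spelled out the bookkeeping that the paper leaves implicit, in particular the crucial splitting of the two affine-in-$n$ coefficients so that their non-constant parts are absorbed by $\mathbf{A}_{N,\mathcal{S}_{m-2}\boldsymbol{\lambda}}^{(w)}$ and $\mathbf{A}_{N,\mathcal{S}_{m-1}\boldsymbol{\lambda}}^{(w)}$.
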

\begin{proof}
This is a direct consequence of the recursive identity (\ref{genralrecursiverelationfortextcrg1}) for $\hfxn_{i+j+\lambda_{N-j}}$.
\end{proof}
\subsubsection{Connecting $\Hfxn_{N,\boldsymbol{\lambda},q}$ to the partial derivatives of $\Hfxn_{N,\boldsymbol{\lambda}}$}
\begin{prop}
\label{generalderivative and translations}
Let $N\geq 1$, $k\geq 2$ be integers.  Let $R_{2},\ldots, R_{k}$ be positive integers. Let $\boldsymbol{\lambda}$ be an integer partition. Let $\Hfxn_{N,\boldsymbol{\lambda}}(t_{1},\ldots,t_{k})$ and $\Hfxn_{N,\boldsymbol{\lambda},m}(t_{1},\ldots,t_{k})$ be given in 
(\ref{generaldefinition of mathcalG}) and (\ref{defn of translation operator2}), respectively, then we have for $q=2,\ldots,k$,
\bea\label{corresponding relation}
\frac{\partial \Hfxn_{N,\boldsymbol{\lambda}}}{\partial t_{q}}=\frac{1}{N^2}\Hfxn_{N,\boldsymbol{\lambda},q}+\frac{t_{q}^{R_{q}}}{N^{2R_{q}+2}}
\mathcal{E}_{q,\boldsymbol{\lambda}}^{(2)}(t_{1},\ldots,t_{k};R_{q}),
\eea
where
\bea\label{definitionoferror2}
\mathcal{E}_{q,\boldsymbol{\lambda}}^{(2)}(t_{1},\ldots,t_{k};R_{q})=-\frac{1}{R_{q}!}\mathrm{Tr}\left(\mathrm{adj}(\mathbf{A}_{N,\boldsymbol{\lambda}})\left(\mathbf{A}_{N,\mathcal{S}_{qR_{q}+q}\boldsymbol{\lambda}}\Big|_{t_{q}=0}\right)\right),
\eea
and
\bea\label{t1operateongeneralG}
\frac{\partial \Hfxn_{N,\boldsymbol{\lambda}}}{\partial{t_{1}}}=-\frac{1}{N^2}
\Hfxn_{N,\boldsymbol{\lambda},1}.
\eea
\end{prop}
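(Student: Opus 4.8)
The plan is to compute $\partial_{t_q} \Hfxn_{N,\boldsymbol{\lambda}}$ directly from the determinant definition \eqref{generaldefinition of mathcalG}, using Jacobi's formula for the derivative of a determinant: $\partial_{t_q} \det \mathbf{A}_{N,\boldsymbol{\lambda}} = \mathrm{Tr}\bigl(\mathrm{adj}(\mathbf{A}_{N,\boldsymbol{\lambda}})\, \partial_{t_q} \mathbf{A}_{N,\boldsymbol{\lambda}}\bigr)$. So the first step is to differentiate a single scaled entry $\hfxn_{i+j+\lambda_{N-j}}(R_2,\ldots,R_k;t_1,\ldots,t_k)$ with respect to $t_q$. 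From the definition of $\fxn_n$ (and hence $\hfxn_n$) as a truncated multivariate generating function, differentiating in $t_q$ produces $\sum_{m_q=1}^{R_q} \frac{t_q^{m_q-1}}{(m_q-1)!}(\cdots)$; reindexing $m_q \mapsto m_q+1$, the sum runs from $0$ to $R_q-1$, which is almost $\hfxn$ again but with the top term $m_q = R_q$ missing. The crucial point is the telescoping: $\sum_{m_q=0}^{R_q-1} = \sum_{m_q=0}^{R_q} - [\text{term } m_q = R_q]$, so that
\[
N^2\,\partial_{t_q} \hfxn_{n} = \hfxn_{n+q}\big|_{\text{shift by }q} \;-\; \frac{t_q^{R_q}}{N^{2R_q}\,R_q!}\,\hfxn_{n+qR_q+q}\Big|_{t_q=0},
\]
where the last term carries the $t_q = 0$ evaluation because setting $m_q = R_q$ kills the $t_q$-dependence in that summand's own factor (one must check the bookkeeping on the $1/N^2$ powers from the scaling \eqref{definition of scaling h} carefully — this is where the $t_q^{R_q}/N^{2R_q+2}$ prefactor in \eqref{corresponding relation} and the $1/N^2$ in front of $\Hfxn_{N,\boldsymbol{\lambda},q}$ both come from).

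The second step is to feed this entrywise identity back into Jacobi's formula. The shift-by-$q$ in the index $i+j+\lambda_{N-j} \mapsto i+j+\lambda_{N-j}+q$ corresponds, column by column (each column index $j$ gets its own $\lambda_{N-j}$), precisely to replacing the matrix $\mathbf{A}_{N,\boldsymbol{\lambda}}$ by $\mathbf{A}_{N,\mathcal{S}_q\boldsymbol{\lambda}}$: indeed $\mathcal{S}_q$ adds $q$ to every part, which is exactly what happens to each $\lambda_{N-j}$ uniformly across columns. Hence $\mathrm{Tr}(\mathrm{adj}(\mathbf{A}_{N,\boldsymbol{\lambda}})\,\mathbf{A}_{N,\mathcal{S}_q\boldsymbol{\lambda}}) = \Hfxn_{N,\boldsymbol{\lambda},q}$ by definition \eqref{defn of translation operator2}, and the telescoping remainder assembles into $\mathrm{Tr}(\mathrm{adj}(\mathbf{A}_{N,\boldsymbol{\lambda}})\,\mathbf{A}_{N,\mathcal{S}_{qR_q+q}\boldsymbol{\lambda}}\big|_{t_q=0})$, which is exactly $-R_q!\cdot\mathcal{E}_{q,\boldsymbol{\lambda}}^{(2)}$ as in \eqref{definitionoferror2}. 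This gives \eqref{corresponding relation}.

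For the $t_1$ identity \eqref{t1operateongeneralG}, the same Jacobi-formula approach applies but is simpler: by Proposition \ref{relationforhlem}, equation \eqref{generalrecursiverelationfortextcrg2}, we have $\partial_{t_1}\hfxn_n = -\frac{1}{N^2}\hfxn_{n+1}$ with no truncation remainder at all — because differentiating the prefactor $\exp(-t_1/N^2)U(\cdots)$ part in $t_1$ does not interact with the finite truncation ranges $R_2,\ldots,R_k$ (those are the summation ranges of the $t_2,\ldots,t_k$ generating-function variables, untouched by $\partial_{t_1}$). Summing over columns via Jacobi's formula then turns the index shift $n \mapsto n+1$ uniformly into $\mathcal{S}_1\boldsymbol{\lambda}$, giving $\partial_{t_1}\Hfxn_{N,\boldsymbol{\lambda}} = -\frac{1}{N^2}\mathrm{Tr}(\mathrm{adj}(\mathbf{A}_{N,\boldsymbol{\lambda}})\mathbf{A}_{N,\mathcal{S}_1\boldsymbol{\lambda}}) = -\frac{1}{N^2}\Hfxn_{N,\boldsymbol{\lambda},1}$.

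The main obstacle I anticipate is purely combinatorial bookkeeping rather than any conceptual difficulty: one has to be scrupulous about (i) the powers of $N^{-2}$ introduced by the scaling $t_i \mapsto t_i/N^2$ in \eqref{definition of scaling h}, which appear both in the leading term and, with a higher power, in the error term; (ii) the fact that the $t_q = 0$ restriction in the error term applies only to the explicit $t_q^{m_q}$ factor from the generating-function expansion of that one summand, while the truncation prefactor $t_q^{R_q}$ is pulled outside — distinguishing these two roles of $t_q$ is the subtle part; and (iii) checking that the column-wise shift genuinely corresponds to $\mathcal{S}_q$ (adding $q$ to \emph{all} $N$ parts, with the convention $\lambda_{m+1} = \cdots = \lambda_N = 0$) and not to some partition with only the first few parts shifted. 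Once these are pinned down, both identities follow by a single application of Jacobi's determinant-derivative formula.
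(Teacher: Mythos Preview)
Your proposal is correct and takes essentially the same approach as the paper: compute $\partial_{t_q}\fxn_n$ (obtaining $\fxn_{n+q}$ minus the truncation remainder $\tfrac{t_q^{R_q}}{R_q!}\fxn_{n+qR_q+q}|_{t_q=0}$), then push this entrywise identity through the determinant derivative. The paper phrases the latter step as the multilinearity expansion $\sum_{s_0+\cdots+s_{N-1}=1}\det(\cdots)$ over the row to be differentiated, whereas you invoke Jacobi's formula $\partial\det\mathbf{A}=\mathrm{Tr}(\mathrm{adj}(\mathbf{A})\,\partial\mathbf{A})$; these are equivalent, and in fact your formulation connects more directly to the definitions \eqref{defn of translation operator2} and \eqref{definitionoferror2}, which are already stated as traces. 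Your bookkeeping on the $N^{-2}$ powers and the role of the $t_q=0$ restriction is accurate.
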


\begin{proof}
Write $\boldsymbol{\lambda}$ as $(\lambda_{1},\ldots,\lambda_{m})$ with $\lambda_{1}\geq \ldots \geq \lambda_{m}\geq 1$, and set $\lambda_{j}=0$ for $m+1\leq j\leq N$ when $m<N$. Note that for $q=2,\ldots,k$,
\begin{align*}
\frac{\partial \fxn_{n}(R_{2},\ldots,R_{k};t_{1},\ldots,t_{k})}{\partial t_{q}}
=&\fxn_{n+q}(R_{2},\ldots,R_{k};t_{1},\cdots,t_{k})\\
&-\frac{t_{q}^{R_{q}}}{R_{q}!}\fxn_{n+qR_{q}+q}(R_{2},\ldots,R_{k};t_{1},\ldots,t_{q-1},0,t_{q+1},\ldots,t_{k}).
\end{align*}
Then by the above, claim (\ref{corresponding relation}) comes from the following identity:
\bea
\frac{\partial \Hfxn_{N,\boldsymbol{\lambda}}}{\partial t_{q}}
&=& \frac{1}{N^2}\sum_{\substack{s_0+\cdots+s_{N-1}=1 \\s_0,\ldots,s_{N-1} \geq 0}} \det\left(\frac{\partial ^{s_i}\fxn_{i+j+\lambda_{N-j}}}{\partial t_{q}^{s_i}} \left(R_{2},\ldots,R_{k};\frac{t_{1}}{N^2},\ldots,\frac{t_{k}}{N^2}\right)\right)_{i,j=0,\ldots,N-1}. 
\eea
Claim (\ref{t1operateongeneralG}) comes from the recursive relation $(\ref{generalrecursiverelationfortextcrg2})$.
\end{proof}
\begin{prop}\label{expressionsforlength2}
We have
\beas
\Hfxn_{N,\boldsymbol{\lambda}_{2,1}}&=&\frac{N^4}{2}\frac{\partial^2\Hfxn_{N}}{\partial t_{1}^2}+\frac{N^2}{2}\frac{\partial \Hfxn_{N}}{\partial t_{2}}-\frac{t_{2}^{R_{2}}}{2N^{2R_{2}}}
\mathcal{E}_{2,\emptyset}^{(2)}(t_{1},\ldots,t_{k};R_{2}),\\
\Hfxn_{N,\boldsymbol{\lambda}_{2,2}}&=&\frac{N^4}{2}\frac{\partial^2\Hfxn_{N}}{\partial t_{1}^2}-\frac{N^2}{2}\frac{\partial \Hfxn_{N}}{\partial t_{2}}+\frac{t_{2}^{R_{2}}}{2N^{2R_{2}}}\mathcal{E}_{2,\emptyset}^{(2)}(t_{1},\ldots,t_{k};R_{2}),
\eeas
where $\boldsymbol{\lambda}_{n,q}$ and $\mathcal{E}_{2,\emptyset}^{(2)}(t_{1},\ldots,t_{k};R_{2})$ are defined in Definition \ref{defofhook} and equation (\ref{definitionoferror2}),respectively.
\end{prop}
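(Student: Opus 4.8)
The plan is to obtain both formulae by differentiating the Hankel determinant $\Hfxn_{N}$ and re-expressing the derivatives through the two shifted determinants $\Hfxn_{N,\boldsymbol{\lambda}_{2,1}}$ and $\Hfxn_{N,\boldsymbol{\lambda}_{2,2}}$, and then inverting a $2\times 2$ linear system. The only tools required are multilinearity of a determinant in its columns, the relation $\partial\hfxn_{n}/\partial t_{1}=-N^{-2}\hfxn_{n+1}$ from \eqref{generalrecursiverelationfortextcrg2} (equivalently \eqref{t1operateongeneralG}), the identity \eqref{corresponding relation} of Proposition~\ref{generalderivative and translations}, the cofactor identity stating that $\mathrm{Tr}(\mathrm{adj}(\mathbf{A})\mathbf{B})$ equals the sum over $j$ of the determinant of the matrix obtained from $\mathbf{A}$ by replacing its $j$-th column with the $j$-th column of $\mathbf{B}$, and the elementary observation that if a column of one of our matrices, after a shift, coincides up to a scalar with a column already present, then the corresponding determinant vanishes; consequently only the last one or two columns ever contribute.

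First I would compute $\partial\Hfxn_{N}/\partial t_{1}$: differentiating $\Hfxn_{N}=\det(\hfxn_{i+j})$ column by column, the derivative of column $j$ is $-N^{-2}$ times column $j+1$ for every $j\le N-2$, so all those terms vanish, and only $j=N-1$ survives, producing the last column shifted up by one; that is, $\partial\Hfxn_{N}/\partial t_{1}=-N^{-2}\Hfxn_{N,\boldsymbol{\lambda}_{1,1}}$ with $\boldsymbol{\lambda}_{1,1}=(1)$. Differentiating once more --- or, equivalently, applying \eqref{t1operateongeneralG} with $\boldsymbol{\lambda}=\boldsymbol{\lambda}_{1,1}$ and then expanding $\Hfxn_{N,\boldsymbol{\lambda}_{1,1},1}$ by the cofactor identity above --- the only surviving contributions come from the $(N-2)$-nd and $(N-1)$-st columns of $\mathbf{A}_{N,\boldsymbol{\lambda}_{1,1}}$: shifting column $N-2$ up by one places both of the last two columns into the shift $\boldsymbol{\lambda}_{2,2}=(1,1)$, while shifting column $N-1$ up by one places the last column into the shift $\boldsymbol{\lambda}_{2,1}=(2)$. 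This gives
\[
\frac{\partial^{2}\Hfxn_{N}}{\partial t_{1}^{2}}=\frac{1}{N^{4}}\bigl(\Hfxn_{N,\boldsymbol{\lambda}_{2,1}}+\Hfxn_{N,\boldsymbol{\lambda}_{2,2}}\bigr).
\]

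Next I would apply \eqref{corresponding relation} with $\boldsymbol{\lambda}=\emptyset$ and $q=2$, obtaining $\partial\Hfxn_{N}/\partial t_{2}=N^{-2}\Hfxn_{N,\emptyset,2}+t_{2}^{R_{2}}N^{-2R_{2}-2}\mathcal{E}_{2,\emptyset}^{(2)}(t_{1},\ldots,t_{k};R_{2})$, and then identify $\Hfxn_{N,\emptyset,2}=\mathrm{Tr}\bigl(\mathrm{adj}(\mathbf{A}_{N,\emptyset})\mathbf{A}_{N,\mathcal{S}_{2}\emptyset}\bigr)$ by the same cofactor expansion: replacing column $j$ of $\mathbf{A}_{N,\emptyset}$ with $(\hfxn_{i+j+2})_{i}$ produces a repeated column, hence $0$, for $j\le N-3$; for $j=N-1$ it produces $\Hfxn_{N,\boldsymbol{\lambda}_{2,1}}$ directly; and for $j=N-2$ it produces a matrix whose last two columns are $(\hfxn_{i+N})_{i}$ and $(\hfxn_{i+N-1})_{i}$, which becomes $\mathbf{A}_{N,\boldsymbol{\lambda}_{2,2}}$ after a single column transposition, hence contributes $-\Hfxn_{N,\boldsymbol{\lambda}_{2,2}}$. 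Therefore
\[
\frac{\partial\Hfxn_{N}}{\partial t_{2}}=\frac{1}{N^{2}}\bigl(\Hfxn_{N,\boldsymbol{\lambda}_{2,1}}-\Hfxn_{N,\boldsymbol{\lambda}_{2,2}}\bigr)+\frac{t_{2}^{R_{2}}}{N^{2R_{2}+2}}\mathcal{E}_{2,\emptyset}^{(2)}(t_{1},\ldots,t_{k};R_{2}),
\]
and solving these two displayed identities for $\Hfxn_{N,\boldsymbol{\lambda}_{2,1}}$ and $\Hfxn_{N,\boldsymbol{\lambda}_{2,2}}$ yields precisely the two formulae in the statement.

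The step I expect to demand the most care is the bookkeeping inside the cofactor and column-replacement expansions: one must track precisely which replaced column renders the matrix singular (so that the term drops out) and the sign produced when the last two columns are permuted into the standard shifted-by-a-partition order. That transposition is the source of the relative minus sign between the $\boldsymbol{\lambda}_{2,1}$ and $\boldsymbol{\lambda}_{2,2}$ contributions, and hence of the opposite signs carried by the $\partial\Hfxn_{N}/\partial t_{2}$ and $\mathcal{E}_{2,\emptyset}^{(2)}$ terms in the two final formulae. One should also keep in mind the convention $\lambda_{m+1}=\cdots=\lambda_{N}=0$ and, implicitly, $N\ge 2$, so that $\boldsymbol{\lambda}_{2,2}=(1,1)$ genuinely indexes a shifted matrix.
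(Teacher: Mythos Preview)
Your proof is correct and follows essentially the same route as the paper's: the paper records the linear relation
\[
\begin{pmatrix}\Hfxn_{N,\boldsymbol{\lambda}_{2,1}}\\ \Hfxn_{N,\boldsymbol{\lambda}_{2,2}}\end{pmatrix}=\begin{pmatrix}1/2&1/2\\1/2&-1/2\end{pmatrix}\begin{pmatrix}\Hfxn_{N,\boldsymbol{\lambda}_{1,1},1}\\ \Hfxn_{N,\emptyset,2}\end{pmatrix}
\]
``immediately from definitions'' and then invokes Proposition~\ref{generalderivative and translations} to rewrite the right-hand side, which is exactly the $2\times2$ system you derive and invert; your write-up simply unpacks the ``from definitions'' step via the explicit column-replacement and cofactor bookkeeping.
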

\begin{proof}
Immediately from definitions, we have
\[
\begin{pmatrix}
\Hfxn_{N,\boldsymbol{\lambda}_{2,1}} \\
\Hfxn_{N,\boldsymbol{\lambda}_{2,2}}
\end{pmatrix} = 
\begin{pmatrix}
1/2 & 1/2 \\
1/2 & -1/2
\end{pmatrix}
\begin{pmatrix}
\Hfxn_{N,\boldsymbol{\lambda}_{1,1},1} \\
\Hfxn_{N,\emptyset,2}
\end{pmatrix}.
\]
Now, making use of Proposition \ref{generalderivative and translations}, while also noting that $\Hfxn_{N,\emptyset,1}=\Hfxn_{N,\blambda_{1,1}}$,
we have 
\beas
\Hfxn_{N,\lambda_{1,1},1}=N^4\frac{\partial^2\Hfxn_{N}}{\partial t_{1}^2},
\eeas
and
\beas
\Hfxn_{N,\emptyset,2}=N^2\frac{\partial \Hfxn_{N}}{\partial t_{2}}-\frac{t_{2}^{R_{2}}}{N^{2R_{2}}}
\mathcal{E}_{2,\emptyset}^{(2)}(t_{1},\ldots,t_{k};R_{2}).
\eeas
The claim now follows using these in combination.
\end{proof}

\subsubsection{A recursive  relation for $\Hfxn_{N,\boldsymbol{\lambda}_{n,j}}$}
Next we deduce a recursive formula for $\Hfxn_{N,\boldsymbol{\lambda}_{n,j}}$ for $n\geq 3$ and $j=1,\ldots,n$.
Before stating the result, we introduce some notation.
\begin{defn}
    For integers $m \geq 0$, and $q,z$ with $1\leq q\leq z$, we define
\begin{equation*}
    \Hfxn_N^{(m)}[z;q]\overset{\textnormal{def}}{=}\left(\Hfxn_{N,\boldsymbol{\lambda}_{z,q}}, \Hfxn_{N,\boldsymbol{\lambda}_{z,q+1}},\ldots, \Hfxn_{N,\boldsymbol{\lambda}_{z,z}},0,\ldots,0 \right)^\textnormal{T},
\end{equation*}
where the number of $0$'s are determined by $m$.
\end{defn}
The recursive identities for $\Hfxn_{N,\boldsymbol{\lambda}_{n,j}}$ take a more concise form when represented in terms of the vectors given as above, where certain vectors will be given as linear combinations of other such vectors upon multiplication by appropriate matrices. Explicit expressions for these matrices are given in the Appendix. In what follows, for the sake of brevity, we write $\partial_m$ to mean $\frac{\partial}{\partial t_m}$.

\begin{prop}\label{prop:recursivevector}
Let $N,k,l\in \mathbb{N}$ be integers such that $l\geq 3, k\geq 2, N\geq 1$. Let $R_{2},\ldots,R_{k}\in \mathbb{N}$. Then, there exist matrices $\mathbf{A}^{(l)}$ and $\mathbf{P}_m^{(l)}$ for $m=0,1,\ldots,k+1$, among which only $\mathbf{P}_2^{(l)}$ has $N$-dependent entries, such that the following holds:
    \begin{align*}
        \Hfxn_{N}^{(0)}&[l;1]= N^2 \frac{k t_k (-1)^{k} \mathbf{A}^{(l)}}{t_1+2t_2}\partial_1\Hfxn_{N}^{(1)}[l+k-2;k]+ \frac{1}{t_1+2t_2} \sum_{m=1}^{k-2}  \mathsf{d}_{m+1,m+2}  (-1)^m \mathbf{P}_{m+2}^{(l)}   \Hfxn_{N}^{(0)}[l+m;1]  \\
        &+\mathbf{A}^{(l)} \Bigg(N^2\partial_1+\frac{N^2}{t_1+2t_2}\Bigg(
    \sum_{m=3}^k  \mathsf{d}_{m-1,m} \partial_{m-1} + kt_k \partial_k \Bigg) \Bigg) \Hfxn_{N}^{(1)}[l-1;1]
   \\
    &+ \left( \frac{t_1}{t_1+2t_2} \mathbf{P}_1^{(l)} + \frac{N^2}{t_1+2t_2} {\mathbf{P}}_0^{(l)}  \right) \Hfxn_{N}^{(0)}[l-1;1]
+ \frac{N^2}{t_1+2t_2} \mathbf{P}_2^{(l)} \Hfxn_{N}^{(0)}[l-2;1] \\ &+\frac{\mathbf{A}^{(l)}}{t_1+2t_2} \sum_{m=0}^{k-3}\mathsf{d}_{m+2,m+3}(-1)^m (N^2 \partial_1 ) \Hfxn_{N}^{(1)}[l+m;m+2]\\ & + \frac{1}{t_1+2t_2} (-1)^{k-1} k t_k \mathbf{P}_{k+1}^{(l)}\Hfxn_{N}^{(0)}[l+k-1;1]\\& +\frac{\mathbf{A}^{(l)} N^2kt_{k}}{t_{1}+2t_{2}}\sum_{h=2}^{k-1} (-1)^{k+h} \partial_h \Hfxn_{N}^{(1)}[l+k-1-h;k+1-h] \\& + \frac{\mathbf{A}^{(l)} N^2}{t_1+2t_2} \sum_{m=4}^k (-1)^{m-1} \mathsf{d}_{m-1,m} \sum_{h=2}^{m-2} (-1)^h \partial_h \Hfxn_{N}^{(1)}[l+m-2-h;m-h]\\
&+\mathbf{A}^{(l)}\frac{1}{t_{1}+2t_{2}}\sum_{m=2}^{k}\frac{t_{m}^{R_{m}+1}}{N^{2R_{m}}}\tilde{\mathcal{R}}_m^{(1)}-\mathbf{A}^{(l)}\frac{kt_{k}}{t_{1}+2t_{2}}\sum_{h=2}^{k}(-1)^{k+h}\frac{t_{h}^{R_{h}}}{N^{2R_{h}}}\tilde{\mathcal{R}}_{k+1,h}^{(2)}\\
&-\mathbf{A}^{(l)} \Bigg(\frac{1}{t_1+2t_2}\Big(
    \sum_{m=3}^k  (-1)^{m-1}\mathsf{d}_{m-1,m}\sum_{h=2}^{m-1}(-1)^{h} \frac{t_{h}^{R_{h}}}{N^{2R_{h}}}\Big) \Bigg)\tilde{\mathcal{R}}_{m,h}^{(2)},
    \end{align*}
    where we denote
    \begin{equation*}
        \mathsf{d}_{p,q}= \mathsf{d}_{p,q}(t_1,\ldots,t_k)\overset{\textnormal{def}}{=} pt_p-qt_q,
    \end{equation*}
and for $m=2,\ldots,k$,
\beas
\tilde{\mathcal{R}}_m^{(1)} =
(0,\dots,\sum_{h=2}^{j}(-1)^{h}\mathcal{E}^{(1)}_{m,\blambda_{l-h,j+1-h},h},\dots,\mathcal{E}^{(1)}_{m,\emptyset,l})^{T}_{j=2,\ldots,l-1}\in \mathbb{R}^l
\eeas
with $\mathcal{E}_{m,\boldsymbol{\lambda},h}^{(1)}$ given as (\ref{definitionoferror1}),
and 
\beas
\tilde{\mathcal{R}}_{m,h}^{(2)}=(\mathcal{E}^{(2)}_{h,\blambda_{l+m-2-h,m-h}},\dots,\mathcal{E}^{(2)}_{h,\blambda_{l+m-2-h,j+m-2-h}},\ldots,0)_{j=2,\ldots,l}^{T}\in \mathbb{R}^l
\eeas
with
$\mathcal{E}_{h,\boldsymbol{\lambda}}^{(2)}$ given as (\ref{definitionoferror2}).
\end{prop}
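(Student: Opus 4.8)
The plan is to derive the vector identity by applying the two basic tools assembled in the preceding subsections — the three-term recursion for $\hfxn_n$ from Proposition~\ref{relationforhlem} (lifted to the determinant level as Lemma~\ref{generalexpansiondependingonrecursive}) and the differentiation formulas from Proposition~\ref{generalderivative and translations} — componentwise to each entry of the vector $\Hfxn_N^{(0)}[l;1]$, and then organizing the resulting scalar identities into matrix form. Concretely, the $j$-th component of $\Hfxn_N^{(0)}[l;1]$ is $\Hfxn_{N,\boldsymbol{\lambda}_{l,j}}$ for $j=1,\ldots,l$, and the partition $\boldsymbol{\lambda}_{l,j}=(l-j+1,1,\ldots,1)$ is a hook. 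The first step is to express each $\Hfxn_{N,\boldsymbol{\lambda}_{l,j}}$ in terms of the translation-operator quantities $\Hfxn_{N,\boldsymbol{\lambda},m}$ and $\Hfxn^{(w)}_{N,\boldsymbol{\lambda},m}$ for shorter partitions $\boldsymbol{\lambda}$ — this is exactly the role played in the length-two case by Proposition~\ref{expressionsforlength2}, and the general hook case requires peeling off one part of the hook at a time, which produces linear combinations governed by fixed combinatorial matrices (these are the $\mathbf{A}^{(l)}$ and $\mathbf{P}_m^{(l)}$, whose entries I would record explicitly in the Appendix as promised).

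**Key steps in order.** (i) For a fixed hook $\boldsymbol{\lambda}_{l,j}$, use Proposition~\ref{generalderivative and translations} to rewrite $\Hfxn_{N,\boldsymbol{\lambda}_{l,j}}$ and the shifted determinants $\Hfxn_{N,\boldsymbol{\lambda}_{l-1,\cdot},m}$ in terms of the partial derivatives $\partial_1\Hfxn_{N,\boldsymbol{\lambda}}$ and $\partial_q\Hfxn_{N,\boldsymbol{\lambda}}$ plus the error terms $\mathcal{E}^{(2)}_{q,\boldsymbol{\lambda}}$ given by (\ref{definitionoferror2}); the $t_q^{R_q}/N^{2R_q+2}$ prefactors there are the source of the $\tilde{\mathcal{R}}^{(2)}_{m,h}$ vectors. (ii) Apply the recursion~(\ref{generalizationiterationformula}) of Lemma~\ref{generalexpansiondependingonrecursive} to lower the shift index $m$; iterating this recursion along the hook generates the telescoping combinations $\sum_{h=2}^{j}(-1)^h(\cdots)$ visible in the definitions of $\tilde{\mathcal{R}}^{(1)}_m$, and the coefficients $\mathsf{d}_{p,q}=pt_p-qt_q$ that appear throughout are inherited verbatim from the coefficients in (\ref{genralrecursiverelationfortextcrg1}) and (\ref{generalizationiterationformula}). (iii) Collect all terms: the "clean" contributions reassemble into $\mathbf{A}^{(l)}$-multiples of derivative-shifted vectors $\partial_\bullet \Hfxn_N^{(1)}[\cdot;\cdot]$ and $\mathbf{P}^{(l)}_m$-multiples of the plain vectors $\Hfxn_N^{(0)}[\cdot;1]$, while all the boundary contributions (those proportional to a power of $t_q$ equal to $R_q$ or $R_q+1$) are swept into the two residual vectors $\tilde{\mathcal{R}}^{(1)}_m$ and $\tilde{\mathcal{R}}^{(2)}_{m,h}$. (iv) Finally, track which matrices depend on $N$: only the coefficient $\frac{N^2(a+2N+b-1)}{t_1+2t_2}$ multiplying $\Hfxn_{N,\boldsymbol{\lambda},m-2}$ in (\ref{generalizationiterationformula}) carries the $(a+2N+b-1)$ factor, and when this is collected into the $m-2$ shift bookkeeping it lands precisely in the matrix $\mathbf{P}_2^{(l)}$; all other structural matrices have $N$-independent entries, which is the assertion made in the statement.

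**Main obstacle.** The genuinely delicate point is the bookkeeping of the error terms and the verification that the $N$-dependence is confined to $\mathbf{P}_2^{(l)}$. Because the recursion~(\ref{generalizationiterationformula}) raises the shift index by as much as $k+1$ and the identity~(\ref{corresponding relation}) for $\partial_q$ introduces a shift by $qR_q+q$, one must carefully check that when these are composed along a hook of length $l$, the indices that appear stay within the ranges recorded in the definitions of $\Hfxn_N^{(m)}[z;q]$ (i.e. the padding zeros absorb exactly the right number of overflow slots), and that the error vectors $\tilde{\mathcal{R}}^{(1)}_m,\tilde{\mathcal{R}}^{(2)}_{m,h}$ really do lie in $\mathbb{R}^l$ with the stated sparsity pattern. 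This is a finite but intricate index-chase; I would handle it by induction on the hook's second parameter $j$ (peeling parts of the hook one at a time), using Proposition~\ref{expressionsforlength2} as the base case $l=2$ and reducing the length-$l$ identity to the length-$(l-1)$ identity together with one application each of Lemma~\ref{generalexpansiondependingonrecursive} and Proposition~\ref{generalderivative and translations}. The explicit forms of $\mathbf{A}^{(l)}$, $\mathbf{P}^{(l)}_m$ are then read off from this induction and deferred to the Appendix, since — as with the analogous matrices in \cite[Section~4.3]{assiotis2024exchangeablearraysintegrablesystems} — their precise entries are never needed, only the fact that they exist and that $\mathbf{P}^{(l)}_2$ alone is $N$-dependent.
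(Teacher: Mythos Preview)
Your overall approach---apply Lemma~\ref{generalexpansiondependingonrecursive} componentwise to express the hook-shifted determinants via the translation operators, then convert back using Proposition~\ref{generalderivative and translations}---matches the paper's. The main structural difference is that the paper does \emph{not} run an induction on the hook length from scratch. Instead, it invokes two black-box combinatorial identities from \cite{keating-fei} (their Theorem~17 and Proposition~20), which hold for Hankel determinants with arbitrary entries and directly supply the matrices $\mathbf{A}^{(l)}$ and $\mathbf{P}_m^{(l)}$. Concretely, the proof first uses \cite[Theorem~17]{keating-fei} to write $\Hfxn_N^{(0)}[l;1]=\mathbf{A}^{(l)}(-\mathcal{W}_0+\cdots)$ for explicit vectors $\mathcal{W}_m$ built from $\Hfxn_{N,\blambda,h}$ and $\Hfxn^{(w)}_{N,\blambda,h}$; then splits $\mathcal{W}_m=\mathcal{W}_m'+\mathcal{W}_m^*$, handles $\mathcal{W}_m^*$ via Proposition~\ref{generalderivative and translations} (producing the $\tilde{\mathcal{R}}^{(2)}$ errors), and uses \cite[Theorem~17, Proposition~20]{keating-fei} once more to turn $\mathbf{A}^{(l)}\mathcal{W}_m'$ into $\mathbf{P}_m^{(l)}\Hfxn_N^{(0)}[l+m-2;1]$. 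Your inductive scheme would in effect re-prove those cited results, which is feasible but unnecessary.

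There is one genuine gap in step~(iv) of your sketch. You claim that only the coefficient $N^2(a+2N+b-1)$ in front of $\Hfxn_{N,\boldsymbol{\lambda},m-2}$ in~(\ref{generalizationiterationformula}) introduces $N$-dependence. But the coefficient $N^2(-a-2N+1)$ of $\Hfxn_{N,\boldsymbol{\lambda},m-1}$ is equally $N$-dependent, and \emph{a priori} should contaminate $\mathbf{P}_0^{(l)}$ (since it lands in the $m-1$ shift, not the $m-2$ shift). The paper resolves this by observing a cancellation: the matrices coming from $(-a-2N+1)\Hfxn_{N,\boldsymbol{\lambda},m-1}$ and from $\Hfxn^{(w)}_{N,\boldsymbol{\lambda},m-1}$ (the latter via a matrix $\widetilde{\mathbf{P}}_1^{(l)}$ that itself carries a $2N$) combine as $\mathbf{P}_0^{(l)}=(-a-2N+1)\mathbf{P}_1^{(l)}+\widetilde{\mathbf{P}}_1^{(l)}$, and the linear-in-$N$ pieces cancel exactly. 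Without this observation the claim that only $\mathbf{P}_2^{(l)}$ depends on $N$ is unproven, and this is precisely what is needed later to show that the limiting polynomials $\mathcal{P}_m^{(a)}$ in Theorem~\ref{mainpainlevethm} are $b$-independent.
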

\begin{proof}
Applying Lemma \ref{generalexpansiondependingonrecursive} and \cite[Theorem 17]{keating-fei}, 
    \begin{align*}
         \Hfxn_{N}^{(0)}[l;1]=&\mathbf{A}^{(l)}\Bigg(-\mathcal{W}_{0}+\frac{N^2(a+2N+b-1)}{t_{1}+2t_{2}}\mathcal{W}_{2}+\frac{N^2(-a-2N+1)+t_{1}}{t_{1}+2t_{2}}\mathcal{W}_{1}\\&-\frac{N^2}{t_{1}+2t_{2}}\mathcal{W}_{2}^{(w)}\nonumber+\frac{N^2}{t_{1}+2t_{2}}\mathcal{W}_{1}^{(w)}
+\frac{1}{t_1+2t_2} \sum_{m=3}^k \mathsf{d}_{m-1,m} \mathcal{W}_m +
\frac{kt_k}{t_1+2t_2} \mathcal{W}_{k+1}\Bigg)\\
&+\mathbf{A}^{(l)}\frac{1}{t_{1}+2t_{2}}\sum_{m=2}^{k}\frac{t_{m}^{R_{m}+1}}{N^{2R_{m}}}\tilde{\mathcal{R}}_m^{(1)}
    \end{align*}
    where $\mathcal{W}_{0}\in \mathbb{R}^{l}$ is given by $(\Hfxn_{N,\blambda_{l-1,1},1},\dots,\Hfxn_{N,\blambda_{l-1,l-1},1},0)^{T}$, and the vectors $\mathcal{W}_m \in \mathbb{R}^l$, defined whenever $m=1,2,\ldots,k+1$, are given by: 
    \begin{equation*}
        (\mathcal{W}_m)_i \overset{\textnormal{def}}{=} 
    \begin{cases}

        0,  & i=1, \\
             \noalign{\vskip4pt}
         \displaystyle\sum_{h=2}^{i} (-1)^h \Hfxn_{N,\blambda_{l-h,i+1-h},h+\beta(m)-2} &  2\leq i\leq l-1, \\ 
         \noalign{\vskip4pt}
         \Hfxn_{N,\emptyset, l+\beta(m)-2} & i=l,
       \end{cases}
    \end{equation*}
    where $\beta(m)=0$ if $m=2$, and $\beta(m)=m$ otherwise.
Moreover, we define ${\mathcal{W}}_1^{(w)}, {\mathcal{W}}_2^{(w)} \in \mathbb{R}^l$ analogously to $\mathcal{W}_1, \mathcal{W}_2$, where we swap $\Hfxn$ with  ${\Hfxn}^{(w)}$. Moving on, we write each $\mathcal{W}_m$, for $m=3, 4,\ldots,k+1$, as a sum $\mathcal{W}_m = \mathcal{W}_m'+\mathcal{W}_m^* $ where $\mathcal{W}_m'$ and $\mathcal{W}_m^* $ are given by, 
    \begin{align*}
        (\mathcal{W}_m')_j &\overset{\textnormal{def}}{=} 
    \begin{cases}
         \displaystyle\sum_{h=1}^{j+m-2} (-1)^{h+m-2} \Hfxn_{N,\blambda_{l+m-2-h,j+m-1-h},h} &  j=1,\ldots, l-1, \\ 
         \noalign{\vskip4pt}
         \Hfxn_{N,\emptyset, l+m-2} & j=l,
       \end{cases}
       \\
        (\mathcal{W}_m^*)_j &\overset{\textnormal{def}}{=} 
        \begin{cases}
         \displaystyle\sum_{h=1}^{m-1} (-1)^{h+m-1} \Hfxn_{N,\blambda_{l+m-2-h,j+m-1-h},h} &  j=1,\ldots, l-1, \\ 
         \noalign{\vskip4pt}
         0 & j=l.
       \end{cases}
    \end{align*}
Next, we make use of (\ref{t1operateongeneralG}) for $\mathcal{W}_{0}$ and (\ref{corresponding relation}) for $\mathcal{W}_m^*$; while for the other terms, we apply \cite[Theorem 17 and Proposition 20]{keating-fei}, leading to the identities:
\begin{align*}
    \mathbf{A}^{(l)}\mathcal{W}_1=\mathbf{P}_1^{(l)} \Hfxn_{N}^{(0)}[l-1;1]\: &,\;\;\;\;  \mathbf{A}^{(l)}{\mathcal{W}}_1^{(w)}=\widetilde{\mathbf{P}}_1^{(l)} \Hfxn_{N}^{(0)}[l-1;1],\\
    \mathbf{A}^{(l)}\mathcal{W}_2=\widehat{\mathbf{P}}_2^{(l)} \Hfxn_{N}^{(0)}[l-2;1]\: &,\;\;\;\;  \mathbf{A}^{(l)}{\mathcal{W}}_2^{(w)}=\widetilde{\mathbf{P}}_2^{(l)} \Hfxn_{N}^{(0)}[l-2;1],\\ \noalign{\vskip4pt}
    \mathbf{A}^{(l)}\mathcal{W}_m'=  \: \mathbf{P}_m^{(l)} & \Hfxn_{N}^{(0)}[l+m-2;1],\\
    \intertext{for matrices $\widetilde{\mathbf{P}}_{1}^{(l)},\widetilde{\mathbf{P}}_{2}^{(l)},\widehat{\mathbf{P}}_{2}^{(l)}$ whose explicit formulae can also be found in the Appendix. In particular, letting}
    {\mathbf{P}}_{0}^{(l)}\defeq (-a-2N+1)\mathbf{P}_{1}^{(l)}+\widetilde{\mathbf{P}}_{1}^{(l)},& \;\;\; \textnormal{and} \;\;\; \mathbf{P}_{2}^{(l)}\defeq (a+2N+b-1)\widehat{\mathbf{P}}_{2}^{(l)}-\widetilde{\mathbf{P}}_{2}^{(l)},
\end{align*}
we arrive at the desired identities. Here, it is important to observe that although 
 $-2N\mathbf{P}_{1}^{(l)}$ and $\widetilde{\mathbf{P}}_{1}^{(l)}$ both contain linear terms in $N$, the linear combination that gives ${\mathbf{P}}_{0}^{(l)}$ allows for cancellations so that there is no $N$-dependence in the final expression for ${\mathbf{P}}_{0}^{(l)}$.
\end{proof}
At this point, it is worth mentioning that \( \mathbf{P}^{(l)}_m \), for \( m = 0, 1, \ldots, k+1 \) (with \( m \neq 0, 2 \)), are identical to those (denoted by \( \mathbf{Q}^{(l)}_m \)) appearing in \cite[Proposition 4.11]{assiotis2024exchangeablearraysintegrablesystems}.
This is due to our use of results by the third and fourth authors, namely \cite[Theorem 17 and Proposition 20]{keating-fei}, applied to shifts of general Hankel determinants, which hold independently of the specific entries.\\

\noindent Moving on, we will expand $\frac{\partial \Hfxn_N}{\partial t_{q}}$ as a power function of the variables $t_{q},\ldots,t_{k}$, for $q=2,\ldots,k$. The main idea is to repeatedly use (\ref{generalizationiterationformula}) for the right-hand side of (\ref{corresponding relation}), mainly for terms with an increased sub-index every time. For the purposes of writing such formulae in a more concise manner, we additionally introduce:
\begin{align}\label{defofdi}
  &\mathsf{Z}_{i}(R_{2},\ldots,R_{k};t_{1},\ldots, t_{k})\nonumber\\
  =&N^2\left\{(a+2N+b-1) \Hfxn_{N,\emptyset,i}-{\Hfxn}^{(w)}_{N,\emptyset,i} +(1-2N-a)\Hfxn_{N,\emptyset, i+1}
+{\Hfxn}^{(w)}_{N,\emptyset, i+1}+\frac{t_1}{N^2}\Hfxn_{N,\emptyset,i+1}\right\}.
\end{align}
Here, $a,b$ are the same parameters that appeared in the definition of
$\Hfxn_{N,\blambda,h}$ and $\Hfxn^{(w)}_{N,\blambda,h}$.
\begin{lem}\label{generaliedexpandingofpartialderivatives}
Let $N,R_{2},\ldots,R_{k}\in \mathbb{N}$, and suppose $\mathcal{E}_{u,\boldsymbol{\lambda},m}^{(1)}$ and $\mathcal{E}_{q,\boldsymbol{\lambda}}^{(2)}$ are given as in (\ref{definitionoferror1}) and (\ref{definitionoferror2}), respectively. For $k=2$, $m\in \mathbb{N}$, we have that
\bea\label{special case k=2}
N^2\frac{\partial \Hfxn_{N}}{\partial t_{2}}
&=&\frac{\mathsf{Z}_{0}}{t_{1}+2t_{2}}+
\sum_{i=2}^{m}\frac{\mathsf{Z}_{i-1}(2t_{2})^{i-1}}{(t_{1}+2t_{2})^{i}}+\frac{(2t_{2})^m}{(t_{1}+2t_{2})^{m}}\Hfxn_{N,\emptyset,m+2}\nonumber\\&&+\frac{t_{2}^{R_{2}}}{N^{2R_{2}}}\mathcal{E}_{2,\emptyset}^{(2)}+\frac{t_{2}^{R_{2}+1}}{N^{2R_{2}}}\sum_{i=1}^{m}\frac{(2t_{2})^{i-1}\mathcal{E}^{(1)}_{2,\emptyset,i+1}}{(t_{1}+2t_{2})^{i}},
\eea
and for $k\geq 3$, $2\leq q\leq k$ and $m\in \mathbb{N}$, we have that
\begin{align*}
&N^2\frac{\partial \Hfxn_{N}}{\partial t_{q}}=\sum_{i=2}^{q-1}\frac{t_{i}}{(t_1+2t_2)^m}V_{i}(t_{1},\ldots,t_{k})+\frac{\mathsf{Z}_{q-2}}{t_1+2t_2}+\sum_{i=2}^m \frac{1}{(t_1+2t_2)^i} \sum_{j=0}^{i-1} \binom{i-1}{j} (kt_k)^j \nonumber \\
&
\sum_{\substack{h_{2}+\cdots+h_{k}=i-1-j\\h_{2}=\cdots=h_{q-1}=0\\h_{q}\geq 0,\ldots,h_{k}\geq 0}}\prod_{n=q}^{k}(nt_{n})^{h_{n}} 
\Bigg(\sum_{\substack{h_3' \leq h_3 \\ \ldots \\ h_{k-1}'\leq h_{k-1}}}c_{h_{2},h_{3}',\ldots,h_{k-1}',h_{k}}^{(i,j)}\Big(\mathsf{Z}_{q+\sum_{n=1}^{k-2}nh_{n+1}+(k-2)h_{k}-\sum_{n=3}^{k-1}h_{n}'+(k-1)j-2}\\
&\;\;\;\; +\sum_{u=2}^{k}\frac{t_{u}^{R_{u}+1}}{N^{2R_{u}}}\mathcal{E}^{(1)}_{u,\emptyset,q+\sum_{n=1}^{k-2}nh_{n+1}+(k-2)h_{k}-\sum_{n=3}^{k-1}h_{n}'+(k-1)j}\Big)\Bigg) + \frac{1}{(t_1+2t_2)^m} \nonumber \\
 & \;\;\;\; \sum_{j=0}^{m} \binom{m}{j} (kt_k)^j
\sum_{\substack{h_{2}+\cdots+h_{k}=m-j\\h_{2}=\cdots=h_{q-1}=0\\h_{q}\geq 0,\ldots,h_{k}\geq 0}}\prod_{n=q}^{k}(nt_{n})^{h_{n}}\Bigg(\sum_{\substack{h_3' \leq h_3 \\ \ldots \\ h_{k-1}'\leq h_{k-1}}}c_{h_{2},h_{3}',\ldots,h_{k-1}',h_{k}}^{(m+1,j)}\Hfxn_{N,\emptyset, q+\sum_{n=1}^{k-2}nh_{n+1}+(k-2)h_{k}-\sum_{n=3}^{k-1}h_{n}'+(k-1)j}\Bigg)\\
& \;\;\;\; +\frac{t_{q}^{R_{q}}}{N^{2R_{q}}}\mathcal{E}_{q,\emptyset}^{(2)}+\frac{1}{t_{1}+2t_{2}}\sum_{u=2}^{k}\frac{t_{u}^{R_{u}+1}}{N^{2R_{u}}}\mathcal{E}_{u,\emptyset,q}^{(1)}.
\end{align*}
Here, $V_{2},\ldots,V_{q-1}$ are $C^\infty$ functions in the variables $t_{1},\ldots,t_{k}$, explicit expressions for which are ommitted here; and 
\bea\label{coefficientsbeforegeneralD}
&&c_{h_{2},h_{3}',\ldots,h_{k-1}',h_{k}}^{(i,j)}=\frac{(i-1-j)!(-1)^{h_{3}'+\cdots+h_{k-1}'+h_{k}}}{h_2! h_k! \prod_{n=3}^{k-1}(h_{n}-h_{n}')!h_n'!}.
\eea
\end{lem}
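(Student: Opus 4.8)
The plan is to start from the base relation supplied by Proposition~\ref{generalderivative and translations}, namely
\[
N^2\frac{\partial \Hfxn_{N}}{\partial t_{q}}=\Hfxn_{N,\emptyset,q}+\frac{t_{q}^{R_{q}}}{N^{2R_{q}}}\mathcal{E}_{q,\emptyset}^{(2)},
\]
and then to repeatedly \emph{push the level index upward} using Lemma~\ref{generalexpansiondependingonrecursive}. The observation that makes this tractable is that, taking $\blambda=\emptyset$ in \eqref{generalizationiterationformula}, the first five terms on its right-hand side are by the very definition \eqref{defofdi} equal to $\tfrac{1}{t_{1}+2t_{2}}\mathsf{Z}_{m-2}$, so that the recursion may be rewritten in the compact ``solved upward'' form
\[
\Hfxn_{N,\emptyset,m}=\frac{\mathsf{Z}_{m-2}}{t_{1}+2t_{2}}+\frac{1}{t_{1}+2t_{2}}\Bigl(kt_{k}\,\Hfxn_{N,\emptyset,m+k-1}+\sum_{q'=3}^{k}\mathsf{d}_{q'-1,q'}\,\Hfxn_{N,\emptyset,m+q'-2}\Bigr)+\frac{1}{t_{1}+2t_{2}}\sum_{u=2}^{k}\frac{t_{u}^{R_{u}+1}}{N^{2R_{u}}}\mathcal{E}_{u,\emptyset,m}^{(1)},
\]
with $\mathsf{d}_{p,q}=pt_{p}-qt_{q}$ as in Proposition~\ref{prop:recursivevector}. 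Substituting this identity into itself, starting from $m=q$ and stopping after $m$ steps, produces the stated expansion: each application peels off one new $\mathsf{Z}$-term (together with its companion $\mathcal{E}^{(1)}$-terms) carrying an extra factor $1/(t_{1}+2t_{2})$ and an extra ``shift monomial'' drawn from $\{kt_{k}\}\cup\{\mathsf{d}_{q'-1,q'}:3\le q'\le k\}$, while the unresolved higher-level determinants $\Hfxn_{N,\emptyset,\bullet}$ accumulate into the explicit remainder term carrying the $\binom{m}{j}(kt_{k})^{j}\cdots$ prefactor.

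For $k=2$ the recursion degenerates: there is a single shift monomial $kt_{k}=2t_{2}$ and no $\mathsf{d}$-terms, so the ``solved upward'' form reads $\Hfxn_{N,\emptyset,m}=\tfrac{\mathsf{Z}_{m-2}}{t_{1}+2t_{2}}+\tfrac{2t_{2}}{t_{1}+2t_{2}}\Hfxn_{N,\emptyset,m+1}+\tfrac{1}{t_{1}+2t_{2}}\tfrac{t_{2}^{R_{2}+1}}{N^{2R_{2}}}\mathcal{E}_{2,\emptyset,m}^{(1)}$. Iterating from $m=2$ yields a finite geometric series with ratio $\tfrac{2t_{2}}{t_{1}+2t_{2}}$; collecting terms gives exactly \eqref{special case k=2}, where after $m$ steps the unresolved term is $\bigl(\tfrac{2t_{2}}{t_{1}+2t_{2}}\bigr)^{m}\Hfxn_{N,\emptyset,m+2}$ and the $\mathcal{E}^{(1)}$-contributions line up with the appropriate powers of $2t_{2}/(t_{1}+2t_{2})$.

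For $k\ge 3$ the iteration branches, since each step replaces a single $\Hfxn_{N,\emptyset,m'}$ by a linear combination of $k-1$ higher-level determinants. After $i-1$ branching steps one finds, in front of each surviving $\mathsf{Z}$- or $\mathcal{E}^{(1)}$-term of depth $i$, a denominator $(t_{1}+2t_{2})^{i}$ and a product of $i-1$ shift monomials; the level index of the accompanying $\mathsf{Z}$/error is $q-2$ plus the total shift accumulated along the path, where a $kt_{k}$ factor contributes $k-1$ and a $\mathsf{d}_{q'-1,q'}$ factor contributes $q'-2$. The closed form is obtained by organizing these products via the multinomial theorem: the index $j$ records how many of the $i-1$ factors were $kt_{k}$, giving $\binom{i-1}{j}(kt_{k})^{j}$; expanding each remaining $\mathsf{d}_{q'-1,q'}=(q'-1)t_{q'-1}-q't_{q'}$ and reindexing so that the total power of $t_{n}$ equals $h_{n}$ introduces the auxiliary indices $h_{n}'$, which track whether a given $t_{n}$ arose as the ``$+$'' slot of a $\mathsf{d}_{n,n+1}$ factor or the ``$-$'' slot of a $\mathsf{d}_{n-1,n}$ factor (so $h_{2}$ is always of ``$+$'' type, $h_{k}$ always of ``$-$'' type, explaining why only $h_{3}',\dots,h_{k-1}'$ are primed and why the sign is $(-1)^{h_{3}'+\cdots+h_{k-1}'+h_{k}}$), and produces the multinomial weights of \eqref{coefficientsbeforegeneralD}; bookkeeping of the shifts as above yields precisely the level index $q+\sum_{n}nh_{n+1}+(k-2)h_{k}-\sum_{n}h_{n}'+(k-1)j-2$. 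The only paths that can generate a monomial in some $t_{i}$ with $i<q$ are those using a $\mathsf{d}_{q'-1,q'}$ factor with $q'-1<q$ in its ``$+$'' role; since the precise shape of these is not needed below, we absorb them into the $C^{\infty}$ functions $V_{2},\dots,V_{q-1}$ with common denominator $(t_{1}+2t_{2})^{m}$, whose explicit form is not recorded.

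The technical core — and the main obstacle — is the combinatorial bookkeeping of the branched iteration for $k\ge 3$: one must verify, by induction on the iteration depth, that after expansion and reindexing the coefficients stabilize to $c^{(i,j)}_{h_{2},h_{3}',\dots,h_{k-1}',h_{k}}$ with the primed indices constrained by $h_{n}'\le h_{n}$, and that the level index of each $\mathsf{Z}$-term and of the $\mathcal{E}^{(1)}_{u,\emptyset,\bullet}$-term sitting beside it is the displayed linear function of $(h_{2},\dots,h_{k},h_{3}',\dots,h_{k-1}',j)$. This is in the same spirit as \cite[Section~4.3]{assiotis2024exchangeablearraysintegrablesystems}, but is complicated both by the general starting index $q$ (which forces the separation of the $V_{i}$ terms) and by the truncation-induced error terms $\mathcal{E}^{(1)}$, which must be carried along in lockstep with the $\mathsf{Z}$-terms at every stage so they end up attached to the correct levels. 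Once the induction is correctly set up, each individual step is an elementary, if lengthy, manipulation of multinomial expansions.
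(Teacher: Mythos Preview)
Your proposal is correct and follows essentially the same approach as the paper: start from Proposition~\ref{generalderivative and translations}, rewrite the recursion \eqref{generalizationiterationformula} with $\blambda=\emptyset$ in the ``$\mathsf{Z}$-plus-higher-level'' form, and iterate $m$ times. The paper records the intermediate step with the iterated sums $\sum_{l_{1}=1}^{k-2}\cdots\sum_{l_{i-1-j}=1}^{k-2}$ explicitly and then states the multinomial identity that collapses these to the $c^{(i,j)}$ coefficients (saying only ``upon computation, one can easily show''), whereas you give a conceptual explanation of the primed indices $h_{n}'$ as the split between the ``$+$'' and ``$-$'' slots of $\mathsf{d}_{n-1,n}$ versus $\mathsf{d}_{n,n+1}$; both routes amount to the same bookkeeping.
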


\begin{proof}
First, we let $\boldsymbol{\lambda}=\emptyset$, and using \eqref{generalizationiterationformula}, we repeatedly plug in the expression therein for $\Hfxn_{N,\boldsymbol{\lambda},l+q-2}$ as $l$ ranges from $3$ to $k+1$. Using this in combination with Proposition \ref{generalderivative and translations}, we obtain for $q=2,\ldots,k, m\geq 1$,
\beas
N^2\frac{\partial \Hfxn_{N}}{\partial t_{q}}
&=& \Hfxn_{N,\emptyset,q}+\frac{t_{q}^{R_{q}}}{N^{2R_{q}}}\mathcal{E}_{q,\emptyset}^{(2)}\\
&=&\frac{\mathsf{Z}_{q-2}}{t_1+2t_2}+\frac{t_{q}^{R_{q}}}{N^{2R_{q}}}\mathcal{E}_{q,\emptyset}^{(2)}
+\sum_{i=2}^m \frac{1}{(t_1+2t_2)^i} \sum_{j=0}^{i-1} \binom{i-1}{j} (kt_k)^j\\ &&\sum_{l_1=1}^{k-2} \sum_{l_2=1}^{k-2} \cdots \sum_{l_{i-1-j}=1}^{k-2}\Big(\prod_{n=1}^{i-1-j}\big((l_{n}+1)t_{l_{n}+1}-(l_{n}+2)t_{l_{n}+2}\big)\Big)\mathsf{Z}_{q+l_1+\cdots+l_{i-1-j}+(k-1)j-2} \\
&&+\sum_{u=2}^{k}\frac{t_{u}^{R_{u}+1}}{N^{2R_{u}}}\sum_{i=2}^m \frac{1}{(t_1+2t_2)^i} \sum_{j=0}^{i-1} \binom{i-1}{j} (kt_k)^j\sum_{l_1=1}^{k-2}\sum_{l_2=1}^{k-2}\\ &&  \cdots \sum_{l_{i-1-j}=1}^{k-2}\Big(\prod_{n=1}^{i-1-j}\big((l_{n}+1)t_{l_{n}+1}-(l_{n}+2)t_{l_{n}+2}\big)\Big)\mathcal{E}_{u,\emptyset,q+l_1+\cdots+l_{i-1-j}+(k-1)j}^{(1)} \\
&&+\frac{1}{t_{1}+2t_{2}}\sum_{u=2}^{k}\frac{t_{u}^{R_{u}+1}}{N^{2R_{u}}}\mathcal{E}_{u,\emptyset,q}^{(1)}+\frac{1}{(t_1+2t_2)^m}  \sum_{j=0}^{m} \binom{m}{j} (kt_k)^j\\
&&\sum_{l_1=1}^{k-2}\sum_{l_2=1}^{k-2}\cdots \sum_{l_{m-j}=1}^{k-2} \Big(\prod_{n=1}^{m-j}\big((l_{n}+1)t_{l_{n}+1}-(l_{n}+2)t_{l_{n}+2}\big)\Big)\Hfxn_{N,\emptyset,q+l_1+\cdots+l_{m-j}+(k-1)j }.
\eeas
In the equation above, to avoid further notational complexity, we assume that when $i=2,\ldots,m+1$, for $j=i-1$, we have
\beas
\sum_{l_1=1}^{k-2}\sum_{l_2=1}^{k-2}\cdots \sum_{l_{i-1-j}=1}^{k-2} \Big(\prod_{n=1}^{i-1-j}\big((l_{n}+1)t_{l_{n}+1}-(l_{n}+2)t_{l_{n}+2}\big)\Big)\mathsf{Z}_{q+l_1+\cdots+l_{i-1-j}+(k-1)j-2}=\mathsf{Z}_{q+(k-1)(i-1)-2},
\eeas
and for $k=2,$ $j<i-1$,
\beas
\sum_{l_1=1}^{k-2}\sum_{l_2=1}^{k-2}\cdots \sum_{l_{i-1-j}=1}^{k-2} \prod_{n=1}^{i-1-j}\big((l_{n}+1)t_{l_{n}+1}-(l_{n}+2)t_{l_{n}+2}\big)\mathsf{Z}_{q+l_1+\cdots+l_{i-1-j}+(k-1)j-2}=0.
\eeas
The above notation also applies to sums involving $\mathcal{E}_{u,\emptyset,l}^{(1)}$ and $\Hfxn_{N,\emptyset,l}$.
Moving on, upon computation, one can easily show that
\begin{align}
&\sum_{l_1=1}^{k-2} \sum_{l_2=1}^{k-2} \cdots \sum_{l_{i-1-j}=1}^{k-2}\Big(\prod_{n=1}^{i-1-j}\big((l_{n}+1)t_{l_{n}+1}-(l_{n}+2)t_{l_{n}+2}\big)\Big)\mathsf{Z}_{q+l_1+\cdots+l_{i-1-j}+(k-1)j-2}\nonumber\\
&=\sum_{\substack{h_{2}+\cdots+h_{k}=i-1-j\\h_{2}\geq 0,\ldots,h_{k}\geq 0}}\prod_{n=2}^{k}(nt_{n})^{h_{n}}\Bigg(\sum_{\substack{h_3' \leq h_3 \\ \ldots \\ h_{k-1}'\leq h_{k-1}}}c_{h_{2},h_{3}',\ldots,h_{k-1}',h_{k}}^{(i,j)}\mathsf{Z}_{q+\sum_{n=1}^{k-2}nh_{n+1}+(k-2)h_{k}-\sum_{n=3}^{k-1}h_{n}'+(k-1)j-2}\Bigg)\nonumber,
\end{align}
and
\begin{align*}
&\sum_{l_1=1}^{k-2}\sum_{l_2=1}^{k-2}\cdots \sum_{l_{m-j}=1}^{k-2} \Big(\prod_{n=1}^{m-j}\big((l_{n}+1)t_{l_{n}+1}-(l_{n}+2)t_{l_{n}+2}\big)\Big)\Hfxn_{N,\emptyset,q+l_1+\cdots+l_{m-j}+(k-1)j}\\
&=\sum_{\substack{h_{2}+\cdots+h_{k}=m-j\\h_{2}\geq 0,\ldots,h_{k}\geq 0}}\prod_{n=2}^{k}(nt_{n})^{h_{n}}\Bigg(\sum_{\substack{h_3' \leq h_3 \\ \ldots \\ h_{k-1}'\leq h_{k-1}}}c_{h_{2},h_{3}',\ldots,h_{k-1}',h_{k}}^{(m+1,j)}\Hfxn_{N,\emptyset, q+\sum_{n=1}^{k-2}nh_{n+1}+(k-2)h_{k}-\sum_{n=3}^{k-1}h_{n}'+(k-1)j}\Bigg),
\end{align*}
where $c_{h_{2},h_{3}',\ldots,h_{k-1}',h_{k}}^{(i,j)}$ was defined in (\ref{coefficientsbeforegeneralD}).
Even though the quantities above are not well defined for $k=2$, the argument goes through almost verbatim. Hence this case is stated separately in the statement of the result. Finally, observe that the terms in the identities above, other than the term where $h_2=0,\ldots,h_{q-1}=0$, can be written in the form of the term involving $V_i(t_1,\ldots,t_k)$ in the statement of the result. Using these in combination, the desired result now follows.
\end{proof}
\subsection{Recursive relations for the partial derivatives of $\Hfxn_{N,\boldsymbol{\lambda}}$.}\label{recursiveformula}
In this subsection, we derive recursive relations for the partial derivatives of $\Hfxn_{N,\boldsymbol{\lambda}}$ with respect to $t_2, \ldots, t_k$ evaluated at $(0, \ldots, 0)$. The following observation is crucial for translating the results concerning $\Hfxn_{N,\boldsymbol{\lambda}}$ from Subsection \ref{truncation results} to those obtained in the present subsection.
\begin{lem}\label{usefulobservation}
Let $k\geq 2,N\geq 1$ be integers. Let $n_{2},\ldots,n_{k}$ be non-negative integers. Let $\boldsymbol{\lambda}$ be an integer partition. Suppose $\boldsymbol{\lambda}=(\lambda_1,\ldots, \lambda_m)$, and set $\lambda_{m+1}=\cdots=\lambda_N=0$ if $m\leq N-1$. Then the following identity holds for any $(R_{2},\ldots, R_{k})$ with $R_{i}\geq n_{i}$, $i=2,\ldots,k$,
\bea
&&\frac{1}{N^{\sum_{m=2}^{k}2n_{m}}}\sum_{\substack{\sum_{j=1}^{N}u_{m,j}=n_{m}\\m=2,\ldots,k}}\prod_{m=2}^{k}\binom{n_{m}}{u_{m,1},\ldots,u_{m,N}}\det_{0\leq i,j\leq N-1}\left(g_{i+j+\lambda_{N-j}+\sum_{m=2}^k m u_{m,j+1}}\big(\frac{t_1}{N^2}\big)\right)\label{thegoalobject}\\
=&&\frac{\partial^{n_{2}}}{\partial t_{2}^{n_{2}}}\frac{\partial^{n_{3}}}{\partial t_{3}^{n_{3}}}\cdots\frac{\partial^{n_{k}}}{\partial t_{k}^{n_{k}}}\Hfxn_{N,\boldsymbol{\lambda}}\Bigg|_{t_{2}=\cdots=t_{k}=0}\nonumber,
\eea
where $g_{n}(t_{1})$ is given in 
(\ref{definition of am}), and $\Hfxn_{N,\boldsymbol{\lambda}}$ is introduced in (\ref{generaldefinition of mathcalG}) (depending on $R_{2},\ldots,R_{k}$).
\end{lem}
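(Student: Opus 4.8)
The statement is essentially a bookkeeping identity: it says that differentiating the ``large'' Hankel determinant $\Hfxn_{N,\boldsymbol{\lambda}}$ in the variables $t_2,\dots,t_k$ and then setting those variables to zero extracts precisely the multinomial combination of shifted $g$-determinants on the left. The plan is to work entirely at the level of the matrix entries and exploit the product/Leibniz structure of the determinant, together with the fact that the truncation thresholds $R_i \geq n_i$ are large enough that the finite sums in the definition of $\boldsymbol{\phi}_n$ (and hence $\Hfxn_{N,\boldsymbol{\lambda}}$) agree, up to order $n_i$ in $t_i$, with the corresponding infinite generating function. First I would recall from \eqref{definition of scaling h} and \eqref{definitionofANlambda} that the $(i,j)$ entry of $\mathbf{A}_{N,\boldsymbol{\lambda}}$ is
\[
\hfxn_{i+j+\lambda_{N-j}}(R_2,\dots,R_k;t_1,\dots,t_k)
=\sum_{m_2=0}^{R_2}\cdots\sum_{m_k=0}^{R_k}\frac{(t_2/N^2)^{m_2}\cdots(t_k/N^2)^{m_k}}{m_2!\cdots m_k!}\,g_{i+j+\lambda_{N-j}+\sum_{q=2}^k q m_q}\!\Big(\tfrac{t_1}{N^2}\Big).
\]
Applying $\partial_{t_q}^{n_q}$ to this entry and then setting $t_2=\dots=t_k=0$ kills every term except the one with $m_q=n_q$ for all $q$ (which is present in the truncated sum precisely because $R_q\ge n_q$); the derivative of the monomial $(t_q/N^2)^{n_q}/n_q!$ contributes a factor $n_q!/N^{2n_q}$, which cancels the $n_q!$ in the denominator and produces the overall $N^{-2\sum n_q}$. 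Thus a single entry, after this operation, equals $N^{-2\sum_q n_q}\,g_{i+j+\lambda_{N-j}+\sum_q q n_q}(t_1/N^2)$.

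Next I would handle the determinant via the generalized Leibniz rule for derivatives of a determinant whose entries depend on the parameters: $\partial_{t_q}^{n_q}\det(\mathbf{M}(t))$ is a sum, over ways of distributing the $n_q$ derivatives among the $N$ columns (or rows), of determinants in which column $j$ has been differentiated $u_{q,j}$ times, with $\sum_j u_{q,j}=n_q$ and multinomial weight $\binom{n_q}{u_{q,1},\dots,u_{q,N}}$. Doing this simultaneously for $q=2,\dots,k$ and then evaluating at $t_2=\dots=t_k=0$, each resulting determinant has, in column $j+1$, the entry $N^{-2\sum_q u_{q,j+1}}\,g_{i+j+\lambda_{N-j}+\sum_q q\,u_{q,j+1}}(t_1/N^2)$ by the single-entry computation above. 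Pulling the scalars $N^{-2\sum_q u_{q,j+1}}$ out of each column (one factor per column) multiplies by $N^{-2\sum_{q,j}u_{q,j+1}}=N^{-2\sum_q n_q}$ overall, which matches the prefactor on the left-hand side of \eqref{thegoalobject}, and leaves exactly $\det_{0\le i,j\le N-1}\big(g_{i+j+\lambda_{N-j}+\sum_q q u_{q,j+1}}(t_1/N^2)\big)$ weighted by $\prod_q \binom{n_q}{u_{q,1},\dots,u_{q,N}}$. Summing over all admissible $(u_{q,j})$ gives precisely the left-hand side.

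The only mild subtlety---and the step I would be most careful about---is making sure the truncation does not interfere: when the derivatives are distributed unevenly among columns, the term we need in column $j$ is the $m_q=u_{q,j}$ term, and since $u_{q,j}\le n_q\le R_q$ this term is indeed retained in the truncated inner sum, so no ``error'' contribution survives the evaluation at zero. (Higher-order terms $m_q>u_{q,j}$ vanish because they still carry a positive power of $t_q$ after only $u_{q,j}<m_q$ derivatives.) I would also note that all sums involved are finite, so differentiation commutes with the determinant expansion without analytic justification. Assembling these observations yields \eqref{thegoalobject}, and since the identity holds for every $(R_2,\dots,R_k)$ with $R_i\ge n_i$ it in particular shows the left-hand side is independent of the chosen thresholds in that range, as asserted.
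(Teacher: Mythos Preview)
Your proposal is correct and follows essentially the same argument as the paper: both apply the generalized Leibniz rule to distribute the $t_q$-derivatives among the columns of the determinant, then compute the resulting column-wise derivative of each entry $\hfxn_{i+j+\lambda_{N-j}}$ at $t_2=\cdots=t_k=0$ using the truncated generating-function form (the condition $u_{q,j}\le n_q\le R_q$ ensuring the needed term survives), and finally collect the $N^{-2u_{q,j+1}}$ factors from each column to produce the overall $N^{-2\sum_q n_q}$.
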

\begin{proof}
Firstly, given $R_{2},\ldots,R_{k}$ with $R_{i}\geq n_{i}$, $i=2,\ldots,k$, we construct $\Hfxn_{N, \boldsymbol{\lambda}}$ as in (\ref{generaldefinition of mathcalG}). Then
\beas
\frac{\partial^{n_{2}}}{\partial t_{2}^{n_{2}}}\frac{\partial^{n_{3}}}{\partial t_{3}^{n_{3}}}\cdots\frac{\partial^{n_{k}}}{\partial t_{k}^{n_{k}}}\Hfxn_{N,\boldsymbol{\lambda}}\Bigg|_{t_{2}=\cdots=t_{k}=0}=&&
\sum_{\substack{\sum_{j=1}^{N}u_{m,j}=n_{m}\\m=2,\ldots,k}}\prod_{m=2}^{k}\binom{n_{m}}{u_{m,1},\ldots,u_{m,N}}\\
&&\det_{0\leq i,j\leq N-1}\left(\prod_{m=2}^{k}\frac{\partial^{u_{m,j+1}}}{\partial{t_{m}^{u_{m,j+1}}}}\hfxn_{i+j+\lambda_{N-j}}\Big|_{t_{2}=\cdots=t_{k}=0}\right).
\eeas
Also note, that by the definition of $\hfxn_{n}$,
\beas
\hfxn_{n}(R_{2},\ldots,R_{k};t_1,\ldots,t_k)=\sum_{m_2=0}^{R_{2}}\cdots\sum_{m_{k}=0}^{R_{k}}\left(\prod_{q=2}^{k}\frac{1}{N^{2m_{q}}}\right)\frac{t_2^{m_2}\ldots t_k^{m_k}}{m_2 !\ldots m_k!} g_{n+\sum_{q=2}^k q m_q}\big(\frac{t_1}{N^2}\big).
\eeas
Since $R_{i}\geq n_{i}$, and $0\leq u_{i,j}\leq n_{i}$ for $i=2,\ldots,k$ and $j=1,\ldots,N$, we then get that
\beas
\prod_{m=2}^{k}\frac{\partial^{u_{m,j+1}}}{\partial{t_{m}^{u_{m,j+1}}}}\hfxn_{i+j+\lambda_{N-j}}\Big|_{t_{2}=\cdots=t_{k}=0}=\frac{1}{N^{\sum_{m=2}^{k}2u_{m,j+1}}}g_{i+j+\lambda_{N-j}+\sum_{m=2}^k m u_{m,j+1}}\big(\frac{t_1}{N^2}\big).
\eeas
Combining these, the desired conclusion follows.
\end{proof}
Next, we proceed by obtaining recurrence relations for the quantities in Lemma \ref{usefulobservation}. In order to do so, we define, for any non-negative integers $n_{2},\ldots,n_{k}$ and any integer $l,q$ with $l\geq 1$ and $1\leq q\leq l$,
\bea\label{definitionforgeneralmathcalFlq}
&&\mathcal{G}_{N,l,q}^{(n_{2},n_{3},\ldots,n_{k})}(t_{1})\nonumber\\
\defeq&&\frac{1}{N^{\sum_{m=2}^{k}2n_{m}}}\sum_{\substack{\sum_{j=1}^{N}u_{m,j}=n_{m}\\m=2,\ldots,k}}\prod_{m=2}^{k}\binom{n_{m}}{u_{m,1},\ldots,u_{m,N}}\det_{0\leq i,j\leq N-1}\left(g_{i+j+\lambda_{N-j}^{(l,q)}+\sum_{m=2}^k m u_{m,j+1}}\big(\frac{t_1}{N^2}\big)\right),
\eea
where $\boldsymbol{\lambda}_{l,q}=(\lambda_{1}^{(l,q)},\ldots,\lambda_{N}^{(l,q)})$ is defined as in Definition \ref{defofhook}, and
\bea\label{definitionforgeneralmathcalFlqv2}
&&\mathcal{G}_{N}^{(n_{2},n_{3},\ldots,n_{k})}(t_{1})\nonumber\\
\defeq&&\frac{1}{N^{\sum_{m=2}^{k}2n_{m}}}\sum_{\substack{\sum_{j=1}^{N}u_{m,j}=n_{m}\\m=2,\ldots,k}}\prod_{m=2}^{k}\binom{n_{m}}{u_{m,1},\ldots,u_{m,N}}\det_{0\leq i,j\leq N-1}\left(g_{i+j+\sum_{m=2}^k m u_{m,j+1}}\big(\frac{t_1}{N^2}\big)\right).
\eea
\begin{prop}\label{recursionfirstprop}
Suppose $l\geq 3$, $1\leq q\leq l$. Let $k\geq 2$, $i_{2}, n_{2},n_{3},\ldots,n_{k},m \in \mathbb{N}\cup \{0\}$.
Define the vectors
\beas
\textbf{G}_{l}^{(n_{2},n_{3},\ldots,n_{k})} =\begin{pmatrix}
\mathcal{G}_{N,l,1}^{(n_{2},n_{3},\ldots,n_{k})}(t_{1}),
\ldots,
\mathcal{G}_{N,l,l}^{(n_{2},n_{3},\ldots,n_{k})}(t_{1}) 
\end{pmatrix}^\textnormal{T}
\eeas
and
\beas
\hat{\textbf{G}}_{l,m}^{(n_{2},n_{3},\ldots,n_{k})} = \begin{pmatrix}
\mathcal{G}_{N,l+m,m+2}^{(n_{2},n_{3},\ldots,n_{k})}(t_{1}), 
\ldots ,
\mathcal{G}_{N,l+m,l+m}^{(n_{2},n_{3},\ldots,n_{k})}(t_{1}),\: 0 
\end{pmatrix}^\textnormal{T}.
\eeas
Then, we have that
\bea\label{generalrecursiveconditionformathcalG}
\textbf{G}_{l}^{(i_{2},n_{3},\ldots,n_{k})}
=\mathbf{A}^{(l)}N^2\frac{\mathrm{d}}{\mathrm{d} t_1}
\begin{pmatrix}
\textbf{G}_{l-1}^{(i_{2},n_{3},\ldots,n_{k})} \\
0
\end{pmatrix}
+\sum_{n_2=0}^{i_2} \frac{(-2)^{i_2-n_2} \binom{i_2}{i_2-n_2} (i_2-n_2)! }{t_1^{i_2-n_2+1}}\textbf{H}_{l}^{(n_{2},n_{3},\ldots,n_{k})},
\eea
where $\textbf{H}_{l}^{(n_{2},n_{3},\ldots,n_{k})}$ admits the following explicit expression,
\begin{align*}
&\textbf{H}_{l}^{(n_{2},n_{3},\ldots,n_{k})}
= \mathbf{A}^{(l)}N^2 \Bigg( 
\sum_{m=3}^k (m-1) n_{m-1} \begin{pmatrix}
\textbf{G}_{l-1}^{(n_{2},n_{3},\ldots,n_{k})} \\
0
\end{pmatrix}
- \sum_{m=3}^k m n_{m}
\begin{pmatrix}
\textbf{G}_{l-1}^{(n_2,\ldots,n_{m-1}+1,n_m-1,\ldots,n_k)} \\
0
\end{pmatrix}\\
&+ k n_k \begin{pmatrix}
\textbf{G}_{l-1}^{(n_{2},n_{3},\ldots,n_{k})} \\
0
\end{pmatrix}
\Bigg) + (t_1 \mathbf{P}_1^{(l)} + N^2 {\mathbf{P}}_0^{(l)})
\textbf{G}_{l-1}^{(n_{2},n_{3},\ldots,n_{k})}
 + N^2\mathbf{P}_2^{(l)} 
\textbf{G}_{l-2}^{(n_{2},n_{3},\ldots,n_{k})}\\
&+ \sum_{m=1}^{k-2} (m+1) n_{m+1} (-1)^m \mathbf{P}_{m+2}^{(l)} 
\textbf{G}_{l+m}^{(n_2,\ldots,n_{m+1}-1,\ldots,n_k)} + (-1)^{k-1} k n_k \mathbf{P}_{k+1}^{(l)}\textbf{G}_{l+k-1}^{(n_2,n_{3},\ldots,n_k-1)}\\
&- \sum_{m=1}^{k-2} (m+2) n_{m+2} (-1)^m \mathbf{P}_{m+2}^{(l)}
\textbf{G}_{l+m}^{(n_2,\ldots,n_{m+2}-1,\ldots,n_k)}- \mathbf{A}^{(l)} \Bigg(-N^2 \frac{\mathrm{d}}{\mathrm{d} t_1}\Bigg) \Bigg\{\sum_{m=0}^{k-3} \Bigg[(m+2) n_{m+2} (-1)^m \\ & 
\hat{\textbf{G}}_{l,m}^{(n_2,\ldots,n_{m+2}-1,\ldots,n_k)} 
- (m+3) n_{m+3} (-1)^m
\hat{\textbf{G}}_{l,m}^{(n_2,\ldots,n_{m+3}-1,\ldots,n_k)} \Bigg]
+(-1)^k k n_k \hat{\textbf{G}}_{l,k-2}^{(n_2,n_3,\ldots,n_{k}-1)}
\Bigg\} 
\\ &+\mathbf{A}^{(l)} N^2 \sum_{m=4}^k \Bigg[(-1)^{m-1} (m-1) n_{m-1} \sum_{h=2}^{m-2} (-1)^h 
\hat{\textbf{G}}_{l,m-2-h}^{(n_2,\ldots,n_{h}+1,\ldots,n_{m-1}-1,\ldots,n_k)} 
\\
&-(-1)^{m-1} m n_{m} \sum_{h=2}^{m-2} (-1)^h 
\hat{\textbf{G}}_{l,m-2-h}^{(n_2,\ldots,n_{h}+1,\ldots,n_{m}-1,\ldots,n_k)}  \Bigg] 
 + \mathbf{A}^{(l)}N^2kn_k \sum_{h=2}^{k-1} (-1)^{k+h} 
\hat{\textbf{G}}_{l,k-1-h}^{(n_2,\ldots,n_{h}+1,\ldots,n_k-1)}.
\end{align*}
Moreover, the initial conditions for the recurrence above are given by (here, we suppress $t_1$ from notation):
\begin{align}\label{initialconditionforthegeneralcase}
\mathcal{G}_{N,1,1}^{(n_2,\ldots,n_k)}
&= -N^2 \frac{\mathrm{d}\mathcal{G}_{N}^{(n_2,\ldots,n_k)}}{\mathrm{d} t_1} , \nonumber\\
\mathcal{G}_{N,2,1}^{(n_2,\ldots,n_k)}
&= \frac{N^4}{2} \frac{\mathrm{d}^2\mathcal{G}_{N}^{(n_2,\ldots,n_k)}}{\mathrm{d}t_1^2} + \frac{N^2\mathcal{G}_{N}^{(n_2+1,\ldots,n_k)}}{2} , \nonumber\\
\mathcal{G}_{N,2,2}^{(n_2,\ldots,n_k)}
&= \frac{N^4}{2} \frac{\mathrm{d}^2\mathcal{G}_{N}^{(n_2,\ldots,n_k)}}{\mathrm{d}t_1^2}- \frac{N^2\mathcal{G}_{N}^{(n_2+1,\ldots,n_k)}}{2} .
\end{align}
\end{prop}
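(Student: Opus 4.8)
The plan is to derive the recursion for $\textbf{G}_{l}^{(i_2,n_3,\ldots,n_k)}$ by combining two ingredients already established in the excerpt: the structural identity of Proposition \ref{prop:recursivevector} for the vectors $\Hfxn_{N}^{(0)}[l;1]$ of shifted Hankel determinants, and the translation dictionary of Lemma \ref{usefulobservation} which converts partial derivatives of $\Hfxn_{N,\boldsymbol{\lambda}}$ evaluated at $t_2=\cdots=t_k=0$ into the combinatorial sums defining $\mathcal{G}_{N,l,q}^{(n_2,\ldots,n_k)}$. Concretely, I would start from Proposition \ref{prop:recursivevector}, apply the differential operator $\prod_{m=2}^{k}\partial_m^{n_m}$ to both sides, and then set $t_2=\cdots=t_k=0$. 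On the left-hand side, Lemma \ref{usefulobservation} immediately identifies $\prod_m\partial_m^{n_m}\Hfxn_{N,\boldsymbol{\lambda}_{l,q}}\big|_0$ with $\mathcal{G}_{N,l,q}^{(n_2,\ldots,n_k)}(t_1)$ (after the appropriate $N$-power normalisation), so the vector $\Hfxn_{N}^{(0)}[l;1]$ becomes $\textbf{G}_{l}^{(n_2,\ldots,n_k)}$, and similarly the vectors $\Hfxn_{N}^{(1)}[l+m-2;\cdot]$ become the $\hat{\textbf{G}}$ vectors, since these are exactly the shifted-partition Hankel determinants padded with one extra zero. On the right-hand side, the task reduces to the careful Leibniz-rule bookkeeping described below.

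**Key steps, in order.** First I would record the elementary fact that $\partial_m$ applied to an entry $\hfxn_n$ raises the index by $m$ while also producing, via the truncation, an $\mathcal{E}^{(1)}$-type error term — this is precisely the content of the proof of Proposition \ref{generalderivative and translations}. Crucially, here we choose the truncation lengths $R_2,\ldots,R_k$ large (at least $\geq n_m$ in each variable), so that after applying $\prod_m\partial_m^{n_m}$ and setting $t_2=\cdots=t_k=0$, every error term carrying a factor $t_q^{R_q}$ or $t_q^{R_q+1}$ vanishes identically: it still has a positive power of some $t_q$ left. This is the mechanism that kills all the $\tilde{\mathcal{R}}_m^{(1)}$ and $\tilde{\mathcal{R}}_{m,h}^{(2)}$ contributions in Proposition \ref{prop:recursivevector}. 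Second, I would expand $\prod_m\partial_m^{n_m}$ applied to products like $\frac{kt_k}{t_1+2t_2}\partial_1\Hfxn_{N}^{(1)}[\cdots]$ and $\frac{1}{t_1+2t_2}\mathsf{d}_{p,q}\Hfxn_{N}^{(0)}[\cdots]$ by the multivariate Leibniz rule: each factor of $t_m$ in $\mathsf{d}_{m-1,m}$, $kt_k$, or the various $t_q^{h_q}$ must be hit by exactly the right number of $\partial_m$'s to survive evaluation at the origin, which is exactly why terms like "$m n_m \textbf{G}^{(\ldots,n_{m-1}+1,n_m-1,\ldots)}$" and "$(m-1)n_{m-1}\textbf{G}^{(\ldots)}$" appear with those combinatorial prefactors. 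The factor $\frac{1}{t_1+2t_2}$ when restricted to $t_2=0$ simply becomes $\frac{1}{t_1}$, which accounts for the denominators $t_1^{i_2-n_2+1}$ in the displayed recursion; the sum over $n_2\le i_2$ with coefficients $(-2)^{i_2-n_2}\binom{i_2}{i_2-n_2}(i_2-n_2)!$ is exactly the contribution of differentiating the geometric-type factor $\partial_2^{i_2}\big(t_1+2t_2\big)^{-1}\big|_{t_2=0}$. Third, I would verify the initial conditions $\mathcal{G}_{N,1,1}$, $\mathcal{G}_{N,2,1}$, $\mathcal{G}_{N,2,2}$ directly: these come from applying $\prod_m\partial_m^{n_m}\big|_0$ to the base identities $\Hfxn_{N,\emptyset,1}=-N^2\partial_1\Hfxn_N$ (equation \eqref{t1operateongeneralG}) and to the expressions for $\Hfxn_{N,\boldsymbol{\lambda}_{2,1}}$, $\Hfxn_{N,\boldsymbol{\lambda}_{2,2}}$ from Proposition \ref{expressionsforlength2}, again using that the $\mathcal{E}^{(2)}$ error term carries $t_2^{R_2}$ with $R_2\ge n_2$, hence dies at the origin. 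One subtle point is that in $\mathcal{G}_{N,2,1}$ the term $\partial_2\Hfxn_N\big|_0$ contributes $\mathcal{G}_N^{(n_2+1,\ldots,n_k)}$ — the index $n_2$ is bumped by one, matching the displayed formula.

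**Main obstacle.** The genuinely delicate part is not any single manipulation but the combinatorial accounting: tracking, through the multivariate Leibniz expansion of Proposition \ref{prop:recursivevector}, exactly which partitions $(n_2,\ldots,n_k)$ get shifted and with which signs and multinomial weights when the differential operators land on the $t$-monomials $\mathsf{d}_{m-1,m}$, $kt_k$, $(nt_n)^{h_n}$, and on the $c^{(i,j)}$-weighted sums inside $\mathsf{Z}_i$ — while simultaneously confirming that every term produced by the truncation errors is annihilated at $t_2=\cdots=t_k=0$. The fact that the resulting recursion has precisely the same shape as \cite[Propositions 4.13 and 4.15]{assiotis2024exchangeablearraysintegrablesystems} is what makes the subsequent induction (carried out in Subsection \ref{proofofmainTheorem}) go through, so one must match the structure term-by-term; this verification, though mechanical, is where essentially all the work lies, and where one should be most careful about the $N$-powers, since only $\mathbf{P}_2^{(l)}$ carries $N$-dependence among the $\mathbf{P}^{(l)}_m$ and the $\mathbf{A}^{(l)}N^2\partial_1$ prefactors must be kept track of to control the eventual $N\to\infty$ limit.
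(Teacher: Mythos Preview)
Your proposal is correct and matches the paper's approach: differentiate both sides of Proposition \ref{prop:recursivevector}, invoke Lemma \ref{usefulobservation} as the dictionary between partial derivatives of $\Hfxn_{N,\boldsymbol{\lambda}_{l,q}}$ at the origin and $\mathcal{G}_{N,l,q}^{(n_2,\ldots,n_k)}$, and choose the truncation lengths so that all $\tilde{\mathcal{R}}^{(1)}$, $\tilde{\mathcal{R}}^{(2)}$ error terms (and the $\mathcal{E}^{(2)}$ term in Proposition \ref{expressionsforlength2} for the initial conditions) vanish at $t_2=\cdots=t_k=0$. One small sharpening: the paper takes $R_2=i_2+1$ and $R_m=n_m+1$ for $m\ge 3$, strictly larger than your stated bound $R_m\ge n_m$ --- this is needed because some right-hand-side terms carry only $t_h^{R_h}$ (not $t_h^{R_h+1}$) or involve an extra $\partial_h$, so $R_h=n_h$ alone would not suffice for those terms to vanish or for Lemma \ref{usefulobservation} to apply with the raised derivative order.
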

\begin{proof}
For any given $i_{2},n_{3},\ldots,n_{k}$ non-negative integers, set $R_{2}=i_{2}+1$ and $R_{i}=n_{i}+1$ for $i=3,\ldots,k$, and  then construct $\Hfxn_{N, \boldsymbol{\lambda}}$ as in (\ref{generaldefinition of mathcalG}). By Lemma \ref{usefulobservation}, for any nonnegative integers $q_{i}\leq R_{i}$, $ i=2,\ldots,k$, we have 
\beas
\frac{\partial^{q_{2}}}{\partial t_{2}^{q_{2}}}\frac{\partial^{q_{3}}}{\partial t_{3}^{q_{3}}}\cdots\frac{\partial^{q_{k}}}{\partial
t_{k}^{q_{k}}}\Hfxn_{N,\boldsymbol{\boldsymbol{\lambda}_{l,q}}}\Bigg|_{t_{2}=\cdots=t_{k}=0}=\mathcal{G}_{N,l,q}^{(q_{2},q_{3},\ldots,q_{k})}(t_{1}).
\eeas
Then (\ref{generalrecursiveconditionformathcalG}) follows from the differentiation of both sides of the recursion in Proposition \ref{prop:recursivevector}. For the initial conditions given in (\ref{initialconditionforthegeneralcase}), by an argument similar to the one above, we take $R_{i}=n_{i}+1$ for $i=2,\ldots,k$. Then differentiate both sides of \eqref{t1operateongeneralG} and the equations in Proposition \ref{expressionsforlength2} with
respect to $t_2,\ldots, t_k$ and let $t_2=\cdots=t_k=0$.
\end{proof}
\noindent As remarked earlier, our end goal is to establish a connection between $\mathcal{G}_{N}^{(n_{2},n_{3},\ldots,n_{k})}(t_{1})$ and a solution of the $\sigma$-Painlev\'e V equation. To carry out the induction process with this goal in mind, in addition to the the recursive relation on $\mathcal{G}_{N,l,q}^{(n_{2},n_{3},\ldots,n_{k})}(t_{1})$ described above, we now show that $\mathcal{G}_{N}^{(n_{2},n_{3},\ldots,n_{k})}(t_{1})$ admits an expression in terms of $\mathcal{G}_{N,l,q}^{(n_{2},n_{3},\ldots,n_{k})}(t_{1})$. To this end, we define for any non-negative integers $n_{2},\ldots,n_{k}$,
\begin{align}\label{F0repinL}
\mathsf{F}_0^{(n_{2},\ldots,n_k)}(t_{1}) =
N^2(N^2+(a+b)N)\mathcal{G}_{N}^{(n_{2},n_{3},\ldots,n_{k})}(t_{1})-(N^2a-t_{1})\mathcal{G}_{N,1,1}^{(n_{2},n_{3},\ldots,n_{k})}(t_{1}), 
\end{align}
and, for $j\geq 1$,
\begin{align}\label{formulaforFj}
\mathsf{F}_j^{(n_{2},\ldots,n_k)} (t_{1})=\sum_{i=1}^{j+1}(-1)^{i-1}\Bigg( N^2(2-2i+j-a) \mathcal{G}_{N,j+1,i}^{(n_{2},\ldots,n_{k})}(t_{1})+t_{1}\mathcal{G}_{N,j+1,i}^{(n_{2},\ldots,n_k)}(t_{1})\Bigg)& \nonumber
    \\+\sum_{i=1}^{j}(-1)^{i-1} {N^2(a+b-1+2i-j)} \mathcal{G}_{N,j,i}^{(n_{2},\ldots,n_k)}(t_{1}),&
\end{align}
where $\mathcal{G}_{N}^{(n_{2},n_{3},\ldots,n_{k})}(t_{1})$ and $\mathcal{G}_{N,l,q}^{(n_{2},n_{3},\ldots,n_{k})}(t_{1})$ are defined as in (\ref{definitionforgeneralmathcalFlqv2}) and (\ref{definitionforgeneralmathcalFlq}), respectively.
\begin{prop}\label{generalexpressionformathcalFn2nk}
Let $k\geq 2$, $N\geq 1$ be integers. Let $n_{2},\ldots,n_{k} \in \mathbb{N}\cup\{0\}$. Let $\mathsf{F}_0^{(n_{2},\ldots,n_k)}(t_{1})$ and $\mathsf{F}_j^{(n_{2},\ldots,n_k)}(t_{1})$ ($j\geq 1$) be defined as in (\ref{F0repinL}) and (\ref{formulaforFj}), respectively. 
Let $c_{h_{2},h_{3}',\ldots,h_{k-1}',h_{k}}^{(i,j)}$ be given as in (\ref{coefficientsbeforegeneralD}). Then, whenever
 $k=2$ and $n_2\geq 1$, we have that
\bea\label{expressionsforfi2special}
\mathcal{G}_{N}^{(n_{2})}(t_{1})=\frac{1}{N^2} \sum_{i=0}^{n_{2}-1} (n_{2}-1)! \sum_{j=0}^{n_{2}-1-i} \binom{n_2-1-j}{i} \frac{(-1)^{n_{2}-1-i-j}2^{n_{2}-1-j}}{j!}\frac{\mathsf{F}_{i}^{(j)}(t_{1})}{t_{1}^{n_{2}-j}} ,
\eea
whereas if  $k\geq 3$, $n_2\geq 1$, we have,
\begin{multline}\label{prop415first}
    \mathcal{G}_{N}^{(n_2,\ldots,n_k)}(t_1)=\frac{1}{N^2} \sum_{i_{2}=0}^{n_2-1} (-2)^{n_2-1-i_{2}}\frac{(n_{2}-1)!}{i_{2}!}
\frac{\mathsf{F}_0^{(i_{2},n_3,\ldots,n_k)}(t_{1})}{t_1^{n_2-i_{2}}}  \\ 
+  \sum_{i=1}^{n_2+\cdots+n_k-1} \frac{1}{N^2}\sum_{j=0}^{i} \binom{i}{j} 
\sum_{h_2+\cdots+h_k=i-j}  \binom{n_2-1}{h_2} \binom{n_k}{h_k+j} \binom{n_3}{h_3} \ldots \binom{n_{k-1}}{h_{k-1}} (h_{k}+j)!\\k^{j+h_{k}}
\prod_{n=2}^{k-1}n^{h_{n}}h_{n}! \sum_{i_{2}=0}^{n_2-1-h_2}\binom{n_2-1-h_2}{i_{2}} \frac{(i+n_2-1-h_2-i_{2})! (-2)^{n_2-1-h_2-i_{2}}}{t_1^{i+n_2-h_2-i_{2}} i!} \\
 \sum_{\substack{h_3' \le h_3 \\ \ldots \\ h_{k-1}' \le h_{k-1}}} c_{h_2,h_3',\ldots,h_{k-1}',h_{k}}^{(i+1,j)} \mathsf{F}_{\sum_{n=1}^{k-2}nh_{n+1}+(k-2)h_{k}-\sum_{n=3}^{k-1}h_{n}'+(k-1)j}^{(i_{2},n_3-h_3,\ldots,n_{k-1}-h_{k-1}, n_k-j-h_k)}(t_{1}).
\end{multline}
On the other hand, if there exists $q=3,\ldots,k-1$ such that $n_l=0$ for $l=2,\ldots,q-1$ and  $n_{q}\geq 1$, then
\begin{align*}
\mathcal{G}_N^{(0,\ldots,0,n_{q},\ldots, n_k)}(t_{1})
= &\frac{\mathsf{F}_{q-2}^{(0,\ldots,0,n_q-1,\ldots,n_k)}(t_{1})}{N^2t_1}
+\sum_{i=1}^{n_q+\cdots+n_k-1} \frac{1}{t_1^{i+1}N^2}\sum_{j=0}^i \binom{i}{j} \nonumber\\
& \sum_{h_q+\cdots+h_k=i-j} \binom{n_q-1}{h_q}\binom{n_k}{h_k+j} \binom{n_{q+1}}{h_{q+1}} \cdots \binom{n_{k-1}}{h_{k-1}} k^{j+h_{k}}(h_{k}+j)!\prod_{n=q}^{k-1}n^{h_{n}}h_{n}! \nonumber\\
& \;\;\;\;\;\;\;\;\;\;\;\;\; \sum_{\substack{h_q'\le h_q \\ \cdots \\ h_{k-1}' \le h_{k-1}}} c_{0,\ldots,0,h_q',\ldots,h_{k-1}',h_{k}}^{(i+1,j)} \mathsf{F}_{q-2+\sum_{n=q-1}^{k-2}nh_{n+1}+(k-2)h_{k}-\sum_{n=q}^{k-1}h_{n}'+(k-1)j}^{(0,\ldots,0,n_q-1-h_q,n_{q+1}-h_{q+1},\ldots,n_{k-1}-h_{k-1}, n_k-j-h_k)}(t_{1}). 
\end{align*}
Finally, if $n_l=0$ for $2\leq l\leq k-1$, and $n_{k}\geq 1$, then
\begin{align*}
\mathcal{G}_N^{(0,\ldots,0, n_k)}(t_{1})= \frac{\mathsf{F}_{k-2}^{(0,\ldots,0,n_k-1)}(t_{1})}{N^2t_1}
+\sum_{i=1}^{n_k-1}\frac{k^i}{t_{1}^{i+1}N^2}\frac{(n_{k}-1)!}{(n_{k}-1-i)!}
\sum_{j=0}^i \binom{i}{j} \: c_{0,\ldots,0,i-j}^{(i+1,j)}\mathsf{F}_{k-2+(k-1)i-(i-j)}^{(0,\ldots,0,n_{k}-1-i)}(t_{1}).
\end{align*}
\end{prop}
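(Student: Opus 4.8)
\textbf{Proof strategy for Proposition \ref{generalexpressionformathcalFn2nk}.}

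The plan is to invert the expansion of Lemma \ref{generaliedexpandingofpartialderivatives} after differentiating it in the variables $t_2,\ldots,t_k$ and evaluating at the origin. Concretely, start from the formula for $N^2\frac{\partial\Hfxn_N}{\partial t_q}$ given there: it expresses this derivative as a finite Laurent expansion in $(t_1+2t_2)$ whose coefficients are the quantities $\mathsf{Z}_i$ (which, by their definition \eqref{defofdi}, are precisely the building blocks of $\mathsf{F}_0$ and $\mathsf{F}_j$ once one recalls the identification $\Hfxn_{N,\emptyset,i}=\Hfxn_{N,\blambda_{i,\cdot}}$-type relations via Proposition \ref{generalderivative and translations}), plus a tail term $\Hfxn_{N,\emptyset,m+2}$-weighted by $(2t_2)^m/(t_1+2t_2)^m$, plus error terms carrying factors $t_u^{R_u+1}$ and $t_q^{R_q}$. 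The key first step is to choose the truncation parameters $R_u = n_u+1$ for all $u$ (as done in the proof of Proposition \ref{recursionfirstprop}); then every error term and every tail term carries a factor $t_q^{R_q}=t_q^{n_q+1}$ or $t_u^{n_u+1}$, so after applying $\prod_{m}\partial_{t_m}^{n_m}$ and setting $t_2=\cdots=t_k=0$, all of those contributions vanish identically. This is the mechanism that makes the whole scheme work and it should be invoked explicitly.

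Next I would perform the differentiation. Writing $N^2\partial_{t_q}\Hfxn_N$ as a sum of terms of the shape $(t_1+2t_2)^{-i}\,(2t_2)^{?}\,\prod_n(nt_n)^{h_n}\,\mathsf{Z}_{(\cdots)}$, one applies Leibniz: the derivatives $\partial_{t_j}^{n_j}$ for $j\ge 3$ act on the monomials $(nt_n)^{h_n}$ forcing $h_n$ to match $n_n$ up to the shifts coming from the $c^{(i,j)}$-combinatorics, while $\partial_{t_2}^{n_2}$ (shifted down by one because of the explicit $\partial_{t_q}$ already taken, when $q=2$) acts on the product $(t_1+2t_2)^{-i}(2t_2)^{\bullet}$. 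Evaluating the latter at $t_2=0$ produces exactly the falling-factorial / binomial coefficients $\binom{n_2-1-h_2}{i_2}\,(i+n_2-1-h_2-i_2)!\,(-2)^{n_2-1-h_2-i_2}/t_1^{\,i+n_2-h_2-i_2}$ seen in the statement, via the elementary identity $\partial_{t_2}^{r}\big[(t_1+2t_2)^{-i}(2t_2)^{s}\big]\big|_{t_2=0}$. At this point $\Hfxn_{N,\emptyset,\nu}$ and $\mathsf{Z}_\nu$, after evaluation, become $\mathcal{G}_N^{(\cdots)}$ and the $\mathsf{F}_\nu^{(\cdots)}$ respectively (using Lemma \ref{usefulobservation} to translate shifted Hankel determinants at the origin into the $\mathcal{G}_{N,l,q}$'s, and then the definitions \eqref{F0repinL}–\eqref{formulaforFj} which are nothing but $\mathsf{Z}_\nu$ rewritten). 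Collecting terms and matching indices through the $c^{(i,j)}$-identity already proved inside Lemma \ref{generaliedexpandingofpartialderivatives} yields the four displayed cases, which differ only according to which of $n_2,\ldots,n_{k-1}$ is the first nonzero one (this determines which $\partial_{t_q}$ to "spend" and hence which base index $q-2$ appears, and whether the $(2t_2)$-factor is present). The case $k=2$ is degenerate — no $t_3,\ldots,t_k$ — and is handled by the same computation with all the $h_n$-sums collapsed, giving \eqref{expressionsforfi2special}.

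The main obstacle is bookkeeping rather than conceptual: one must check that after differentiating the Laurent-in-$(t_1+2t_2)$ expansion and evaluating, the poles in $t_1$ that appear are exactly of the orders claimed (no lower-order $t_1$ powers survive than $t_1^{-(i+1)}$ in the generic term, and the $\mathsf{F}_0$-term sits at order $t_1^{-(n_2-i_2)}$), and that the combinatorial coefficients reassemble into the stated binomial/multinomial products. Two subtleties deserve care: first, the off-by-one in the $t_2$-exponent (because one $\partial_{t_2}$ is already consumed when $q=2$) is what turns $n_2$ into $n_2-1$ throughout the first two cases — one must track this consistently. Second, one needs the tail term $\Hfxn_{N,\emptyset,m+2}$ in Lemma \ref{generaliedexpandingofpartialderivatives}, carrying the factor $(2t_2)^m$, to contribute nothing after evaluation: this is automatic once $m$ is taken large enough (specifically $m > n_2$), which one is free to do since the expansion is valid for every $m\in\mathbb{N}$. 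Once these points are pinned down, the identity follows by direct term-by-term comparison, so no genuinely new estimate or idea beyond Lemma \ref{generaliedexpandingofpartialderivatives}, Lemma \ref{usefulobservation}, and Proposition \ref{generalderivative and translations} is required.
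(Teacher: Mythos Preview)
Your strategy is correct and matches the paper's own proof: differentiate Lemma \ref{generaliedexpandingofpartialderivatives} in $t_2,\ldots,t_k$, evaluate at the origin, and use Lemma \ref{usefulobservation} to translate the $\mathsf{Z}_j$'s into the $\mathsf{F}_j$'s. A few small points where your sketch is slightly looser than the paper's argument: (i) the identities $\Hfxn_{N,\emptyset,h}=\sum_{j=1}^h(-1)^{j-1}\Hfxn_{N,\blambda_{h,j}}$ and $\Hfxn_{N,\emptyset,h}^{(w)}=\sum_{j=1}^h(-1)^{j-1}(2N-2j+h)\Hfxn_{N,\blambda_{h,j}}$ (plus $\Hfxn_{N,\blambda,0}=N\Hfxn_{N,\blambda}$, $\Hfxn_{N,\blambda,0}^{(w)}=(N(N-1)+n)\Hfxn_{N,\blambda}$) are immediate from the definitions, not from Proposition \ref{generalderivative and translations}, and they are what makes the rewriting $\partial^{(q_2,\ldots,q_k)}\mathsf{Z}_j|_0=\mathsf{F}_j^{(q_2,\ldots,q_k)}$ literally true; (ii) for $k\ge 3$ the tail term carries total monomial degree $m$ in $(t_q,\ldots,t_k)$, so you need $m\ge n_q+\cdots+n_k$ (the paper takes equality), not merely $m>n_2$; (iii) for the cases $q\ge 3$ you should note that the $V_i$-terms in Lemma \ref{generaliedexpandingofpartialderivatives} carry explicit prefactors $t_i$ with $2\le i\le q-1$ and hence vanish at the origin since no derivatives in those variables are taken.
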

\begin{proof}
Firstly, for any given $n_{2},n_{3},\ldots,n_{k}$ non-negative integers, we choose $R_{i}=n_{i}+1$ for $i=2,\ldots,k$, and  then construct $\Hfxn_{N, \boldsymbol{\lambda}}$ as in (\ref{generaldefinition of mathcalG}).
Then, by means of the following relations that are consequences of their definitions,
\beas
&& \Hfxn_{N,\emptyset,h} = \sum_{j=1}^{h}(-1)^{j-1} \Hfxn_{N,\blambda_{h,j}},  \quad
\Hfxn_{N,\emptyset,h}^{(w)} = \sum_{j=1}^{h} (-1)^{j-1} (2N-2j+h) \Hfxn_{N,\blambda_{h,j}}, \quad \;\;\;\;\;\;\;\;\;\;\;(h\geq 1) , \\
&& \Hfxn_{N,\blambda,0} = N \Hfxn_{N,\blambda},   \quad \;\;\;\;\;\;\;\;\;\;\;\;\;\;\;
\Hfxn_{N,\blambda,0}^{(w)} = (N(N-1) + n) \Hfxn_{N,\blambda},
\quad \;\;\;\; (\blambda \text{ is a partition of } n) ,
\eeas
and by the definition of $\mathsf{Z}_{i}(R_{2},\ldots,R_{k};t_{1},\ldots, t_{k})$ given in  (\ref{defofdi}),
we have
\begin{align*}
\mathsf{Z}_{0}=&N^2(N^2+(a+b)N)\Hfxn_{N}-(N^2a-t_{1}) \Hfxn_{N,\blambda_{1,1}}\\
\mathsf{Z}_{j}=&\sum_{i=1}^{j+1}(-1)^{i-1}\Bigg( N^2(2-2i+j-a) \Hfxn_{N,\blambda_{j+1,i}}+t_{1}\Hfxn_{N,\blambda_{j+1,i}}\Bigg) \nonumber
    \\&+\sum_{i=1}^{j}(-1)^{i-1} {N^2(a+b-1+2i-j)} \Hfxn_{N,\blambda_{j,i}}, \,\,j\geq 1.
\end{align*}
Next, using Lemma \ref{usefulobservation}, we have that whenever $q_{i}\leq R_{i}$ for $i=2,\ldots,k$, and $j\geq 0$,
\beas
\mathsf{F}_j^{(q_{2},\ldots,q_k)} (t_{1})=\frac{\partial^{q_{2}}}{\partial t_{2}^{q_{2}}}\cdots\frac{\partial^{q_{k}}}{\partial t_{k}^{q_{k}}}\mathsf{Z}_{j}(R_{2},\ldots,R_{k};t_{1},\ldots,t_{k})\Big|_{t_{2}=\cdots=t_{k}=0}.
\eeas
Hence, we apply Lemma \ref{generaliedexpandingofpartialderivatives} by letting $m=n_q+\cdots+n_k$, where $q$ is the smallest index such that $n_q>0$. The desired conclusion then follows by differentiating both sides of the expressions in Lemma~\ref{generaliedexpandingofpartialderivatives} at $t_2=\cdots=t_k=0$.
\end{proof}
\subsection{Proofs of Theorems \ref{structureforgeneralfinitesizeforfiniteN} and \ref{mainpainlevethm}}\label{proofofmainTheorem}

We are now in position to prove Theorem \ref{structureforgeneralfinitesizeforfiniteN} by inducting on $k$ and $(n_{2},\ldots,n_{k})$.

\begin{proof}[Proof of Theorem \ref{structureforgeneralfinitesizeforfiniteN}]
By Proposition \ref{proprepresented as a Hankel} and the definition of $\mathcal{G}_{N}^{(n_{2},n_{3},\ldots,n_{k})}(t_{1})$ given in (\ref{definitionforgeneralmathcalFlqv2}),
\beas
\mathbb{E}_{N}^{(a,b)}\left[\mathrm{e}^{-t_{1}\sum_{j=1}^{N}\frac{1}{\mathsf{x}_{j}^{(N)}}}\prod_{q=2}^{k}\left(\sum_{j=1}^{N}(\mathsf{x}_{j}^{(N)})^{-q}\right)^{n_{q}}\right] =C_{N}^{(a,b)}N^{2\sum_{q=2}^{k}n_{q}}\mathcal{G}_{N}^{(n_{2},n_{3},\ldots,n_{k})}(t_{1})\nonumber,
\eeas
whenever  $a>\sum_{q=2}^{k}qn_{q}-1$.
Thus, (\ref{general structure1}) is equivalent to showing that
\begin{align}\label{general structure1equivalent}
\mathcal{G}_{N}^{(n_{2},n_{3},\ldots,n_{k})}(t_{1})=\frac{1}{t_{1}^{\sum_{q=2}^{k}qn_{q}-1}}
\sum_{m=0}^{\sum_{q=2}^{k}(q-1)n_{q}}t_{1}^{m-1}P_{m}^{(a,b)}(N;t_{1})\frac{\mathrm{d}^{m}}{\mathrm{d}t_{1}^{m}}\mathcal{G}_{N}^{(0,0,\ldots,0)}(t_{1})
\end{align}
for polynomials $P_{m}^{(a,b)}(N;t_{1})$ as specified in the theorem's formulation. Observe first that $\mathcal{G}_{N}^{(n_{2},n_{3},\ldots,n_{k})}$  is continuous in $t_1$, at the point $t_{1}=0$, so that it suffices to show the validity of (\ref{general structure1equivalent}) for all $t_1>0$. One can then prove the result from an inductive argument similar to that of  \cite[Theorem 1.9]{assiotis2024exchangeablearraysintegrablesystems}, with the main difference being the definitions of $\mathfrak{S}^{(k)}$ and $\mathfrak{S}_{l,j}^{(k)}$ in \cite[Definition 4.16]{assiotis2024exchangeablearraysintegrablesystems}. To be more precise, we introduce the modified versions of these sets for this particular case; $\mathfrak{C}^{(k)},\mathfrak{C}_{l,j}^{(k)}$, by defining them as follows. Let $\mathfrak{C}^{(k)} \subset (\mathbb{N}\cup \{0\})^{k-1}$ denote the collection of all tuples $(n_2,\ldots,n_k)$ such that \eqref{general structure1equivalent} holds, also including the $0$-tuple $(0,\ldots,0)$. In accordance with this, we then define $\mathfrak{C}_{l,j}^{(k)}$ to be the collection of tuples $(n_2,\ldots, n_k) $ such that
\begin{equation}\label{inductiveassumptiononGnlj}
\mathcal{G}_{N,l,j}^{(n_{2},n_{3},\ldots,n_{k})}(t_{1})=\frac{1}{t_{1}^{l+\sum_{q=2}^{k}qn_{q}-1}}
\sum_{m=0}^{l+\sum_{q=2}^{k}(q-1)n_{q}} t_1^{m-1} P_{l,j,m}^{(a,b)}(N;t_{1})\frac{\mathrm{d}^{m}}{\mathrm{d}t_{1}^{m}}\mathcal{G}_{N}^{(0,0,\ldots,0)}(t_{1})
\end{equation}
for polynomials $P_{l,j,m}^{(a,b)}(N;t_{1})$, $m\geq 1$, and $t_{1}^{-1}P_{l,j,0}^{(a,b)}(N;t_{1})$ in $t_1$ whose degrees do not exceed $l+\sum_{q=2}^{k}qn_{q}-m$ and $l+\sum_{q=2}^{k}qn_{q}-1$, respectively, subject to the additional constraint that they are also polynomials
in $N, a, b$, whose degrees in $N$ and in $a, b$ do not exceed $2l+\sum_{q=2}^{k}2(q-1)n_{q}$ and $l+\sum_{q=2}^{k}(q-1)n_{q}-1$, respectively. With this setting in mind, the result then follows by the induction scheme from \cite{assiotis2024exchangeablearraysintegrablesystems}, where we make use of Propositions \ref{recursionfirstprop} and \ref{generalexpressionformathcalFn2nk}.
\end{proof}
\noindent Before turning to the proof of Theorem \ref{mainpainlevethm}, we demonstrate how Theorem \ref{structureforgeneralfinitesizeforfiniteN}, in conjunction with Theorem \ref{thm:jointmom} and Proposition \ref{thm:laguerre}, allows one to deduce Corollary \ref{cor:integerhj}.
\begin{proof}[Proof of Corollary \ref{cor:integerhj}]
We first note that when $(h_{1},\ldots,h_{m})= (0,\ldots,0)$, the conclusion in this corollary follows from \cite{KeatingSnaithLfunctions}. We
now fix $h_1,\ldots, h_m\in \mathbb{N}\cup\{0\}$ and $(h_{1},\ldots,h_{m})\neq (0,\ldots,0)$, and consider the function,
\begin{align*}
     \varphi : \mathbb{R}_{\geq 0} &\to \mathbb{R}_+\\
    h_0 &\mapsto \lim_{N\to \infty}\mathbb{E}_{N}^{(h_0+\sum_{k=1}^m h_j-\frac{1}{2},-\frac{1}2{})}\left[\prod_{k=1}^m \mathfrak{e}_{N,k}^{h_k}\right],
\end{align*}
where $\mathfrak{e}_{N,k}$ is an abbreviation for 
$\mathfrak{e}_{N,k}^{(a,-\frac{1}{2})}$, 
as defined in (\ref{definitionofeNk}) with $a\defeq \sum_{k=0}^m h_k -\frac{1}{2}$.
This is initially well-defined only on $h_0\in [\frac{1}{2},\infty)$, thanks to the proof of Theorem \ref{thm:jointmom}. We claim that it is in fact a well-defined rational function in $h_0$ on all of $\mathbb{R}_{\geq 0}$. 
By Theorem \ref{structureforgeneralfinitesizeforfiniteN}, we have for $(n_{1},\ldots,n_{m})\neq (0,\ldots,0)$ and $a>\sum_{k=1}^{m}kn_{k}-1$,
\begin{align}\label{formula721}
\mathbb{E}_{N}^{(a,-\frac{1}{2})}\left[({\mathfrak{p}_{1,N}})^{n_{1}}\prod_{k=2}^{m}(\mathfrak{p}_{q,N})^{n_{k}}\right]=&\Bigg(\frac{1}{t^{\sum_{k=1}^{m}kn_{k}-1}}
\sum_{q=0}^{\sum_{k=2}^{m}(k-1)n_{k}}t^{q-1}P_{q,1}^{(a,-\frac{1}{2})}(N;t)\frac{\mathrm{d}^{q}}{\mathrm{d}t^{q}}\mathbb{E}_{N}^{(a,-\frac{1}{2})}\left[\mathrm{e}^{-t\mathfrak{p}_{1,N}}\right]\nonumber\\
&+\frac{1}{t^{\sum_{k=2}^{m}kn_{k}-1}}
\sum_{q=0}^{\sum_{k=2}^{m}(k-1)n_{k}}t^{q-1}P_{q,2}^{(a,-\frac{1}{2})}(N;t)\frac{\mathrm{d}^{q+n_{1}}}{\mathrm{d}t^{q+n_{1}}}\mathbb{E}_{N}^{(a,-\frac{1}{2})}\left[\mathrm{e}^{-t\mathfrak{p}_{1,N}}\right]\Bigg)\Bigg|_{t=0},
\end{align}
where $P_{m,1}^{(a,-\frac{1}{2})}(t)$ and $P_{m,2}^{(a,-\frac{1}{2})}(t)$ are polynomials of $t$. Moreover, the coefficients of these polynomials are themselves polynomials in $N,a$. Note that, $\mathbb{E}_{N}^{(a,-\frac{1}{2})}\left[\mathrm{e}^{-t\mathfrak{p}_{1,N}}\right]$ is $\sum_{k=1}^{m}kn_{k}$-times differentiable and it can be represented as $\exp\left(\int_{0}^{t}\frac{\sigma_{N}(t)}{t}\mathrm{d}t\right)$,
where $\sigma_{N}(t)$ satisfies 
$\sigma$-Painlev\'e V equation (\ref{equationforfiniteN}). Hence, by examining the asymptotic expansion of $\sigma_{N}(t)$ in powers of $t$ at $0$, up to $t^{\sum_{k=1}^{m}kn_{k}}$, we conclude that the coefficients of these terms can be obtained recursively and are rational functions of $a$, by an argument similar to that of \cite[Section 4.2]{Basor_2019}). Moreover, this implies that the first $\sum_{k=1}^{m}kn_{k}+1$ Taylor coefficients of $\mathbb{E}_{N}^{(a,-\frac{1}{2})}\left[\mathrm{e}^{-t\mathfrak{p}_{1,N}}\right]$   are of the form $\frac{\mathcal{Q}_2(N;a)}{\mathcal{Q}_1(a)}$, 
where $\mathcal{Q}_1(a)$ is a polynomial in $a$ and $\mathcal{Q}_2(N;a)$ is a polynomial in $a$, 
whose coefficients themselves are polynomials in $N$. After plugging this into the right-hand side of equation (\ref{formula721}), while noting that 
the coefficients of the singular terms $\frac{1}{t^{l}}$ 
for $l=1,\dots,\sum_{k=1}^m k n_{k}$ vanish, since the value of the left-hand side of (\ref{formula721}) is finite. We see that right-hand side of equation (\ref{formula721}) must be a rational function in $a$ 
of the form described above; and therefore, so does the left-hand side of equation (\ref{formula721}).  Hence, by Newton's formula, we readily conclude that $\mathbb{E}_{N}^{(a,-\frac{1}2{})}\left[\prod_{k=1}^m \mathfrak{e}_{N,k}^{h_k}\right]$ has the rational function form $\frac{Q_2(N;a)}{Q_1(a)}$, 
for certain polynomials  $Q_1(a)$ and $Q_2(N;a)$ in $a$, 
whose coefficients of the powers of $a$ in $Q_2(N;a)$ are polynomials in $N$ whenever $\Re(a)>\sum_{k=1}^{m}kh_{k}-1$. On the other hand, the map $ a\mapsto 
\mathbb{E}_{N}^{(a,-\frac{1}{2})}\left[\prod_{k=1}^m \mathfrak{e}_{N,k}^{h_k}\right]
$ is an analytic function by its definition on the set $\{a\in \mathbb{C}:\Re(a)> \sum_{k=1}^{m} h_k-1\}$.
Thus, by the identity theorem, it must equal 
$
\frac{Q_2(N;a)}{Q_1(a)}
$ whenever $\Re(a) > \sum_{k=1}^{m} h_k - 1$. Since $a=\sum_{k=0}^{m} h_k-\frac{1}{2}>\sum_{k=1}^{m} h_k-1$ when $h_{0}\geq 0$, $
\frac{Q_2(N;a)}{Q_1(a)}
$ is a rational function of $h_{0}$ on all of $h_{0}\geq 0$. Note that, when $h_{0}>\frac{1}{2}$,  $
N^{-2\sum_{k=1}^{m}{kh_{k}}}\frac{Q_2(N;a)}{Q_1(a)}
$ converges as $N\rightarrow \infty$. Since this is a rational function, this convergence holds for any $h_{0}\geq 0$ and the limit is a rational function of $h_{0}$, which is the desired  claim. 

To complete the proof, we observe that thanks to Proposition \ref{thm:laguerre}, the function
\begin{align*}
    \widetilde{\varphi} : \mathbb{R}_{\geq0} &\to \mathbb{R_+}\\
    h_0&\mapsto \mathbb{E}\left[\prod_{k=1}^m \mathfrak{e}_k \big(h_0+\sum_{k=1}^m h_j-\frac{1}{2}\big)^{h_k}\right]
\end{align*}
is well-defined, and rational in $h_0$ on $\mathbb{R}_{\geq 0}$, since all finite-dimensional averages taken with respect to the Laguerre ensembles therein can be reduced to finite linear combinations of ratios of Gamma functions, which simplify to rational functions. Combining this with the earlier claim, we see that under the given assumptions, both sides of \eqref{jointmomorth} (modulo a common factor) define a rational function on $h_0\in [0,\infty)$. Moreover, by Theorem \ref{thm:jointmom}, these two functions agree on $h_0>\frac{1}{2}$, and hence, they must agree everywhere. This completes the proof of the desired result.
\end{proof}

\begin{proof}[Proof of Theorem \ref{mainpainlevethm}]
First, we claim that whenever $t_1> 0$, we have
\begin{align}\label{eq:convergenceforpainleve0328}
       \frac{1}{N^{2\sum_{m=2}^k mn_m}}\mathbb{E}_{N}^{(a,b)}\left[\mathrm{e}^{-\frac{t_1}{N^2}\sum_{j=1}^N \frac{1}{\mathsf{x}^{(N)}_j}} \prod_{m=2}^k\left(\sum_{j=1}^N (\mathsf{x}^{(N)}_j)^{-m}\right)^{n_m}\right] &\xrightarrow[]{N\to \infty} \mathbb{E}\left[\mathrm{e}^{-t_1 \mathfrak{e}_1(a)} 
        \prod_{m=2}^k \left(\mathfrak{p}_m(a)\right)^{n_m}
        \right]. 
\end{align}
Indeed, an application of the elementary estimates $\mathrm{e}^{-x}\leq C_rx^{-r}$ for $x> 0$, and $\sum_{j=1}^{N}y_{j}^{-l}\leq \left(\sum_{j=1}^{N}y_{j}^{-1}\right)^{l}$  for $l\in \mathbb{N}$ and $y_{1},\ldots,y_{N}>0$ gives that,
\begin{equation*}
   {\mathrm{e}^{-\frac{t_1}{N^2}\mathfrak{p}_{1,N}^{(a,b)}}\prod_{m=2}^k  (\mathfrak{p}_{m,N}^{(a,b)})^{n_m} \leq  \frac{N^{2r}C_r}{t_1^r (\mathfrak{p}_{1,N}^{(a,b)})^r} \prod_{m=2}^k  (\mathfrak{p}_{m,N}^{(a,b)})^{n_m}\leq \frac{C_r N^{2r}}{t_1^r} (\mathfrak{p}_{1,N}^{(a,b)}})^{\sum_{m=2}^{k} m n_m -r}.
\end{equation*}
Hence, picking an appropriate $r\in \mathbb{N}$ such that the parameter  restriction in Proposition \ref{thm:main} holds, we conclude the the uniform integrability of the sequence
\beas
\left\{N^{-2\sum_{m=2}^k m n_m}\mathrm{e}^{-\frac{t_1}{N^2}\mathfrak{p}_{1,N}^{(a,b)}}\prod_{m=2}^k  (\mathfrak{p}_{m,N}^{(a,b)})^{n_m}\right\}_{N\geq 1}.
\eeas
Then, using the distributional convergence of the above sequence in Proposition \ref{thm:main} and  Skorokhod's representation theorem, we obtain (\ref{eq:convergenceforpainleve0328}).
Similarly, we have that for any $m$ with $m<a+1$,
\bea\label{0328afternoon1}
\frac{\mathrm{d}^m}{\mathrm{d}t_1^m} \mathbb{E}_{N}^{(a,b)}\left[\mathrm{e}^{-\frac{t_1}{N^2}\sum_{j=1}^N \frac{1}{\mathsf{x}_{j}^{(N)}}}\right] &\xrightarrow[]{N\to \infty} \frac{\mathrm{d}^m}{\mathrm{d}t_{1}^m} \mathbb{E}\left[\mathrm{e}^{-t_1\mathfrak{e}_1(a)}\right].
\eea
Secondly, we claim that the convergence in (\ref{eq:convergenceforpainleve0328}) and (\ref{0328afternoon1}) also hold at $t_1=0$. We only present here the argument for (\ref{eq:convergenceforpainleve0328}); which goes through mutatis mutandis for the convergence in (\ref{0328afternoon1}). On one hand, by an argument similar to that of Corollary \ref{cor:integerhj}, for fixed $b>-1$, the left-hand side of (\ref{eq:convergenceforpainleve0328}) at $t_1 = 0$ can be expressed as a rational function of $a$ and $N$ of the form
$
\frac{\tilde{Q}_2(N;a)}{\tilde{Q}_1(a)}$, for $\Re(a) > \sum_{m=2}^{k} m n_m - 1$, where $\tilde{Q}_1(a)$ is a polynomial in $a$, and $\tilde{Q}_2(N;a)$ is a polynomial in $a$ whose coefficients are themselves polynomials in $N$. Meanwhile, the right-hand side of (\ref{eq:convergenceforpainleve0328}) at $t_1=0$ is also a rational function of $a$ for $a > \sum_{m=2}^{k} m n_m - 1$.
By Proposition \ref{thm:main}, when $a > \sum_{m=2}^{k} m n_m$, the convergence of (\ref{eq:convergenceforpainleve0328}) at $t=0$ holds. Hence, by the properties of rational functions, this convergence extends to the full range $a > \sum_{m=2}^{k} m n_m - 1$.

We now prove equation (\ref{formula1}) of the theorem. Together with Theorem \ref{structureforgeneralfinitesizeforfiniteN}, if one defines the polynomials $\mathcal{P}_m^{(a)}(n_{2},\ldots,n_{k};t_{1})$ by,
    \begin{equation}
    \label{eq:limitingpoly}
        \mathcal{P}_m^{(a)}(n_{2},\ldots,n_{k};t_{1})=\lim_{N\to \infty}\frac{P_{m}^{(a,b)}(N;t_{1})}{N^{\sum_{m=2}^k 2(m-1)n_m}},
    \end{equation}
the result then follows, modulo the validity of the properties of  $\mathcal{P}_m^{(a)}(n_{2},\ldots,n_{k};t_{1})$ stated in the theorem. Firstly, we claim that the $t_1$-degree of the polynomial $t_{1}^{m-1} \mathcal{P}_m^{(a)}(n_{2},\ldots,n_{k};t_{1})$ is $\sum_{q=2}^k (q-1)n_q-1$, which is in fact less than the \textit{a priori} degree that Theorem \ref{structureforgeneralfinitesizeforfiniteN} yields. The precise reason for this becomes more clear if one decomposes
    \begin{equation*}
        \mathcal{G}_{N,l,q}^{(n_2,\ldots,n_k)}(t_1) = N^{2l+\sum_{m=2}^k 2(m-1)n_m}  \mathcal{M}_{N,l,q}^{(n_2,\ldots,n_k)}(t_1) + O(N^{2l+\sum_{m=2}^k 2(m-1)n_m-1}),
    \end{equation*}
for every $1\leq q\leq l$, (while doing the same for $ \mathcal{G}_{N}^{(n_2,\ldots,n_k)}$), and establishes equivalent statements for $\mathcal{M}_{N,l,q}^{(n_2,\ldots,n_k)}$ as in Propositions \ref{recursionfirstprop} and \ref{generalexpressionformathcalFn2nk} by matching the leading-order coefficients therein.
Indeed, doing so and 
carefully carrying out the same inductive process as in the proof of Theorem \ref{structureforgeneralfinitesizeforfiniteN}, we find that the polynomials in $t_{1}$ appearing in the terms $\mathcal{M}_{N,l,q}^{(n_2,\ldots,n_k)}$ have the desired degrees as stated in Theorem \ref{mainpainlevethm}. 

The fact that the limit in (\ref{eq:limitingpoly}) is independent of the parameter $b$ can be seen similarly based on the following observations.  On one hand, we observe that in Proposition \ref{recursionfirstprop}, 
   the only term involving $b$ that appears in the expression of $\mathbf{H}_{l}^{(n_{2},\ldots,n_{k})} $ is the term $N^2\mathbf{P}_2^{(l)} 
\textbf{G}_{l-2}^{(n_{2},\ldots,n_{k})}$. Moreover, using the explicit expression for $\mathbf{P}_2^{(l)}$, it is easily seen that the highest-order contribution in $N$ coming from this particular term is the same as that of 
$N^2\mathbf{M}_2^{(l)}
\textbf{G}_{l-2}^{(n_{2},\ldots,n_{k})}$,
where $\mathbf{M}_2^{(l)}$ is given in the Appendix, and in particular, is independent of $b$. On the other hand, the only terms involving $b$ in Proposition \ref{generalexpressionformathcalFn2nk} appear in the expressions for $\mathsf{F}_0^{(n_{2},\ldots,n_k)}(t_{1})$ and $\mathsf{F}_j^{(n_{2},\ldots,n_k)}(t_{1})$, $(j\geq 1)$, with the $b$-dependent terms contributing in a strictly smaller order of $N$.  Hence, proceeding with the inductive process, where the inductive assumptions are in accordance with the formula  (\ref{inductiveassumptiononGnlj}), while keeping track of the coefficients of $N^{2l+\sum_{m=2}^k 2(m-1)n_m}$ by means of these two observations, we arrive at the desired conclusion regarding $b$-independence of the polynomials in \eqref{eq:limitingpoly}. This concludes the proof of the theorem.
\end{proof}
\noindent An inspection of the proof of Theorem \ref{mainpainlevethm} reveals that we have in fact produced a recursive algorithm for computing
$\mathcal{M}_{N,l,q}^{(n_2,\ldots,n_k)}$, and consequently, for computing the leading coefficients of the joint moments.
\begin{proof}[Proof of Corollary \ref{some examples}]
By (\ref{initialconditionforthegeneralcase}) for $n_{2}=0$ and $n_{2}=1$, and by (\ref{expressionsforfi2special}) for $n_{2}=1$ and $n_{2}=2$, we obtain, for $b>-1$ and $t\geq 0$, when $a>1$,
\beas
& & \E_{N}^{(a,b)} \left[ \mathrm{e}^{-t \sum_{j=1}^N \frac{1}{\mathsf{x}_j^{(N)}}  } \left( \sum_{j=1}^N \frac{1}{(\mathsf{x}_j^{(N)})^2}\right) \right] \\
&=& \left(\frac{a}{t}-1\right) \frac{\mathrm{d}}{\mathrm{d}t} \E_{N}^{(a,b)}\left[  \mathrm{e}^{-t \sum_{j=1}^N \frac{1}{\mathsf{x}_j^{(N)}}  }  \right] + \frac{(a+b+N)N}{t} \E_{N}^{(a,b)}\left[  \mathrm{e}^{-t \sum_{j=1}^N \frac{1}{\mathsf{x}_j^{(N)}}  }  \right]
\eeas
and when $a>3$,
\beas
& & \E_{N}^{(a,b)} \left[ \mathrm{e}^{-t \sum_{j=1}^N \frac{1}{\mathsf{x}_j^{(N)}}  } \left( \sum_{j=1}^N \frac{1}{(\mathsf{x}_j^{(N)})^2}\right)^2 \right] \\
&=& \frac{(a-t)^2+2}{t^2} \frac{\mathrm{d}^2}{\mathrm{d}t^2} \E_{N}^{(a,b)}\left[\mathrm{e}^{-t \sum_{j=1}^N \frac{1}{\mathsf{x}_j^{(N)}}  }  \right] \\
&& + \frac{N (a+b+N) ((1+N^2+(a+b)N) t-3a)}{t^3} \E_{N}^{(a,b)}\left[ \mathrm{e}^{-t \sum_{j=1}^N \frac{1}{\mathsf{x}_j^{(N)}}  }  \right] \\
&&+ \frac{-3a^2+(2N^2a+2a(a+b)N+a-2b )t-(2N^2+2(a+b)N ) t^2 }{t^3}
\frac{\mathrm{d}}{\mathrm{d}t} \E_N^{(a,b)}\left[  \mathrm{e}^{-t \sum_{j=1}^N \frac{1}{\mathsf{x}_j^{(N)}}  }  \right].
\eeas
The result then follows by doing the rescaling $t \mapsto  \frac{t}{N^2}$, dividing by an appropriate power of $N$, and and taking the limit as $N \to \infty$.

\end{proof}

\addcontentsline{toc}{section}{Appendix}
\section*{Appendix}
\label{app:scripts}

We now turn to defining explicitly the matrices that featured in the proofs of Theorems \ref{structureforgeneralfinitesizeforfiniteN} and \ref{mainpainlevethm}. Firstly, as mentioned previously, for $m\neq 0,2$, the matrices $\mathbf{P}_m^{(l)}$ are identical to the matrices $\mathbf{Q}_m^{(l)}$ appearing in \cite[Proposition 4.11]{assiotis2024exchangeablearraysintegrablesystems}. As for the remaining matrices, they are given as follows. 

\begin{align*}
  \left(\mathbf{A}^{(l)}\right)_{ij} &=\begin{cases}
\frac{(-1)^{i+j-1}}{j(j+1)} & j\geq i; \\
-\frac{1}{i} & j=i-1; \\
0  & j<i-1; \\
\frac{(-1)^{i-1}}{l} & j=l.
\end{cases}\\
\left(\mathbf{P}_{0}^{(l)}\right)_{ij}&=
\begin{cases}\label{definitionofC1}
(-1)^{i+j}\frac{j(l-j-2)+1-a}{j(j+1)}& \text{if } i\leq j\leq l-1;\\
\frac{j(j+a)-l+1}{j+1}& \text{if } j=i-1; \\
  0 & \text{if } j<i-1.
\end{cases}
\\
\left(\widetilde{\mathbf{P}}_1^{(l)}\right)_{ij} &= \begin{cases}
\frac{(-1)^{i+j}}{j+1} (\frac{2N}{j}-j-2+l) & i\leq j \leq l-1; \\
-\frac{j(2N-j)+l-j-1}{j+1} & j=i-1; \\
0 & 1\leq j \leq i-2.
\end{cases}
\\
\left(\mathbf{P}_{2}^{(l)}\right)_{ij}  &= \begin{cases}
(-1)^{i+j}\frac{N^2+(a+b)N+j(j+3)-(j+2)(l-2)+2(a+b-1)}{(j+1)(j+2)}  & i-1\leq j \leq l-2; \\
\frac{-N^2-(a+b)N+j(a+b+j)}{j+2}& j=i-2; \\
0 & 1\leq j <  i-2.
\end{cases}
\\
\left(\widehat{\mathbf{P}}_{2}^{(l)}\right)_{ij} &= \begin{cases}
(-1)^{i+j}\frac{N+2}{(j+1)(j+2)}  & i-1\leq j \leq l-2; \\
\frac{j-N}{j+2} & j=i-2; \\
0 & 1\leq j <  i-2.
\end{cases}
\\
\left(\widetilde{\mathbf{P}}_2^{(l)}\right)_{ij} &= \begin{cases}
(-1)^{i+j}\frac{N^2+3N-j(j+3)+(j+2)(l-2)}{(j+1)(j+2)}  & i-1\leq j \leq l-2; \\
\frac{-N^2+N(2j+1)-j(j+1)}{j+2} & j=i-2; \\
0 & 1\leq j < i-2.
\end{cases}
\\
\left(\mathbf{M}_{2}^{(l)}\right)_{ij}  &= \begin{cases}
(-1)^{i+j}\frac{N^2}{(j+1)(j+2)}  & i-1\leq j \leq l-2; \\
\frac{-N^2}{j+2}& j=i-2; \\
0 & 1\leq j <  i-2.
\end{cases}
\end{align*}

\bibliographystyle{plain}
\bibliography{main}

\begin{thebibliography}{10}

\bibitem{altugetal}
S.~Ali Altuğ, Sandro Bettin, Ian Petrow, Rishikesh, and Ian Whitehead.
\newblock {A recursion formula for moments of derivatives of random matrix polynomials}.
\newblock {\em The Quarterly Journal of Mathematics}, 65(4):1111--1125, 2014.

\bibitem{alvarez2023asymptotics}
Emilia Alvarez, Pierre Bousseyroux, and Nina~C Snaith.
\newblock Asymptotics of non-integer moments of the logarithmic derivative of characteristic polynomials over {$SO (2N+ 1) $}.
\newblock {\em arXiv preprint arXiv:2303.07813}, 2023.

\bibitem{AlvarezSnaith}
Emilia Alvarez and Nina~C Snaith.
\newblock Moments of the logarithmic derivative of characteristic polynomials from {$SO(N)$} and {$USp(2N)$}.
\newblock {\em J. Math. Phys.}, 61(10):103506, 32, 2020.

\bibitem{andrade2024joint}
Julio Andrade and Christopher Best.
\newblock Joint moments of derivatives of characteristic polynomials of random symplectic and orthogonal matrices.
\newblock {\em Journal of Physics A: Mathematical and Theoretical}, 2024.

\bibitem{assiotiswishart}
Theodoros Assiotis.
\newblock Ergodic decomposition for inverse {W}ishart measures on infinite positive-definite matrices.
\newblock {\em Symmetry, Integrability and Geometry: Methods and Applications}, 2019.

\bibitem{RandomAnalytic}
Theodoros Assiotis.
\newblock Random entire functions from random polynomials with real zeros.
\newblock {\em Adv. Math.}, 410:Paper No. 108701, 28, 2022.

\bibitem{ABGS}
Theodoros Assiotis, Benjamin Bedert, Mustafa~Alper Gunes, and Arun Soor.
\newblock On a distinguished family of random variables and {P}ainlevé equations.
\newblock {\em Probability and Mathematical Physics}, 2(3):613–642, 2021.

\bibitem{assiotis2024exchangeablearraysintegrablesystems}
Theodoros Assiotis, Mustafa~Alper Gunes, Jonathan~P. Keating, and Fei Wei.
\newblock Exchangeable arrays and integrable systems for characteristic polynomials of random matrices.
\newblock {\em arXiv preprint arXiv:2407.19233}, 2024.

\bibitem{CircJacobiBeta}
Theodoros Assiotis, Mustafa~Alper Gunes, and Arun Soor.
\newblock Convergence and an explicit formula for the joint moments of the circular {J}acobi {$\beta$}-ensemble characteristic polynomial.
\newblock {\em Math. Phys. Anal. Geom.}, 25(2):Paper No. 15, 24, 2022.

\bibitem{assiotis2022joint}
Theodoros Assiotis, Jonathan~P Keating, and Jon Warren.
\newblock On the joint moments of the characteristic polynomials of random unitary matrices.
\newblock {\em International Mathematics Research Notices}, 2022(18):14564--14603, 2022.

\bibitem{assiotis2020boundary}
Theodoros Assiotis and Joseph Najnudel.
\newblock The boundary of the orbital beta process.
\newblock {\em Moscow Mathematical Journal}, 21:659–694, 2021.

\bibitem{Bailey_2019}
Emma~C Bailey, Sandro Bettin, Gordon Blower, J~Brian Conrey, Andrei Prokhorov, Michael~O Rubinstein, and Nina~C Snaith.
\newblock Mixed moments of characteristic polynomials of random unitary matrices.
\newblock {\em Journal of Mathematical Physics}, 60(8), 2019.

\bibitem{barhoumi2020new}
Yacine Barhoumi-Andr{\'e}ani.
\newblock A new approach to the characteristic polynomial of a random unitary matrix.
\newblock {\em arXiv preprint arXiv:2011.02465}, 2020.

\bibitem{Basor_2019}
Estelle Basor, Pavel Bleher, Robert Buckingham, Tamara Grava, Alexander Its, Elizabeth Its, and Jonathan~P Keating.
\newblock A representation of joint moments of {CUE} characteristic polynomials in terms of {P}ainlevé functions.
\newblock {\em Nonlinearity}, 32(10):4033–4078, 2019.

\bibitem{BorodinDet}
Alexei Borodin.
\newblock Determinantal point processes.
\newblock In {\em The {O}xford handbook of random matrix theory}, pages 231--249. Oxford Univ. Press, Oxford, 2011.

\bibitem{Borodin_2001}
Alexei Borodin and Grigori Olshanski.
\newblock Infinite random matrices and ergodic measures.
\newblock {\em Communications in Mathematical Physics}, 223(1):87–123, 2001.

\bibitem{SingularWeightPainleve2}
L~Brightmore, Francesco Mezzadri, and Man~Y Mo.
\newblock A matrix model with a singular weight and {P}ainlev\'e{} {III}.
\newblock {\em Comm. Math. Phys.}, 333(3):1317--1364, 2015.

\bibitem{ChenChenFan}
Min Chen, Yang Chen, and En-Gui Fan.
\newblock The {R}iemann-{H}ilbert analysis to the {P}ollaczek-{J}acobi type orthogonal polynomials.
\newblock {\em Stud. Appl. Math.}, 143(1):42--80, 2019.

\bibitem{ChenDai}
Yang Chen and Dan Dai.
\newblock Painlev\'e{} {V} and a {P}ollaczek-{J}acobi type orthogonal polynomials.
\newblock {\em J. Approx. Theory}, 162(12):2149--2167, 2010.

\bibitem{Conrey}
J~Brian Conrey.
\newblock The fourth moment of derivatives of the {R}iemann zeta-function.
\newblock {\em Quart. J. Math. Oxford Ser. (2)}, 39(153):21--36, 1988.

\bibitem{CFKRS1}
J~Brian Conrey, David~W Farmer, Jon~P Keating, Michael~O Rubinstein, and Nina~C Snaith.
\newblock Autocorrelation of random matrix polynomials.
\newblock {\em Comm. Math. Phys.}, 237(3):365--395, 2003.

\bibitem{CFKRS2}
J~Brian Conrey, David~W Farmer, Jon~P Keating, Michael~O Rubinstein, and Nina~C Snaith.
\newblock Integral moments of {$L$}-functions.
\newblock {\em Proc. London Math. Soc. (3)}, 91(1):33--104, 2005.

\bibitem{ConreyGhosh}
J~Brian Conrey and Anish Ghosh.
\newblock On mean values of the zeta-function. {II}.
\newblock {\em Acta Arith.}, 52(4):367--371, 1989.

\bibitem{conreyetal}
J~Brian Conrey, Michael~O Rubinstein, and Nina~C Snaith.
\newblock Moments of the derivative of characteristic polynomials with an application to the {R}iemann zeta function.
\newblock {\em Communications in Mathematical Physics}, 267(3):611--629, 2006.

\bibitem{Hurwitz}
Fabio~Deelan Cunden, Antoine Dahlqvist, and Neil O'Connell.
\newblock Integer moments of complex {W}ishart matrices and {H}urwitz numbers.
\newblock {\em Ann. Inst. Henri Poincar\'e{} D}, 8(2):243--268, 2021.

\bibitem{MomentsHypergeometric}
Fabio~Deelan Cunden, Francesco Mezzadri, Neil O'Connell, and Nick Simm.
\newblock Moments of random matrices and hypergeometric orthogonal polynomials.
\newblock {\em Comm. Math. Phys.}, 369(3):1091--1145, 2019.

\bibitem{Wigner-Smith}
Fabio~Deelan Cunden, Francesco Mezzadri, Nick Simm, and Pierpaolo Vivo.
\newblock Correlators for the {W}igner-{S}mith time-delay matrix of chaotic cavities.
\newblock {\em J. Phys. A}, 49(18):18LT01, 20, 2016.

\bibitem{BallisticChaotic}
Fabio~Deelan Cunden, Francesco Mezzadri, Nick Simm, and Pierpaolo Vivo.
\newblock Large-{$N$} expansion for the time-delay matrix of ballistic chaotic cavities.
\newblock {\em J. Math. Phys.}, 57(11):111901, 16, 2016.

\bibitem{Dehaye2008}
Paul-Olivier Dehaye.
\newblock Joint moments of derivatives of characteristic polynomials.
\newblock {\em Algebra Number Theory 2, 31–68}, 2008.

\bibitem{Dehaye2010note}
Paul-Olivie\vspace{0mm}r Dehaye.
\newblock A note on moments of derivatives of characteristic polynomials.
\newblock {\em 22nd International Conference on Formal Power Series and Algebraic Combinatorics, 681-692, Discrete Math. Theor. Comput. Sci. Proc. AN}, 2010.

\bibitem{forrester_importanceofselberg}
Peter Forrester and Ole Warnaar.
\newblock The importance of the selberg integral.
\newblock {\em Bulletin of the American Mathematical Society}, 45:489–534, 2008.

\bibitem{ForresterBessel}
Peter~J Forrester.
\newblock The spectrum edge of random matrix ensembles.
\newblock {\em Nuclear Phys. B}, 402(3):709--728, 1993.

\bibitem{ForresterBook}
Peter~J Forrester.
\newblock {\em Log-gases and random matrices}, volume~34 of {\em London Mathematical Society Monographs Series}.
\newblock Princeton University Press, Princeton, NJ, 2010.

\bibitem{forrester2022joint}
Peter~J Forrester.
\newblock Joint moments of a characteristic polynomial and its derivative for the circular $\beta$-ensemble.
\newblock {\em Probability and Mathematical Physics}, 3(1):145--170, 2022.

\bibitem{ForresterWittePainleve1}
Peter~J Forrester and Nicholas~S Witte.
\newblock Application of the {$\tau$}-function theory of {P}ainlev\'{e} equations to random matrices: {PIV}, {PII} and the {GUE}.
\newblock {\em Comm. Math. Phys.}, 219(2):357--398, 2001.

\bibitem{ForresterWittePainleve2}
Peter~J Forrester and Nicholas~S Witte.
\newblock Application of the {$\tau$}-function theory of {P}ainlev\'{e} equations to random matrices: {$\rm P_V$}, {$\rm P_{III}$}, the {LUE}, {JUE}, and {CUE}.
\newblock {\em Comm. Pure Appl. Math.}, 55(6):679--727, 2002.

\bibitem{ForresterWittePainleveIII}
Peter~J Forrester and Nicholas~S Witte.
\newblock Boundary conditions associated with the {P}ainlev\'e{} {III{$'$}} and {V} evaluations of some random matrix averages.
\newblock {\em J. Phys. A}, 39(28):8983--8995, 2006.

\bibitem{gharakhloo2023modulated}
Roozbeh Gharakhloo and Nicholas~S Witte.
\newblock Modulated bi-orthogonal polynomials on the unit circle: The $2j-k$ and $j-2k$ systems.
\newblock {\em Constructive Approximation}, 58(1):1--74, 2023.

\bibitem{gunes2022characteristic}
Mustafa~Alper Gunes.
\newblock Characteristic polynomials of orthogonal and symplectic random matrices, {J}acobi ensembles \& {L}-functions.
\newblock {\em Random Matrices: Theory and Applications}, 13(02):2450006, 2024.

\bibitem{hammer1953higher}
Carl Hammer.
\newblock Higher transcendental functions, volume {I}: by {Harry Bateman} (compiled by the staff of the {Bateman Manuscript Project}). 302 pages, {New York, McGraw-Hill Book Co., Inc., 1953}, 1953.

\bibitem{dianejacobi}
Diane Holcomb and Gregorio~R. Moreno~Flores.
\newblock Edge scaling of the $\beta$-jacobi ensemble.
\newblock {\em Journal of Statistical Physics}, 149(6):1136--1160, 2012.

\bibitem{Hughes}
Christopher~P Hughes.
\newblock On the characteristic polynomial of a random unitary matrix and the {R}iemann zeta function.
\newblock {\em PhD Thesis, University of Bristol}, 2001.

\bibitem{Ingham}
Albert~E Ingham.
\newblock Mean-{V}alue {T}heorems in the {T}heory of the {R}iemann {Z}eta-{F}unction.
\newblock {\em Proc. London Math. Soc. (2)}, 27(4):273--300, 1927.

\bibitem{JohanssonDet}
Kurt Johansson.
\newblock Random matrices and determinantal processes.
\newblock In {\em Mathematical statistical physics}, pages 1--55. Elsevier B. V., Amsterdam, 2006.

\bibitem{KatzSarnak}
Nicholas~M. Katz and Peter Sarnak.
\newblock {\em Random matrices, {F}robenius eigenvalues, and monodromy}, volume~45 of {\em American Mathematical Society Colloquium Publications}.
\newblock American Mathematical Society, Providence, RI, 1999.

\bibitem{KeatingSnaithLfunctions}
Jon~P Keating and Nina~C Snaith.
\newblock Random matrix theory and {$L$}-functions at {$s=1/2$}.
\newblock {\em Comm. Math. Phys.}, 214(1):91--110, 2000.

\bibitem{keatingsnaith}
Jon~P Keating and Nina~C Snaith.
\newblock Random matrix theory and $\zeta$(1/2+it).
\newblock {\em Communications in Mathematical Physics}, 214(1):57--89, 2000.

\bibitem{keating-fei}
Jon~P Keating and Fei Wei.
\newblock {Joint moments of higher order derivatives of CUE characteristic polynomials II: Structures, recursive relations, and applications}.
\newblock {\em Nonlinearity}, 37:085009, 2024.

\bibitem{keatingwei}
Jonathan~P Keating and Fei Wei.
\newblock Joint moments of higher order derivatives of {CUE} characteristic polynomials {I}: asymptotic formulae.
\newblock {\em International Mathematics Research Notices}, 2024:9607--9632, 2024.

\bibitem{valkoli}
Yun Li and Benedek Valk\'{o}.
\newblock Operator level limit of the circular {J}acobi $\beta$-ensemble.
\newblock {\em Random Matrices: Theory and Applications}, 11(04):2250043, 2022.

\bibitem{SingularWeightPainleve1}
Francesco Mezzadri and Man~Yue Mo.
\newblock On an average over the {G}aussian unitary ensemble.
\newblock {\em Int. Math. Res. Not.}, (18):3486--3515, 2009.

\bibitem{MezzadriSimm}
Francesco Mezzadri and Nick Simm.
\newblock Tau-function theory of chaotic quantum transport with {$\beta=1,2,4$}.
\newblock {\em Comm. Math. Phys.}, 324(2):465--513, 2013.

\bibitem{RamirezRider}
Jos{e}~A Ram{i}rez and Brian Rider.
\newblock Diffusion at the random matrix hard edge.
\newblock {\em Comm. Math. Phys.}, 288(3):887--906, 2009.

\bibitem{StochasticAiry}
Jose{}~A Ramirez, Brian Rider, and B\'alint Vir\'ag.
\newblock Beta ensembles, stochastic {A}iry spectrum, and a diffusion.
\newblock {\em J. Amer. Math. Soc.}, 24(4):919--944, 2011.

\bibitem{simm2024moments}
Nick Simm and Fei Wei.
\newblock On moments of the derivative of cue characteristic polynomials and the riemann zeta function.
\newblock {\em arXiv preprint arXiv:2409.03687}, 2024.

\bibitem{ValkoViragOperator}
Benedek Valk\'o and B\'alint Vir\'ag.
\newblock The {$\rm Sine_\beta$} operator.
\newblock {\em Invent. Math.}, 209(1):275--327, 2017.

\bibitem{ValkoViragSecular}
Benedek Valk\'o and B\'alint Vir\'ag.
\newblock The many faces of the stochastic zeta function.
\newblock {\em Geom. Funct. Anal.}, 32(5):1160--1231, 2022.

\bibitem{winn2012derivative}
Brian Winn.
\newblock Derivative moments for characteristic polynomials from the {CUE}.
\newblock {\em Communications in Mathematical Physics}, 315(2):531–562, 2012.

\bibitem{ForresterWitteCauchy}
Nicholas~S Witte and Peter~J Forrester.
\newblock Gap probabilities in the finite and scaled {C}auchy random matrix ensembles.
\newblock {\em Nonlinearity}, 13(6):1965--1986, 2000.

\end{thebibliography}

\noindent{\sc School of Mathematics, University of Edinburgh, James Clerk Maxwell Building, Peter Guthrie Tait Rd, Edinburgh EH9 3FD, U.K.}\newline
\href{mailto:theo.assiotis@ed.ac.uk}{\small theo.assiotis@ed.ac.uk}

\bigskip
\noindent
{\sc Department of Mathematics, Princeton University, 
Fine Hall, 304 Washington Rd, Princeton, NJ 08544, USA.}\newline
\href{mailto:magunes@princeton.edu}{\small magunes@princeton.edu}

\bigskip
\noindent
{\sc Mathematical Institute, University of Oxford, Andrew Wiles Building, Woodstock Road, Oxford, OX2 6GG, UK. }\newline
\href{keating@maths.ox.ac.uk}{\small keating@maths.ox.ac.uk}

\bigskip
\noindent
{\sc Department of Mathematics, University of Sussex, Brighton, BN1 9RH, UK. }\newline
\href{weif0831@gmail.com}{\small weif0831@gmail.com}

\end{document}